\documentclass[11pt]{article}
\usepackage[margin=1in]{geometry}
\usepackage{graphicx}   
\usepackage{xspace}
\usepackage{enumitem}
\usepackage{amsmath,amssymb,amsthm}
\usepackage{mathtools}
\usepackage{adjustbox}
\usepackage{tablefootnote}
\usepackage[dvipsnames]{xcolor}

\usepackage{hyperref}

\usepackage{indentfirst}
\usepackage[utf8]{inputenc}
\usepackage{multirow}

\SetLabelAlign{parright}{\strut\smash{\parbox[t]\labelwidth{\raggedleft#1}}}

\theoremstyle{definition}
\newtheorem{definition}{Definition}[section]
 
\newtheorem{remark}[definition]{Remark} 
\newtheorem{construction}[definition]{Construction}
\newtheorem{question}[definition]{Question}

\theoremstyle{plain}
\newtheorem{proposition}[definition]{Proposition}
\newtheorem{corollary}[definition]{Corollary}
\newtheorem{theorem}[definition]{Theorem}

\newtheorem{lemma}[definition]{Lemma} 
\newtheorem{deflem}[definition]{Definition/Lemma} 
\newtheorem{fact}[definition]{Fact} 
\newtheorem{observation}[definition]{Observation}

\newtheorem{maintheorem}{Theorem}

\newcommand{\FF}{\mathbb{F}}

\AtBeginDocument{\selectfont} 

\title{Group Isomorphism and the Polylogarithmic-Time Hierarchy: \\ Depth-2$\frac{1}{2}$ Circuits and Lower Bounds}

\author{
Joshua A. Grochow\thanks{Department of Computer Science, University of Colorado Boulder; Department of Mathematics, University of Colorado Boulder. Email: \texttt{jgrochow@colorado.edu}}
\and
G\"ulce Karde\c{s}\thanks{Department of Computer Science, University of Colorado Boulder; Santa Fe Institute. Email: \texttt{gulce.kardes@colorado.edu}}
\and
Michael Levet\thanks{Department of Computer Science, College of Charleston. Email: \texttt{levetm@cofc.edu}}
}

\date{}
\usepackage{amsmath,amssymb}
\usepackage[most]{tcolorbox}
\newcommand{\betacc}[1]{\ifthenelse{\equal{#1}{1}}{\exists^{\log n}}{\exists^{\log^{#1}n}}} 
\newcommand{\alphacc}[1]{\ifthenelse{\equal{#1}{1}}{\forall^{\log n}}{\forall^{\log^{#1}n}}}

\DeclareMathOperator{\poly}{poly}
\DeclareMathOperator{\GL}{GL}
\DeclareMathOperator{\SL}{SL}
\DeclareMathOperator{\Sz}{Sz}
\DeclareMathOperator{\SU}{SU}
\DeclareMathOperator{\PSU}{PSU}

\DeclareMathOperator{\Tr}{Tr}
\DeclareMathOperator{\Nm}{Nm}

\usepackage{graphicx} 
\usepackage{multirow} 
\usepackage{listings}
\usepackage{amsmath,amssymb,amsfonts} 
\usepackage{amsthm} 
\usepackage{mathrsfs} 
\usepackage[title]{appendix} 
\usepackage{algorithm}
\usepackage{algpseudocode} 
\usepackage{xcolor} 
\usepackage{textcomp} 
\usepackage{manyfoot} 
\usepackage{booktabs} 
\usepackage{algorithm} 
\usepackage{algorithmicx} 
\usepackage{algpseudocode} 
\usepackage{listings} 
\usepackage{soul}      
\sethlcolor{blue!15} 
\newcommand{\algprobm}[1]{\textsc{#1}\xspace}

\newcommand{\Aut}{\operatorname{Aut}}

\newcommand*{\ComplexityClass}[1]{\ensuremath{\mathsf{#1}}\xspace}

\renewcommand{\P}{\ComplexityClass{P}}
\newcommand*{\coNP}{\ComplexityClass{coNP}}

\newcommand{\DTISP}{\ComplexityClass{DTISP}}

\newcommand{\DTISPpll}{\ensuremath{\DTISP(\mathrm{polylog}(n),\mathrm{log}(n))}\xspace}

\newcommand{\DTIMEpl}
{\ensuremath{\mathsf{DTIME}(\mathrm{polylog}(n))}\xspace}

\newcommand*{\LogSpace}{\ComplexityClass{L}}

\newcommand{\NC}{\ComplexityClass{NC}}
\newcommand{\AC}{\ComplexityClass{AC}}
\newcommand{\SAC}{\ComplexityClass{SAC}}

\newcommand{\ACz}{\ComplexityClass{AC^0}}

\newcommand{\qACz}{\ComplexityClass{quasiAC^0}}
\newcommand{\FOLL}{\ComplexityClass{FOLL}}

\newcommand{\DLOGTIME}{\ComplexityClass{DLOGTIME}}

\newcommand{\DTIME}{\ComplexityClass{DTIME}}

\newcommand{\polylog}{\operatorname{polylog}}
\usepackage{titlesec}

\titleclass{\subsubsubsection}{straight}[\subsubsection]
\newcounter{subsubsubsection}[subsubsection]

\makeatother

\begin{document}

\maketitle
\begin{abstract}
In this paper, we investigate the low-depth circuit complexity of \algprobm{Group Isomorphism} in the multiplication (Cayley) table model. We prove the first circuit lower bounds for \algprobm{Group Isomorphism}: namely, we show that every family of depth-$2$ Boolean circuits deciding \algprobm{Group Isomorphism} requires quasipolynomial-size. We complement this with upper bounds of uniform depth-$2\frac{1}{2}$ circuits of quasipolynomial-size.

A sequence of previous results from 1970--2025 progressively reduced the circuit depth from polynomial to $3\frac{1}{2}$; all of these results relied on the generator-enumerator strategy and, in fact, applied more generally to quasigroups. In contrast, our depth-$2\frac{1}{2}$ construction follows a fundamentally different strategy that exploits structure more specific to groups.
We guess a composition series for each group, together with generators for its terms and the isomorphism types of its composition factors. We then inductively verify that the corresponding extensions at each level of the two composition series are compatible. A central part in this approach 
brings to bear the extensive work on the Short Presentation Conjecture, in tandem with the algorithmic theory of group extensions and cohomology.
\end{abstract}

\setcounter{tocdepth}{1}
\tableofcontents

\thispagestyle{empty}
\setcounter{page}{0}
\newpage

\section{Introduction}
Isomorphism problems ask whether two objects are ``the same'' up to some natural notion of relabeling or transformation. They have played a central role in computational complexity, with \algprobm{Graph Isomorphism} perhaps being the most well-studied (see \cite{AllenderDas} for more on its history and role in computational complexity). In this paper, we focus on its algebraic sibling, \algprobm{Group Isomorphism} (\algprobm{GpI}); in fact, since \algprobm{Group Isomorphism} (when the input is given by multiplication tables) reduces to \algprobm{Graph Isomorphism} \cite{ZKT}, the former can be considered a particular algebraic sub-problem of the latter. From the algebraic perspective, the algorithmic \algprobm{Group Isomorphism} problem is a very natural question of interest in its own right, having been asked by Dehn over a century ago, in 1912 \cite{Dehn}; it has since become a central question in computational complexity and computational algebra. \nocite{DehnTranslation}

We focus on the parallel, low-depth uniform circuit complexity of this problem when the inputs are given by their multiplication (Cayley) tables. In computer algebra systems, groups are almost always given succinctly either by a presentation, or generating sets of permutations or matrices, which are much more succinct than the Cayley table. However, even in these more succinct encodings, the runtime of the best-known algorithms for (finite) \algprobm{Group Isomorphism} are $|G|^{\Theta(\log |G|)}$ \cite{BEO02, ELGO02, BE99, CH03}  (see \cite[p.~2]{WilsonSubgroupProfiles}; see also \cite{Rosenbaum2013BidirectionalCD,LuksCompositionSeriesIso} and \cite[Sec.~2.2]{GR16} for the best-known worst-case analysis in the Cayley table model), a runtime that depends on the size of the group, but could be exponential relative to the more succinct input size. In contrast, in the Cayley table model $|G|$ is polynomially related to the input length, so the above running time is expressed (as a quasipolynomial) in terms of the input length.

In the Cayley table model, the $|G|^{O(\log |G|)}$ bound comes from the generator-enumerator technique \cite{FN, MillerTarjan},\footnote{Miller attributes the result to Tarjan.} which has a short explanation: by Lagrange's Theorem, the order of a subgroup divides the order of the group, and a straightforward exercise then shows that every group $G$ is generated by at most $\log_2 |G|$ elements. The generator-enumerator isomorphism technique is to find a generating set of size $\leq \log |G|$ for $G$, then to try all generating sets of $H$ of the same size and to check whether one can map the generating set of $G$ to that of $H$ in a way that extends to an isomorphism. Since it was introduced in 1970, a series of results \cite{FN,MillerTarjan,LiptonSnyderZalcstein, Wolf, WagnerThesis, ChattopadhyayToranWagner,
TangThesis, CGLW} has shown that the generator-enumerator technique can be implemented in smaller and smaller complexity classes, albeit always with quasipolynomial time or circuit size (see Section~\ref{sec:related} for details). The most recent advance \cite{CGLW} showed that this technique could be implemented by unbounded fan-in uniform circuits of $n^{O(\log n)}$ size and constant depth (the natural circuit class \qACz \cite{BarringtonQuasipolynomial}), in fact depth only 4.\footnote{Throughout the paper, unless otherwise noted, circuit classes denote their uniform versions.}

Perhaps the biggest question about \algprobm{Group Isomorphism} is whether it can be solved in polynomial time. From the parallel complexity perspective, the analogous big question is:
\begin{quotation}
Is \algprobm{Group Isomorphism} in $\ACz$?
\end{quotation}
We would be shocked if \algprobm{Group Isomorphism} were in $\ACz$. Despite this, circuit lower-bounds even at depth-$2$ have been elusive. In this paper, we make the first progress on circuit lower bounds for this question, as well as improve the depth of the upper bound (albeit still quasipolynomial size---that is, we do not improve the worst-case serial runtime). We cover these two results in each of the next two subsections.

\subsection{Lower bounds}
It was noted in \cite{CGLW} that we did not have lower bounds even against depth-2 circuits for \algprobm{GpI}. And there is good reason for this difficulty: Chattopadhyay, Tor\'an, and Wagner
\cite{ChattopadhyayToranWagner} exhibited an upper bound of
$
    \exists^{\log^2 n}\FOLL \cap \exists^{\log^2 n}\textsf{L}
$
for the more general \(\algprobm{Quasigroup Isomorphism}\) problem
(\(\algprobm{QGpI}\)). Notably, \(\exists^{\log^2 n}\FOLL\) cannot compute
\(\algprobm{Parity}\) or \(\algprobm{Majority}\)
\cite{FSS, ChattopadhyayToranWagner}. Consequently, the standard strategy of
obtaining $\ACz$ lower bounds via reductions from \(\algprobm{Parity}\) or
\(\algprobm{Majority}\) cannot work for \(\algprobm{GpI}\) in this setting.

Our first main result is the first lower bound against Boolean circuits for \algprobm{GpI} of which we are aware:

\begin{maintheorem} \label{thm:MainLowerBounds}
Any (uniform or non-uniform) family of depth-$2$ circuits deciding \algprobm{Group Isomorphism} requires size $|G|^{\Omega(\log |G|)}$.    
\end{maintheorem}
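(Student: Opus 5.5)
Depth-2 circuits are CNFs or DNFs, and a DNF of size $s$ has every term of width at most $\log s$ after trimming; more to the point, a DNF (resp.\ CNF) for a function $f$ of size $s$ forces every minimal certificate for $f=1$ (resp.\ $f=0$) that the DNF uses to be witnessed by a single term. So the plan is to show that both \algprobm{Group Isomorphism} and its complement have large \emph{certificate complexity} (or, more robustly, that every restriction fixing few input bits cannot force the output), which yields that any DNF or CNF needs terms/clauses of width $\Omega(\log^2 |G|)$. Then a standard argument converts width to size: if a DNF of size $s$ computes $f$ and $f$ has no $1$-certificate of width $<w$, then in particular consider a random restriction/Hastad-style or simply the fact that a term of width $w'$ covers only a $2^{-w'}$ fraction of inputs of a suitable hard distribution, so one needs $s\ge 2^{\Omega(w)}=|G|^{\Omega(\log|G|)}$ terms to cover all yes-instances.

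Concretely, I would work with a family of $p$-groups of class $2$ and exponent $p$ (or Heisenberg-type/central extensions $1\to \FF_p^{m}\to G\to \FF_p^{k}\to 1$ with $k=\Theta(\log |G|)$), determined by a tuple of alternating bilinear maps. First step: fix the input to be a pair of Cayley tables $(T_1,T_2)$, where $T_1$ is a fixed group $G_0$ and $T_2$ ranges over a family of groups whose tables differ from $G_0$ in a structured way. Second step: show that to certify $T_2\cong G_0$ one must essentially pin down a generating set of size $\Theta(\log|G|)$ and the products among them, and that a term fixing only $o(\log^2|G|)$ table entries is consistent with a non-isomorphic (or even non-group, which the promise-free problem rejects) table; dually, to certify non-isomorphism, a clause of small width is satisfied by some isomorphic copy. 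The counting for the DNF side: the yes-instances include all relabelings $\pi(G_0)$ of $G_0$ under permutations $\pi$; a term fixing $w$ entries of $T_2$ is consistent with at most a $|G|^{-\Omega(w/\ldots)}$ fraction of relabelings, unless the fixed entries lie within a small generated subset, and I would bound the fraction of relabelings consistent with any width-$w$ term by $|G|^{-\Omega(\log|G|)}$ whenever the term does not imply acceptance on non-groups. Summing, the DNF needs $|G|^{\Omega(\log|G|)}$ terms. For the CNF side, symmetrically use no-instances that are relabelings of a fixed non-isomorphic group $H_0$ that agrees with $G_0$ on every $o(\log|G|)$-generated subgroup (e.g.\ two class-2 groups whose bilinear maps are non-equivalent but all restrictions to small subspaces are equivalent, chosen via a random construction).

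The main obstacle is the last ingredient: exhibiting $G_0\not\cong H_0$ such that any partial assignment of $o(\log^2|G|)$ table entries consistent with one is extendable to a relabeling of the other, i.e.\ local indistinguishability. Note the table entries fixed by a term involve at most $2w$ group elements, which generate a subgroup of size up to $p^{2w}$, so I would need to argue via the generated subgroup: all entries mention at most $O(w)$ elements, these lie in a subgroup generated by $r$ elements, and I want $G_0$ and $H_0$ to have isomorphic (compatibly embedded) $r$-generated subgroups for $r=c\log|G|$. I would attempt this with random alternating bilinear maps $\FF_p^{k}\times\FF_p^k\to\FF_p^{m}$, using the fact that restrictions to $r$-dimensional subspaces of random maps are "generic," plus a union bound, and I expect calibrating the parameters so that non-isomorphism holds globally while local embeddings exist to be the delicate step.
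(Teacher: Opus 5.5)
Your proposal has a genuine gap at the step that is supposed to produce the quasipolynomial bound on the CNF side: turning local indistinguishability into an upper bound on how many no-instances a single clause can falsify. Grant a pair $G_0\not\cong H_0$ in which every $r$-generated subgroup of $H_0$ embeds into $G_0$, with $r=c\log|G|$. Such pairs are essentially what the Glauberman--Grabowski and Wilson subgroup-profile constructions give. Then a clause falsified by some $(G_0,\pi(H_0))$ but by no yes-instance $(G_0,\sigma(G_0))$ must mention more than $r/2$ \emph{entries} of the second table. Nothing, however, forces it to fix $\Omega(\log^2|G|)$ \emph{bits}. Your argument does not exclude a yes-avoiding clause that fixes one bit in each of $\Theta(\log|G|)$ entries. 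Such a clause would be falsified by roughly a $2^{-\Theta(\log|G|)}=|G|^{-\Theta(1)}$ fraction of the relabelings $\pi(H_0)$, which gives only a polynomial lower bound.

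Your general conversion ``a width-$w'$ term covers a $2^{-w'}$ fraction'' also fails for this distribution. The bits of a uniformly relabeled Cayley table are highly correlated: the $n\lceil\log n\rceil$ bits saying that the row of label $a$ is the identity row hold with probability $1/n$. Width lower bounds alone also never give size lower bounds, since a single term of width $N$ computes the AND of $N$ bits. To rescue the route you would need a separate argument that every yes-avoiding clause pins down $\Omega(\log^2|G|)$ bits of information about $\pi$. You would also still need to actually carry out the local-indistinguishability construction you leave open. Neither is supplied.

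The paper avoids relabelings altogether: it fixes the label set $\FF_2^k\times\FF_2^k$ and varies the group instead.
\begin{itemize}
\item It uses $G_M$ with product $(u,z)(v,w)=(u+v,\,z+w+M(u\odot v))$. Then $G_M\cong\mathbb{Z}_4^{\operatorname{rank}M}\times\mathbb{Z}_2^{2k-2\operatorname{rank}M}$, and every table bit is an $\FF_2$-affine function of a single row of $M$.
\item On pairs $(G_M,G_{I_r\oplus 0_{k-r}})$, each clause's falsifying set is therefore a row-cube $A_1\times\cdots\times A_k$ of affine subspaces.
\item Changing one row at a time shows the ranks occurring in a row-cube form an interval. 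So a cube meeting rank $s$ but avoiding rank $s+1$ has all ranks $\le s$, and by Meshulam/de Seguins Pazzis its dimension is at most $ks$.
\item There are at least $2^{s(2k-s)}$ matrices of rank $\le s$. Taking $s=k/2$ forces $2^{k^2/4}=n^{(\log_2 n)/16}$ clauses.
\end{itemize}
This linear structure is exactly what makes the covering bound provable.

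On the DNF side you need none of your machinery. Composing a DNF for the total problem with the projection $T\mapsto(T,T)$ gives a DNF for validity testing. Two (quasi)group tables never differ in a single entry, so every term of that DNF must have full width. It therefore has at least as many terms as there are group tables, which is at least $n!/n^{\log n}$.
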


We show more strongly that DNFs require exponential size (Theorem~\ref{thm:DNF}), while CNFs require size $|G|^{\Omega(\log |G|)}$ (Theorem~\ref{thm:CNF}). Theorem~\ref{thm:MainLowerBounds} is thus the first step towards the longstanding open problem of whether \algprobm{GpI} is in \ACz (see, e.\,g., \cite{ChattopadhyayToranWagner}\footnote{As they state in their conclusion, their research ``started originally trying to prove that \algprobm{QGroupIso} is hard for $\textsf{NC}^1$,'' yet to date we are not aware of any previous progress towards such a Boolean lower bound.}). 

This is in sharp contrast with the setting of \algprobm{Graph Isomorphism}. where circuit lower bounds have long been known. To the
best of our knowledge, the first complexity-theoretic lower bound for
\(\algprobm{GI}\) was established by Cai, F\"urer, and Immerman \cite{CFI},
who gave an \(\ACz\)-reduction from \(\algprobm{Parity}\) to
\(\algprobm{GI}\). The best known lower bound for \(\algprobm{GI}\) is
\(\textsf{DET}\) \cite{Toran}. By contrast, although some lower bounds for \algprobm{GpI} have been proved in semantically restricted models, we are not aware of prior lower bounds for Boolean circuits for \algprobm{GpI} (see Related Work for more discussion).

\paragraph{Methods.} Our proof of the CNF lower bound is significantly more involved than that for the DNF lower bound. 
The DNF lower bound follows from a relatively straightforward observation about Cayley tables of (quasi)groups: No two such Cayley tables can differ in exactly one entry. Our proof of the CNF lower bound combines (1) a novel reduction from \algprobm{Matrix Rank} to \algprobm{GpI} with (2) results by Meshulam \cite{Meshulam1985} and de Seguins Pazzis \cite{deSeguinsPazzis2010} on the largest dimension of a(n affine) linear space of matrices of bounded rank. Because of the aforementioned result \cite{ChattopadhyayToranWagner}, this reduction is necessarily not an $\ACz$ reduction, which was an obstruction to previous attempts at such lower bounds. We get around this obstruction in that our reduction is an affine projection reduction: each output bit is an $\mathbb{F}_2$-affine linear combination of the input bits. 

\begin{remark}[Lower bounds on \algprobm{Latin Square Isotopy}] \label{rmk:isotopy}
Closely related to \algprobm{(Quasi)Group Isomorphism} is the \algprobm{Latin Square Isotopy} problem, which is essentially a special case of \algprobm{Strongly Regular Graph Isomorphism} (see, e.g., \cite{MillerTarjan, Wolf}) that we now define. We say that two quasigroups are \emph{isotopic} if there exist bijections $\alpha, \beta, \gamma : Q_1 \to Q_2$ such that for all $x, y, z \in Q_1$, $\gamma(xy) = \alpha(x)\beta(y)$. The \algprobm{Latin Square Isotopy} problem takes as input two quasigroups $Q_1, Q_2$, and asks if they are isotopic. There has been some notable work on \algprobm{Latin Square Isotopy} \cite{MillerTarjan, Wolf, LevetLatinSquares, CGLW}. In particular, Collins, Grochow, Levet, and Wei\ss \ \cite{CGLW} established the bound of $
\exists^{\log^2 n}\forall^{\log n}\exists^{\log n}\DTISPpll$,
for \algprobm{Latin Square Isotopy}, which is the same bound they established for \algprobm{Quasigroup Isomorphism}. They also give a more careful analysis of their algorithm which implies that \algprobm{Latin Square Isotopy} is decidable by a uniform family of depth-$4$ $\qACz$ circuits with size $n^{O(\log n)}$. Prior to our work, lower bounds against polynomial-size, depth-$2$ circuits were open for \algprobm{Latin Square Isotopy}, but they follow from our main results.

Namely, Albert \cite{Albert} established that two groups are isotopic if and only if they are isomorphic. Thus, in light of Theorem~\ref{thm:MainLowerBounds}, we obtain the following immediate corollary:
\begin{corollary} \label{cor:isotopy}
Any family of depth-$2$ circuits deciding \algprobm{Latin Square Isotopy} requires size $|G|^{\Omega(\log |G|)}$.         
\end{corollary}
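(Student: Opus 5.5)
The plan is to reduce \algprobm{Group Isomorphism} to \algprobm{Latin Square Isotopy} by the identity map on inputs, using Albert's theorem \cite{Albert}, and then transfer the lower bound of Theorem~\ref{thm:MainLowerBounds}. Let $\{C_n\}$ be a family of depth-$2$ circuits deciding \algprobm{Latin Square Isotopy} on pairs of Cayley tables of size $n$. A pair of Cayley tables of groups $(G,H)$ is in particular a pair of Cayley tables of quasigroups, and it uses exactly the same input encoding. By Albert's theorem, $G$ and $H$ are isotopic if and only if they are isomorphic. Hence, on every input that encodes a pair of groups, $C_n$ outputs the correct answer for \algprobm{Group Isomorphism}.

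I would then check that this promise-style correctness is all that the proof of Theorem~\ref{thm:MainLowerBounds} needs. Circuits for \algprobm{GpI} only have to be correct on inputs that are valid group tables; their behavior on other inputs is irrelevant. So $\{C_n\}$ is itself a family of depth-$2$ circuits deciding \algprobm{Group Isomorphism}, with no change in size or depth. This is the one point I would verify explicitly: the lower-bound arguments (Theorems~\ref{thm:DNF} and~\ref{thm:CNF}) constrain circuits only through their values on group inputs. These are pairs of Cayley tables differing in few entries for the DNF bound, and the images of the affine projection reduction from \algprobm{Matrix Rank} for the CNF bound, and all of these are group tables. Once this is confirmed, the theorem applies to $\{C_n\}$ as is.

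It follows that $\{C_n\}$ has size $|G|^{\Omega(\log |G|)}$, which is the claim. Neither uniformity nor depth is affected, because the reduction is literally the identity on input bits. I do not anticipate a genuine obstacle. The only subtlety is the promise/validity issue above. If the lower-bound statement were phrased for circuits that also reject non-group tables, one would still be fine: every restriction used in the lower-bound proof fixes the input to valid group tables, so the argument goes through unchanged.
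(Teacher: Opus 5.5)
Your route (Albert's theorem plus Theorem~\ref{thm:MainLowerBounds}) is the paper's, and for CNFs your justification is correct. Theorem~\ref{thm:CNF} is stated for the promise problem and only evaluates the circuit on the group pairs $(G_M, G_{I_r\oplus 0_{k-r}})$. So by Albert, a CNF for \algprobm{Latin Square Isotopy} is a CNF for promise \algprobm{GpI}.

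The point you single out for verification is false for the DNF half. The proof of Theorem~\ref{thm:DNF} goes through $T \mapsto (T,T)$. It shows every term has full width, because a term of width $<N$ would be satisfied at two tables differing in one bit, and at least one of these is not a (quasi)group table. The contradiction comes exactly from the circuit being required to output $0$ on that \emph{non-group} input. So that bound is for total \algprobm{GpI}, and it constrains the circuit only through invalid tables; your last paragraph has this backwards.

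Your $\{C_n\}$ also does not decide total \algprobm{GpI}, since a total isotopy circuit accepts pairs of isotopic non-group Latin squares. It decides only promise \algprobm{GpI}, for which the paper proves no DNF bound. In fact, under the Uniform Short Presentation Conjecture, promise \algprobm{GpI} has quasipolynomial-size DNFs (see the proof of Observation~\ref{obs:validity}(2)). So no promise-only argument can recover the exponential bound of Theorem~\ref{thm:DNF}.

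The DNF case therefore needs a separate argument, and there are two ways to supply it.
\begin{itemize}
\item Use the total semantics of isotopy. Compose a DNF for \algprobm{Latin Square Isotopy} with the projection $T \mapsto (T,T)$; this does not increase the number of terms. The result accepts exactly the Latin squares (Observation~\ref{obs:validity}(1)), so the Latin-square part of Theorem~\ref{thm:DNF} gives at least $(n!)^{2n}/n^{n^2}$ terms.
\item If you want an argument that genuinely uses only group inputs (e.g.\ for a promise version of isotopy), run the rank argument on terms instead of clauses. On the inputs $(G_M, G_{I_r\oplus 0_{k-r}})$, each term's satisfying set is a row-cube, as in Lemma~\ref{lem:row-cylinder}, containing only rank-$r$ matrices. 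By \cite{deSeguinsPazzis2010} each such set has size at most $2^{kr}$. These sets must cover all $\Omega(2^{r(2k-r)})$ matrices of rank exactly $r$, so with $r=k/2$ this forces $2^{\Omega(k^2)} = n^{\Omega(\log n)}$ terms.
\end{itemize}
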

\end{remark}

\subsection{Depth-$3$ Circuits for Group Isomorphism} \label{sec:intro:upper}
We complement our lower bound by also improving the depth of the upper bound on circuits for \algprobm{Group Isomorphism}. To state our result, we recall that circuits of depth $d+1$ with bottom fan-in at most $\poly(\log n)$ are sometimes called ``depth $d + \frac{1}{2}$'' \cite{BBhalf}; the aforementioned circuits \cite{CGLW} are in fact depth $3\frac{1}{2}$ in this nomenclature.

\begin{maintheorem}[High-level version of Theorem~\ref{thm:MainUpperBound}] \label{thm:main}
\algprobm{Group Isomorphism} in the Cayley table model can be solved by uniform circuits of depth $2\frac{1}{2}$ and quasipolynomial size.

Furthermore, assuming the Uniform Short Presentation Conjecture, this can be improved to a conjunction of a quasipolynomial-size DNF and polynomial-size CNF.
\end{maintheorem}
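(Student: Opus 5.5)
The plan is to follow the strategy outlined in the abstract. Existentially guess a certificate of quasipolynomial size, i.e.\ $O(\log^2 n)$ bits, and then verify it with a $\forall^{\log n}$ check whose inner predicate depends on only $\polylog(n)$ input bits. Such a computation is exactly a depth-$2\frac{1}{2}$ circuit: an OR over $n^{O(\log n)}$ certificates, of ANDs over $\poly(n)$ universal branches, of bottom gates with fan-in $\polylog(n)$.

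The certificate for $G$ consists of the following data.
\begin{itemize}
\item A composition series $1=G_0\lhd G_1\lhd\cdots\lhd G_m=G$ with $m\le\log|G|$. Each $G_i$ is specified by $G_{i-1}$ together with one or two extra generators, since every finite simple group is $2$-generated. This uses $O(\log^2 n)$ bits.
\item The isomorphism type of each factor $G_i/G_{i-1}$, named by its CFSG label in $O(\log n)$ bits.
\item The analogous data for $H$.
\item A proposed bijection between the composition factors of $G$ and of $H$ matching their levels.
\end{itemize}

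Verification proceeds inductively up the series. The key tool is short presentations. By the Babai--Goodman--Kantor--Luks--P\'alfy results, every finite simple group $S$ other than ${}^2G_2$ has a presentation of length $O(\log^c|S|)$; the Short Presentation Conjecture asks for $O(\log|S|)$. Combining short presentations of the factors with the data of the extension yields a presentation of $G$ of polylogarithmic total length. The relators are words in the guessed generators, and each relator for $G_i$ over $G_{i-1}$ has the form ``relator of the factor $=$ an element of $G_{i-1}$'' or ``conjugation action of a new generator on an old generator $=$ a specified element.''

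To certify $G\cong H$, I would guess images in $H$ of the generators of $G$, then check two things:
\begin{enumerate}
\item Every relator of the presentation holds in $H$ under this map.
\item The image generates a subgroup of order $|G|$, i.e.\ the map is surjective.
\end{enumerate}
Von Dyck's theorem then gives a surjective homomorphism $G\to H$, which is an isomorphism because $|G|=|H|$.

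Each relator check requires evaluating a word of length $\polylog(n)$ in the Cayley table. This is a sequential walk of polylog length. Doing it in a single bottom layer of fan-in $\polylog(n)$ requires guessing the intermediate products existentially. Those guesses cost $O(\log n)$ bits each, and the total over all relators is $\polylog(n)\cdot\log n$ bits. That is too many unless the relators are handled individually under the $\forall$, where each needs its own intermediate guesses, which would introduce an illegal $\exists$ alternation.

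I would resolve this as follows. Guess only $O(\log^2 n)$ bits in total: the generators and their images. Then, under $\forall^{\log n}$, range over all elements $g\in G$ written as normal-form words $g=x_1^{e_1}\cdots x_m^{e_m}$ with respect to the composition series. Together with the verification that the series is correct, this normal form determines the map $\varphi(g)$ via the normal-form coordinates. Checking $\varphi(gx_j)=\varphi(g)\varphi(x_j)$ and the analogous condition for each generator then certifies that $\varphi$ is a homomorphism. Each check reads only the polylog table entries along the normal-form walks, provided the normal-form coordinates are themselves certified. Verifying membership and normal forms in the subgroups $G_i$ is the delicate point. Orders and normality of the $G_i$ must be checked with only a $\forall^{\log n}$ quantifier over elements, with the existential witnesses given by the guessed normal-form data.

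I expect the main obstacle to be compressing all the existential information into $O(\log^2 n)$ bits while keeping every universal branch checkable with polylog fan-in. This is where the polylog-length short presentations from the Short Presentation Conjecture work, and the cohomological description of extensions, must be used. These inputs bound how much data determines each extension $G_{i-1}\to G_i\to S_i$ and its compatibility with the corresponding extension in $H$.

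Under the Uniform Short Presentation Conjecture, the presentations have length $O(\log n)$ and are computable uniformly. Each relator check then touches $O(\log n)$ bits, and the existential guesses are polynomial in number. This separates the computation into two parts:
\begin{itemize}
\item a quasipolynomial DNF over guessed generators and images, checking the length-$O(\log n)$ relators;
\item a polynomial-size CNF checking Cayley-table-only conditions such as order and normality constraints.
\end{itemize}
This gives the stated DNF$\wedge$CNF form.
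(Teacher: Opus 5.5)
Your skeleton matches the paper's: guess a composition series with generators and names of the factors, use short presentations, treat extensions cohomologically, and flatten a $\Sigma_2$ polylogarithmic-time computation into depth $2\frac{1}{2}$. However, the proposal never reaches a verifier that works, and the mechanism you do propose fails.

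\textbf{The normal-form mechanism fails, and the obstacle behind it is not real.} Defining $\varphi$ ``via normal-form coordinates'' and checking $\varphi(gx_j)=\varphi(g)\varphi(x_j)$ on a universal branch requires the coordinates of $g$ and of $gx_j$. That is constructive membership in every composition factor, done deterministically in polylogarithmic time from a Cayley table. For non-Abelian factors, a normal form $x_1^{e_1}\cdots x_m^{e_m}$ with one or two generators per step does not exist: by It\^o's theorem a non-Abelian simple group is never a product of two cyclic subgroups. No such procedure is available, and certifying the coordinates would put an $\exists$ under the $\forall$, as you note yourself. But you do not need any of this. For a fixed certificate, checking polylogarithmically many relators of polylogarithmic length is a $\DTIMEpl$ computation, i.e.\ a decision tree of polylogarithmic depth. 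By Fact~\ref{fact:DTISP} this is a quasipolynomial-size CNF/DNF of polylogarithmic width, which merges with the quantifier gates above it, so no intermediate products need to be guessed. Your $O(\log^2 n)$-bit budget is self-imposed, since the theorem allows size $2^{\polylog(n)}$. The paper guesses all the SLP data of Construction~\ref{construction:BGKLP} and checks (Rel1)--(Rel3) deterministically in both $G$ and $H$, with no universal quantifier for this part.

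\textbf{The ideas that make the verification sound are missing.} Von Dyck's theorem yields $G\cong H$ only if the presented group has order at most $|G|$ and the guessed elements generate both $G$ and $H$. Your proposal verifies neither; step~(2) has no mechanism. The paper gets both by counting:
\begin{itemize}
\item it verifies the guessed standard names (primality, Rabin's irreducibility test, a primitive root via a guessed factorization of $q-1$);
\item it checks $\prod_i|SN_i|=|G|=|H|$ using the explicit order formulas;
\item it runs a universal check (coND1) over all SLPs of length $O(\log^2 n)$, ensuring each $T_i$ contains an element outside $\langle T_1\cup\dots\cup T_{i-1}\rangle$.
\end{itemize}
Each factor is then a nontrivial quotient of the named simple group, hence isomorphic to it, and $\langle T_1\cup\dots\cup T_m\rangle=G$.

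For ${}^2G_2(q)$ you only gesture at cohomology, and $\PSU_3(q)$ and $\Sz(q)$ also need the uniformity proved in Propositions~\ref{prop:UniformityPSU} and~\ref{prop:UniformSuzuki}. The paper's device never decomposes elements of $G$. It quantifies universally over pairs of elements of the matrix model of ${}^2G_2(q)$ and over SLPs. It accepts on every branch where a tried SLP, or the co-nondeterministic inverse of the normal form $\varphi$ of Lemma~\ref{lem:Ree-bijection}, is wrong. The map $\varphi$ gives a single section consistent across all branches, and the cocycles are compared via Lemma~\ref{lem:GQMain} with $t=1$ (Remark~\ref{rmk:t-trivial}).

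Finally, the polynomial-size CNF is the validity test of Lemma~\ref{lem:valid-cayley-tables}, which you omit. A CNF that cannot see the guesses cannot check order or normality of guessed subgroups. Also, the conjecture gives factor presentations of length $O(\log^c n)$, and $O(\log^{c+1}n)$ for $G$, not $O(\log n)$.

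\textbf{A caution about the conditional part, which also affects the paper's argument.} The paper drops (coND1) under the conjecture via an order count that presupposes $\langle T_1\cup\dots\cup T_m\rangle=G$, and this fails. Take $G=\mathbb{Z}_2^4$ and $H=\mathbb{Z}_4\times\mathbb{Z}_2^2$, with four $\mathbb{Z}_2$-steps whose generators are $a,b,c,ab$ in $G$ and $y^2,e,f,y^2e$ in $H$. These pass every deterministic check, including $\prod_i|SN_i|=16$, yet $G\not\cong H$.

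More generally, suppose a verifier, after its guess, reads only polylogarithmically many entries and accepts $(\mathbb{Z}_2^{2k},\mathbb{Z}_2^{2k})$. Then it also accepts a non-isomorphic pair. Pick $u\neq 0$ outside the pairwise sums of the touched labels, viewed as vectors. Every product it read stays correct, with the same labels, in $\mathbb{Z}_2^{2k}/\langle u\rangle\cong\mathbb{Z}_2^{2k-1}$. This is the subgroup of elements of order at most $2$ in $\mathbb{Z}_4\times\mathbb{Z}_2^{2k-2}$, so a suitably labeled $\mathbb{Z}_4\times\mathbb{Z}_2^{2k-2}$ fools the verifier. So a polylogarithmic-width DNF conjoined only with the validity CNF, in your version or the paper's, needs a genuinely new way to certify generation.
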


We actually obtain a slightly stronger upper bound, that places \algprobm{Group Isomorphism} into the second level of the polylogarithmic-time hierarchy; see Theorem~\ref{thm:MainUpperBound} for details. Collins, Grochow, Levet, and Wei\ss \ previously established that the more general \algprobm{Quasigroup Isomorphism} problem belongs to the third level of the polylogarithmic time hierarchy, which yielded depth-$3\frac{1}{2}$ circuits. The trade-off compared to previous results is that they get size $n^{O(\log n)}$ \cite{CGLW}, whereas our quasipolynomial size bound is slightly larger.

At a high level, rather than following the generator-enumerator strategy of previous results, our technique is instead to guess a polylogarithmic-length encoding of one group, and then verify that both groups satisfy the same encoding (hence must be isomorphic). As there are at most $n^{(2/27 + o(1))\mu(n)^2}$ groups of order $\leq n$ \cite{Pyber, Higman, Sims}, where $\mu(n) \leq \log_2 n$ is the maximum power of any prime-power divisor of $n$, it is at least information-theoretically plausible that groups of order $n$ could be specified by $O(\log^3 n)$ many bits.\footnote{And in general this is asymptotically tight, as there are at least $p^{(2/27) m^3 - O(m^2)}$ groups of order $p^m$ \cite{Higman, Sims}.} As our strategy is to guess this specification---in terms of a circuit, that would be an \textsf{OR} gate of fan-in $2^{O(\log^3 n)}$---then in order to achieve a depth-3 circuit, we would also need to be able to verify those guesses, i.\,e., to check that the specification matches both groups, by a depth-3 circuit of quasipolynomial size that is an \textsf{OR} of \textsf{AND}s of \textsf{OR}s, as then the top \textsf{OR} of the verification circuit can then be merged with the \textsf{OR} gate for the guesses. 

(Already at this high level of description we can note a significant difference from the generator-enumerator approach: whereas the latter also works for \algprobm{Quasigroup Isomorphism}, our approach here cannot give quasipolynomial size circuits for \algprobm{Quasigroup Isomorphism}, by a standard counting argument; see Observation~\ref{obs:quasigroups}.)

We show that a \emph{presentation} (in the group-theoretic sense) of a group can be verified as efficiently as needed for the above strategy to work. This uses the idea from \cite{CGLW} that basic group operations can be computed in deterministic polylogarithmic time from the multiplication table, and the fact that polylogarithmic time can be simulated by quasipolynomial-size depth-2 circuits. However, it is a long-open conjecture, known as the Short Presentation Conjecture, whether presentations of $O(\log^3 |G|)$ size---or even just $O(\log^c |G|)$ for any fixed $c$---exist \cite{BabaiSzemeredi, BGKLP}. 

Babai, Goodman, Kantor, Luks, and Pálfy \cite{BGKLP} introduced several strengthenings of the Short Presentation Conjecture that they anticipated would be useful for algorithmic applications. In particular, they conjectured not only did finite simple groups have such short presentations, but that those presentations could be constructed from the ``standard names'' (Def.~\ref{def:standard-name}) of the finite simple groups in polylogarithmic time (in the order of the group), which they refer to as the Uniform Short Presentation Conjecture \cite[Conjecture~3]{BGKLP}.

\paragraph{The status of the (Uniform) Short Presentation Conjecture.} \label{par:status}
Babai, Goodman, Kantor, Luks, and Pálfy proved \cite[Theorem~8.3]{BGKLP} that if finite simple groups admit presentations of size $O(\log^c n)$, then all groups admit  presentations of size $O(\log^{c+1} n)$; exactly what we would hope for in order to get our strategy above to work to get depth-3 circuits. 
They also proved that the Uniform Short Presentation Conjecture held with $c=2$ for all but three infinite families of finite simple groups \cite{BGKLP},\footnote{This result, as all the similar results we mention, was based on a case analysis, and in that sense, when we say ``all but three infinite families,'' the latter statement depends on the Classification of Finite Simple Groups.} namely the groups denoted
\[
^2 A_2(q) = PSU_3(q), \qquad ^2 B_2(q)=Sz(q), \qquad \text{and} \qquad ^2 G_2(q)=R(q),
\]
for infinitely many prime powers $q$.

The status of the Short Presentation Conjecture for finite simple groups is in a somewhat interesting state of affairs currently. Hulpke and Seress \cite{HulpkeSeress} showed that the first of the above three families satisfies the Short Presentation Conjecture with $c=2$ (though they did not prove uniformity; we prove uniformity of their presentation in Section~\ref{app:PSU3}). They also remark that J.~Thompson told W.~Kantor that Suzuki's original paper \cite{suzuki} on the groups $^2B_2(q)=Sz(q)$ (now known as the Suzuki groups) contained within it already a short presentation. Subsequently, Guralnick, Kantor, Kassabov, and Lubotzky \cite{GKKLquant} exhibited a shorter presentation using only $O(1)$ generators and relators for $Sz(q)$ (they actually established the analogous result for all finite simple groups except $^2G_{2}(q)$). We prove uniformity for this latter presentation in Section~\ref{app:suzuki}. This leaves only the Ree groups, $^2 G_2(q) = R(q)$.

A short presentation for the Ree groups---and thus completion of the proof of the Short Presentation Conjecture---due to Hulpke, Kassabov, Seress (posthumously), and Wilson \cite{HKSW} has been announced in various talks and conferences since 2020, but has not yet appeared in print. 

\paragraph{More on our methods: avoiding the Short Presentation Conjecture for the Ree groups.}
In order to get around this, we take advantage of two additional ingredients. The first is the $\ComplexityClass{\Sigma_{2}TIME}(\polylog(|G|))$ isomorphism test for black-box groups \cite{BabaiSzemeredi}. We use standard matrix generators to specify $^2G_{2}(q)$ (see, e.\,g., \cite{Brnhielm2014}), and compare this to a composition factor $G_{i}/G_{i-1}$ from our non-deterministic guess. Precisely, this step allows us to decide whether $G_{i}/G_{i-1} \cong {}^2G_{2}(q)$, as well as whether the generators we guessed induce an isomorphism between $G_{i}/G_{i-1}$ and $H_{i}/H_{i-1}$ (the corresponding composition factor in $H$). If these checks pass, then it remains to decide whether the generators we guessed induce an isomorphism between $G_{i} \cong H_{i}$. The second ingredient is to solve the latter problem based on a computational version of non-Abelian cohomology \cite{GQCoho, dedecker, inassaridze}.

\paragraph{Closer to depth 2 assuming the Uniform Short Presentation Conjecture.} \label{par:closer}
If instead we \emph{assume} the Uniform Short Presentation Conjecture, it not only avoids the latter ingredients, but we show it also implies that \algprobm{Group Isomorphism} can be solved by a uniform family that is the conjunction of a quasipolynomial-size DNF with a polynomial-size CNF (see the second half of Theorem~\ref{thm:MainUpperBound}), even closer to depth 2. To see how much closer this is to depth 2, we offer two pieces of evidence: first, the resulting circuits are weft 2, whereas our unconditional depth-$2\frac{1}{2}$ circuits are weft 3---the weft is the maximum number of gates of unbounded fan-in encountered on any input-to-output path in the circuit, and is a key complexity measure in parametrized complexity \cite{DowneyFellowsBook}. Second, the circuits we get assuming the Uniform Short Presentation Conjecture are 2-tt (2 query, non-adaptive) projection reductions to quasi-polynomial size CNFs, whereas the circuits we get unconditionally have quasipolynomial top fan-in. If the 2-tt could be improved to a many-one projection reduction,  the result would be a depth-2 circuit. 

Following \cite{ChattopadhyayToranWagner}, \cite{CGLW} highlighted the question of whether \algprobm{GpI} is in $\exists^{\log^2 n}\ACz$. A corollary of our main upper bound implies that the Uniform Short Presentation Conjecture answers this question positively, up to the exponent on the nondeterminism:

\begin{corollary}[{=Corollary~\ref{cor:USP-implies-betaACz}}]
The Uniform Short Presentation Conjecture implies $\algprobm{GpI} \in \exists^{\polylog(n)} \ACz$.
\end{corollary}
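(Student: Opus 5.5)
The plan is to read the corollary off the second half of Theorem~\ref{thm:MainUpperBound}. Assuming the Uniform Short Presentation Conjecture, that result shows that \algprobm{GpI} is decided by a uniform family of the form $D \wedge C$. Here $D$ is a quasipolynomial-size DNF and $C$ is a polynomial-size CNF. More precisely, the machine-level statement behind the DNF is that $D$ is an $\exists^{\polylog(n)}\DTIMEpl$ predicate: it guesses a short presentation of one group together with images of its generators in the other, and verifies the relators in polylogarithmic time from the Cayley tables. I will work with these machine characterizations rather than the circuits themselves, since $\exists^{\polylog(n)}\ACz$ is defined by a polylog-length existential guess followed by a uniform $\ACz$ (equivalently, $\mathsf{FO}$) predicate on the input and the guess.

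\textbf{Step 1: handle the DNF part.} A $\DTIMEpl$ computation only reads $\polylog(n)$ input bits, so the whole computation is determined by a polylog-length transcript. I will therefore enlarge the existential guess to include the full transcript of the $\DTIMEpl$ verifier: each step's state, the random-access query addresses, and the bits read. The remaining check is that each guessed bit actually agrees with the input at its queried address. This is a conjunction of $\polylog(n)$ local checks, together with consistency of consecutive configurations, and it is computable in uniform $\ACz$ (indeed $\DLOGTIME$-uniform, as in \cite{BarringtonQuasipolynomial}). Hence $D \in \exists^{\polylog(n)}\ACz$.

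\textbf{Step 2: handle the CNF part and combine.} A polynomial-size uniform CNF is itself in uniform $\ACz$, so it needs no guess at all. Since the class $\exists^{\polylog(n)}\ACz$ is closed under conjunction with an $\ACz$ predicate, the existential quantifier is placed in front and the $\ACz$ conjunct is evaluated alongside the verifier of Step~1. Uniformity of $C$ is inherited from Theorem~\ref{thm:MainUpperBound}.

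The main obstacle I expect is bookkeeping in Step 1 rather than any mathematics. I need to confirm that the verification in Theorem~\ref{thm:MainUpperBound} is genuinely a deterministic polylog-time check after a single existential block, rather than something with an internal universal quantifier. The quasipolynomial-size DNF suggests this, since a $\forall$ would produce an extra layer. I also need to check that the transcript simulation preserves uniformity at the $\ACz$ level. If the conditional verifier instead turns out to be $\exists\forall$ with a polynomial-size $\forall$ range, that range is already absorbed in the polynomial-size CNF part, so the same argument still goes through.
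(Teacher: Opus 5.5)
Your argument is correct and is essentially the paper's proof. The paper combines the second half of Theorem~\ref{thm:MainUpperBound} with the polynomial-size CNF of Lemma~\ref{lem:valid-cayley-tables}, and invokes Proposition~\ref{prop:DTIME} to get $\DTIMEpl \subseteq \exists^{\polylog(n)}\ACz$ (guess the whole computation history, verify it locally as in Cook--Levin, and check input queries with an $O(n\log n)$-size multiplexer). It then merges the two existential blocks and uses $\ComplexityClass{CNF} \subseteq \ACz$, so your Step~1 is a re-derivation of that proposition.

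Two minor points. First, for the consecutive-step checks to be local, the guessed transcript must contain full configurations including work-tape contents, not just states, addresses and bits read. Second, your closing hedge is unnecessary: the conditional half of the theorem really is a single $\exists$ block before $\DTIMEpl$, and its $\forall^{\log n}$ part is independent of the guess. A guess-dependent $\forall$ would not simply be absorbed into the CNF.
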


\paragraph{Difference compared to the generator-enumerator strategy.}
As further technical evidence of the difference between our approach and the previous generator-enumerator results, we recall a significant obstacle to extending our approach from groups to quasigroups. Namely, a quasigroup analogue of the Short Presentation Conjecture is false, by a standard counting argument that we include here for completeness:

\begin{observation} \label{obs:quasigroups}
Any injection from the set of isomorphism classes of quasigroups of order $n$ to binary strings of length $\ell$ must have $\ell \geq \Omega(n^2 \log n)$.
\end{observation}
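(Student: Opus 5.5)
The plan is a counting argument. An injection from the isomorphism classes into $\{0,1\}^\ell$ forces $2^\ell \geq N_n$, where $N_n$ is the number of isomorphism classes of quasigroups of order $n$. So it suffices to show $\log_2 N_n \geq \Omega(n^2 \log n)$.

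First I bound the number of labeled quasigroups on the fixed set $[n]$, which is the number of Latin squares of order $n$, call it $L_n$. I would invoke the classical van der Waerden/Egorychev--Falikman permanent lower bound to build the square row by row. Given $k$ rows of a Latin rectangle, the admissible next rows are the perfect matchings of an $(n-k)$-regular bipartite graph. By the permanent bound, the number of such matchings is at least $(n-k)^n \, n!/n^n$. Multiplying over $k=0,\dots,n-1$ gives
\[
L_n \geq \frac{(n!)^{2n}}{n^{n^2}},
\]
and Stirling's formula turns this into $L_n \geq (n/e^2)^{n^2}$, that is, $\log_2 L_n \geq n^2\log_2 n - O(n^2)$. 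Alternatively I could cite this standard bound directly, as in van Lint--Wilson.

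Next I pass to isomorphism classes. Each isomorphism class of quasigroups on $[n]$ is the orbit of one Cayley table under relabeling by $S_n$, so it contains at most $n!$ labeled tables. Hence
\[
N_n \geq \frac{L_n}{n!} \geq \frac{(n/e^2)^{n^2}}{n^n},
\]
which gives $\log_2 N_n \geq n^2 \log_2 n - O(n^2) - n\log_2 n = \Omega(n^2\log n)$. Therefore $\ell \geq \log_2 N_n = \Omega(n^2 \log n)$.

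The only nontrivial input is the lower bound $L_n \geq 2^{\Omega(n^2\log n)}$ on the number of Latin squares, and I would cite it rather than reprove it. If I wanted to avoid permanents, I would try a weaker explicit construction: fix the table of a group of order $n$, then independently modify many disjoint intercalates ($2\times 2$ subsquares) to obtain $2^{\Omega(n^2)}$ tables. That route, however, loses the $\log n$ factor, so the permanent-based bound is the step to rely on.
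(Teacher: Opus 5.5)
Your proposal is correct and takes essentially the same route as the paper. The paper also cites the bound $(n!)^{2n}/n^{n^2}$ on Latin squares from van Lint--Wilson (whose proof is your row-by-row van der Waerden argument), divides by $n!$ for relabelings, and applies Stirling to conclude $\ell \geq \Omega(n^2 \log n)$; your explicit estimate $N_n \geq (n/e^2)^{n^2}/n^n$ is just a cleaner form of that computation.
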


In particular this says that quasigroups up to isomorphism cannot be specified, in the general case, more succinctly than by writing down their multiplication table, at least up to constant factors, regardless of the method of encoding (table, presentation, Turing machine, etc.). 

\begin{proof}
The number of Latin squares of order $n$ is at least $(n!)^{2n} / n^{n^2}$ (e.\,g., \cite[Thm.~17.2]{vanLintWilsonBook}), hence the number of quasigroups of order $n$ up to isomorphism is at least 
\begin{align*}
(n!)^{2n-1} / n^{n^2} & \sim  \left((n/e)^n \sqrt{2\pi n}\right)^{2n-1} / n^{n^2} \\
 & = n^{n^2- 1/2} \sqrt{2\pi}^{2n-1} / e^{2n^2 - n} \\
 & \geq (n/c)^{\Theta(n^2)}
\end{align*}
for some constant $c \approx e^2 / 1.01$. Thus, in order for there to be at least as many strings of length $\ell$ as there are quasigroups of order $n$ (up to isomorphism), we must have $\ell \geq \log (n/c)^{\Theta(n^2)} \geq \Omega(n^2 \log n)$.
\end{proof}

\subsection{Related Work} \label{sec:related}

\paragraph{Generator-enumeration strategy.} The generator-enumeration strategy was independently discovered by Felsch and Neubüser \cite{FN}
and by Tarjan; see \cite{MillerTarjan}. Miller \cite{MillerTarjan} extended this observation to the setting of quasigroups, obtaining the $\exists^{\log^2 n}\textsf{P}$ bound for \algprobm{Quasigroup Isomorphism}, and much of the subsequent work on this strategy also works at that level of generality. We summarize the history in Table~\ref{table:history}, which is from \cite{CGLW} (except for adding a new entry for the present paper). We note that the upper bound of Chattopadhyay, Tor\'an, Wagner \cite{ChattopadhyayToranWagner} was the first upper bound showing that \algprobm{Parity} did not $\ACz$-reduce to \algprobm{GpI} (allowing them to show that \algprobm{GpI} did not $\ACz$-reduce to \algprobm{GI}).

Recently, Collins--Grochow--Levet--Wei\ss\ \cite{CGLW} exhibited a bound of 
\[
\exists^{\log^2 n} \forall^{\log n} \exists^{\log n} \DTISPpll \subseteq \exists^{\log^2 n}\LogSpace \cap \exists^{\log^2 n}\FOLL,
\]
for \algprobm{Quasigroup Isomorphism}. In addition to improving upon the work of Chattopadhyay--Tor\'an--Wagner \cite{ChattopadhyayToranWagner}, the work of Collins--Grochow--Levet--Wei\ss\ yielded depth-$3\frac{1}{2}$ circuits of size $n^{O(\log n)}$. In contrast, our bound requires non-deterministically guessing $O(\log^c n)$ bits, for some constant $c > 2$ that we do not estimate carefully. Thus, even for \algprobm{Group Isomorphism}, although we improve the depth, it is at the cost of increasing the size.

\renewcommand{\arraystretch}{1.5}
\begin{table}[!htbp]
\small
\begin{center}
\begin{adjustbox}{max width=\textwidth}
\begin{tabular}{|r|c|c|l|}

\hline
\textbf{Year} & \textbf{Result} & \textbf{Depth} & \textbf{Citation} \\ \hline \hline
1970 & Generator-enumerator introduced & $\poly(n)$ & Felsch \& Neubüser \cite{FN}\tablefootnote{Complexity not analyzed there, but the same as Tarjan's algorithm \cite{MillerTarjan}.} \\ \hline
1978 & $\betacc{2} \P \subseteq \mathsf{DTIME}(n^{\log n + O(1)})$ & $\poly(n)$ & Tarjan (see Miller \cite{MillerTarjan})  \\ \hline
1977 & $\mathsf{DSPACE}(\log ^2 n)$ & $O(\log^2 n)$ & Lipton--Snyder--Zalcstein \cite{LiptonSnyderZalcstein}\tablefootnote{Despite the publication dates, this seems to have been independent of Tarjan's result. They note that Miller and Rabin had also observed this result independently.}  \\ \hline
 1994 & $\betacc{2}\mathsf{AC}^1$ & $O(\log n)$ & Wolf \cite{Wolf}\tablefootnote{\label{fn:wolf}Wolf only claims a bound of $\betacc{2}\mathsf{NC}^{2}$. However, if we replace his use of $\mathsf{NC}^{1}$ circuits to multiply two elements of a quasigroup with $\ACz$ circuits, we immediately get the $\betacc{2}\mathsf{AC}^1$ bound.} \\ \hline
 2010 & $\betacc{2}\mathsf{SAC}^1$ & $O(\log n)$ & Wagner \cite{WagnerThesis} \\ \hline
 2010 & $\betacc{2}\FOLL \cap \betacc{2} \LogSpace$ & $O(\log \log n)$ & Chattopadhyay--Torán--Wagner \cite{ChattopadhyayToranWagner}\tablefootnote{They do not claim the $\betacc{2}\LogSpace$ bound, but it follows immediately from their algorithm and results.} \\ \hline
 2013 & $\betacc{2}\mathsf{SC}^2 \cap \betacc{2}\LogSpace$ & $O(\log n)$ & Papakonstantinou--Tang--Qiao \cite{TangThesis}\tablefootnote{We have written the result this way, despite $\betacc{2} \LogSpace \subseteq \betacc{2}\mathsf{SC}^2$, because in Tang's thesis \cite{TangThesis}, the only place this is currently published, they only claim $\mathsf{NSC}^2$ using only $O(\log^2 n)$ bits of nondeterminism, which in our notation would be $\betacc{2}\mathsf{SC}^2$. However, their algorithm and results also immediately yields a $\betacc{2}\LogSpace$ bound.}  
 \\ \hline
 2024 & $\ComplexityClass{\Sigma_3 TISP}(\polylog(n),\log(n)) \subseteq \qACz$
 & $3\frac{1}{2}$ & Collins--Grochow--Levet--Wei\ss\xspace\cite{CGLW} \\ \hline
 2026 & 
 $\ComplexityClass{\Sigma_2 TIME}(\polylog(n)) \subseteq \qACz$
& $2\frac{1}{2}$ & This work \\ \hline
\end{tabular}
\end{adjustbox}
\end{center}

\caption{\label{table:history} History of the low-level circuit complexity of algorithms for \algprobm{(Quasi)Group Isomorphism} based on the generator-enumerator technique, in comparison to this paper (all but the last row of the table are from \cite[Table~1]{CGLW}). For non-circuit classes, we list their depth as the best-known depth of their simulation by circuits. The class in the bound obtained by Collins--Grochow--Levet--Wei\ss \ \cite{CGLW} is contained in all previous classes listed in the table, which also all apply to \algprobm{Quasigroup Isomorphism}. Compared to Collins--Grochow--Levet--Wei\ss, our work improves the circuit depth, but at the cost of increasing the size (in the exponent of the exponent of the quasipolynomial). 
}  
\end{table}

\paragraph{Short presentations.} See Section~\ref{sec:Presentations} for background on group presentations. We say that a presentation $P$ for a group $G$ is \emph{short} if $P$ has length $\polylog(|G|)$. Babai and Szemerédi \cite{BabaiSzemeredi} first introduced the \emph{Short Presentation Conjecture}, which asks whether there exists an absolute constant $c$, such that every finite simple group of order $n$ has length $\leq (\log n)^c$. Babai, Goodman, Kantor, Luks, and Pálfy \cite{BGKLP} subsequently investigated the Short Presentation Conjecture. Following them, we say that presentations of finite simple groups are \emph{uniform} if they can be constructed in $\DTIMEpl$ from the standard names for the finite simple groups. We have mentioned in Section~\ref{sec:intro:upper} the results on these conjectures most relevant for our paper \cite{BabaiSzemeredi,BGKLP,HulpkeSeress,suzuki,HKSW,GKKLquant}. 

Prior to this work, the existing short presentations for $\PSU_{3}(q)$ and $\Sz(q)$ had not been shown to be uniform. We establish uniformity for both $\PSU_{3}(q)$ (Proposition~\ref{prop:UniformityPSU}) and $\Sz(q)$ (Proposition~\ref{prop:UniformSuzuki}).

\paragraph{Parallel algorithms for special cases of \algprobm{Group Isomorphism}.} The work on $\textsf{NC}$ algorithms for \algprobm{GpI} and \algprobm{QGpI} is comparatively nascent compared to that of \algprobm{GI} (see \cite{LevetRombachSieger} for a survey of $\textsf{NC}$ algorithms for \algprobm{GI}). Indeed, much of the work involves parallelizing the generator-enumeration strategy (\emph{ibid.}), which has yielded bounds of $\textsf{L}$ \cite{TangThesis} for $O(1)$-generated groups and $\textsf{SAC}^{1}$ for $O(1)$-generated quasigroups \cite{WagnerThesis}. Other families of groups known to admit $\textsf{NC}$ isomorphism tests include Abelian groups \cite{ChattopadhyayToranWagner, GrochowLevetWL, CGLW}, graphical groups arising from the CFI graphs \cite{WLGroups, CollinsLevetWL, CollinsUndergradThesis}, coprime extensions $H \ltimes N$ where $H$ is $O(1)$-generated and $N$ is Abelian \cite{GrochowLevetWL} (parallelizing a result from \cite{QST11}), groups of almost all orders \cite{CGLW} (parallelizing \cite{DietrichWilson}), and Fitting-free groups where the number of non-Abelian simple factors of the socle is $O(\log n/\log \log n)$ \cite{GrochowLevetWL} (parallelizing a result from \cite{BCGQ}). Grochow, Johnson, and Levet recently exhibited an $\textsf{AC}^{3}$ isomorphism test for the class of all Fitting-free groups \cite{GrochowJohnsonLevet}. 

Johnson, Levet, Vojtěchovský, and Widholm \cite{JLVWQuasigroups} recently showed that in the multiplication table model, a fully-refined direct product decomposition of a group can be computed in $\textsf{AC}^{3}$. They also exhibited an $\textsf{NC}$ isomorphism test for central quasigroups. To the best of our knowledge, no other special family of quasigroups that are not groups (beyond $O(1)$-generated quasigroups) are known to admit even a polynomial-time isomorphism test.

Levet \cite{LevetCodeEq} recently exhibited an $\textsf{AC}^{3}$ isomorphism test for coprime extensions $H \ltimes N$ where $H$ is elementary Abelian and $N$ is Abelian (parallelizing the remainder of \cite{QST11}). Additionally, Levet exhibited an $\textsf{AC}^{3}$ isomorphism test for central-radical groups where $\text{Rad}(G)$ was elementary Abelian; and $G/\text{Rad}(G)$ was a direct product of either (i) non-Abelian simple groups, or (ii) perfect groups of bounded size (parallelizing a result of Grochow and Qiao \cite{GQCoho}). \\

\noindent \textbf{First-order encodings of groups: the work of Nies and Tent \cite{NiesTent2014}.} Instead of using a group presentation as its short encoding, one could use more general first-order sentences in the language of groups, in the vein of descriptive complexity. In that setting, Nies and Tent showed an analogue of Short Presentation Conjecture: that all groups have polylogarithmic-length first-order descriptions \cite{NiesTent2014}. 

We observe here that one can directly get a \qACz upper bound from their results; however, we will also discuss the obstacle we faced in getting their results to yield depth-3 circuits. Since it is not used elsewhere in our paper, we refer to their paper for any needed background and definitions.

A first-order sentence identifies a group $G$ if $G$ is the only group, up to isomorphism, that satisfies that sentence. Nies and Tent \cite[Thm.~7.3]{NiesTent2014} proved that there is a constant
\(m\) such that every finite group \(G\) has a \(\Sigma_m\)-sentence
\(\varphi_G\) of length \(O(\log^4 |G|)\) that identifies \(G\). Hence, for two
finite groups \(G,H\), we have the following:
\[
G\not\cong H \iff \exists \varphi \; \bigl[ |\varphi| \le c\log^4 n,\ \varphi\in\Sigma_m,\ G\models\varphi,\ H\not\models\varphi \bigr],
\]
where \(n\) denotes the input size and \(c\) is a suitable constant. Indeed, if \(G\not\cong H\), we may take \(\varphi=\varphi_G\).

\begin{observation}
Theorem 7.3 of Nies and Tent \cite{NiesTent2014} implies that \algprobm{Group Isomorphism} is in \qACz.
\end{observation}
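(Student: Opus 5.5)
The plan is to build a quasipolynomial-size, constant-depth circuit for the complement, \emph{non}-isomorphism, using the displayed equivalence, and then negate it. Negation costs no size and at most one level of depth, and \qACz is closed under complement. Concretely,
\[
G\not\cong H \iff \bigvee_{\varphi:\ |\varphi|\le c\log^4 n,\ \varphi\in\Sigma_m} \bigl( [G\models\varphi] \wedge \neg[H\models\varphi] \bigr).
\]
The outer disjunction ranges over all strings of length $O(\log^4 n)$ that encode $\Sigma_m$-sentences. Each symbol of $\varphi$ is either a logical symbol, a group-language symbol, or a variable name among at most $O(\log^4 n)$ variables. So there are $2^{O(\log^4 n \cdot \log\log n)}$ candidate sentences, which is quasipolynomial in $n$. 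It then remains to show that, for a fixed sentence $\varphi$ of length $\ell=O(\log^4 n)$, the predicate ``$G\models\varphi$'' on the Cayley table of $G$ is computed by a constant-depth circuit of size $n^{O(\ell)}$, uniformly in $\varphi$.

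The key step is model checking. Write $\varphi$ in prenex form with at most $m$ quantifier blocks and at most $\ell$ quantified variables in total. Replace each existential (universal) block of $k$ variables by an OR (AND) gate of fan-in $n^k$, one branch per assignment of group elements to those variables. This gives $m$ alternating layers of total fan-in product at most $n^{\ell}=2^{O(\log^5 n)}$, still quasipolynomial. At the bottom we must evaluate the quantifier-free matrix under a fixed assignment. The matrix is a Boolean combination of at most $\ell$ atomic formulas of the form $t_1=t_2$, where the $t_i$ are terms built from multiplication (and inverse and identity, if present) of length at most $\ell$.

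I expect evaluating the atoms to be the main obstacle. A term of length $\ell$ requires up to $\ell$ sequential table lookups, and naively composing lookups would multiply the depth. My first approach is to guess all intermediate subterm values. Introduce an extra existential layer over the values of every subterm, at most $\ell$ elements, so fan-in $n^{\ell}$. Each step ``value of $st$ equals table$[s][t]$'' is then checked by a single lookup, which is a constant-depth polynomial-size check on the input bits. Correctness of the guesses is verified with an AND of these local checks, and the atoms are compared on the guessed values. Since intermediate values are uniquely determined, an existential guess can equivalently be written as a universal one (``for all guesses, if consistent then \dots''). So the extra layer can be merged with whichever adjacent quantifier layer is convenient, and the depth stays $O(m)$. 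Alternatively, one can cite the fact used in the paper that $\DTIMEpl$ computations, such as evaluating a polylogarithmic-length word from the multiplication table, are simulated by quasipolynomial-size depth-2 circuits.

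Finally, assemble everything. Because $m$ is an absolute constant, the total depth is $O(m)=O(1)$ and the size is $2^{\polylog(n)}$. The circuit is uniform, since enumerating sentences and assignments and wiring the lookups are all simple polylogarithmic-time computations. This places non-isomorphism, and hence \algprobm{Group Isomorphism}, in \qACz. I would note that the depth is roughly $m+O(1)$, with $m$ unspecified by Nies and Tent, which explains why this route does not give depth $3$.
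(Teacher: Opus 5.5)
Your proposal is correct and follows essentially the paper's route: an outer OR over the $2^{\polylog(n)}$ candidate $\Sigma_m$-sentences, with $G\models\varphi$ and $H\not\models\varphi$ evaluated by the standard quantifier-to-gate translation (and then complemented to get isomorphism). Your device of guessing all subterm values, which makes the quantifier-free matrix a constant within each branch so that only an AND of single table lookups depends on the input, usefully fills in the step the paper merely asserts (``atomic formulas are evaluated in constant depth''). One small correction to your closing aside: the paper records that Nies and Tent do estimate $m\le 10$, so $m$ is not entirely unspecified.
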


\begin{proof}
For each fixed candidate sentence \(\varphi\), the predicates \(G\models\varphi\)
and \(H\models\varphi\) are computable by constant-depth quasipolynomial-size
circuits via the standard quantifier-to-gate translation: each existential
quantifier block becomes an \(\mathsf{OR}\), each universal block becomes an
\(\mathsf{AND}\), and atomic formulas are evaluated in constant depth from the
multiplication tables. Since \(|\varphi| \leq O(\log^4 n)\) and the number of
quantifier alternations is bounded by some constant \(m\), this yields a
\qACz circuit for the inner test. Finally, the outer existential over $\varphi$ ranges over only \(2^{(\log n)^{O(1)}}\) candidates, so it can be implemented by one additional $\mathsf{OR}$ gate of quasipolynomial fan-in, preserving \qACz. 
\end{proof}

We note, however, that they estimate a bound of $m \leq 10$ (for most of their paper they can get $\Sigma_3$ sentences, but not for the Ree groups). However, even if they could get $m$ down to $3$, we do not see how to get circuits better than depth $3\frac{1}{2}$ out of this approach: the outer existential to guess $\varphi$ can be merged with the top \textsf{OR} gate in evaluating $\varphi$, but then evaluating the atomic (unquantified) formulas at the ``bottom'' of $\varphi$ seems to require $\DTISPpll$, which increases the depth by one, to 4 ($3\frac{1}{2}$ because the bottom fan-in is only polylogarithmic).\footnote{In comparison to \cite{CGLW}, this would be depth $3\frac{1}{2}$ circuits of size $n^{O(\log^3 n)}$, whereas \cite{CGLW} get depth-$3\frac{1}{2}$ circuits of size $n^{O(\log n)}$.}

\paragraph{Lower bounds for \algprobm{Group Isomorphism}.} We are not aware of prior lower bounds on \algprobm{GpI} on any class of Boolean circuits; nonetheless, there are several lower bounds in more restricted algebraic models of computation, which we review here.
Gowers \cite{GowersSubgroupProfiles}, reiterated by Babai \cite[Page~81]{BabaiGraphIso}, asked whether there exists some fixed constant $k$, such that the profiles of subgroups generated by at most $k$ elements would suffice to place \algprobm{Group Isomorphism} into $\textsf{P}$. That is: are all groups determined up to isomorphism by the multiset of isomorphism types of their $k$-generated subgroups? Glauberman and Grabowski \cite{GlaubermanGrabowski} refuted this claim, by showing that $k \geq \sqrt{2 \log_{3} |G|} - 5/2$ was necessary. Wilson \cite{WilsonSubgroupProfiles} subsequently improved this bound to $k \geq \log_{3} |G| - 2$. It is remarkable how tight this bound is: every group is generated by at most $\log_p|G|$ generators, where $p$ is the smallest prime dividing $|G|$.

Grochow and Levet \cite{GrochowLevetWL} showed that there exists an infinite family of Abelian groups $(G_m)_{m \in \mathbb{N}}$ such that any $\textsf{FO}$ (first-order) formula identifying $G_{m}$ requires $\Omega(\log |G_{m}|)$ variables. Collins and Levet \cite{CollinsLevetWL} subsequently extended this result to a higher-arity analogue of the Ehrenfeucht--Fra\"iss\'e pebble game corresponding to $\textsf{FO}$. 

Bshouty \cite{BshoutyAbelianGpI} recently established a query lower bound of $\Omega(|G|)$ for isomorphism testing of Abelian groups given by their multiplication tables, in the ``algebraic'' model where queries are to group elements and operations are the natural group operations (not bit-wise).

The Abelian groups used for the aforementioned lower bounds in finite model theory \cite{GrochowLevetWL, CollinsLevetWL} and query complexity \cite{BshoutyAbelianGpI} are essentially the same family of groups we use to get our CNF lower bound: products of many copies of the cyclic groups $\mathbb{Z}_2$ and $\mathbb{Z}_4$. However, for our CNF lower bound we need them in a particular form, in order to get a reduction from \algprobm{Matrix Rank}.

\section{Preliminaries}
Throughout, for a finite group $G$ we write $n:=|G|$ for its order. The symmetric group, consisting of permutations of a set of size $m$, is denoted $\operatorname{Sym}(m)$. $\mathrm{GL}(d,q)$ denotes the general linear group consisting of the $d \times d$ invertible matrices over the finite field $\FF_q$. 
For matrix groups over finite fields, we write $q=p^e$, where $p$ is prime and $e \geq 1$, and we use $d$ for the matrix dimension, writing for example $\mathrm{GL}(d, q)$ and $ \mathrm{SL}(d, q)$. Other auxiliary integer parameters will be denoted by letters such as $r, s, t, k$, and not by $n,d,p,q,e$,  
which we try to reserve for the meanings above.

\subsection{Complexity theory}
We assume that the reader is familiar with standard complexity classes such as $\textsf{P}, \textsf{NP}, \textsf{L}$, and $\textsf{NL}$. 
For a standard reference on circuit complexity, see \cite{VollmerText}. We consider Boolean circuits using the gates \textsf{AND}, \textsf{OR}, and \textsf{NOT}. 

In this paper, we will consider $\DTIME(\log^c n)$-uniform circuit families $(C_{n})_{n \in \mathbb{N}}$, for some fixed $c \geq 1$. For this,
one encodes the gates of each circuit $C_n$ by bit strings of length $O(\log^c n)$. Then the circuit family $(C_n)_{n \geq 0}$
is called \emph{$\DTIME(\log^c n)$-uniform}  if (i) there exists a deterministic Turing machine that computes for a given gate $u \in \{0,1\}^*$
of $C_n$ ($|u| \in O(\log^c n)$) in time $O(\log^c n)$ the type of gate $u$, where the types are $x_1, \ldots, x_n$, \textsf{NOT}, \textsf{AND}, or \textsf{OR} gates,
and (ii) there exists a deterministic Turing machine that decides for two given gates $u,v \in \{0,1\}^*$ of $C_n$ ($|u|, |v| \in O(\log^c n)$) and a binary encoded integer $i$ with $O(\log^c n)$ many bits in time $O(\log^c n)$ whether $u$ is the $i$-th input gate for $v$. When $c = 1$, this notion of uniformity is referred to as $\textsf{DLOGTIME}$-uniformity. For circuit families of size $\poly(n)$, we will use $\textsf{DLOGTIME}$-uniformity.

\begin{definition}
Fix $k \geq 0$. We say that a language $L$ belongs to (uniform) $\textsf{AC}^{k}$ if there exist a (uniform) family of circuits $(C_{n})_{n \in \mathbb{N}}$ over the $\textsf{AND}, \textsf{OR}, \textsf{NOT}$ basis such that the following hold:
\begin{itemize}
\item The $\textsf{AND}$ and $\textsf{OR}$ gates take exactly $2$ inputs. That is, they have fan-in $2$.
\item $C_{n}$ has depth $O(\log^{k} n)$ and uses (has size) $n^{O(1)}$ gates. Here, the implicit constants in the circuit depth and size depend only on $L$.

\item $x \in L$ if and only if $C_{|x|}(x) = 1$. 
\end{itemize}
Note that the AND and OR gates can have arbitrary fan-in.
\end{definition}

\noindent The complexity class $\NC^{k}$ is defined analogously as $\textsf{AC}^{k}$, except that the $\textsf{AND}, \textsf{OR}$ gates are required to have fan-in at most 2. The class $\SAC^k$ is defined analogously, in which the $\textsf{OR}$ gates have unbounded fan-in but the $\textsf{AND}$ gates must have fan-in $2$. We also allow circuits to compute functions by using multiple output gates. 

For every $k$, the following containments are well-known:
\[
\textsf{NC}^{k} \subseteq \SAC^k \subseteq  \AC^{k} \subseteq \textsf{NC}^{k+1}.
\]

\noindent In the case of $k = 0$, we have that:
\[
\textsf{NC}^{0} \subsetneq \AC^{0} \subsetneq \textsf{NC}^{1} \subseteq \LogSpace \subseteq \textsf{NL} \subseteq \textsf{SAC}^{1} \subseteq \AC^{1}.
\]

\noindent We note that functions that are $\textsf{NC}^{0}$-computable can only depend on a bounded number of input bits. Thus, $\textsf{NC}^{0}$ is unable to compute the $\textsf{AND}$ function. It is a classical result that $\AC^{0}$ is unable to compute \algprobm{Parity} \cite{FSS,Ajtai,Yao}. 

We will also be interested in $\ComplexityClass{NC}$ and $\ComplexityClass{AC}$ circuits of quasipolynomial size (i.\,e., $2^{O(\log^k n)}$ for some constant $k$). For a circuit class $\mathcal{C} \subseteq \ComplexityClass{NC}$, the analogous class permitting a quasipolynomial number of gates is denoted $\ComplexityClass{quasi}\mathcal{C}$. Note that \DLOGTIME uniformity does not make sense for $\ComplexityClass{quasiNC}$, as we cannot encode gate indices using $O(\log n)$ bits. Instead, we will use $\DTIMEpl$-uniformity for $\ComplexityClass{quasiNC}$ \cite{BarringtonQuasipolynomial,FerrarottiGonzalezScheweTurull}. 

\paragraph{Bounded nondeterminism.} For a complexity class $\mathcal{C}$, we define $\betacc{i}\mathcal{C}$ to be the set of languages $L$ such that there exists an $L' \in \mathcal{C}$ such that $x \in L$ if and only if there exists $y$ of length at most $O(\log^{i} |x|)$ such that $(x, y) \in L'$. Similarly, define $\alphacc{i}\mathcal{C}$ to be the set of languages $L$ such that there exists an $L' \in \mathcal{C}$ such that $x \in L$ if and only if for all $y$ of length at most $O(\log^{i} |x|)$, $(x,y) \in L'$. For any $i \geq 0$ and any $c \geq 0$, both $\betacc{i}\FOLL$ and $\alphacc{i}\FOLL$ are contained in  $\mathsf{quasi}\FOLL$, and so cannot compute \algprobm{Parity} \cite{ChattopadhyayToranWagner, Smolensky87algebraicmethods}. 

\paragraph{Time- and space-restricted Turing machines.}
When considering complexity classes defined by Turing machines with a time bound $t(n) \in o(n)$, we use Turing machines with random access to the input tape, which is read-only, and a separate address (or \emph{index}) tape. After writing an address, the machine can go to a query state reading the symbol from the input at the location specified by the binary-encoded integer on the address tape. As usual, the machines are otherwise allowed to have multiple work tapes. 

For functions $t(n), s(n) \in \Omega(\log n)$, the class $\DTISP(t(n),s(n))$ is defined to consist of decision problems computable by deterministic $t(n)$-time and $s(n)$-space-bounded Turing machines. Be aware that there must be one Turing machine that simultaneously satisfies the time and space bound. For details we refer to \cite[Section 2.6]{VollmerText}. For further reading on the connection to \qACz, we refer to \cite{BarringtonQuasipolynomial,FerrarottiGonzalezScheweTurull}.

We frequently use $\DTISPpll$ and $\DTIMEpl$.
However, because of how small the time and space bounds are for these classes, when we abuse notation to say some function (not necessarily decision problem) is computable in $\DTISPpll$, there is some ambiguity as to what this might mean. We use the following two equivalent definitions. First we need some setup. For a function $f$ from bit-strings to bit-strings, we define its \emph{bit function} $\text{bit-}f$ as follows. Write $f(x)_i$ to denote the $i$-th bit of $f(x)$. Then define
\[
\text{bit-}f(x,i) = \begin{cases}
f(x)_i & i \leq |f(x)| \\
\bot & i > |f(x)|.
\end{cases}
\]
Note that $\text{bit-}f$ is always a function which we may take to have at most two output bits (e.\,g., by encoding $f(x)_i$ by repeating the bit, and encoding $\bot$ by $01$), i.\,e., a pair of decision problems.

\begin{deflem}[{\cite[Definition/Lemma~2.4]{CGLW}}] \label{def:DTISP}
For any function $f \colon \{0,1\}^* \to \{0,1\}^*$ with $|f(x)| \leq O(\log |x|)$, the following are equivalent:
\begin{enumerate}
\item $f$ is computable by a Turing machine, using $\poly(\log n)$ time and $O(\log n)$ space, that halts with the result written in a specified place on its work tape.

\item The two bits of $\text{bit-}f$ are each decision problems in $\DTISPpll$, i.\,e., can each be decided by a Turing machine using $\poly(\log n)$ time and $O(\log n)$ space. 
\end{enumerate}
In either case, we say $f$ is computable in $\DTISPpll$.
\end{deflem}

The same result and proof work up to $|f(x)| \leq \poly(\log|x|)$, if we provide the machine with a separate output tape whose size is not counted towards the size bound.

\begin{fact}[{cf. \cite[Fact~2.5]{CGLW}}]\label{fact:DTISP} 
For any $t(n) \leq n$,\footnote{Both Fact~\ref{fact:DTISP} and Proposition~\ref{prop:DTIME} continue to hold when $t(n) > n$, but tighter statements are available in that setting. In this paper we will only use this in the setting where $t(n) \leq \polylog(n)$, so we do not bother specifying this in detail.} any decision problem in $\DTIME(t(n))$ is computable by  
decision trees of depth $t(n)$,
 hence by CNFs with $2^{t(n)}$ clauses each of width $t(n)$, as well as by DNFs with $2^{t(n)}$ terms each of width $t(n)$.
\end{fact}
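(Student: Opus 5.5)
The plan is to pass through deterministic decision trees, using the random-access model described above. Fix an input length $n$ and a deterministic machine $M$ running in time $t(n)\le n$. On a given input, $M$ can only learn about the input through queries: it writes an address on the index tape and then enters the query state. Every other transition depends only on its internal configuration, and that configuration is a deterministic function of the answers it has received so far. So I would unroll $M$ into a binary decision tree $T_n$. Each internal node is labelled by the input position that $M$ queries next, given the answers on the path to it. The two children correspond to the two possible values of that bit. Each leaf is labelled by $M$'s accept or reject decision. A computation of length at most $t(n)$ makes at most $t(n)$ queries, so $T_n$ has depth at most $t(n)$ and at most $2^{t(n)}$ leaves. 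The hypothesis $t(n)\le n$ only guarantees that this is not worse than the trivial tree that queries all $n$ bits. (For a non-Boolean alphabet one would first encode symbols in binary, which changes depth only by a constant factor; I will take the input bits as the queried variables.)

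Next I convert the tree into normal forms. For the DNF, each accepting leaf $\ell$ gives a term: the conjunction of the literals $x_i$ or $\neg x_i$ recorded along the root-to-$\ell$ path. An input satisfies this term exactly when the tree's evaluation on it reaches $\ell$. Since every input reaches exactly one leaf, the OR of these terms computes $M$. There are at most $2^{t(n)}$ terms, each of width at most $t(n)$. For the CNF, I apply the same construction to rejecting leaves, which gives a DNF for the complement. Negating it with De Morgan's laws gives a CNF, and each clause is the disjunction of the negated path literals for one rejecting leaf. This CNF has at most $2^{t(n)}$ clauses, each of width at most $t(n)$.

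The only step needing real care is the unrolling, so that is where I expect the main obstacle. One must check that queries can be adaptive, with addresses computed from previous answers, and that this is allowed, since decision trees are adaptive by definition. One must also check that repeated queries to the same position cause no harm; at worst they create inconsistent branches, which can be pruned or left in place without affecting correctness. Finally, the depth bound counts queries, not steps, and the number of queries is at most the number of steps. Writing the address takes time, so the number of queries is in fact at most $t(n)$ even when addresses are long. This bound is all that is needed.
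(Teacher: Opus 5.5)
Your proposal is correct and is the intended argument. The paper only cites \cite[Fact~2.5]{CGLW} here, but the ``decision trees $\ldots$ hence CNFs/DNFs'' phrasing is exactly your route: unroll the random-access machine's at most $t(n)$ adaptive queries into a decision tree, then read off DNF terms from accepting root-to-leaf paths and CNF clauses (via De Morgan) from rejecting paths, with your remarks on adaptivity, repeated queries, and counting queries rather than steps covering the only points that need care.
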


\begin{proposition} \label{prop:DTIME}
For $t(n) \geq \log n$, any decision problem in $\DTIME(t(n))$ is in $\exists^{O(t(n)^2)} \ACz$.
\end{proposition}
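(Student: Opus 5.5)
We prove this by guessing the entire computation transcript of the $t(n)$-time machine and checking its local consistency with an $\ACz$ circuit. Fix a random-access Turing machine $M$ deciding the problem in time $t(n)$, with $k$ work tapes (including the index tape), and let $x$ be the input with $|x|=n$.

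\textbf{The certificate.} The certificate $y$ is the full sequence of configurations of $M$ on $x$ over its $t(n)$ steps. Since $M$ runs for $t(n)$ steps, it touches at most $t(n)$ cells on each tape. A configuration therefore consists of the state, the $k$ head positions (each $O(\log t(n))$ bits), and the contents of the first $t(n)$ cells of each work tape, for $O(t(n))$ bits in total. Over $t(n)$ time steps this gives $|y| = O(t(n)^2)$. In addition, for each step $s$ the certificate records the input bit $b_s$ that $M$ claims to read when it is in a query state at step $s$.

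\textbf{Verification is local.} The verifier accepts iff three conditions hold.
\begin{enumerate}
\item Configuration $0$ is the initial configuration, and the final configuration is accepting.
\item For every step $s$, configuration $s+1$ follows from configuration $s$ by $M$'s transition function, given the recorded bit $b_s$.
\item For every step $s$ at which $M$ is in a query state, $b_s = x_j$, where $j$ is the integer written on the index tape in configuration $s$.
\end{enumerate}

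Condition 2 is an $\mathsf{AND}$ over all $s$ and all cell positions $i$. Each conjunct checks that cell $i$ at time $s+1$ equals the cell at time $s$ when no head is there, and is otherwise updated according to the transition function. This is a predicate on $O(\log t(n))$ bits of head position, $O(1)$ bits of state, and $O(1)$ cell bits. Comparing a head position to the constant $i$ is an $\ACz$ equality test. Equivalently, the conjunct is a function of $O(\log t(n))$ bits, hence a CNF of size $\poly(t(n))$. The total size is $\poly(t(n))$, which is $\poly(n)$ when $t(n)\le n$.

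\textbf{The main obstacle: the input lookup.} Condition 3 is the one place where the certificate must be compared against the actual input at an address that is not fixed in advance. The index tape holds $\log n$ bits (we may assume the addresses used are at most $\log n$ bits long; longer ones are just out of range). We handle this with an $\mathsf{AND}$ over all $j\in[n]$ of the clause
\[
(\text{index}_s \neq j) \vee (b_s = x_j).
\]
Here $\text{index}_s \neq j$ is an $\mathsf{OR}$ of $\log n$ literals of $y$, so the whole check is a polynomial-size depth-$2$ $\ACz$ circuit. Because $t(n)\ge \log n$, the index tape fits within the $O(t(n))$-bit configuration.

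\textbf{Correctness.} If $x$ is accepted, the true transcript satisfies all three conditions. Conversely, by induction on $s$, any certificate that passes the checks is exactly $M$'s real computation on $x$, so the verifier accepts only when $M$ accepts. Uniformity holds because every gate is indexed by simple arithmetic on $(s,i,j)$.
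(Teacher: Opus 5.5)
Your proof is correct and is essentially the paper's argument: you guess the $O(t(n)^2)$-bit computation history, verify local consistency in $\ACz$ as in Cook--Levin, and handle input queries with a depth-$2$ gadget over all $n$ addresses. The one difference is that you guess the read bit $b_s$ and certify it with the CNF $\bigwedge_j([\text{index}_s\neq j]\vee[b_s=x_j])$, whereas the paper reads $x_{\text{index}}$ directly via $\bigvee_i (x_i\wedge[\text{index}=i])$. Your aside restricting to $t(n)\le n$ is unnecessary, since the verifier has size $\poly(t(n))\cdot n$, which is polynomial in $|x|+|y|\ge t(n)^2$ for every $t(n)\ge\log n$.
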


Rather than following the proof of \cite[Fact~2.5]{CGLW}, the proof here is more similar to the proof of the Cook--Levin Theorem. It seems plausible to us that a proof similar to that of \cite[Fact~2.5]{CGLW} might replace the $O(t(n)^2)$ non-deterministic bits with only $O(t(n))$, but we did not need that level of granularity so have not worked out the details.

\begin{proof}
Suppose $M$ is a $\DTIME(t(n))$ machine. We may consider its computation history on input $x$ as a $t(n) \times t(n)$ array, with space indexing the first coordinate and time the second. In this array, for each time step of the computation we include the contents of its work tapes and its query tape, but not the input tape. Each cell in the array contains information about the value of that tape cell at that time, as well as whether the tape head was in that tape cell at that time, and what state $M$'s control was in. Note that the address tape need only have length $\log n$ regardless of how large $t(n)$ is, since the address is an index into the input, which has length $n$.

The $O(t(n)^2)$ nondeterministic bits are used to guess the contents of the entire array, and an $\ACz$ machine then verifies the contents as follows. (The constant hidden by the big-Oh accounts for: the number of tapes, the number of states, and adding in the size of the address tape.) For cells in which the state of the machine is not querying the input, the verification proceeds exactly as in Cook--Levin: each $2 \times 3$ sub-array is verified independently in parallel (2 adjacent time steps of 3 adjacent tape cells).

For cells in which the state of the machine is querying the input, note that our $\ACz$ verifier circuit $C$ gets access both to the original input $x$ and to the nondeterministic guesses (for the contents of the computation history array). Thus, when verifying a cell in which the machine was in a query state, the $\ACz$ verifier includes a gadget that examines the $\log n$ bits of the address tape, and for each of the $n$ possible settings of those bits accesses the corresponding bit of $x$, and then uses that bit to verify that the next cells in the array are consistent with that bit of $x$ (in terms of how it affected the state of the machine, its position, and what it writes on the tape). To give a more concrete sense of this gadget, we show how it accesses the indexed bit of the input: let $\alpha$ be the $\log n$ bits on the address tape, and for an integer $i \in [0,n]$ let $b(i,j)$ be the $j$-th bit of $i$ when written in binary. Then the gadget uses
\[
\bigvee_{i=0}^{n-1} \left(x_i \wedge \bigwedge_{j=0}^{\lceil \log n \rceil} [b(i,j) = \alpha_j]\right).
\]
This gadget thus has size $O(n \log n)$ and depth 2 (note that $b(i,j)$ is a constant, so $[b(i,j) = \alpha_j]$ is really just a literal: $\alpha_j$ if $b(i,j)=1$ and $\neg \alpha_j$ if $b(i,j)=0$). 

The overall circuit thus has size $O(t(n)^2 n \log n)$, which is polynomial in the size of its input plus the number of non-deterministic bits. \qedhere
\end{proof}

We will now recall the polylogarithmic-time hierarchy.

\begin{definition}
For $k \in \mathbb{N}$, let $\ComplexityClass{\Sigma_{k}TIME}(\polylog(n))$ be the class of languages $L$, such that:
\[
\omega \in L \iff \exists{x_1} \forall{x_2} \exists{x_3} \cdots Qx_k \, \textsf{DTIME}(\polylog(|\omega|)).
\]
Here, the quantifiers alternate starting with an existential quantifier. In particular, if $k$ is even, then $Q$ is a universal quantifier. Otherwise, $Q$ is an existential quantifier. 

We similarly define $\ComplexityClass{\Pi_{k}TIME}(\polylog(n))$ to be the class of languages $L$ such that:
\[
\omega \in L \iff \forall{x_1} \exists{x_2} \forall{x_3} \cdots Qx_k \, \textsf{DTIME}(\polylog(|\omega|)).
\]
Here, the quantifiers alternate starting with an universal quantifier. In particular, if $k$ is even, then $Q$ is an existential quantifier. Otherwise, $Q$ is a universal quantifier. 

In both the settings of $\ComplexityClass{\Sigma_{k}TIME}(\polylog(n))$ and $\ComplexityClass{\Pi_{k}TIME}(\polylog(n))$, we note that the certificates $x_1, \ldots, x_k$ all may be taken to have length $\polylog(n)$, so in particular we have:
\[
\ComplexityClass{\Sigma_{k}TIME}(\polylog(n)) = \exists^{\polylog(n)}\forall^{\polylog(n)} \dotsb Q^{\polylog(n)}\DTIMEpl.
\]

The \emph{polylogarithmic-time hierarchy}, which we denote by $\textsf{PLH}$, is:
\[
\textsf{PLH} := \bigcup_{k \in \mathbb{N}} \ComplexityClass{\Sigma_{k}TIME}(\polylog(n)).
\]
\end{definition}

\paragraph{Reductions.} A \emph{many-one} reduction $A \leq_m B$ is a function $r$ such that for all strings $x$, $A(x) = B(r(x))$; the many-one reduction lies in a complexity class $\mathcal{C}$ if $r \in \mathcal{C}$. A many-one reduction $r$ is a \emph{projection} if each output bit of $r(x)$ is either a constant or some bit $x_i$ of the input (each $x_i$ may be used zero or more times). We say the function $n \mapsto \max_{|x|=n} |r(x)|$ is the \emph{stretch} of $r$. Note that if $r$ is a projection reduction of stretch $s(n)$ (of any complexity) and $B$ is solvable by non-uniform CNFs (resp., DNFs) of size $S(n)$, then $A$ is solvable by non-uniform CNFs (DNFs) of size $S(s(n))$. In particular, if $r$ has polynomial stretch, and $B$ has quasi-polynomial size CNFs, then so does $A$. If $r$ is uniformly computable, then the analogous statements holds for uniform CNFs (resp. DNFs). 

We say that a many-one reduction $r$ is an $\FF_2$-\emph{affine projection}
if, after identifying its input and output strings with vectors over
$\FF_2$, every output bit of $r(x)$ is an affine linear function of the
input bits. This means that for every output coordinate $j$ there are
$a_{j,0},a_{j,1},\ldots,a_{j,n}\in\FF_2$ such that
$
    r(x)_j=a_{j,0}+\sum_{i=1}^n a_{j,i}x_i
    \pmod 2.
$
Differently put, $r(x)=Ax+b$ for some matrix $A$ over $\FF_2$ and some
vector $b$. We call a many-one reduction whose reduction map has this
form an $\FF_2$-affine projection reduction.

\subsection{Group theory} \label{sec:GroupTheory}
\noindent For a standard reference, see \cite{Robinson1982}. All groups will be assumed to be finite, unless otherwise specified. For $g, h \in G$, the \emph{commutator} $[g, h] := ghg^{-1}h^{-1}$. The \emph{commutator subgroup} $[G, G] := \langle \{ [g, h] : g, h \in G \} \rangle$. 

A group $G$ is \emph{simple} if the only normal subgroups are $1, G$. A \emph{composition series} for the group $G$ is an ascending chain of subgroups
$1=G_0 \triangleleft G_1 \triangleleft \cdots \triangleleft G_k = G$ such that for each $0 \leq i < k$, $G_{i+1}/G_{i}$ is simple. The multiset $\{\!\!\{ G_{i+1}/G_{i} : 0 \leq i < k\cdot \}\!\!\}$ is the collection of \emph{composition factors}. While a group may have many different composition series, the Jordan--H\"older
Theorem states that any two composition series of $G$ have the same length and same multiset of composition factors (up to isomorphism), though the composition factors need not appear in the same order between the two series.

The Classification of Finite Simple Groups says that every finite simple group is either alternating, of Lie type, or one of the 26 sporadic groups. Thus each finite simple group belongs to one of finitely many uniform families (each being indexed by one or two parameters: an integer $d$ and a prime power $q$), together with finitely many  sporadic cases. Because of its utility in algorithms, we recall a convention from \cite{BGKLP}:

\begin{definition}[{Standard names of finite simple groups, \cite{BGKLP}}] \label{def:standard-name}
The \emph{standard name} of a finite simple group $G(q)$ includes: (1) a string indicating which family it is from (e.g. $A$, $B$, $C$, $D$, $^2 G_2$, etc.), (2) if it is from a family of unbounded rank, the $d$ indexing which element of the family (viz. $A_d$, $B_d$, etc.), (3) the prime $p$ such that $q=p^e$, together with an irreducible polynomial over
$\FF_p$ defining the field occurring in the standard matrix realization of $G(q)$,
and a primitive root of that field. For untwisted groups this field is $\FF_q$;
for twisted groups it may instead be $\FF_{q^2}$ or $\FF_{q^3}$.
(The standard name of a sporadic group only includes part (1).)
\end{definition}

The alternating groups of degree $m$ have order $m!/2$. The Lie-type finite simple groups with parameters $(d,q)$ have order $q^{\Theta(d^2)}$, and those without the parameter $d$ have order $q^{O(1)}$. The standard name of a finite simple group $S$ thus always has bit length at most $O(\log |S|)$.

Let $G$ be a group and let $S=\{g_1,\dots,g_m\}\subseteq G$. A \emph{straight-line program} (SLP) over $S$ is a finite sequence $h_1,\dots,h_t$ of elements of $G$ such that each $h_i$ is one of the following:  a generator $g_j\in S$; the inverse of an earlier term, $h_r^{-1}$ for some $r<i$; or the product of two earlier terms, $h_rh_s$ for some $r,s<i$. The \emph{value} of the SLP is its final term $h_t$. We say that an element $g\in G$ has an SLP of length $t$ over $S$ if there is such a program with final value $g$, and we say a subset $T \subseteq G$ has an SLP of length $t$ over $S$ if there is such an SLP such that the set of elements appearing in the SLP contains $T$. Babai and Szemerédi \cite{BabaiSzemeredi} established a \emph{Reachability Lemma}, showing that every element $h \in H := \langle S \rangle$ can be generated by an SLP of length at most $(1 + \log |H|)^2$. 

\newcommand{\coNTISPpllMV}{\ComplexityClass{coNTISP}(\polylog(n),\log n) \ComplexityClass{MV}}
We record here for use later a lemma whose proof uses the same idea as \cite[Cor.~3.2]{CGLW}. Given a binary relation $R \subseteq X \times W$, we say that $R$ is computable in $\coNTISPpllMV$ (cf. $\ComplexityClass{NPMV}$, \cite{BookLongSelman, selman}) if there is a co-nondeterministic Turing machine using time $\polylog(n)$ and space $O(\log n)$ that, on input $x \in X$, each co-nondeterministic branch either outputs a special symbol $\dagger$, or outputs some $w \in W$ such that $(x,w) \in R$, and furthermore, if there exists $w \in W$ such that $(x,w) \in R$, then at least one co-nondeterministic branch outputs some such $w$. 

\begin{lemma}[Discrete logarithm in groups given by Cayley table] \label{lem:discrete-log}
Given the multiplication table of a group $G$, discrete logarithms in $G$ can be computed in $\coNTISPpllMV$. (That is, the binary relation $\{((G,g,h), k) : g, h \in G, g^k = h\}$ is computable in the stated complexity class.)
\end{lemma}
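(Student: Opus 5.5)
The plan is to follow the idea of \cite[Cor.~3.2]{CGLW}: guess the bits of $k$ while maintaining powers of $g$ by repeated squaring, and use co-nondeterminism (universal branching) only to guard against wrong guesses. Given $(G,g,h)$, we have $|g| \leq n$, so any solution $k$ may be taken in $[0,n)$ and written with $\ell = \lceil \log_2 n\rceil$ bits $k = \sum_{j<\ell} b_j 2^j$.

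Each table lookup $a\cdot b$ is an index computation on $O(\log n)$-bit numbers followed by one random-access read, which takes $\polylog(n)$ time and $O(\log n)$ space. So a single multiplication step is cheap. What we cannot afford is to store all the intermediate powers $g^{2^j}$ at once, since that needs $O(\log^2 n)$ space. We therefore keep only $O(1)$ group elements at any time.

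The machine proceeds as follows.

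\begin{enumerate}
\item It scans $j=0,\dots,\ell-1$, keeping a current square $s_j = g^{2^j}$ and a running product $p_j = g^{\sum_{i<j} b_i 2^i}$. Updates are $s_{j+1}=s_j s_j$, and $p_{j+1}=p_j s_j$ if $b_j=1$ while $p_{j+1}=p_j$ otherwise. This uses $O(1)$ multiplications per step and $O(\log n)$ space.
\item The bits $b_j$ cannot be stored on the work tape all at once if we insist on $O(\log n)$ space. They are only $\ell=O(\log n)$ bits, however, so the whole candidate $k$ fits in $O(\log n)$ space after all. Hence a branch simply universally branches over all $k\in[0,n)$, i.e.\ $\lceil\log n\rceil$ co-nondeterministic bits.
\item On each branch, compute $g^k$ by the square-and-multiply procedure above in $O(\log n)$ multiplications, i.e.\ $\polylog(n)$ time, and compare the result with $h$ via table lookups.
\item If $g^k=h$, output $k$; otherwise output $\dagger$.
\end{enumerate}

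Correctness is immediate. Every branch that does not output $\dagger$ outputs a valid $k$. If some $k$ with $g^k=h$ exists, then some such $k$ lies in $[0,|g|)\subseteq[0,n)$, and the branch guessing it outputs it. This matches the definition of $\coNTISPpllMV$.

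The main point needing care is the time and space accounting of table access. The index of entry $(a,b)$ is computed from $a$, $b$ and $n$ by $O(\log n)$-bit arithmetic, either a multiplication or a concatenation if we assume the table is padded to a power-of-two side length. Decoding the entry of $O(\log n)$ bits then takes $O(\log n)$ queries. This gives $\polylog(n)$ time per group multiplication in $O(\log n)$ space, which is the same primitive used throughout \cite{CGLW}. No deeper obstacle is expected.
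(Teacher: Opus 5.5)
This is correct and is essentially the paper's proof: you universally branch over all $k<n$, compute $g^k$ in \DTISPpll (the paper cites \cite[Lem.~3.1]{CGLW}, whereas you spell out square-and-multiply with $O(1)$ stored group elements), output $k$ if $g^k=h$ and $\dagger$ otherwise, and use $|\langle g\rangle|\le n$ to guarantee a non-$\dagger$ branch whenever $h\in\langle g\rangle$. The differences are cosmetic: the paper treats $h=1$ separately by outputting $k=0$ (detectable with the single lookup $h\cdot h=h$, which also avoids needing the identity as the initial running product, a point you should handle with a flag), and your opening framing about ``guessing'' bits and the back-and-forth in step 2 are unnecessary, since $k$ simply sits on the work tape.
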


\begin{proof}[Proof (cf. {\cite[Cor.~3.2]{CGLW}})]
If $h=1$, simply output $k=0$. Otherwise, universally check all $k=1,\dotsc,n-1$. We can check in $\DTISPpll$ whether $g^k = h$ by \cite[Lem.~3.1]{CGLW}. If $g^k \neq h$, then that co-nondeterministic branch outputs $\dagger$. If $g^k=h$, then that co-nondeterministic branch outputs $k$. 

If $h \in \langle g \rangle$, then since $|\langle g \rangle| \leq |G|=n$, there must be some $k \in \{1,\dotsc,n-1\}$ such that $g^k = h$, and hence at least one co-nondeterministic branch of the computation outputs an integer.
\end{proof}

\subsection{Group Extensions and Cohomology} \label{sec:Cohomology}

We recall background on group extensions and non-Abelian cohomology. Much of the theory was originally developed in \cite{dedecker,inassaridze}, but \cite{GQCoho} independently redeveloped it (unaware of that prior work at the time), and put it in a form particularly convenient for algorithmic applications, so we follow the notation and development there.  Given a finite group $G$, we will consider a normal subgroup $N \trianglelefteq G$ when considering $G$ as an extension of $N$ by $Q := G/N$. Here, we denote this as $N \xhookrightarrow{\iota} G \overset{\pi}{\twoheadrightarrow} Q$, where $\iota : N \to G$ is an injection and $\text{Im}(\iota) = \text{ker}(\pi)$. We refer to $G$ as the \emph{total group} of the extension.

\paragraph{Actions.} For $g \in G$, let $c_{g} : N \to N$ be given by $c_{g}(n) = gng^{-1}$. The conjugation action $\theta' : G \to \Aut(N)$ is defined by $\theta'(g) := c_{g}$. When $N$ is non-Abelian (as will frequently be the case in our setting), $\theta'$ does not contain $N$ in its kernel. However, the action of $N$ on itself by conjugation is by inner automorphisms (by definition); we denote the group of inner automorphisms by $\text{Inn}(N)$. Thus, we obtain a well-defined homomorphism $G/N \to \Aut(N)/\text{Inn}(N) =: \text{Out(N)}$.
We refer to such a map as an \emph{outer action}.

In computations, rather than represent an outer action as a coset of $\text{Inn}(N)$ in $\Aut(N)$, we specify it by a representative automorphism. Consequently, we must remember that we may need to multiply by an arbitrary element of $\text{Inn}(N)$. We will use $T$ to denote an action rather than $\theta$, to remind the reader that the essential object is the outer action $\theta$ represented by $T$, despite the fact that we work with $T : Q \to \Aut(N)$. Two actions $T_1, T_2 : Q \to \Aut(N)$ are \emph{outer equivalent} if there exists a function $t' : Q \to \text{Inn}(N)$ and an automorphism $\alpha \in \Aut(N)$ such that $T_{1}(q) = \alpha^{-1} \circ t'(q) \circ T_{2}(q) \circ \alpha$, for all $q \in Q$.

\begin{definition}[\algprobm{Outer Action Compatibility} {\cite[Definition~3.2]{GQCoho}}]
Given two actions $T_{1}, T_{2} : Q \to \Aut(N)$, decide whether there exists $\beta \in \Aut(Q)$ such that $T_{1}$ and $T_{2} \circ \beta$ are outer equivalent; that is, whether there exists $(\beta, \alpha, t') \in \Aut(Q) \times (\Aut(N) \rtimes \text{Inn}(N)^{Q})$ such that $T_{1}(q) = \alpha^{-1} \circ t'(\beta(q)) \circ T_{2}(\beta(q)) \circ \alpha,$ for all $q \in Q$.
\end{definition}

\paragraph{Cohomology.} Let $\pi : G \to G/N \cong Q$ be the natural projection map. Any function $s : Q \to G$ such that $\pi(s(q)) = q$ for all $q \in Q$ is called a \emph{section} of $\pi$. Any such section gives rise to a function $f_{s} : Q \times Q \to N$ defined by $f_{s}(p,q) = s(p)s(q) \cdot s(pq)^{-1}$. We are free to choose $s(1) = \text{id}_{G}$, and then $f(1_Q,q) = f(q,1_Q) = 1_N$ for all $q \in Q$. Such sections are called \emph{normalized}. We will assume that all sections are normalized, unless otherwise stated.

A section gives rise to an action (not just an outer action) $T_{s} : Q \to \Aut(N)$, given by $T_{s}(q)(n) = s(q) \cdot n \cdot s(q)^{-1}$. The $2$-cocycle identity is:
\[
f_{s}(q_1, q_2) \cdot f_{s}(q_1 q_2, q_3) = T_{s}(q_1)(f_{s}(q_2, q_3)) \cdot f_{s}(q_1, q_2 q_3).
\]
Any function $f : Q \times Q \to N$ is called a \emph{$2$-cochain}. If $f$ satisfies the $2$-cocycle identity with respect to $T$, then $f$ is called a \emph{$2$-cocycle} with respect to $T$.

Observe that the $2$-cocycle identity depends not only on the action $T_{s}$ and $f_{s}$, but also on the relationship between $T_{s}$ and $f_{s}$ (namely, that they come from the same section). It is preferable to have a condition that does not depend on the ambient group extension. To this end, note that the action satisfies $T_{s}(q_1)T_{s}(q_2) = c_{f_{s}(q_1, q_2)}(T_{s}(q_{1}q_{2}))$, where $c_{m} : N \to N$ denotes the conjugation action sending $c_{m}(n) = mnm^{-1}$. We now recall the notion of extension data from Grochow and Qiao.

\begin{definition}[Extension Data {\cite[Definition~3.4]{GQCoho}}]
Let $Q$ and $N$ be groups. We say that a pair $(T, f)$ of an action $T : Q \to \Aut(N)$ and a function $f : Q \times Q \to N$ is \emph{extension data} if for all $q_i \in Q$, 
\begin{align*}
&T(q_1)T(q_2) = c_{f(q_1, q_2)}(T(q_{1}q_{2})), \text{ and } \\
&f(q_1, q_2) \cdot f(q_{1}q_{2}, q_{3}) = T(q_1)(f(q_2, q_3)) \cdot f(q_1, q_{2}q_{3}).
\end{align*}

\noindent In this case, we refer to $f$ as a $2$-cocycle with respect to the action $T$.
\end{definition}

We now recall what it means for two extension data to be equivalent.

\begin{definition}[Equivalence of Extension Data {\cite[Definition~3.5]{GQCoho}}]
Two extension data $(T_i, f_i)$ ($i = 1, 2$) are \emph{equivalent} if there exists a map $t : Q \to N$ such that $T_{1}(q) = c_{t(q)} T_{2}(q)$ for all $q \in Q$, and 
\[
f_{1}(q,r) = t(q)[T_{2}(q)(t(r))] \cdot c_{f_{2}(q,r)}(t(qr)^{-1}) \cdot f_{2}(q,r) := f_{2}^{t, T_{2}}(q,r).
\]
\end{definition}

\begin{definition}[Pseudo-Congruence of Extension Data {\cite[Definition~3.6]{GQCoho}}] \label{def:pseudocongruence}
Let $Q$ and $N$ be groups. For $i = 1, 2$, let $T_i : Q \to \Aut(N)$, $f_i : Q \times Q \to N$, and $(T_i, f_i)$ the corresponding extension data. We say that $(T_1, f_1)$ and $(T_2, f_2)$ are \emph{pseudo-congruent} if there exist $(\alpha, \beta) \in \Aut(N) \times \Aut(Q)$ and $t : Q \to N$ such that for all $q \in Q$ and all $n \in N$, the following hold:
\begin{align*}
&T_{1}(q)(n) = (\alpha^{-1} \circ c_{t(\beta(q))} \circ T_{2}(\beta(q)) \circ \alpha)(n), \text{ and for all }  q_1, q_2 \in Q \\
&f_{1}(q_1, q_2) = \alpha^{-1} \biggr[f_{2}^{t, T_2}(\beta(q_1), \beta(q_2)) \biggr] =: f_{2}^{(\alpha, \beta, t, T_2)}(q_1, q_2).
\end{align*}
\end{definition}

We now turn to discussing when two different extension data $(T_1, f_1), (T_2, f_2)$ yield isomorphic total groups. The following is essentially equivalent to \cite[Main Lemma~3.7]{GQCoho}, but using the assumption about isomorphisms extending from a given normal subgroup (which, in our setting, we will guess non-deterministically), rather than assuming the normal subgroups come from characteristic subgroup functions as in \cite{GQCoho}; the latter assumption is used in \cite{GQCoho} only to ensure that any isomorphism $G_1 \to G_2$ must send $N_1$ to $N_2$.

\begin{lemma}[{cf. \cite[Main~Lemma~3.7]{GQCoho}}] \label{lem:GQMain}
Given two finite groups $G_1$ and $G_2$, with normal subgroups $N_1 \unlhd G_1$ and $N_2 \unlhd G_2$, we have that there is an isomorphism $G_1 \to G_2$ sending $N_1$ to $N_2$ if and only if both of the following conditions hold:
\begin{enumerate}[label=(\alph*)]
\item $N_1 \cong N_2$ and $G_1 / N_1 \cong G_2 / N_2$.

\item Let $(T_i, f_i)$ be extension data for $N_i$ and $G_i/N_i$. The extension data $(T_1, f_1)$ is pseudo-congruent to $(T_2, f_2)$.
\end{enumerate}
\end{lemma}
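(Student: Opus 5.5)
The plan is to prove the two directions separately, working throughout with normalized sections. The key device is the explicit correspondence between isomorphisms and triples $(\alpha,\beta,t)$ that translate one set of extension data into the other.

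\emph{Forward direction.} Suppose $\varphi\colon G_1\to G_2$ is an isomorphism with $\varphi(N_1)=N_2$. Then $\alpha_0:=\varphi|_{N_1}\colon N_1\to N_2$ is an isomorphism, and $\varphi$ descends to an isomorphism $\bar\varphi\colon Q_1:=G_1/N_1\to Q_2:=G_2/N_2$; this gives (a). For (b), we first use these isomorphisms to identify $N_1$ with $N_2$ and $Q_1$ with $Q_2$. We fix the identifications so that the $\alpha$ and $\beta$ in Definition~\ref{def:pseudocongruence} come from $\alpha_0$ and $\bar\varphi^{-1}$, up to the direction convention used there. Let $s_1,s_2$ be the sections defining $(T_i,f_i)$. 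Then $\varphi\circ s_1\circ\bar\varphi^{-1}$ is a second section of $\pi_2$, so for each $q$ it differs from $s_2$ by an element of $N_2$. This defines $t(q)\in N_2$ with $\varphi(s_1(\bar\varphi^{-1}q)) = t(q)s_2(q)$. Conjugating $n$ by both sides gives the action identity $T_1 = \alpha^{-1}\circ c_{t(\beta(\cdot))}\circ T_2(\beta(\cdot))\circ\alpha$. Expanding $\varphi(s_1(p)s_1(q)s_1(pq)^{-1})$ using $s_2$ and $t$ gives precisely the formula $f_2^{t,T_2}$ of the equivalence definition, after pulling back by $\alpha^{-1}$. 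This expansion is a direct computation and should be routine; the only care needed is in tracking $c_{f_2(q,r)}(t(qr)^{-1})$.

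\emph{Reverse direction.} Assume (a) and (b), with witnesses $(\alpha,\beta,t)$. The extension data $(T_i,f_i)$ need not come from sections. To handle this, I would first note the standard fact that each total group $G_i$ is isomorphic to the group on $N_i\times Q_i$ with multiplication $(n,p)(m,q)=(n\,T_i(p)(m)\,f_i(p,q),\,pq)$. This isomorphism sends $N_i$ to $N_i\times 1$. The extension-data axioms are exactly what make this multiplication associative with normalized identity, and $s_i(q)\mapsto(1,q)$ realizes the isomorphism. Next, I would define
\[
\Phi(n,q) := \bigl(\alpha(n)\,t(\beta(q)),\ \beta(q)\bigr),
\]
suitably ordered and inverted to match the convention in Definition~\ref{def:pseudocongruence}. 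The pseudo-congruence identities then show that $\Phi$ is a homomorphism, by the same computation as in the forward direction run backwards. $\Phi$ is bijective since $\alpha$ and $\beta$ are, and it maps $N_1\times 1$ onto $N_2\times 1$. Composing with the isomorphisms to the $G_i$ yields the required isomorphism $G_1\to G_2$ sending $N_1$ to $N_2$.

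The main obstacle is bookkeeping rather than substance. One has to fix the direction conventions in Definition~\ref{def:pseudocongruence} (which side $\alpha$ and $\beta$ act on, and where $t$ is evaluated at $\beta(q)$) so that the forward computation matches the formula exactly. This must be done once, consistently, and used in both directions; I would set it up by deriving the forward-direction formula first and reading off the conventions from it. Alternatively, all of this coincides with the argument of \cite[Main Lemma~3.7]{GQCoho}, and that argument can be cited directly. The only use of characteristic subgroups there is to guarantee $\varphi(N_1)=N_2$, and here that is assumed as a hypothesis.
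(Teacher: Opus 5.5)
Your proposal is correct and is essentially the paper's own justification. The paper proves this lemma only by appeal to \cite[Main Lemma~3.7]{GQCoho}, noting exactly as you do that the characteristic-subgroup hypothesis there serves solely to force $\varphi(N_1)=N_2$. Your explicit sketch is the standard argument behind that citation: the correcting map $t$ obtained by comparing $\varphi\circ s_1\circ\bar\varphi^{-1}$ with $s_2$, and the twisted product $N_i\times_{T_i,f_i}Q_i$ with $\Phi(n,q)=(\alpha(n)\,t(\beta(q)),\beta(q))$. It checks out verbatim with the paper's conventions, taking $\alpha=\varphi|_{N_1}$ and $\beta=\bar\varphi$ (not $\bar\varphi^{-1}$), and no reordering of $\Phi$ is needed.

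The only slip is your remark that $(T_i,f_i)$ ``need not come from sections.'' The reverse direction needs them to be the extension data of $G_i$ itself, arising from sections $s_i$, and your identification $n\,s_i(q)\mapsto(n,q)$ already assumes this.
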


\begin{remark} \label{rmk:t-trivial}
Following the introduction of $t \colon Q \to N$ starting on the bottom of p.~1168 of \cite{GQCoho}, we see that the role of $t$ is essentially to correct for having chosen sections in two groups that are not equal under the chosen candidate isomorphism. In particular, Lemma~\ref{lem:GQMain} holds even if we restrict the pseudo-congruence in part (2) to have $t$ being the trivial map ($t(q)=1$ for all $q \in Q$). That is, $G_1 \cong G_2$ by an isomorphism sending $N_1$ to $N_2$ iff (a) [as in Lemma~\ref{lem:GQMain}] and (b') there exist extension data $(T_i,f_i)$ for $G_i$ ($i=1,2$) that are pseudo-congruent by a pseudo-congruence with $t=1$.
\end{remark}

\subsection{Group Presentations} \label{sec:Presentations}

\noindent We recall some preliminaries regarding group presentations. For a standard reference see, e.\,g., \cite{CoxeterMoser}; our presentation follows closely that in \cite{DietrichWilson}. The \emph{free group} $F[X]$ on a given alphabet $X$ is obtained by creating a disjoint copy $X^{-}$ of the alphabet, and treating the elements of $F[X]$ as words over the disjoint union $X \dot \cup X^{-}$, including the empty word $1 \not \in X \cup X^{-}$. Formally, one can replace the given set $X$ by $\{ (x, 1) : x \in X \}$ and $X^{-}$ by $\{ (x, -1) : x \in X \}$. For simplicity, we abuse notation by identifying $x$ with $(x, 1)$, and $x^{-}$ with $(x, -1)$. The empty word serves as the identity, and concatenation is the group product. To impose the existence of inverses, we apply the rewriting rules $xx^{-} \mapsto 1$ and $x^{-}x \mapsto 1$, for each $x \in X$. For a set $M$, let $M^{X}$ denote the set of functions from $X$ to $M$. The elements of $M^{X}$ are naturally represented as tuples $\mathbf{m} = (\mathbf{m}_{x})_{x \in X}$. For a group $G$, a tuple $\textbf{g} \in G^X$, and a word $w \in F[X]$, we assign $w(\textbf{g}) \in G$ by replacing each variable $x^{\pm}$ in $w$ with the corresponding value of $\textbf{g}_{x} \in G$ or $\textbf{g}_{x}^{-1} \in G$, and then evaluating the corresponding product in $G$. Define the homomorphism $\widehat{\mathbf{g}} : F[X] \to G$ by sending $w \mapsto w(\mathbf{g})$. If $G$ is generated by $S := \text{Im}(\widehat{\mathbf{g}})$ and $R$ generates $\text{ker}(\widehat{\mathbf{g}})$, then the pair $\langle S | R \rangle$ is a \emph{presentation} of $G$, where $R$ is the set of \emph{relations} relative to $S$. Note that $\langle S | R \rangle$ describes $G$ up to isomorphism, in which case we write $G \cong \langle S | R \rangle$.

In this paper, we will follow the conventions of \cite{GKKLquant} in defining the length of a presentation.

\begin{definition}[Length of a Presentation] \label{def:PresentationLength}
The \emph{(word) length} of a presentation is the sum of the number of generators and the lengths of the relations as words in both the generators and their inverses. 
\end{definition}

\begin{remark}[On different notions of length]
\label{rem:presentation-length-conventions}
There are differing notions in the literature regarding the length of a presentation. Babai, Goodman, Kantor, Luks, and Pálfy  \cite{BGKLP} take the length of the presentation to be the total number of characters required to write down all of the generators and relations. While Babai, Goodman, Kantor, Luks, and Pálfy count each generator as a single symbol, they point out that in properly accounting for the total bit-length, one could instead count each generator as an indexed symbol where the index must be written as a word over a fixed finite alphabet (say, binary). If there are $k$ generators, then doing so requires $O(\log k)$ bits per generator. As $k \leq \lceil \log |G| \rceil$, this would only increase the length of the presentation by a factor of $O(\log k) = O(\log \log |G|)$.

In \cite{BGKLP} they also allow exponents in their relators to be written in binary, but point out that one can also avoid this without substantially changing the total length, as follows. Namely, one can obtain a new presentation of the same asymptotic length by introducing new variables and relations to simulate repeated squaring. For instance, the cyclic group of order $m$ has presentation $\langle x | x^m = 1 \rangle$, which has word length $m$, but bit-length $O(\log m)$. However, we can instead use the following presentation:
\[
\langle x_0, \ldots, x_k | x_{i+1} = x_{i}^{2} (0 \leq i < k), x_{0}^{e_{0}} x_{1}^{e_{1}} \cdots x_{k}^{e_{k}} = 1 \rangle,
\]
where $e_{i} \in \{0,1\}$ (for each $0 \leq i \leq k$) and $m = \sum_{i=0}^{k} 2^{i}e_{i}$, which has word length at most $4 \log(m) + O(1)$ \cite[Remark~1.3]{BGKLP}. 

Putting these all together, if a group $G$ has presentation of word length (Definition~\ref{def:PresentationLength}) $O(\log^c |G|)$ then it has a presentation of bit length $O(\log^{c} |G| \log\log|G|) \subseteq \widetilde{O}(\log^c |G|)$, which is smaller than $O(\log^{c+\varepsilon} |G|)$ for all fixed $\varepsilon > 0$. If the number of generators was initially $O(1)$ and the presentation's word length was $\Omega(\log|G|)$, then the factor of $\log\log|G|$ is unnecessary.
\end{remark}

Babai, Goodman, Kantor, Luks, and Pálfy \cite[Theorem~8.3]{BGKLP} showed that if every finite simple group admits a presentation of length $O(\log^c n)$, then every finite group admits a presentation of length $O(\log^{c+1} n)$. Their proof was constructive. We recall their construction here as we will use it in our main result. 

Following their terminology, we call a set $S$ of generators of a group $G$ \emph{efficient} if every element of $G$ can be written as a straight-line program over $S$ of length at most $2\log_2|G|$. Babai and Szemerédi \cite{BabaiSzemeredi} showed that from any generating set for a group $G$, there is a SLP of length at most $\log^2 |G|$ that computes an efficient generating set of $G$ (see \cite[Lemma~8.2]{BGKLP}). (We follow much, but not all, of the notation there for ease of reference; one notable exception is that our indices increase---if $i < j$ then $G_i \leq G_j$---whereas theirs decrease.)

\begin{construction}[{\cite[Proof~of~Theorem~8.3]{BGKLP}}] \label{construction:BGKLP}
Suppose there is $c \geq 2$ such that all finite simple groups of order $n$ have presentations of length $O(\log^c n)$. Let $G$ be a finite group with composition series 
$1 = G_0 \unlhd G_1 \unlhd \dotsb \unlhd G_m = G$.  
We will describe their construction of a presentation for $G$ of length $O(\log^{c+1} |G|)$. 

For each $i$, let $\langle \Gamma_i | R_i\rangle$ be the presentation for $G_i / G_{i-1}$ of length $O(\log^c |G_i / G_{i-1}|)$. Given a surjective homomorphism $\pi\colon X \to Y$, we say that $x \in X$ is a \emph{lift} of $y \in Y$ if $\pi(x)=y$, and we say that $\widehat{S} \subseteq X$ is a lift of $S \subseteq Y$ if $|\widehat{S}| = |S|$ and $\pi(\widehat{S}) = S$.

The generators in the presentation of $G$ will be as follows:
\begin{itemize}
\item[(Gen1)] For $i=1,\dotsc,m$, let $\Gamma_i^0$ be a minimum-size subset of $\Gamma_i$ that generates $G_i / G_{i-1}$ (in particular, $|\Gamma_i^0| \leq \log |G_i / G_{i-1}|$). Let $\widehat{\Gamma}_i$ be lifts of $\Gamma_i$ in $G_i$ (hence in $G$), along the natural quotient map $G_i \to G_i / G_{i-1}$. Let $\widehat{\Gamma}_i^0$ be the subset of $\widehat{\Gamma}_i$ that is a lift of $\Gamma_i^0$. Let $S_i$ be a short SLP computing an efficient set of generators of $G_i$ from $\Gamma_1^0 \cup \Gamma_2^0 \cup \dotsb \cup \Gamma_i^0$, and let 
\[
S := \bigcup_{i=1}^m S_i.
\]

\item[(Gen2)] For each $i$ and for each $\gamma \in \widehat{\Gamma}_i$, let $D(\gamma)$ be a an SLP of length $O(\log |G_i|)$ (since $S_i$ is an efficient generating set) computing $\gamma$ from $S_i$. Let 
\[
D := \bigcup_{i=1}^m \bigcup_{\gamma \in \widehat{\Gamma}_i} D(\gamma).
\]

\item[(Gen3)] For each $i$ and each $\rho \in R_i$ giving the relation $\rho(\gamma_{i1},\dotsc,\gamma_{ir})=1$ in the generators $\Gamma_i = \{\gamma_{i1},\dotsc,\gamma_{ir}\}$, let $\widehat{\gamma}_{im} \in \widehat{\Gamma}_i$ be the lift of $\gamma_{im}$ chosen above. When we apply $\rho$ to the lifted elements, we find $\rho(\widehat{\gamma}_{i1}, \ldots, \widehat{\gamma}_{ir})=z_\rho$ for some element $z_\rho \in G_{i-1}$. Let $P(\rho)$ be a short straight-line program computing $z_{\rho}$ from $S_{i-1}$. Let:
\[
P := \bigcup_{i=1}^m \bigcup_{\rho \in R_{i}} P(\rho).
\]

\item[(Gen4)] As $G_{i-1}$ is normal in $G_i$, we have for each $n \in G_{i-1}$ and each $g \in G_i$, that $gng^{-1} \in G_{i-1}$. For each $n \in \widehat{\Gamma}_{i-1}^0$ and $g \in \widehat{\Gamma}_i^0$, let $C(n,g)$ be a SLP of length $O(\log |G_i|)$ computing $gng^{-1}$ from $S_{i-1}$. Let:
\[
C := \bigcup_{i=1}^m\bigcup_{n \in \widehat{\Gamma}_{i-1}^0} \bigcup_{g \in \widehat{\Gamma}_i^0} C(n,g).
\]
\end{itemize}

\noindent Let $T := S \cup P \cup D \cup C$ be the set of generators of $G$ ($S$ alone generates $G$, but the additional generators will enable a shorter presentation). 

In order to construct a presentation of $G$, they associate for each $t \in T$, an abstract generator symbol $x(t)$. Relative to this generating set, the following relations define $G$: 
\begin{itemize}

\item[(Rel1)] If some $t \in T$ arose as $t = uw$ or $t = w^{-1}$ in the course of one of the SLPs in $D$, $P$, or $C$, then include the relation $x(t) = x(u)x(w)$ or $x(t)x(w) = 1$, respectively.

\item[(Rel2)] For each relator $\rho \in R_{i}$ lifted to $\widehat{\rho}(\widehat{s}_{i1}, \ldots, \widehat{s}_{ir}) = z_{\rho}$, include the relation $\widehat{\rho}(x(\widehat{s}_{i1}), \ldots, x(\widehat{s}_{im})) = x(z_{\rho})$.

\item[(Rel3)] For each $n \in \widehat{\Gamma}_{i-1}^0$ and $g \in \widehat{\Gamma}_i^0$, include the relation $x(gng^{-1}) = x(g)^{-1}x(n)x(g)$.
\end{itemize}
This completes the construction, whose correctness is proved in \cite[Theorem~8.3]{BGKLP}.
\end{construction}

\subsection{Additional preliminaries for the Ree groups $^2G_{2}(q)$} \label{sec:ReeGroupPrelims}
We recall the specific facts about the Ree groups that will be
used below. For the reader wishing to better understand these facts and put them in the larger context of finite simple groups, and Lie and algebraic groups, we refer to  \cite{Curtis1965, TitsBook, carter-book}, but we will not need such depth here.

The Ree groups $^2 G_2(q)$ \cite{Ree}\footnote{Sometimes called ``small Ree groups'' to distinguish them from the other groups discovered by Ree, $^2 F_4(q)$, which have higher rank as algebraic groups. We avoid calling them ``small'' here because that word might be mistaken for other uses of ``small'' in our complexity-theoretic context, such as ``polylogarithmic.'' Throughout this paper, when we say ``Ree group'', we mean the ``small Ree groups $^2 G_2(q)$.''} are finite simple groups of order $q^3(q^3+1)(q-1)$, where $q=3^{2k+1}$ is an odd power of $3$. 
Rather than give an abstract definition, we use the $7 \times 7$ matrix representation of $^2 G_2(q)$ from \cite{KLM}, in the convenient form presented in  \cite[Sec.~3]{Brnhielm2014}.

Let $\ell = 3^k$; this parameter will show up in exponents frequently in this construction.\footnote{For context / to whet the reader's appetite to learn more: the so-called Frobenius automorphism of $\FF_q$ is given by $x \mapsto x^3$; as an automorphism this has order $2k+1 = \log_3 q$. The function $x \mapsto x^{\ell}$ is the $k$-th power of the Frobenius automorphism, and has the property that its square is equal to the inverse of the Frobenius: $((x^\ell)^\ell)^3 = x^{3^{2k+1}} = x$. This is relevant to how $^2 G_2(q)$ is constructed from the exceptional algebraic group $G_2(q)$.} Then $^2G_2(q)$ is generated by
\begin{align*}
&h(\lambda) := \text{diag}(\lambda^{\ell}, \lambda^{1-\ell}, \lambda^{2\ell-1}, 1, \lambda^{1-2\ell}, \lambda^{\ell-1}, \lambda^{-\ell}) \qquad (\lambda \in \FF_q^\times)\\    
&\Upsilon := \text{antidiag}(-1, -1, -1, -1, -1, -1, -1).
\end{align*}
along with a Sylow $3$-subgroup of $^2 G_2(q)$, which, when viewed as a subgroup of $\GL(7,q)$, is the intersection of $^2 G_2(q)$ with the upper unitriangular matrices, and is generated by, for all $x \in \FF_q$:
\begin{align*}
&\alpha(x) = \begin{pmatrix}
1 & x^{\ell} & 0 & 0 & -x^{3\ell+1} & -x^{3\ell+2} & x^{4\ell+2} \\
0 & 1 & x & x^{\ell+1} & -x^{2\ell+1} & 0 & -x^{3\ell+2} \\
0 & 0 & 1 & x^{\ell} & -x^{2\ell} & 0 & x^{3\ell+1} \\
0 & 0 & 0 & 1 & x^{\ell} & 0 & 0 \\ 
0 & 0 & 0 & 0 & 1 & -x & x^{\ell+1} \\ 
0 & 0 & 0 & 0 & 0 & 1 & -x^{\ell} \\
0 & 0 & 0 & 0 & 0 & 0 & 1\\
\end{pmatrix}
\end{align*}

\begin{align*}
&\beta(x) = \begin{pmatrix}
1 & 0 & -x^{\ell} & 0 & -x & 0 & -x^{\ell+1} \\
0 & 1 & 0 & x^{\ell} & 0 & -x^{2\ell} & 0 \\
0 & 0 & 1 & 0 & 0 & 0 & x \\
0 & 0 & 0 & 1 & 0 & x^{\ell} & 0 \\
0 & 0 & 0 & 0 & 1 & 0 & x^{\ell} \\
0 & 0 & 0 & 0 & 0 & 1 & 0 \\
0 & 0 & 0 & 0 & 0 & 0 & 1\\
\end{pmatrix}
\end{align*}

\begin{align*}
\gamma(x) = \begin{pmatrix}
1 & 0 & 0 & -x^{\ell} & 0 & -x & -x^{2\ell} \\
0 & 1 & 0 & 0 & -x^{\ell} & 0 & x \\
0 & 0 & 1 & 0 & 0 & x^{\ell} & 0 \\
0 & 0 & 0 & 1 & 0 & 0 & -x^{\ell} \\
0 & 0 & 0 & 0 & 1 & 0 & 0 \\ 
0 & 0 & 0 & 0 & 0 & 1 & 0 \\
0 & 0 & 0 & 0 & 0 & 0 & 1\\
\end{pmatrix}    
\end{align*}
This completes our definition/recollection of the Ree groups. 

From the matrices above, one can directly verify the following facts (see also \cite{Ward1966} for more):
\begin{itemize}
\item As functions of $x$ and $y$, $\beta$ and $\gamma$ are ``additive''; more precisely, for all $x,y \in \FF_q$, we have $\beta(x+y)=\beta(x)\beta(y)$ and $\gamma(x+y)=\gamma(x)\gamma(y)$.

\item $\langle \beta(x), \gamma(y) : x,y \in \FF_q \rangle \cong \FF_q^2 \cong \FF_3^{4k+2}$ is an elementary Abelian group, which we denote $[U,U]$.

\item For all $x, y \in \FF_q$, there exists elements $b_{x,y} \in [U,U]$ such that $\alpha(x+y)=\alpha(x)\alpha(y)b_{x,y}$.
\end{itemize}

There are a few additional key structural facts about the group that we will recall and use. Let $U$ be the Sylow $3$-subgroup $U := \langle \alpha(x), \beta(y), \gamma(z) : x,y,z \in \FF_q \rangle$; let $H = \{ h(\lambda) : \lambda \in \FF_q^\times \} = \langle h(\omega) \rangle$, where $\omega$ is a generator of the multiplicative group $\FF_q^\times$ (a.k.a. a primitive element of $\FF_q$). We recall the following facts:
\begin{itemize}
\item $U = \{\alpha(a)\beta(b)\gamma(c) : a,b,c \in \FF_q\}$ (that is, every element of $U$ can be represented as $\alpha(a)\beta(b)\gamma(c)$).

\item (Bruhat decomposition) $^2 G_2(q) = (HU) \sqcup (HU \Upsilon U)$, where we highlight that this is a \emph{disjoint} union. 

\item $H$ normalizes $U$, and every element of $HU$ has a unique expression of the form $h(\lambda) \alpha(a)\beta(b) \gamma(c)$ with $\lambda \in \FF_q^\times$ and $a,b,c \in \FF_q$.

\item Every element of $^2 G_2(q)$ outside of $HU$ has a \emph{unique} expression of the form $h(\lambda) u \Upsilon u'$ with $u,u' \in U$, and every element of the form $h(\lambda) u \Upsilon u'$ with $u,u' \in U$ is not in $HU$.
\end{itemize}

The above gives us a nice bijection or parametrization of the Ree groups: 
\begin{equation} \label{eq:Ree-bijection}
\left(\FF_q^\times \times \FF_q^3 \right) \sqcup  \left(\FF_q^\times \times \FF_q^6 \right) \to {}^2 G_2(q),
\end{equation}
given by
\[
\FF_q^\times \times \FF_q^3 \ni (\lambda; a,b,c) \mapsto h(\lambda) \alpha(a)\beta(b) \gamma(c) \in HU
\]
and
\[
\FF_q^\times \times \FF_q^6 \ni (\lambda; a,b,c,x,y,z) \mapsto h(\lambda) \alpha(a)\beta(b)\gamma(c) \Upsilon \alpha(x) \beta(y) \gamma(z) \in HU\Upsilon U.
\]
The facts recalled above immediately imply that this is a bijection.

In our proof, we will need an efficiently computable form of this bijection, relative to a particular generating set, which we give in Lemma~\ref{lem:Ree-bijection}.

\section{Depth-3 Circuits for Group Isomorphism}

In this section, we will establish the following theorem. We first introduce a small notation: for two classes of Boolean functions $\mathcal{C},\mathcal{D}$, we use $\mathcal{C} \wedge \mathcal{D}$ to denote the class of Boolean functions expressible as a conjunction of a function in $\mathcal{C}$ and a function in $\mathcal{D}$. 

\begin{theorem}[Full version of Theorem~\ref{thm:main}] \label{thm:MainUpperBound} 
In the Cayley-table model, \algprobm{Group Isomorphism} belongs to
\[
\exists^{\log^{c_{1}} (n)} \forall^{\log^{c_{2}} (n)} \DTIMEpl \land \forall^{\log n}\DTISPpll
\]
for some constants $c_{1}, c_{2}$. Consequently, \algprobm{Group Isomorphism} can be decided by uniform depth-$3$ circuits of quasipolynomial size and polylogarithmic bottom fan-in.

If furthermore, the Uniform Short Presentation Conjecture holds, then \algprobm{Group Isomorphism} belongs to
\[
\exists^{\log^{c} (n)} \DTIMEpl \land \forall^{\log n}\DTISPpll.
\]
\end{theorem}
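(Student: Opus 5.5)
Under the conditional (Uniform Short Presentation Conjecture) version, the plan is to guess a presentation and verify it in both groups. The existential block guesses: a presentation $\langle X \mid R\rangle$ of length $\polylog(n)$, as given by Construction~\ref{construction:BGKLP} from uniform short presentations of the composition factors; tuples $\mathbf{g}\in G^X$ and $\mathbf{h}\in H^X$; and, for each element the verifier needs, a straight-line program of length $O(\log^2 n)$ over $\mathbf{g}$ (resp.\ $\mathbf{h}$), which exists by the Reachability Lemma. In $\DTIMEpl$ we then check three things. First, every relator in $R$ evaluates to $1$ on $\mathbf{g}$ and on $\mathbf{h}$; each evaluation is a polylog-length product of table lookups. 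Second, $|G|=|H|$. Third, $\langle\mathbf{g}\rangle=G$ and $\langle\mathbf{h}\rangle=H$. Given the first two, $\mathbf{x}\mapsto\mathbf{g}$ extends to a surjection $\langle X\mid R\rangle\to G$ and likewise onto $H$. It is a bijection provided the presented group has order exactly $n$, and we certify this through the composition series in the guess. Concretely, the guess also includes the composition factor names and lifts, the presented group has order $\prod|S_i|$, and we check that this product equals $n$. Each $|S_i|$ is computable from the standard name in polylog time.

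Generation cannot be certified by polylog existential data, because "every element is reachable" is a universal statement. This is the $\forall^{\log n}\DTISPpll$ conjunct: for every $a\in G$, a short computation must confirm that $a$ is in the generated subgroup. To keep this outside the guessed branch (it must be a separate conjunct, since it cannot depend on the guess), I would instead put there a guess-independent check. That check is that both inputs are valid group tables (associativity, identity, and inverses), verified for each fixed triple, with the triple quantifier merged using $O(\log n)$ bits. Surjectivity is then handled by the order comparison: the image of the presented group is a subgroup of order $\le n$, and we compare it against the table size. To make surjectivity existentially checkable, the SLPs guessed in Construction~\ref{construction:BGKLP} certify that each factor lift generates $G_i/G_{i-1}$ modulo $G_{i-1}$. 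We also verify the composition series sizes by guessing each $|G_i|$ and checking the index products. Since polylog time converts into a DNF (Fact~\ref{fact:DTISP}) and the $\forall^{\log n}\DTISPpll$ part becomes a polynomial-size CNF, we obtain the stated $\exists^{\log^c n}\DTIMEpl\wedge\forall^{\log n}\DTISPpll$.

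For the unconditional result, the only missing ingredient is a short uniform presentation of $^2G_2(q)$. For a Ree composition factor, we instead guess images in $G_i/G_{i-1}$ of the standard matrix generators $h(\omega),\Upsilon,\alpha(x),\beta(x),\gamma(x)$. We then apply the Babai--Szemerédi $\Sigma_2\mathrm{TIME}(\polylog)$ isomorphism test, using the bijection in Equation~\eqref{eq:Ree-bijection} and Lemma~\ref{lem:Ree-bijection} to check that the guessed map is a homomorphism on all pairs of elements. This check is a universal quantifier over $\polylog$ bits, with discrete logarithms supplied by Lemma~\ref{lem:discrete-log}. Gluing the extension across the level $G_{i-1}\lhd G_i$ is done with Lemma~\ref{lem:GQMain} and Remark~\ref{rmk:t-trivial}. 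We guess sections and outer-action data and universally verify that the pseudo-congruence identities hold pairwise under the guessed map with $t=1$. This yields the $\exists\forall\DTIMEpl$ form. I expect the Ree step to be the main obstacle: making the bijection and the homomorphism check run in polylog time relative to guessed generators, and making everything fit in a single $\forall$ block, which requires merging all levels' universal checks into one.
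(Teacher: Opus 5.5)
Your outline matches the paper's: guess BGKLP-style data with lifted generators $T_i$ and SLPs, verify the relators in both groups, check $\prod_i|S_i|=n$, and treat Ree layers via Lemma~\ref{lem:Ree-bijection} and Lemma~\ref{lem:GQMain} with $t=1$. It breaks at \emph{generation}. Checking the relators and the order product only shows that $\langle\mathbf g\rangle$ is a quotient of a presented group of order at most $n$. That is an upper bound, and surjectivity needs a lower bound.

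Your substitutes do not supply one. The SLPs only show that each layer $G_i/G_{i-1}$ is a quotient of the simple group $S_i$, and this includes the trivial quotient (lifts already lying in $G_{i-1}$). Guessed values of $|G_i|$ are not tied to anything a polylog-time verifier can check.

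For example, take $G=\mathbb{Z}_2^3$ and $H=\mathbb{Z}_2\times\mathbb{Z}_4$, with three guessed factors $\mathbb{Z}_2$. Use lifts $(a,b,ab)$ in $G$, with $a,b$ independent, and $(a',b',a'b')$ in $H$, with $a'=(1,0)$ and $b'=(0,2)$. Every relator holds exactly and the order product is $8$, yet $G\not\cong H$.

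No existential-only certificate can work either. Consider an accepting branch of an $\exists^{\polylog}\DTIMEpl$ verifier on $G=H=\mathbb{Z}_2^k$. It reads table entries involving a set $A$ of $\polylog(n)$ elements of $H$. Pick $v\notin A+A$, which is possible since $|A+A|\le|A|^2<2^k$. Then the map $\mathbb{Z}_2^k\to\mathbb{Z}_2^k/\langle v\rangle\hookrightarrow\mathbb{Z}_2^{k-2}\times\mathbb{Z}_4$ is injective and multiplicative on $A$. So a suitably relabeled table $H''$ of $\mathbb{Z}_2^{k-2}\times\mathbb{Z}_4$ agrees with $H$ on every entry read, and the same branch accepts the no-instance $(G,H'')$.

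The same trick handles the guess-independent conjunct. If a polylog-width clause rejected $(G,H'')$, it would also reject the yes-instance $(\tilde G,H'')$, where $\tilde G$ is a relabeled $\mathbb{Z}_2^{k-2}\times\mathbb{Z}_4$ agreeing with $G$ on that clause's entries. Hence \algprobm{GpI} is not in $\exists^{\log^c n}\DTIMEpl\wedge\forall^{\log n}\DTISPpll$ unconditionally. The conditional half cannot be established, short of refuting the conjecture.

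The paper's own conditional argument has the same hole. It drops its check (coND1) under the conjecture by arguing from $\prod_i|SN_i|=|G|$ that no layer is trivial. That step tacitly uses $\prod_i|G_i/G_{i-1}|=|G|$, i.e.\ generation.

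For the unconditional half, the paper closes the hole correctly with (coND1). It ranges universally over all SLPs of length $O(\log^2 n)$ over $T_{<i}:=T_1\cup\dots\cup T_{i-1}$ and rejects if one equals the first element of $T_i$. By the Reachability Lemma every layer is then nontrivial, hence $\cong SN_i$ (non-Ree) or mapping onto $SN_i$ (Ree). Then $|\langle T_{<m+1}\rangle|\ge\prod_i|SN_i|=n$ forces generation.

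For the Ree layers you are missing the device that keeps everything inside one $\forall$ block. The idea is to range universally over candidate witnesses and let mismatching branches accept vacuously, as in Lemma~\ref{lem:discrete-log}.

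Your direction checks that $M\mapsto w_M(T_i)$ induces a homomorphism ${}^2G_2(q)\to G_i/G_{i-1}$. That requires $s(q)s(r)s(qr)^{-1}\in G_{i-1}$ for all $q,r$, which is a $\forall\exists$ condition. The paper goes the other way. For all SLPs $\sigma_1$ over $T_i$ and $\sigma_2$ over $T_{<i}$, if $\sigma_1(T_i)=\sigma_2(T_{<i})$, it requires $\sigma_1(B_i)=\mathrm{Id}_7$, where $B_i$ are the generators of Lemma~\ref{lem:Ree-bijection}. This check is purely universal. It gives a surjection $G_i/G_{i-1}\to{}^2G_2(q)$, which becomes an isomorphism only through the order count (again using (coND1) on the other layers).

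Sections also cannot be guessed, since they have $|{}^2G_2(q)|$ values. The paper sets $s(M)=w_M(T_i)$ via $\varphi^{-1}$. It evaluates $\alpha$ on the cocycle values $f_G(q,r)\in G_{i-1}$ by universally trying SLPs $P_{q,r}$. It accepts if $P_{q,r}(T_{<i})\neq f_G(q,r)$, and otherwise compares $P_{q,r}(T'_{<i})$ with $f_H(q,r)$.

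A minor point: validity should be checked via range, row/column distinctness and associativity, all of which are universal. Existence of an identity and of inverses is not a per-triple condition.
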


We will accomplish this in two steps. 
We will first show that we can verify whether a multiplication table satisfies the group axioms in $\forall^{\log n}\DTISPpll$. We will show, more strongly, that this check can be implemented using a CNF of polynomial-size and logarithmic width (Lemma~\ref{lem:valid-cayley-tables}). 
Our second step will be to show that, given two multiplication tables of groups $G$ and $H$, we can decide whether $G$ and $H$ are isomorphic in $\exists^{\polylog(n)} \forall^{\polylog(n)} \DTIMEpl$ (Proposition~\ref{prop:GpI}).

We note that as an immediate corollary, we show that the Uniform Short Presentation Conjecture implies a resolution of a question implicit in \cite{ChattopadhyayToranWagner} and highlighted in \cite{CGLW}, up to the exponent of the exponent: Is $\algprobm{GpI}$ in $\exists^{\log^2 n} \ACz$? 

\begin{corollary} \label{cor:USP-implies-betaACz}
The Uniform Short Presentation Conjecture implies $\algprobm{GpI} \in \exists^{\polylog(n)} \ACz$.
\end{corollary}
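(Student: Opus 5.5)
The plan is to derive the corollary directly from the second half of Theorem~\ref{thm:MainUpperBound}. Assuming the Uniform Short Presentation Conjecture, that theorem puts \algprobm{GpI} in $\exists^{\log^{c} n}\DTIMEpl \land \forall^{\log n}\DTISPpll$. I will show separately that each conjunct lies in $\exists^{\polylog(n)}\ACz$. I will then show that this class is closed under conjunction.

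For the first conjunct I apply Proposition~\ref{prop:DTIME} with $t(n)=\log^{c'} n$, which covers the polylogarithmic-time verifier. This gives $\DTIMEpl \subseteq \exists^{O(\log^{2c'} n)}\ACz$. The $\DTIMEpl$ machine now reads the input together with an extra certificate of polylogarithmic length $\log^c n$. Its input length is still $n+\polylog(n)$, so this causes no trouble, and I just need to state it explicitly. Merging the outer guess with the guessed computation history gives $\exists^{\log^c n}\exists^{O(\log^{2c'}n)}\ACz = \exists^{\polylog(n)}\ACz$.

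The second conjunct needs no nondeterminism. A universal quantifier over $\log n$ bits has only $n$ branches, and it becomes a single polynomial fan-in \textsf{AND} gate. Each branch is a $\DTISPpll$ computation, which by Fact~\ref{fact:DTISP} is a quasipolynomial-size DNF. That is the one step needing care, since $\ACz$ requires polynomial size. I will avoid the issue in either of two ways. The first is to use Proposition~\ref{prop:DTIME} again inside each branch. The guessed histories must then be independent across the $n$ branches, which would need $n\cdot\polylog$ bits, so instead I will rely on the second. The second is the stronger statement promised after the theorem: the group-axiom check (Lemma~\ref{lem:valid-cayley-tables}) is a polynomial-size CNF of logarithmic width. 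That CNF is already in $\ACz$, using no guessed bits.

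Finally, if $L_1\in\exists^{p}\ACz$ via a circuit $C_1(x,y)$ and $L_2\in\ACz$ via $C_2(x)$, then $L_1\cap L_2$ is witnessed by the circuit $C_1(x,y)\wedge C_2(x)$ with the same guess $y$. This is again $\exists^{p}\ACz$, and uniformity is preserved. The main point to check is exactly the polynomial size of the second conjunct. That rests on Lemma~\ref{lem:valid-cayley-tables} rather than on the generic quasipolynomial simulation.
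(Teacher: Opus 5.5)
Your proposal is correct and follows essentially the same route as the paper. You apply Proposition~\ref{prop:DTIME} to the $\exists^{\log^c n}\DTIMEpl$ conjunct and merge the two existential quantifiers, and you replace the $\forall^{\log n}\DTISPpll$ conjunct by the polynomial-size CNF of Lemma~\ref{lem:valid-cayley-tables}, which already lies in $\ACz$. Your extra remarks are correct elaborations of points the paper leaves implicit: that the verifier's input length is only $n+\polylog(n)$, and that the generic quasipolynomial simulation of the second conjunct would not suffice.
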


\begin{proof}
From the second half of Theorem~\ref{thm:MainUpperBound}, and using the polynomial-size CNF from Lemma~\ref{lem:valid-cayley-tables}, we get that the Uniform Short Presentation Conjecture implies that \algprobm{GpI} is in 
\[
\exists^{\log^{c} (n)} \DTIMEpl \wedge \ComplexityClass{CNF}.
\]
By Proposition~\ref{prop:DTIME}, we have $\DTIMEpl \subseteq \exists^{\polylog(n)} \ACz$. Combining the two existential quantifiers and using the fact that $\ComplexityClass{CNF} \subseteq \ACz$, the result follows.
\end{proof}

\subsection{The total language and validity of Cayley tables}
\label{subsec:total-language-validity}
It is more natural to talk about our circuits under the assumption or promise that the input Cayley tables are in fact the multiplication tables of \emph{groups}. Here we show that this assumption can be checked by a polynomial-size CNF, so that later we can talk only in the language of groups, while still solving the total \textsc{GpI} problem rather than the promise problem.

\begin{lemma}[Validity of Cayley tables]
\label{lem:valid-cayley-tables}
The language \(\mathrm{VALID}_n\) of \(n\times n\) tables whose entries encode
 a group operation on \([n]\) is decidable in $\forall^{\log n}\DTISP(\log n \log \log n, \log n)$, as well as by a uniform CNF of polynomial size and logarithmic clause width.
\end{lemma}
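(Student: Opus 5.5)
We will characterize groups on $[n]$ by axioms that each mention only $O(1)$ table entries. Then we evaluate each such local constraint with a small CNF and take the conjunction over all universally quantified indices. Each entry $T(x,y)$ is written in $\lceil\log n\rceil$ bits, so an $O(1)$-entry constraint depends on $O(\log n)$ input bits.

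\emph{Choice of axioms.} A table $T\colon[n]\times[n]\to[n]$ is a group operation iff the following three conditions hold:
\begin{enumerate}
\item every entry encodes a value in $[n]$ (relevant when $n$ is not a power of 2);
\item associativity: $T(T(x,y),z)=T(x,T(y,z))$ for all $x,y,z$;
\item each row and each column is a permutation, i.e.\ $T(x,y)\neq T(x,y')$ for $y\neq y'$ and $T(x,y)\neq T(x',y)$ for $x\neq x'$.
\end{enumerate}
The third condition is the Latin-square property. A finite associative quasigroup is a group: take $e$ with $a e=a$ for one fixed $a$, and deduce it is a two-sided identity using associativity and cancellation. So nonemptiness plus these three conditions suffice, and each is a $\forall$-statement over $O(\log n)$ bits of indices.

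\emph{Machine bound.} The machine universally guesses $x,y,z$ (or $x,y,y'$). It then reads $O(1)$ entries, each requiring writing an $O(\log n)$-bit address and reading $O(\log n)$ bits. Comparisons take $O(\log n)$ time, and space stays $O(\log n)$. The stated $\log n\log\log n$ time accounts for address arithmetic: computing an index $(a\cdot n+b)\lceil\log n\rceil+j$ and incrementing $j$ on a random-access TM costs a $\log\log n$ factor per bit, or uses multiplication of $O(\log n)$-bit numbers.

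\emph{CNF.} For the Latin and range conditions this is direct. For fixed $x,y,y'$, the constraint ``the two $\log n$-bit strings $T(x,y),T(x,y')$ differ'' is violated exactly when they are equal. We can therefore write one clause per common value $v$, namely $\bigvee_j [T(x,y)_j\neq v_j]\vee[T(x,y')_j\neq v_j]$. This gives $n$ clauses of width $2\log n$, and polynomially many clauses overall.

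\emph{Associativity (main obstacle).} The nested lookup $T(T(x,y),z)$ depends on up to $n$ entries, so it is not local. The fix is to enumerate the intermediate values. For each $x,y,z$ and each $u,w,v,v'$ with $v\neq v'$, add the clause
\[
\neg\bigl(T(x,y)=u \wedge T(y,z)=w \wedge T(u,z)=v \wedge T(x,w)=v'\bigr).
\]
Written out, this is a disjunction of $4\lceil\log n\rceil$ literals, one negated bit-equality per bit. The number of clauses is $n^3\cdot n^4=n^{7}$, still polynomial. Both $\log n$ clause width and uniformity hold: a gate index is the tuple $(x,y,z,u,w,v,v')$, and its literals are computable in $\DLOGTIME$.

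For the identity/nonemptiness part, note that a Latin square on $[n]$ with $n\ge1$ suffices by the argument above, so no existential is needed. The main care point is verifying that local conditions genuinely characterize groups without an existential identity quantifier; the associative-quasigroup argument handles it.
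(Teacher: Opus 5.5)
Your proposal is correct and follows essentially the same route as the paper. It uses the same range, row/column-distinctness, and associativity checks over $3\lceil\log n\rceil$ universally quantified bits, the standard fact that a nonempty associative quasigroup is a group, and the same device of expanding over the possible values of $T(x,y)$ and $T(y,z)$ to turn associativity into a local, width-$O(\log n)$ constraint. The one difference is cosmetic: you also enumerate the two outer values $v\neq v'$, giving $O(n^7)$ clauses of width $4\lceil\log n\rceil$, whereas the paper compares $T(a,z)$ and $T(x,b)$ bit by bit with two clauses per bit position, giving $O(n^5\log n)$ clauses; both are polynomial with logarithmic width.
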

\begin{proof} 
Assume \(n\ge 2\). Let \(\ell=\lceil\log_2 n\rceil\). The verifier checks that the table is an associative Latin square, i.\,e., it checks the following four universally quantified formulas testing: range, row-distinctness, column-distinctness, and associativity. 

All universally quantified labels are encoded using \(\ell\) bits. Since some \(\ell\)-bit strings may not represent elements of \([n]\), any universal instance involving such an invalid label is treated as vacuous. Actual table entries, however, are explicitly checked to lie in \([n]\). The conditions are as follows. 

\begin{itemize}
\item First, every table entry is in range: for all $x,y\in[m]$, $T(x,y)\in [n]$. 

\item Second, every row has distinct entries: for all $x,y,z\in[n]$, if $y\neq z$ then $T(x,y)\neq T(x,z).$

\item Third, every column has distinct entries: for all $x,y,z\in[n]$, if $y\neq z$, then $T(y,x)\neq T(z,x).$

\item Finally, for all $x,y,z\in[n]$, associativity holds. Let $a:=T(x,y),\,\, b:=T(y,z).$ 
The verifier checks $T(a,z)=T(x,b),$ that is, $T(T(x,y),z)=T(x,T(y,z)).$
\end{itemize}

\noindent In order to check these conditions, we will use $3\ell \in O(\log n)$ universally quantified (co-nondeterministic) bits. Given the co-nondeterministic bits, we may check each of the above conditions in \\ $\DTISP(\log n \log\log n, \log n)$, where the main runtime comes from computing indices of the form $ni+j$ where $n,i,j$ are all $\log n$-bit numbers, which can be multiplied in time $O(\log n \log \log n)$ \cite{IntegerMult} and added in time linear in their bit-length. Thus, our total work is $\forall^{\log n}\DTISP(\log n \log \log n, \log n)$. By Fact~\ref{fact:DTISP}, this can thus checked by a CNF of size $n^{O(\log \log n)}$. A more careful analysis yields a polynomial-size CNF. We will now proceed with this analysis. 

In each case, we are only comparing table entries to the bound of $n$, or comparing two table entries to each other. A comparison between an \(\ell\)-bit table entry and a fixed \(\ell\)-bit number $n$ is expressed bit-by-bit. Thus, for fixed \(c\), the condition $T(x,y)\neq c$ is a single clause of width $\ell$, while $T(x,y)=c$ is a conjunction of \(\ell\) literals. Similarly, equality of two table entries, such as \(T(a,z)=T(x,b)\), is a conjunction of \(O(\ell)\) constant-size CNFs, namely the bitwise \textsf{XNOR}s (equality). 

The range condition is imposed by including, for each \(x,y\in[n]\) and each \(\ell\)-bit string \(c\notin[n]\), the clause $T(x,y)\neq c.$ For the row- and column-distinctness checks, inequality of two table entries is imposed by forbidding each possible common value. Thus row-distinctness is expressed by the clauses $T(x,y)\neq c \;\vee\; T(x,z)\neq c$ for all \(x,y,z,c\in[n]\) with \(y\neq z\), and column-distinctness is expressed similarly by the clauses $T(y,x)\neq c \;\vee\; T(z,x)\neq c.$ 
For associativity, we expand over the possible values \(a,b\in[n]\) of \(T(x,y)\) and \(T(y,z)\). For each \(x,y,z,a,b\in[n]\) and each bit position \(r\in[\ell]\), we include the two clauses $T(x,y)\neq a \;\vee\; T(y,z)\neq b\;\vee\; \neg T(a,z)_r \;\vee\; T(x,b)_r,$ and $T(x,y)\neq a \;\vee\; T(y,z)\neq b\;\vee\; T(a,z)_r \;\vee\; \neg T(x,b)_r.$ These clauses say that if \(T(x,y)=a\) and \(T(y,z)=b\), then the \(r\)-th bits of \(T(a,z)\) and \(T(x,b)\) agree. 

Altogether there are polynomially many clauses: \(O(n^3)\) range clauses, \(O(n^4)\) row-distinctness clauses, \(O(n^4)\) column-distinctness clauses, and \(O(n^5\ell)\) associativity clauses. Each clause has width \(O(\ell)\). Hence the checks yield a uniform polynomial-size CNF. 

These conditions characterize group tables. If \(T\) is a group table, then all entries lie in \([n]\), every row and every column is a permutation of \([n]\), and associativity holds. Hence the verifier accepts. Conversely, suppose that the verifier accepts. The range condition ensures that \(T\) defines a binary operation on \([n]\). The row-distinctness and column-distinctness conditions imply that every row and every column contains \(n\) distinct elements of \([n]\). Thus every row and every column contains each element of \([n]\) exactly once, or equivalently, \(T\) is a Latin square (the multiplication table of a quasigroup). Since the associativity check also holds and \([n]\) is nonempty, \(T\) is a nonempty associative quasigroup. It is a standard fact that every nonempty associative quasigroup is a group, thus $T$ is the Cayley table of a group. Therefore the verifier accepts exactly the valid Cayley tables of groups.
\end{proof}

\subsection{The promise language and testing for isomorphism in the second level of the Polylogarithmic-Time Hierarchy}

In this section, we will establish the following.

\begin{proposition} \label{prop:GpI}
Given two groups $G$ and $H$ by their multiplication tables, we can decide if $G \cong H$ in $\exists^{\polylog(n)} \forall^{\polylog(n)} \DTIMEpl$. If furthermore, all of $G$'s composition factors admit uniform short presentations, then isomorphism between $G$ and $H$ can be decided in $\exists^{\polylog(n)} \DTIMEpl$.
\end{proposition}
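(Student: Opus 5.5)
We follow the strategy sketched in the introduction. We existentially guess, in $\polylog(n)$ bits, a composition series $1=G_0\unlhd G_1\unlhd\dotsb\unlhd G_m=G$ and $1=H_0\unlhd\dotsb\unlhd H_m=H$ for both groups. Each $G_i$ is given by at most $\log n$ generators, namely lifts $\widehat{\Gamma}_i^0$ of generators of the factor $G_i/G_{i-1}$, and likewise for $H$. Together with these we guess a bijection between the generators of $G$ and those of $H$, and the standard name (Definition~\ref{def:standard-name}) of each composition factor. There are $m\le\log n$ levels, each with $O(\log n)$ elements of $O(\log n)$ bits, so this is $O(\log^3 n)$ bits.

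In the conditional case (every factor admits a uniform short presentation), we would additionally guess all of the data of Construction~\ref{construction:BGKLP}. That is, we guess the SLPs $S$, $D$, $P$, $C$ as explicit lists of group elements of $G$, together with the corresponding lists in $H$ obtained by running the same SLPs on the matched generators. Each factor presentation $\langle\Gamma_i\mid R_i\rangle$ is computed from the guessed standard name in $\DTIMEpl$, using uniformity. The verifier then checks in $\DTIMEpl$, by table lookups, that every relation (Rel1)--(Rel3) holds for the guessed elements in $G$ and in $H$, and that every SLP step is consistent. Since both tuples satisfy a presentation of $G$, we get surjections $\langle T\mid\text{relations}\rangle\to G$ and $\to H$. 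The first is an isomorphism by the correctness of \cite[Theorem~8.3]{BGKLP}, provided we also verify that the guessed series really is a composition series generating $G$.

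Here lies the subtle point. We must ensure that the guessed elements generate $G$ and $H$, and that the orders match, without a universal quantifier. For this we would rely on the fact that $|G|=|H|=n$ is part of the input, so a surjection from a group isomorphic to $G$ onto $H$ of equal order is an isomorphism. We still have to certify that the map onto $H$ is surjective and that the abstract group has order $n$. Surjectivity follows if we guess, for the order count, that each factor presentation defines a group of the order given by its standard name (this is the conjecture's content), and that the relations force $|\langle T\rangle|\le\prod|G_i/G_{i-1}|=n$. Since the relations are satisfied in $G$, the presented group surjects onto $\langle S\rangle\le G$. So it suffices to show that $|\langle S\rangle|$ equals the product of the orders computed from the standard names, and that this product equals $n$; the product check is arithmetic in $\DTIMEpl$. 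Applying the same argument to $H$ finishes the conditional case and places it in $\exists^{\polylog(n)}\DTIMEpl$.

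For the unconditional case, the obstacle is factors isomorphic to ${}^2G_2(q)$, for which no short presentation is available; every other family is handled as above using the uniform presentations, including Propositions~\ref{prop:UniformityPSU} and~\ref{prop:UniformSuzuki}. At a Ree level $i$ we would instead guess images in $G_i/G_{i-1}$ of the standard generators $h(\omega)$, $\Upsilon$, $\alpha(1)$, $\beta(1)$, $\gamma(1)$. We then use a universal quantifier over $\polylog(n)$ bits, which ranges over pairs of elements written via the bijection~\eqref{eq:Ree-bijection} and made computable by Lemma~\ref{lem:Ree-bijection}. This lets us verify that the guessed map from the matrix group to $G_i/G_{i-1}$ is a well-defined multiplicative bijection, in the spirit of the $\Sigma_2$ black-box test of \cite{BabaiSzemeredi}; we do the same for $H_i/H_{i-1}$. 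Because no relators are available to lift, we then cannot use (Gen3)/(Rel2) to glue this level to the one below. Instead we appeal to Lemma~\ref{lem:GQMain} with Remark~\ref{rmk:t-trivial}, taking $t=1$ and the guessed isomorphisms $\alpha$ on $G_{i-1}\to H_{i-1}$ and $\beta$ on the factor. We universally quantify over pairs $q_1,q_2$ in the factor, encoded in $O(\log n)$ bits via the Ree parametrization, and over elements $x$ of $G_{i-1}$. For each choice we check the pseudo-congruence equations $T_1(q)(x)=\alpha^{-1}T_2(\beta q)\alpha(x)$ and $f_1(q_1,q_2)=\alpha^{-1}f_2(\beta q_1,\beta q_2)$. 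The sections are defined via the normal form applied to the guessed lifts of the generators, and are evaluated in $\DTIMEpl$ by SLPs. Evaluating $\alpha$ on an arbitrary $x\in G_{i-1}$ needs an SLP for $x$; this we would obtain existentially per level from the reachability lemma, though it must really be produced within the universal branch. I expect exactly this to be the main obstacle: making $\alpha$ and $\beta$ evaluable on universally chosen elements in deterministic polylog time, so that the quantifier structure stays $\exists\forall$. One plausible route is to guess efficient generators (Babai--Szemer\'edi) and let the universal branch also range over the $O(\log^2 n)$-length SLPs, rejecting only when a branch's SLP evaluates to the queried element and the check fails.
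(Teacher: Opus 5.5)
Your conditional argument breaks at exactly the ``subtle point'' you flag. Satisfying (Rel1)--(Rel3) in $G$ and $H$ only shows that each $G_i/G_{i-1}$ is a quotient of $SN_i$. Together with $\prod_i|SN_i|=n$, this gives $|\langle S\rangle|\le n$. Nothing you check gives the reverse inequality, namely that every layer is nontrivial and $\langle S\rangle=G$. Your step ``it suffices to show $|\langle S\rangle|=\prod_i|SN_i|$'' is therefore never actually verified.

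Concretely, take $G=\mathbb{Z}_4\times\mathbb{Z}_2$, $H=\mathbb{Z}_2^3$, three factors $\mathbb{Z}_2$, and generators $(0,1),(2,0),(2,1)$ in $G$ versus $e_1,e_2,e_1+e_2$ in $H$. Every relation $x_i^2=1$, $[x_i,x_j]=1$ holds on both sides and $2^3=8$, yet $G\not\cong H$.

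The paper's treatment of this case has the same hole: its count $|G|=\prod|SN_i|>\prod|G_{i+1}/G_i|=|G|$ silently assumes the guessed elements generate $G$. Moreover, the hole cannot be filled in $\exists^{\polylog(n)}\DTIMEpl$ at all. For a fixed certificate, the verifier reads at most $t=\polylog(n)$ entries, involving a set $Y$ of at most $3t$ labels. Suppose both inputs are the same labelled $\mathbb{F}_2^k$. Pick a surjective linear map $\psi\colon\mathbb{F}_2^k\to\Omega_1(\mathbb{Z}_4\times\mathbb{Z}_2^{k-2})\cong\mathbb{F}_2^{k-1}$ (the elements of order at most $2$) whose kernel $\{0,v\}$ has $v$ outside the fewer than $9t^2$ pairwise differences of labels in $Y$; then $\psi$ is injective on $Y$. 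Extend it to a labelling of $\mathbb{Z}_4\times\mathbb{Z}_2^{k-2}$ that agrees with the first table on every read entry. The verifier then accepts a valid non-isomorphic pair. So the second sentence of the statement is false as written, and no refinement of an existential-only plan can prove it.

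For the unconditional sentence, your ``handled as above'' for non-Ree factors inherits this gap. There, however, a universal quantifier is available, and the paper uses it. At each level it universally ranges over all SLPs of length $O(\log^2 n)$ over $T_1\cup\dots\cup T_{i-1}$, and rejects if one evaluates to the first element of $T_i$. By the reachability lemma this certifies $G_{i-1}<G_i$, so every non-Ree factor is exactly $SN_i$.

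Your Ree step has a related problem. Certifying that $M\mapsto w_M(T_i)G_{i-1}$ is multiplicative requires $w_M w_N w_{MN}^{-1}\in G_{i-1}$, which is a $\forall\exists$ condition, and your SLP trick is vacuous on elements outside $G_{i-1}$. The paper instead checks the opposite, purely universal direction: for all short SLPs $\sigma_1$ over $T_i$ and $\sigma_2$ over earlier generators with $\sigma_1(T_i)=\sigma_2(\cdot)$, it requires $\sigma_1(B_i)=1$. This yields a surjection $G_i/G_{i-1}\twoheadrightarrow{}^2G_2(q)$, i.e.\ $|G_i/G_{i-1}|\ge|SN_i|$. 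With exact non-Ree factors and $\prod_i|SN_i|=n$, this forces $|\langle T_1\cup\dots\cup T_m\rangle|\ge n$. Hence the guessed elements generate $G$, every factor is exact, and the surjections are isomorphisms.

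Only at that point does the section from Lemma~\ref{lem:Ree-bijection} have cocycle values in $G_{i-1}$. That is what makes your universally quantified SLP trick for comparing $\alpha(f_G(q,r))$ with $f_H(q,r)$ sound; the trick itself matches the paper's (coND4-CCI).
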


Note that Proposition~\ref{prop:GpI} operates under the promise that the multiplication tables provided as input indeed satisfy the group axioms. Because the details of the proof take 7+ pages, but the idea of the proof is much shorter, we start by presenting the idea:

\begin{proof}[Proof idea]
We first present the proof idea for those composition factors that satisfy the Uniform Short Presentation Conjecture; then we present the additional ideas that we use to handle the Ree groups, for whom the Uniform Short Presentation Conjecture remains open (see ``The status of the (Uniform) Short Presentation Conjecture'' on p.~\pageref{par:status} for more details).

    \paragraph{Proof idea assuming composition factors satisfy the Uniform Short Presentation Conjecture.} The existential quantifier is used to non-deterministically guess a presentation for $G$ that is of the form given in Construction~\ref{construction:BGKLP}, along with associated elements of $G$ that play the role of the generators in that presentation, and associated SLPs as in Construction~\ref{construction:BGKLP}. The verifier then checks that this presentation is satisfied by both $G$ and $H$. A priori this would just imply that $G$ and $H$ are both quotients of the presented group; however, the algorithm also checks that the orders of the simple composition factors it guesses---which can be computed using well-known formulas directly from the standard names---multiply out to $|G|=|H|$, and we can therefore conclude that the presented group has order exactly $n=|G|=|H|$, so if both $G$ and $H$ satisfy the presentation, then they are both isomorphic to the presented group, and hence to each other. The correctness essentially follows from \cite[Theorem~8.3]{BGKLP}.

\paragraph{Additional ideas to handle Ree group composition factors.} In general, we allow our non-deterministic guesses to guess that composition factors correspond to Ree groups as well, using their standard name. However, we do not have available to us uniform short presentations of the Ree groups. Similar to the above case, we also use the non-deterministic bits to guess a generating set of size $O(\log n)$ (corresponding to a generating set we specify in Lemma~\ref{lem:Ree-bijection}). However, rather than verifying a cohomology class based on a small number of short relators, we instead use a co-nondeterministic $\forall^{\polylog}$ quantifier to essentially treat the entire multiplication table as relators, that is, we look at relators of the form $q \cdot r \cdot (qr)^{-1}$. We use the universal quantifier both to enumerate over all pairs $(q,r)$---for which we take advantage of an easily computable and co-nondeterministically invertible bijection (see Lemma~\ref{lem:Ree-bijection})---as well as, for each $(q,r)$ to pick an SLP in terms of the previous terms in the composition series. 

There are two additional tricks we highlight here. The first is that the natural approach to the latter SLP would be to nondeterministically guess the SLP corresponding to $(q,r)$, but this would put us back in the third level of the polylogarithmic hierarchy. Instead, we try \emph{all} SLPs of length $O(\log^2 n)$, with the bound being guaranteed by the Babai--Szemerédi Reachability Lemma. If the SLP is not equal to the element it is supposed to, then that co-nondeterministic branch simply accepts, essentially leaving the final accept/reject decision to the branches where the SLP \emph{is} equal to what it is supposed to. And since we are guaranteed that such a branch exists, the check is still guaranteed to happen. Indeed, this is why we introduced the class $\coNTISPpllMV$.

The second is why we needed the efficiently invertible bijection of Lemma~\ref{lem:Ree-bijection} at all. The issue here is the following: when we instead had a small number of relators, we can simply guess the SLPs---and hence the cohomology class---all at once and verify them. But when we are using the universal quantifier to quantify over relators, we have no way of knowing that our choices of section, and hence cohomology class, on the different co-nondeterministic branches were consistent with one another. But this consistency is precisely provided by the bijection of Lemma~\ref{lem:Ree-bijection}. The latter bijection also crucially takes advantage of another universal quantifier in order to solve discrete log efficiently in a group given by its multiplication table (Lemma~\ref{lem:discrete-log}).

With all these pieces in place, even for the Ree groups, the final verification checks are analogous to those for non-Ree groups coming from Construction~\ref{construction:BGKLP} \cite{BGKLP}, but now the correctness instead relies on the non-Abelian cohomology setup of Lemma~\ref{lem:GQMain} \cite{GQCoho}.
\end{proof}

Before we come to the details of the proof of Proposition~\ref{prop:GpI}, we need one additional lemma about the Ree groups. For notation see Section~\ref{sec:ReeGroupPrelims}. Suppose we are given $\FF_q$ as $\FF_3[x]/f(x)$. For $a \in \FF_3^{2k+1}$, let us write 
\[
\overline{\alpha}^a := \alpha(x^0)^{a_0} \alpha(x^1)^{a_1} \dotsb \alpha(x^{2k})^{a_{2k}},
\]
and analogously for $\overline{\beta}^b$ and $\overline{\gamma}^c$. The following lemma gives us a ``standard'' generating set for $^2 G_2(q)$ of size $2 + 3\log_3 q \leq O(\log |^2 G_2(q)|)$, as well as normal forms for words for each element of the group, in terms of that generating set.

\begin{lemma} \label{lem:Ree-bijection}
Suppose we are given an irreducible polynomial $f(x) \in \FF_3[x]$ of degree $2k+1$ and a primitive element $\omega \in \FF_q = \FF_3[x] / f(x)$. Let
\[
B = \{h(\omega), \Upsilon\} \cup \{\alpha(x^i), \beta(x^i), \gamma(x^i) : 0 \leq i < 2k\}.
\]
\begin{enumerate}
\item The following function is a bijection:
\[
\varphi \colon ([q-1] \times \FF_3^{3(2k+1)}) \sqcup ([q-1] \times \FF_3^{6(2k+1)}) \to {}^2 G_2(q)
\]
where $\varphi(i; a,b,c) = h(\omega)^i \overline{\alpha}^a \overline{\beta}^b \overline{\gamma}^c$ and $\varphi(i; a,b,c,a',b',c') = h(\omega)^i \overline{\alpha}^a \overline{\beta}^b \overline{\gamma}^c \Upsilon \overline{\alpha}^{a'} \overline{\beta}^{b'} \overline{\gamma}^{c'}$

\item The bijection $\varphi$ is computable in $\DTISPpll$, and 
\[
\varphi^{-1} \in \coNTISPpllMV.
\]
\end{enumerate}
\end{lemma}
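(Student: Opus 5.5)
The plan has two parts. Part~1 is a normal-form argument combining the structural facts of Section~\ref{sec:ReeGroupPrelims} with a counting check. Part~2 evaluates $\varphi$ directly from the multiplication table and inverts it by universally enumerating all preimages, which is affordable because the domain of $\varphi$ has exactly $|{}^2G_2(q)|$ elements.

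\emph{Part 1.} I read the index range in $B$ as $0 \leq i \leq 2k$, matching the definition of $\overline{\alpha}^a$, so that $\{x^i\}$ is an $\FF_3$-basis of $\FF_q$; the powers $\alpha(x^i)^{a_i}$ then make sense with $a_i \in \{0,1,2\}$. I will build the bijection from four pieces.
\begin{enumerate}
\item Since $\beta$ and $\gamma$ are additive, $\overline{\beta}^b = \beta(\sum_i b_i x^i)$ and $\overline{\gamma}^c = \gamma(\sum_i c_i x^i)$. So $(b,c) \mapsto \overline{\beta}^b\overline{\gamma}^c$ is a bijection $\FF_3^{2(2k+1)} \to [U,U] \cong \FF_q^2$.
\item For $\alpha$, repeated use of $\alpha(x+y)=\alpha(x)\alpha(y)b_{x,y}$ with $b_{x,y}\in[U,U]$ shows $\overline{\alpha}^a = \alpha(\hat a)\, z_a$. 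Here $\hat a=\sum_i a_i x^i$ and $z_a \in [U,U]$ depends only on $a$. I will use that $[U,U]$ is abelian and normal in $U$, so the correction terms can be collected at the right.
\item Hence $\overline{\alpha}^a\overline{\beta}^b\overline{\gamma}^c = \alpha(\hat a)\cdot z_a\beta(\hat b)\gamma(\hat c)$. For fixed $a$, the map $(b,c)\mapsto z_a\beta(\hat b)\gamma(\hat c)$ is a translate in the group $[U,U]$, hence a bijection onto $[U,U]$. The uniqueness of the normal form $\alpha(a)\beta(b)\gamma(c)$ for elements of $U$ then gives a bijection $\FF_3^{3(2k+1)} \to U$.
\item Since $h(\omega)$ generates the cyclic group $H$ of order $q-1$, the map $i\mapsto h(\omega)^i$ on $[q-1]$ is a bijection onto $H$.
\end{enumerate}
The uniqueness of the expressions $h(\lambda)\alpha(a)\beta(b)\gamma(c)$ on $HU$ and $h(\lambda)u\Upsilon u'$ off $HU$, together with the disjointness in the Bruhat decomposition, then give injectivity of $\varphi$ on each summand and across the two summands. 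As a sanity check, $|\text{domain}| = (q-1)q^3+(q-1)q^6 = q^3(q^3+1)(q-1) = |{}^2G_2(q)|$, so injectivity already yields bijectivity.

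\emph{Part 2, forward direction.} In the application the group is given by a Cayley table, with the elements of $B$ given as (guessed) table elements, so I will evaluate $\varphi$ in that setting. The word for $\varphi(\cdot)$ has $O(k)=O(\log q)$ factors. Each factor is either $h(\omega)^i$ with $i<q$ or a power with exponent at most $2$. I will compute the product left to right, keeping only the current partial product, an $O(\log n)$-bit element. Each power $h(\omega)^i$ is computed by repeated squaring using \cite[Lem.~3.1]{CGLW}, i.e.\ $O(\log n)$ table lookups, each needing an index computation in polylogarithmic time. The total time is polylogarithmic and the space is $O(\log n)$, so $\varphi \in \DTISPpll$ in the sense of Definition/Lemma~\ref{def:DTISP}.

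\emph{Part 2, inverse direction.} Given $g$, a universal branch ranges over all candidate preimages $(i;a,b,c)$ or $(i;a,b,c,a',b',c')$. Each candidate is a string of $O(\log q)=O(\log n)$ bits, with ternary digits encoded in binary and invalid encodings outputting $\dagger$. The branch computes $\varphi$ of its candidate as above and outputs the candidate if the result equals $g$, and $\dagger$ otherwise. By Part~1 exactly one branch outputs a valid preimage, which places $\varphi^{-1}$ in $\coNTISPpllMV$. The discrete-logarithm routine of Lemma~\ref{lem:discrete-log} could alternatively recover $i$, but it should not be needed.

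\emph{Main obstacle.} I expect the main difficulty to be the $\alpha$ step in Part~1: justifying that the accumulated correction $z_a$ lies in $[U,U]$ and can be moved past the other factors without disturbing the normal form. This needs a check that $[U,U]$ is normal in $U$ (indeed central or at least abelian), which I would verify from the matrices or cite from \cite{Ward1966}. The complexity analysis is routine by comparison.
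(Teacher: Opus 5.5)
Your Part~1 is essentially the paper's argument. Your reading $0\le i\le 2k$ of the index range is the intended one. The obstacle you anticipate is not real: $[U,U]$ is the derived subgroup, hence normal, so $c\,\alpha(z)=\alpha(z)\bigl(\alpha(z)^{-1}c\,\alpha(z)\bigr)$ moves correction terms to the right.

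\textbf{What you must fix in Part~2.} The map $\varphi$ lands in ${}^2G_2(q)$ as the group of $7\times 7$ matrices over $\FF_3[x]/(f)$. That is the setting of the lemma and of its use in Proposition~\ref{prop:GpI}, where $\varphi^{-1}$ is applied to matrices such as products $qr$ computed in $\GL(7,q)$. If you instead evaluate the words on Cayley-table elements $T_{i+1}$, you get a section into the extension $G_{i+1}$. That section is not surjective when $G_i\neq 1$, and Part~1 says nothing about it. So your sentence ``by Part~1 exactly one branch outputs a valid preimage'' is unjustified as written. The repair is to run the same argument in the matrix model:
\begin{itemize}
\item compute $h(\omega)^i=h(\omega^i)$ by repeated squaring in $\FF_q$;
\item multiply the $O(k)$ constant-size matrices, keeping only the running product ($O(\log q)$ bits);
\item compare the result entrywise with the input matrix $M$.
\end{itemize}
This takes $\poly(\log q)$ time and $O(\log q)$ space, with $q\le n$.

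\textbf{Comparison of the inversion step.} With that correction, your inversion is valid and genuinely different from the paper's. The paper proceeds constructively:
\begin{itemize}
\item it splits on the Bruhat cell (whether $M$ is upper triangular);
\item it reads $\omega^{\pm i\ell}$ (up to sign) off a specific entry and recovers $i$ with the discrete-log subroutine of Lemma~\ref{lem:discrete-log};
\item it then peels off $a,b,c$ (and $a',b',c'$) deterministically from further entries, whose formulas it checks with Mathematica.
\end{itemize}
You instead universally guess the whole $O(\log q)$-bit preimage and verify it by forward evaluation.

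Your route needs nothing beyond Part~1 and the computability of $\varphi$. It is really the general fact that any $\DTISPpll$-computable bijection on $O(\log n)$-bit strings has its inverse in $\coNTISPpllMV$, and it avoids the entry calculations entirely. The paper's route, in exchange, is an honest constructive-membership procedure whose only co-nondeterminism is the $\log q$ bits for the discrete logarithm. It parallels the deterministic decompositions used for $\PSU_3(q)$ and $\Sz(q)$, where determinism is required for uniformity. For the lemma as stated, both routes give the same class.
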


\begin{proof}
1. The bijection is nearly the same as that of \eqref{eq:Ree-bijection}. Using the facts recalled in the preliminaries that $\beta(y+z)=\beta(y)\beta(z)$ for all $y,z \in \FF_q$ and similarly for $\gamma$, we have that $\overline{\beta}^b = \beta(b_0 + b_1 x + b_2 x^2 + \dotsb + b_{2k} x^{2k})$, and similarly for $\gamma$. However, for $\alpha$, we only have that for all $y,z \in \FF_q$, there exists $b_{y,z} \in [U,U]$ such that $\alpha(y+z)=\alpha(y)\alpha(z)b_{y,z}$. Thus we get that $\overline{\alpha}^a = \alpha(a_0 + a_1 x + a_2 x^2 + \dotsb + a_{2k} x^{2k}) \beta(f_1(a)) \gamma(f_2(a))$ where $f_1, f_2 \colon \FF_3^{2k+1} \to \FF_q$ are the unique functions to make this equation hold, that is, $\beta(f_1(a)) \gamma(f_2(a))$ is equal to the product of the appropriate $b_{y,z}$ terms that arise from rewriting $\overline{\alpha}^a$ as above.

Now, from the fact that \eqref{eq:Ree-bijection} is a bijection, it will suffice to show that 
\[
\{\alpha(a) \beta(b) \gamma(c) : a,b,c \in \FF_q\} = \{\overline{\alpha}^a \overline{\beta}^b \overline{\gamma}^c : a,b,c \in \FF_3^{2k+1}\} \qquad \text{(to show)}.
\]
Towards this end, let $\pi \colon \FF_3^{2k+1} \to \FF_q$ be the $\FF_3$-linear vector space isomorphism given by $\pi(a) = a_0 + a_1 x + \dotsb + a_{2k} x^{2k}$. Then from the preceding paragraph, we have
\[
\overline{\alpha}^a \overline{\beta}^b \overline{\gamma}^c   = \alpha(\pi(a)) \beta(f_1(a)) \gamma(f_2(a))\beta(\pi(b)) \gamma(\pi(c)).
\]
Since $[U,U]$ is Abelian and using the multiplicative property of $\beta,\gamma$, the latter is the same as
\[
\alpha(\pi(a)) \beta(f_1(a) + \pi(b)) \gamma(f_2(a) + \pi(c)).
\]
But now, as $b$ ranges over $\FF_3^{2k+1}$, $\pi(b)$ bijectively ranges over $\FF_q$. And since the additive group of $\FF_q$ is, in particular, a group, $f_1(a) + \pi(b)$ bijectively ranges over $\FF_q$ as well. Thus $\{\beta(f_1(a) + \pi(b)) : b \in \FF_3^{2k+1}\} = \{\beta(\pi(b)) : b \in \FF_3^{2k+1}\} = \{\beta(b) : b \in \FF_q \}$, and similarly for $\gamma$. This proves that $\varphi$ is indeed a bijection.

2. That $\varphi$ is computable in $\DTISPpll$ is relatively straightforward: we may compute the $i$-th power of $h(\omega)$ by repeated squaring of $7 \times 7$ matrices, which takes only $O(\log i) \leq O(\log q) \leq O(\log|G|)$ many iterations, each of which involves $O(1)$ operations over $\FF_q$. In total that uses $\polylog(q)$ time and $O(\log q)$ space. We may then construct and multiply the appropriate $\alpha,\beta,\gamma,\Upsilon$ matrices with similar complexity.

Computing $\varphi^{-1}(M)$ is slightly more involved. We split into two cases based on whether $M$ is upper-triangular or not. 
\begin{itemize}
\item \textbf{Case 1:} Suppose first that $M$ is upper-triangular. Then $M = h(\omega)^i \alpha(a)\beta(b')\gamma(c')$ for some $i \in [q-1], a,b',c' \in \FF_q$, and we must compute $i$ and $a,b,c \in \FF_3^{2k+1}$ such that $M = h(\omega)^i \overline{\alpha}^a \overline{\beta}^b \overline{\gamma}^c$. In this case, $M_{1,1} = (\omega^i)^\ell$, where, following the notation of Section~\ref{sec:ReeGroupPrelims} we have $\ell = 3^k$. We may raise $M_{1,1}$ to the $3\ell$-th power by repeated squaring in $\FF_q$ to get $\omega^i$. However, note that we do not yet have $i$, rather we have $\omega^i$ written as a polynomial in $x$. We use the co-nondeterministic bits to either output $\dagger$ or $i$ as in Lemma~\ref{lem:discrete-log}. On the branches of the discrete log subroutine that output $\dagger$, our machine also outputs $\dagger$; on the branches that output $i$, the machine continues as follows. First it computes $M' = h(\omega)^{-i} M$; then $M' = \alpha(a)\beta(b')\gamma(c')$ (but the algorithm does not yet know $a,b',c,'$). From the formulas for the matrices $\alpha(a),\beta(b'),\gamma(c')$ recalled in Section~\ref{sec:ReeGroupPrelims}, one can see that $a = -M'_{3,6}$, that is, $a$ can simply be read off as an element of $\FF_q$ from one of the entries of $M'$. 

The algorithm then computes $M'' = \left(\overline{\alpha}^{a}\right)^{-1} M'$, which now has the form $\beta(b)\gamma(c)$ for some $b,c \in \FF_q$ to be computed. Again, examining the formulas for the matrices, we find $b = -M''_{1,5}$, compute $M''' = \left(\overline{\beta}^{b}\right)^{-1} M'' = \beta(b)^{-1} M''$, and we find $c = -M'''_{2,7}$. This completes the computation of $\varphi^{-1}$ in the case that $M$ is upper triangular.

\item \textbf{Case 2:} Finally, suppose instead that $M$ is not upper triangular. Then by facts recalled in Section~\ref{sec:ReeGroupPrelims}, if $M$ is in $^2 G_2(q)$ at all, it must be of the form $h(\omega)^i \alpha(a)\beta(b)\gamma(c) \Upsilon \alpha(a') \beta(b') \gamma(c')$ for appropriate $i \in [q-1]$ and $a,b,c,a',b',c' \in \FF_q$. This time we proceed similarly (but we used Mathematica \cite{Mathematica} to check our calculations symbolically, see Mathematica code in Appendix~\ref{app:mathematica}). In this case, we have that $-(\omega^i)^{-\ell} = M_{7,1}$. To compute $\omega^{-i}$, we thus compute $(-M_{7,1})^{3\ell}$, and then to compute $i$ we use Lemma~\ref{lem:discrete-log} as before. 

After left-multiplying by $h(\omega)^{-i}$ we find that $a^\ell$ is in the $(6,1)$ entry of the resulting matrix, so we can compute $a$ by taking the $3\ell$-th power of this entry. After left-multiplying by $(\overline{\alpha}^a)^{-1}$, we find that the $(3,1)$ entry is $-b$. Left multiplying by $\beta(b)^{-1}$, we can then find several more parameters at once: the $(2,1)$ entry is $-c$, the $(7,2)$ entry is $-(a')^\ell$. Finally, left multiplying by $(\gamma(c)\Upsilon \overline{\alpha}^{a'})^{-1}$, we are now in the situation where we have a matrix of the form $\beta(b')\gamma(c')$, and we can recover $b'$ and $c'$ as in the upper triangular case above.
\end{itemize}

Thus, every co-nondeterministic branch that does not output $\dagger$ indeed outputs $\varphi^{-1}(M)$, and at least one such co-nondeterministic branch outputs $\varphi^{-1}(M)$.
\end{proof}

We now flesh out the details of the above proof idea for Proposition~\ref{prop:GpI}:

\begin{proof}[Proof of Proposition~\ref{prop:GpI}]
Our machine will non-deterministically guess $\polylog(|G|)$-many bits. When we encounter a composition factor of $^2G_{2}(q)$, we will also utilize $\polylog(|G|)$-many co-nondeterministic bits to handle isomorphism testing of the composition factor (using ideas from the $\ComplexityClass{\Sigma_{2} TIME}(\polylog(|G|))$ isomorphism test of Babai and Szemerédi \cite{BabaiSzemeredi}) and cohomology class isomorphism for the extension (Lemma~\ref{lem:GQMain}). We describe these as follows.

\begin{itemize}
\item[(ND1)] We will guess lists $T_1,\dotsc,T_m$ of non-identity elements of $G$ where all the elements appearing in all $T_i$ are distinct (including elements in $T_i$ versus those in $T_j$, $i \neq j$), and corresponding lists $T'_1, \dotsc, T'_m \subseteq H$ with $|T_i| = |T_{i}'|$ for all $i$ and $\sum_{i=1}^m |T_i| \in O(\log^{c+1}|G|)$.

For convenience later, we will define $T_{0} = T_{0}' = \emptyset$.

(For intuition, we remark that later in the proof, we will check for each $i \in [r]$, that the elements in $T_{i}$ induce the composition factor $G_{i}/G_{i-1}$ and similarly for $T_{i}'$ and $H_i / H_{i-1}$.)

\item[(ND2)] For each $i \in [r]$, we will guess the standard name $SN_i$
(Definition~\ref{def:standard-name}) of a finite simple group
(the intended meaning of the guess here is that this corresponds to the
isomorphism type of $G_i/G_{i-1}$). In addition, if the standard name
includes field data for a field $\FF_q$, we will non-deterministically
guess the prime factorization of $q-1$. 

\item[(ND3)] \label{ND3} (Essentially the remaining data from Construction~\ref{construction:BGKLP}) For each $i \in [m]$ where $SN_{i}$ does not specify a  Ree group $^2G_{2}(q)$:
\begin{itemize}
\item[(ND3-Gen1)] Guess an SLP $S^{nd}_{i}$ in $\leq \sum_{j=1}^i |T_i|$ many variables (this will play the role of $S_i$ from Construction~\ref{construction:BGKLP})

\item[(ND3-Gen2)] For each $\gamma \in T_i$ guess an SLP $D^{nd}(\gamma)$ in at most $|S^{nd}_i|$ many variables

\item[(ND3-Gen3)] For each $\rho \in [O(\log^c |G|)]$ guess an SLP $P^{nd}(\rho)$ in at most $|S^{nd}_{i-1}|$ many variables

\item[(ND3-Gen4)] For each $n \in T_1 \cup \dotsb \cup T_{i-1}$ and each $g \in T_i$ guess an SLP $C^{nd}(n,g)$ in at most $|S^{nd}_{i-1}|$ many variables.
\end{itemize}

\item[(ND4)] \label{ND4} For each $i \in [m]$ where $SN_{i}$ is a Ree group $^2G_{2}(q)$, we guess the following.
\begin{itemize}
\item[(ND4-Action)] (Same as (ND3-Gen4).)

\item[(coND4-Quotient)] Our machine will co-nondeterministically try all SLPs of length $O(\log^2 n)$ in $|T_{i}|$ many variables, as well as all SLPs of length $O(\log^2 n)$ in $|T_1 \cup \dotsb \cup T_{i-1}|$ many variables.

\item[(coND4-CCI)] The machine will co-nondeterministically try, for each pair $q,r \in {}^2 G_2(q)$ and each SLP $P_{q,r}$ of length $O(\log^2 n)$ over $T_1 \cup \dotsb \cup T_{i-1}$. 

\end{itemize}

\item[(coND1)] (These will only be needed when we are \emph{not} assuming all of $G$'s composition factors have uniform short presentations.) The machine will co-nondeterministically try all SLPs of length $O(\log^2 n)$ over the $T_i$. (Intention: these will be used to verify that $\langle T_1 \cup \dotsb \cup T_{i+1} \rangle \neq \langle T_1 \cup \dotsb \cup T_i \rangle$.)
\end{itemize}

\noindent We now show how to use these nondeterministic guesses and co-nondeterministic tries to verify whether $G \cong H$.

\paragraph{Verifying standard names.}
First, we verify the basic aspects of the standard names that were guessed: we are given a standard name $(L,d,q,p,f,\alpha)$, where $L$ is the label of the group, $(d,q)$ are its parameters, $q=p^e$ for some $e \geq 1$,  $f$ is a polynomial in $\FF_p[x]$ of degree $e$, and $\alpha$ is a polynomial in $\FF_p[x]$ of degree $< e$. And we must verify the following: 
\begin{itemize}
    \item[(0)] $L$ is a valid label, but there are only finitely many labels so this is trivial. 
    
    \item[(1)] We next check that $p$ is prime. This can be done in $\poly(\log p) \leq \poly(\log |G|)$ time \cite{AKS}.\footnote{Alternatively, we could instead use the much older, easier, and now-classical result that \algprobm{Primes} is in $\mathsf{NP}$, and along with $p$ we could non-deterministically guess the witness that $p$ is prime, which would have size $\poly(\log p) \leq \poly(\log |G|)$. This would have the effect of relying on easier results and improving the exponent of the verification runtime, but at the expense of increasing the number of non-deterministic bits.}

    \item[(2)] We next verify that $q$ is a power of $p$. The same bound from (1) in terms of $|G|$ applies to checking that $q$ is a power of $p$, because $q \leq |G|$ so $q$ is a number of at most $\log |G|$ many bits. 

    \item[(3)] We next check that $f(x)$ is irreducible over $\FF_{p}[x]$. To verify that $f$ is irreducible we use Rabin's irreducibility test \cite[Lemma 1]{Rabin}: $f$ is irreducible if and only if $f$ divides $x^{p^e} -x$ and $gcd(f,x^{p^{e_i}} - x)=1$ for all $e_i$ that are maximal divisors of $e$ (that is, of the form $e$ divided by a single prime). First the algorithm computes the maximal divisors $e_i$. Since $e \leq \log|G|$, the bit-length of $e$ is $O(\log \log |G|)$, so trial division lets us factor $e$ in time $\widetilde{O}(\sqrt{e}) \leq \widetilde{O}(\log^{1/2} |G|)$. The algorithm can thus find the primes $p_i$ dividing $e$, and in $O(\log^3 e) \leq O((\log \log |G|)^3)$ time can compute the maximal divisors $e_i$. 

Now the algorithm checks that $f$ divides $x^{p^e}-x = x^q - x$. For this, it computes $x^{p^e} \pmod{f}$ by repeated squaring, taking the remainder modulo $f$ at each step to keep the degree at most $e$. If the final result is not $x$, the algorithm rejects. There are $O(\log p^e) = O(\log q) \leq O(\log|G|)$ rounds of repeated squaring, each of which takes time $\poly(e,\log p) = \poly(\log q) \leq \poly(\log |G|)$. 

Finally, for each maximal divisor $e_i$ of $e$ computed above, the algorithm checks that $gcd(f,x^{p^{e_i}}-x) = 1$. Before using the standard Euclidean algorithm, as above the algorithm first computes $x^{p^{e_i}} \pmod{f}$ by repeated squaring, taking the remainder modulo $f$ at each step to keep the degree low. The algorithm then uses the standard Euclidean algorithm to compute $gcd(f,(x^{p^{e_i}} \pmod{f})-x) = gcd(f,x^{p^{e_i}}-x)$. As these are polynomials of degree $e$, this takes at most $e$ many rounds each of which takes at most $\poly(e,\log p)$ time. If the gcd is not 1, the algorithm rejects. 

\item[(4)] It remains to check that $\alpha$ is a primitive root in $\FF_{p}[x]/(f(x))$, that is, that $\alpha$ generates the group of units of $\FF_p[x]/f(x) \cong \FF_q$. For this, now that it has verified that $f$ is irreducible (and hence that $\FF_p[x]/f(x)$ is a field of order $q=p^e$), it suffices to verify that the multiplicative order of $\alpha$ is exactly $q-1$. As part of (ND2) we have guessed the prime factorization of $q-1$. The algorithm first verifies this prime factorization: it checks that each listed number is prime (as above---note these numbers have at most $\log q \leq \log|G|$ many bits) and that their product is exactly $q-1$. The algorithm then uses repeated squaring modulo $f$ (as above) to check that (1) $\alpha^{q-1} \equiv 1 \pmod{f}$ and (2) for each prime $p_i$ dividing $q-1$, $\alpha^{(q-1)/p_i} \not\equiv 1 \pmod{f}$. This completes the verification of the basic data guessed along with the standard names.
\end{itemize}

\paragraph{Verifying isomorphism.} We now proceed to checking isomorphism. First the verifier checks that the orders of the guessed simple composition factors indeed multiply to $n = |G| = |H|$. The order of a finite simple group can be computed in $\DTISPpll$ from its standard name by standard formulas (which can be found in many textbooks or, e.\,g., Wikipedia): $|A_k| = k!/2$, $|A_d(q)| = |PSL_{d+1}(q)| = \frac{q^{d(d+1)/2}}{gcd(d+1,q-1)} \prod_{i=1}^d (q^{i+1}-1)$, etc. The algorithm then multiplies out these numbers and checks that the result is equal to $|G|$ and $|H|$, and if not, rejects. (Although this might seem like a simple check that we do ``merely'' for convenience, it actually turns out to be crucial.) 

Now, we will proceed inductively along the composition series specified by the lists $(T_{i})_{i \in [m]}$ and $(T_{i}')_{i \in [m]}$; we will essentially verify that these lists indeed specify composition series for $G$ and $H$ respectively. Since the Ree groups $^2G_{2}(q)$ are not known to have a uniform short presentation, we will consider two cases: when $SN_{i}$  specifies a Ree group $^2G_{2}(q)$, and when $SN_{i}$ does not specify a Ree group.\footnote{And if it turns out the Uniform Short Presentation Conjecture holds for all finite simple groups, then our analysis of the case with uniform short presentations would also apply to the Ree groups.}

Formally, for $i \in [m]$, let $N_{i} = \langle T_1 \cup \dotsb \cup T_{i-1} \rangle$ and $N_{i}' = \langle T'_1 \cup \dotsb \cup T_{i-1}' \rangle$. Let $G_{i} = \langle N_i, T_{i} \rangle$ and $H_{i} = \langle N'_i, T_{i}' \rangle$. Now let $Q_{i} = G_i /  N_{i}$ and $Q_{i}' = H_{i} / N_{i}'$. Our algorithm will verify that the map\footnote{\label{fn:lists}Here we think of $T_i$ and $T'_i$ as lists of elements, so when we say ``the'' map we mean: the one that maps the first element of $T_i$ to the first element of $T'_i$, the second element of $T_i$ to the second element of $T'_i$, and so on} $T_{i} \mapsto T_{i}'$ extends to an isomorphism between $G_{i}$ and $H_{i}$. For our base case, we define for convenience $T_{0} = T_{0}' = \emptyset$, and $G_{0} = H_{0} = \{1\}$. 

For convenience, define $\Phi_{i}$ to be the map sending $T_{j} \mapsto T_{j}'$ for all $0 \leq j \leq i$ (see Footnote~\ref{fn:lists}). If the algorithm has correctly decided that $\Phi_i$ does not extend to an isomorphism $G_i \to H_i$, then on that nondeterministic branch it rejects and need not continue further checking.

So now suppose instead that for some fixed $0 \leq i < m$, our algorithm has correctly decided that $\Phi_{i}$ extends to an isomorphism between $G_{i}$ and $H_{i}$. We will show that our algorithm can correctly decide whether the map $\Phi_{i+1}$ extends to an isomorphism between $G_{i+1}$ and $H_{i+1}$. 

\begin{itemize}
\item \textbf{Case 1:} Suppose that $SN_{i+1}$ does not specify a Ree group. To check that $G_{i+1}$ and $H_{i+1}$ are isomorphic, we essentially verify that the guessed data provides presentations of both $G_{i+1}$ and $H_{i+1}$ of the form given in Construction~\ref{construction:BGKLP}. 

The rest of the verifier's algorithm is to check that $G_{i+1}$ and $H_{i+1}$ in fact satisfy the guessed presentation. From the standard name $SN_{i+1}$, using the uniform short presentation, we construct a presentation $\langle \Gamma_{i+1} | R_{i+1} \rangle$ of the corresponding finite simple group. Precisely, we use Proposition~\ref{prop:UniformityPSU} for $\PSU_{3}(q)$, Proposition~\ref{prop:UniformSuzuki} for $\Sz(q)$, and \cite{BGKLP} for the remaining finite simple groups (excluding $^2G_{2}(q)$).

 If $|\Gamma_{i+1}| \neq |T_{i+1}|$ reject. 

Now use the SLPs guessed in (ND3) to construct the corresponding relators (Rel1)--(Rel3) from Construction~\ref{construction:BGKLP}; that is, use $S_{i+1}^{nd}$ in place of $S_{i+1}$ in that construction, use $D^{nd}(\gamma)$ in place of $D(\gamma)$, etc. Now we apply these relators using the guessed elements $T_{i+1}$ in place of $\widehat{\Gamma}_{i+1}$ from Construction~\ref{construction:BGKLP}. If any relation fails in $G_{i+1}$, reject. Similarly apply the same relators to the guessed elements $T'_{i+1}$ in place of $\widehat{\Gamma}_{i+1}$, and if any relation fails in $H_{i+1}$, reject. 

If we are not assuming all of $G$'s composition factors have uniform short presentations, then for each $1 \leq i \leq m$, let $g_i$ denote the first element of $T_i$. The machine checks the co-nondeterministic tries from (coND1) to try all words over $T_1 \cup \dotsb \cup T_{i-1}$ of length $O(\log^2 |G|)$ to see if any are equal to $g_i$. If any are, then that co-nondeterministic branch rejects. (If we assume $G$'s composition factors have uniform short presentations, we do not do this step nor need these co-nondeterministic bits.)

In total, the checks took polylogarithmic time.\footnote{And, with the possible exception of constructing the presentations according to the Uniform Short Presentation Conjecture, the rest of the procedure also took only logarithmic space.} This completes the description and complexity analysis of the algorithm in the case of having a uniform short presentation.

\item \textbf{Case 2:} Suppose instead that $SN_{i+1}$ specifies $^2G_{2}(q_{i+1})$. In this case, we use the cohomological framework of Grochow and Qiao \cite{GQCoho} (recalled in Section~\ref{sec:Cohomology}) to decide whether $G_{i+1} \cong H_{i+1}$. Let $\alpha_i \colon G_i \to H_i$ denote the unique isomorphism extending $\Phi_i$. Let $B_{i+1}$ be the generating set for $^2 G_2(q_{i+1})$ from Lemma~\ref{lem:Ree-bijection}, which can be computed from the standard name in $\DTISPpll$. If $|T_{i+1}| \neq |B_{i+1}|$, reject. 

The algorithm proceeds with the following checks:
\begin{itemize}
\item \textbf{Normality and Action Compatibility:} For each $t \in T_{i+1}$ and each $g \in T_{1} \cup \cdots \cup T_{i}$, $tgt^{-1} \in \langle T_{1} \cup \cdots \cup T_{i} \rangle = G_{i}$, we use the word $C^{nd}(g,t)$ from (ND4-Action) and verify that 
\[
tgt^{-1} = C^{nd}(g,t) \qquad \Phi_{i+1}(t) \Phi_i(g) \Phi_{i+1}(t)^{-1} = C^{nd}(\Phi_i(g), \Phi_{i+1}(t)).
\]
This ensures that $G_i$ is normal in $G_{i+1}$, $H_i$ is normal in $H_{i+1}$, and that the map $\Phi_{i+1}$ (if it extends to a homomorphism) is in fact an action compatibility.

\item \textbf{Marked Isomorphism of the Quotients:} Our next step will be to verify that $G_{i+1}/G_{i}$ and $H_{i+1}/H_{i}$ are both isomorphic to $^2G_{2}(q_{i+1})$, and that the map $T_{i+1} \mapsto T_{i+1}'$ induces this isomorphism. Using the co-nondeterministically quantified SLPs from (coND4-Quotient), say we have an SLP $\sigma_1$ in $T_{i+1}$ and $\sigma_2$ in $T_1 \cup \dotsb \cup T_{i}$. The machine checks whether $\sigma_1(T_{i+1}) = \sigma_2(T_1 \cup \dotsb \cup T_{i})$. If not, we consider the guesses ``invalid'', and that co-nondeterministic branch automatically accepts (sic! so that the ultimate accept/reject criterion of the earlier nondeterministic branch depends only on other co-nondeterministic branches). If $\sigma_1(T_{i+1}) = \sigma_2(T_1 \cup \dotsb \cup T_{i})$, then in particular, we have $\sigma_1(T_{i+1}) \in G_i$. In this case, the machine now checks deterministically that $\sigma_1(B_{i+1}) = \text{Id}_7$ in $\GL(7,q_{i+1})$. If not, then this co-nondeterministic branch rejects (and hence, so does its parent nondeterministic branch).

\item \textbf{Cohomology Class Isomorphism.} In order to check that the chosen identification $\Phi_{i+1}$ extends to a cohomology class isomorphism, we use Lemma~\ref{lem:Ree-bijection} to get a section $s_{G,i+1} :{}^2 G_2(q_{i+1})  \to G_{i+1}$ as follows. Given $M \in {}^2 G_2(q_{i+1})$, we compute $\varphi^{-1}(M)$ as in Lemma~\ref{lem:Ree-bijection} to write $M$ as a word $w_M$ in the generating set $B_{i+1}$, $w_M(B_{i+1})=M$. Since we have identified $T_{i+1}$ with $B_{i+1}$, we then use $w_M(T_{i+1})$ as our element of $G_{i+1}$ that maps to $M$ under the given isomorphism $G_{i+1} / G_i \to {}^2 G_2(q_{i+1})$. That is, we define the section $s_{G,i+1} \colon {}^2 G_2(q_{i+1}) \to G_{i+1}$ by $s_{G,i+1}(M) = w_M(T_{i+1})$. On branches of the co-nondeterministic machine for $\varphi^{-1}$ that output $\dagger$, our machine simply accepts, so that its ultimate accept/reject decision only relies on the co-nondeterministic branches that correctly computed $\varphi^{-1}(M)$ (and we have shown that the latter such branches indeed exist, by our definition of $\coNTISPpllMV$ and Lemma~\ref{lem:Ree-bijection}). 

To line up this final check with the statement and notation of Lemma~\ref{lem:GQMain}, we introduce all the relevant notation in our present setting. Let $\alpha \colon G_i \to H_i$ be the unique isomorphism extending $\Phi_i$, and let $\beta \colon G_{i+1} / G_i \to H_{i+1} / H_i$ be the isomorphism of the composition factors induced by $\Phi_{i+1}$. Let $\theta_{G,i+1}$ be the action of $G_{i+1} / G_i$ on $G_i$ by conjugation, using the section $s_{G,i+1}$ above, and analogously for $\theta_{H,i+1}$. By Remark~\ref{rmk:t-trivial}, we may assume the $t$ needed in the definition of pseudo-congruence (Definition~\ref{def:pseudocongruence}) is trivial.

We then define the 2-cocycle $f_{G,i+1} : {}^2 G_2(q_{i+1}) \times  {}^2 G_2(q_{i+1}) \to G_{i}$ by
\[
f_{G,i+1}(q,r) = s_{G,i+1}(q) \cdot s_{G,i+1}(r) \cdot s_{G,i+1}(qr)^{-1}.
\]
Define $f_{H,i+1}$ analogously. We need to verify that for all $q, r \in  {}^2 G_2(q_{i+1})$, the following holds:
\[
\alpha(f_{G,i+1}(q,r)) = f_{H,i+1}^{1, \theta_{H,i+1}}(\beta(q), \beta(r)).
\]

In order to do so, we use the co-nondeterministically tried SLPs $P_{q,r}$ from (coND4-CCI). If $P_{q,r}(T_1 \cup \dotsb \cup T_{i+1}) \neq f_{G,i+1}(q,r)$, then that branch simply accepts, thus effectively leaving the final accept/reject decision in the hands of those co-nondeterministic branches in which $P_{q,r}(T_1 \cup \dotsb \cup T_{i+1}) = f_{G,i+1}(q,r)$. By construction, the 2-cocycle $f_{G,i+1}(q,r)$ is in $G_i$; hence, by the Babai--Szemerédi Reachability Lemma, there is some $P_{q,r}$ satisfying the former equality. On the co-nondeterministic branch in which such a $P_{q,r}$ is found, the machine then (deterministically) checks whether $P_{q,r}(T'_1 \cup \dotsb \cup T'_{i}) = f_{H,i+1}(q,r)$, and accepts if and only if they are indeed equal. Otherwise, that co-nondeterministic branch rejects.
\end{itemize}

\end{itemize}

\noindent \textbf{Correctness.} We now turn to establishing the correctness of our argument. We will show that if all of the above checks pass, then $G \cong H$; and in fact, the map $\Phi_{m}$ extends to an isomorphism between $G$ and $H$. For this argument, we proceed inductively along the $m$ steps of a composition series of $G$ (note that at no point did the algorithm compute such a composition series; if $G$ or $H$ does not have a composition series of length $m$, then one of the above checks will reject).

The proof is by induction on $0 \leq i \leq m$. We first consider the base case when $i = 0$. In this case, $T_{0} = T_{0}' = \emptyset$. So $G_{0} = G_{0}' = \{1\}$. Now fix $i \geq 0$, and suppose that the map $\Phi_{i}$ extends to an isomorphism $\alpha$ between $G_{i}$ and $H_{i}$. We will now consider the $i+1$ case. 
\begin{itemize}
\item \textbf{Case 1:} Suppose first that $SN_{i+1}$ is not $^2G_{2}(q)$. Because of the relations in (Rel3), we have that $G_i$ is normal in $G_{i+1} := \langle G_i, T_{i+1} \rangle$. Because of the relations in (Rel2), we have that each relator of $SN_{i+1}$ applied to $T_{i+1}$ lies in $G_i$, and hence that $G_{i+1} / G_i$ is a quotient of $SN_{i+1}$ (for convenience, we abuse notation by referring to the standard name $SN_{i+1}$ and the group it specifies interchangeably). If we are not assuming that $G$'s composition factors all have uniform short presentations, then the verification of the co-nondeterministic bits from (coND1) ensures that at least one element of $T_{i+1}$ is not in $G_i$, and hence that $G_{i+1} / G_i$ is nontrivial, and thus in fact isomorphic to $SN_{i+1}$. Similarly $\langle H_i, T'_{i+1} \rangle / H_i \cong SN_{i+1}$. The exact values of (Rel2) (not modulo $G_i$) along with the exact values of (Rel3) (and the supporting relators from (Rel1)) then imply that $G_{i+1} \cong H_{i+1}$ \cite[Proof of Theorem~8.3]{BGKLP}, and this isomorphism extends the bijection $T_1 \cup \dotsb \cup T_{i+1} \to T'_1 \cup \dotsb \cup T_{i+1}$ described at the start of the proof of correctness. Thus, by induction, $G_{i+1} \cong H_{i+1}$, as claimed.

Finally, by \cite[Theorem~8.3]{BGKLP}, if $G_{i+1} \cong H_{i+1}$, then there are non-deterministic guesses that the algorithm will accept, namely those from Construction~\ref{construction:BGKLP}. Thus there exist non-deterministic guesses that are accepted by the verifier if and only if $G_{i+1} \cong H_{i+1}$.

\item \textbf{Case 2:} Suppose instead that $SN_{i+1} = {}^2G_{2}(q_{i+1})$. The normality check above ensures that $G_{i} \trianglelefteq \langle G_{i}, T_{i+1} \rangle = G_{i+1}$ (and respectively, that $H_{i} \trianglelefteq \langle H_{i}, T_{i+1}' \rangle = H_{i+1}$). 

Next, if the checks in the Marked Isomorphism of the Quotients point pass, then by the Babai--Szemerédi Reachability Lemma, the words corresponding to the co-nondeterministically checked SLPs in which $\sigma_1(T_{i+1})$ was in $G_i$ give a presentation of $G_{i+1} / G_i$ from the generators $T_{i+1}$. The checks then imply that these relators are also valid relators for the generating set $B_{i+1}$ of $^2 G_2(q_{i+1})$, and hence that the map defined by $T_{i+1} \mapsto B_{i+1}$ extends uniquely to a surjective homomorphism $G_{i+1} / G_i \to {}^2 G_2(q_{i+1})$. Similarly, the map $T'_{i+1} \mapsto B_{i+1}$ gives a surjective homomorphism $H_{i+1} / H_i \to {}^2 G_2(q_{i+1})$.

This, on its own, is not yet enough to ensure that the preceding maps are isomorphisms, but we will now argue that if the algorithm succeeds at all stages, then in fact the preceding maps must actually be isomorphisms not merely surjections. For, if at all stages $i$ the checks pass, then for the non-Ree-group composition factors $G_{i+1} / G_{i}$ the machine will have verified that they were indeed isomorphic to the guessed names $SN_{i+1}$. For the factors such that $SN_{i+1}$ was guessed to be a Ree group, what the checks have verified is that $G_{i+1} / G_{i}$ surjects onto the guessed $^2 G_2(q_{i+1})$. In particular, this implies that $|G_{i+1} / G_{i}| \geq |^2 G_2(q_{i+1})|$. But since this is true for all $i$, and the machine has also verified (at the very start of the verification) that $\prod_{i} |SN_i| = |G|$, we cannot have $|G_{i+1} / G_{i}| > |SN_{i+1}| = |^2 G_2(q_{i+1})|$, for then we would have $|G|=\prod_i |G_{i} / G_{i-1}| > \prod_i |SN_i| = |G|$, a contradiction. Thus, if all checks at all stages passed, then it must have in fact been the case that the maps $T_{i+1} \mapsto B_{i+1}$ extended uniquely not just to surjective homomorphisms, but in fact to \emph{isomorphisms} $G_{i+1} / G_{i} \stackrel{\cong}{\to} {}^2 G_2(q_{i+1})$, as desired. 

Since this was also true for the maps $T'_{i+1} \mapsto B_{i+1}$, composing one of these isomorphisms with the inverse of the other also ensure that the map $T_{i+1} \mapsto T'_{i+1}$ extends uniquely to an isomorphism of composition factors $G_{i+1} / G_i \to H_{i+1} / H_i$.

We now have isomorphisms $\alpha : G_{i} \cong H_{i}$, $\beta : G_{i+1}/G_{i} \cong H_{i+1}/H_{i}$, and sections $s_{G,i+1} : G_{i+1}/G_{i} \to G_{i+1}$ and $s_{H, i+1} : H_{i+1}/H_{i} \to H_{i+1}$ induced by $T_{i+1}, T_{i+1}'$ respectively. The correctness for the remainder of this case follows from Lemma~\ref{lem:GQMain}.
\end{itemize}

Thus, by induction, our algorithm correctly determines whether $G \cong H$. \\

\noindent \textbf{Complexity.} At each stage, we use $\polylog(n)$ existentially-quantified non-deterministic bits. At a stage when we are when contending with $^2G_{2}(q)$ (Case 2), we also require $\polylog(n)$ universally-quantified co-nondeterministic bits. Furthermore, the deterministic computation is $\DTIMEpl$-computable. As $m \leq \lceil \log |G| \rceil$, we obtain our bound of
\[
\exists^{\polylog(n)} \forall^{\polylog(n)}\DTIMEpl,
\]
as desired. 

Lastly, we note that if there were a uniform short presentation for $^2G_{2}(q)$---or more generally if all of $G$'s composition factors have uniform short presentations---then we could handle each composition factor as described in Case 1 \textit{and we can omit the use of the co-nondeterministic bits (coND1).} For in that case, we have verified that each $G_{i+1} / G_i$ is a quotient of the simple group $SN_{i+1}$. And furthermore, since at the start we verified that $\prod_i |SN_i| = |G|$, if any of the quotients $G_{i+1} / G_i$ were trivial, we would get $|G| = \prod |SN_i| > \prod_i |G_{i+1} / G_i| = |G|$, a contradiction. Thus, the guesses above indeed verified the isomorphisms $G_{i+1} / G_i \cong SN_{i+1} \cong H_{i+1} / H_i$, as claimed.\footnote{The reason we need the co-nondeterministic bits (coND1) and their verification when the groups can have both Ree and non-Ree composition factors---and without using uniform short presentations for the Ree factors---is that our other verifications ensure that in the non-Ree case $|G_{i+1}/G_i|$ is no bigger than $|SN_{i+1}|$, while in the Ree case $|G_{i+1} / G_i|$ is no smaller than $|SN_{i+1}|$, and it is in principle possible that the checks could pass incorrectly with these two balancing out. But if we assume uniform short presentations for the Ree factors as well, or if the group does not have Ree factors, then our verifications even without (coND1) would ensure that $|G_{i+1} / G_i| \leq |SN_{i+1}|
$ for \emph{all} $i$, and hence that each successive quotient must in fact be isomorphic to the guessed simple group.} Thus, if all of $G$'s composition factors are assumed (or promised) to have uniform short presentation then isomorphism between two groups can be decided in
\[
\exists^{\polylog(n)}\DTIMEpl,
\]
as desired. \qedhere
\end{proof}

\begin{remark}[On uniformity]
As is visible in the structure of the above proof, if we assume only the (non-uniform) Short Presentation Conjecture, then the same proof gives a non-uniform analogue of Theorem~\ref{thm:MainUpperBound}, namely that \algprobm{Group Isomorphism} is solvable by non-uniform circuits that are conjunctions of a quasipolynomial-size DNF and a polynomial-size CNF. Furthermore, if the uniformity in the Uniform Short Presentation Conjecture is upgraded from $\DTIMEpl$ to $\DTISPpll$, then Proposition~\ref{prop:GpI}  upgrades by replacing $\DTIMEpl$ with $\DTISPpll$. In the non-uniform setting, the non-uniform advice at length $n$ consists of short presentations of all finite simple groups of order $\leq n$; note that this only depends on $n$ and not on the input groups $G,H$. Since there are at most two isomorphism types of simple groups of each order (and none of odd order, by the Feit--Thompson Odd Order Theorem), there are at most $n$ such groups (and in fact many fewer, but this bound suffices for the argument), and the total number of bits for this advice is $O(n \polylog(n))$. When the $\DTIMEpl$ machine would need to convert a standard name $SN_i$ into a short presentation, it instead queries that short presentation from the advice.\footnote{For these purposes, we may even assume that the $\DTIMEpl$ machine has ``associative memory'' access to the advice, and can literally look up the presentation by name, rather than worrying about where it is placed in terms of bits. For when it is then converted into a non-uniform Boolean formula in terms of DNFs and CNFs, those wires will simply have hard-coded 0s and 1s coming from the short presentations.}
\end{remark}

\section{Lower Bounds}
Under the Uniform Short Presentation Conjecture, the upper bound of Section 3 expresses \algprobm{Group Isomorphism} as the conjunction of a polynomial-size CNF checking validity of the input Cayley tables and a quasipolynomial-size DNF deciding isomorphism under that promise. This suggests asking whether either component can be replaced by a formula of the other type, thus yielding circuits of depth 2. In Section~\ref{sec:DNF}, we show that the validity CNF cannot be replaced by a quasipolynomial-size DNF: any DNF deciding validity, and hence any DNF deciding total \algprobm{Group Isomorphism}, requires at least $2^{n \log n-O(n)}$ terms for groups of order $n$. For the other ``half,'' we prove in Section~\ref{sec:CNF} an $n^{\Omega(\log n)}$ lower bound for CNFs deciding the promise version of \algprobm{Group Isomorphism}. This rules out polynomial-size CNFs but does not rule out quasipolynomial-size CNFs. Section~\ref{sec:coNP} discusses the significance of such an upper bound and explains an obstruction to obtaining a stronger lower bound.

\subsection{Exponential DNF lower bound} \label{sec:DNF}
We get our exponential lower bound on DNFs for \algprobm{Group Isomorphism} from the same lower bound on DNFs for \algprobm{Validity Testing}: testing whether a given multiplication table is in fact the multiplication table of a group. Although one might feel that a lower bound on checking validity of multiplication tables hardly gets at the ``heart'' of the \algprobm{Group Isomorphism} problem, we start with an observation that, for DNF complexity, the Uniform Short Presentation Conjecture implies that the DNF complexity of \algprobm{GpI} and of \algprobm{Validity Testing} are polynomially equivalent:

\begin{observation} \label{obs:validity}
\begin{enumerate}
\item Validity testing of Cayley tables reduces to \algprobm{GpI} by a $\mathsf{DLOGTIME}$-uniform projection reduction. Testing whether a multiplication table is a valid quasigroup (=Latin square) similarly reduces to \textsc{Quasigroup Isomorphism} and \algprobm{Latin Square Isotopy}.

\item Assuming the Uniform Short Presentation Conjecture,  if \algprobm{Validity Testing} is decidable by (uniform) DNFs of size $f(n)$, then $\algprobm{Group Isomorphism}$ is decidable by (uniform) DNFs of size $f(n)^2 n^{\polylog(n)}$.
\end{enumerate}
\end{observation}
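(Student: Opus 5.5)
For part~1, the reduction maps an input table $T$ to the pair $(T, T_0)$, where $T_0$ is a fixed group table of order $n$. The natural choice is the cyclic group $\mathbb{Z}_n$, with $T_0(x,y) = x+y \bmod n$ under a fixed encoding of $[n]$ by $\ell=\lceil\log_2 n\rceil$ bits. If $T$ is not a valid group table, then $(T,T_0)$ is a no-instance of total \algprobm{GpI}, since total \algprobm{GpI} rejects invalid inputs. If $T$ is valid, however, $T$ could be a non-cyclic group, and then the pair is also a no-instance. So a fixed partner table does not work, and we must instead reduce to isomorphism with $T$ itself: output $(T,T)$. Each output bit is then an input bit, and indexing is trivially $\mathsf{DLOGTIME}$-computable. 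Total \algprobm{GpI} accepts $(T,T)$ iff $T$ is a valid group table, because any valid group is isomorphic to itself via the identity map, and invalid tables are rejected. This is a projection of stretch $2$. The same map $T\mapsto(T,T)$ reduces Latin-square validity to \algprobm{Quasigroup Isomorphism}, and to \algprobm{Latin Square Isotopy} (using the identity isotopy), under the convention that the total versions reject invalid inputs.

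For part~2, start from the second half of Theorem~\ref{thm:MainUpperBound}. Assuming the Uniform Short Presentation Conjecture, \algprobm{GpI} equals $\mathrm{VALID}(G)\wedge\mathrm{VALID}(H)\wedge\Psi(G,H)$, where $\Psi\in\exists^{\log^c n}\DTIMEpl$ decides isomorphism under the promise that both tables are groups. By Fact~\ref{fact:DTISP}, each $\DTIMEpl$ branch is a DNF with $2^{\polylog n}$ terms. The outer $\exists^{\log^c n}$ is a disjunction over $2^{\log^c n}$ such DNFs, hence itself a DNF of size $n^{\polylog(n)}$, and uniformity is inherited. By hypothesis, $\mathrm{VALID}(G)$ and $\mathrm{VALID}(H)$ are DNFs of size $f(n)$ each.

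Conjoining the two validity DNFs by distributing gives a DNF of size $f(n)^2$. Conjoining that with $\Psi$ by distributing gives at most $f(n)^2\cdot n^{\polylog(n)}$ terms, as claimed. Uniformity is preserved, since a term of the product is indexed by a triple of term indices, and its literals are the union of the literals of the three component terms. Any term whose literals contradict each other can be kept, since it evaluates to false and does not change the function.

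The main subtlety is that $\Psi$ is only correct under the promise. This is harmless here because it appears only conjoined with the validity checks: on invalid inputs the conjunction is false regardless of what $\Psi$ outputs. The input-length bookkeeping is also routine, since the total input has size $\Theta(n^2\log n)$ and polylogarithms of it are $\polylog(n)$.
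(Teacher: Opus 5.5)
Your proposal is correct and follows essentially the same route as the paper. In part~1 you use the same self-pairing reduction $T \mapsto (T,T)$; in part~2 you write total \algprobm{GpI} as $\mathrm{VALID}_n(T_1)\wedge\mathrm{VALID}_n(T_2)\wedge D_n(T_1,T_2)$, where $D_n$ is the quasipolynomial-size promise DNF obtained from the second half of Theorem~\ref{thm:MainUpperBound} via Fact~\ref{fact:DTISP}, and then distribute to get $f(n)^2 n^{\polylog(n)}$ terms, exactly as the paper does; your remarks on $\mathsf{DLOGTIME}$-uniformity of the projection and of the distributed DNF fill in details the paper leaves implicit.
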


We use Observation~\ref{obs:validity}(1) for our unconditional lower bound below; part (2) is simply to show that for DNF complexity above quasipolynomial (which this is: we will show it is in fact exponential), this approach is essentially---up to cubing the size, and the Uniform Short Presentation Conjecture---without loss of generality.

\begin{proof}
1. If $T$ is a multiplication table, then $(T,T)$ is a yes-instance of \algprobm{GpI}if and only if $T$ is a valid multiplication table of a group. (The same proof works for quasigroups, \emph{mutatis mutandis}.) Thus if \algprobm{GpI} has DNFs of size $O(f(n))$, then so does \algprobm{Validity Testing}.

2. In the opposite direction, assuming the Uniform Short Presentation Conjecture, by Fact~\ref{fact:DTISP} and the second half of Theorem~\ref{thm:MainUpperBound}, there is a uniform DNF $D_n$ of size $n^{\log^c n}$ such that, whenever $T_1$ and $T_2$ are valid Cayley tables of groups of order $n$, $D_n(T_1,T_2)=1$ if and only if $T_1 \cong T_2$. Consequently, on arbitrary pairs of multiplication tables, 
\[ 
\algprobm{Group Isomorphism}_n(T_1,T_2) = \mathrm{VALID}_n(T_1) \wedge \mathrm{VALID}_n(T_2) \wedge D_n(T_1,T_2). 
\]
Thus if \algprobm{Validity Testing} has DNFs of size $f(n)$, then we can distribute the triple-$\wedge$ over the DNFs to get a single DNF of size $f(n)^2 n^{\log^c n}$.
\end{proof}

\begin{theorem} \label{thm:DNF}
Any (uniform or non-uniform) DNF on multiplication tables of order $n$ that is 1 on all Cayley tables of groups and 0 on all other multiplication tables must have at least $n! /n^{\log n} \geq 2^{n \log n - O(n)}$ terms. Hence the same lower bound holds on DNFs for \algprobm{GpI}. 

For Latin squares instead of groups, we get an analogous lower bound of $(n!)^{2n} / n^{n^2} \geq 2^{n^2 \log n - O(n^2)}$.
\end{theorem}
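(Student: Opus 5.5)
The plan is the standard ``sensitivity'' argument for DNF lower bounds, built on the observation mentioned earlier in the paper that no two Cayley tables of (quasi)groups differ in exactly one entry. Encode a multiplication table as $n^2$ entries, each an $\ell$-bit string with $\ell = \lceil \log n\rceil$. Any term $C$ of a DNF computing $\mathrm{VALID}_n$ that is satisfied by some valid table $T$ must fix every bit of at least $n^2-1$ entries' worth of information. More precisely: if $C$ left some entry $(x,y)$ with an unfixed bit, flipping that bit (or changing the entry to another value in $[n]$ consistent with $C$'s literals) yields a table differing from $T$ in exactly one entry. That table is also accepted by $C$, but it is not a Latin square, since a row would contain a repeated value or an out-of-range value. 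So the plan is to show that an accepting term can leave at most one entry not fully determined.

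I would make this rigorous as follows. Suppose the term $C$ accepts the valid table $T$, and suppose two distinct entries $(x,y)$ and $(x',y')$ are each not fully fixed by $C$. Then flipping an unfixed bit in $(x,y)$ alone gives a table accepted by $C$ that differs from $T$ in exactly one entry. A quasigroup table cannot be changed in exactly one entry and remain a quasigroup: row $x$ would then contain the new value twice, or contain a value outside $[n]$. Hence this table is invalid, which contradicts the correctness of the DNF. So in fact every entry must be fully fixed, except that the argument must be refined slightly, because the flipped value must still differ from $T(x,y)$. Any bit flip does guarantee that. Therefore each term accepting a valid table fixes all $n^2\ell$ bits, and so accepts exactly one table. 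The number of terms is then at least the number of valid tables.

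The counting step follows. The number of group multiplication tables on $[n]$ is at least the number of relabelings of one fixed group of order $n$, namely $n!/|\Aut(G)|$. Taking $G$ cyclic, we get $|\Aut(G)|\le n$. Alternatively, using the bound $|\Aut(G)|\le n^{\log n}$, which follows because an automorphism is determined by the images of at most $\log n$ generators, this gives $n!/n^{\log n}\ge 2^{n\log n-O(n)}$ by Stirling. For Latin squares, the count $(n!)^{2n}/n^{n^2}$ is quoted from van Lint--Wilson, as in Observation~\ref{obs:quasigroups}. The transfer to \algprobm{GpI} uses Observation~\ref{obs:validity}(1): substituting $T_1=T_2=T$ is a projection, it does not increase the number of terms, and it yields a DNF for validity.

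The main obstacle I expect is the encoding subtlety. An unfixed bit might only reach values outside $[n]$ when $n$ is not a power of $2$. This case is harmless, since out-of-range values are invalid anyway. One also has to check that the one-entry-change claim holds for any alternative value, and this follows directly from the Latin property of row $x$.
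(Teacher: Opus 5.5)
Your proposal is correct and follows essentially the same route as the paper: since no two (quasi)group tables differ in a single entry (hence in a single bit), every satisfiable term must have full width $N=n^2\lceil\log_2 n\rceil$ and so accepts exactly one table. The number of terms is then bounded below by the number of valid tables, counted via $n!/|\Aut(G)|$ (your cyclic choice, with $|\Aut(\mathbb{Z}_n)|\le n$, even gives a slightly better count) and via the van Lint--Wilson bound for Latin squares. Your talk of ``at most one entry not fully determined'' and of ``two distinct entries'' is an unnecessary detour, since your single bit-flip argument already shows directly that every bit is fixed.
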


Note that the number of input variables in either case is $n^2 \lceil \log_2 n \rceil =: N$; as a function of the number of input variables the lower bound for groups is $2^{\Omega\left(\sqrt{N \log N}\right)}$, while for quasigroups it is $2^{N(1 - 1/\log N)}$, nearly maximal.

\begin{proof}
The key fact is that there are no two multiplication tables of quasigroups that differ in only one entry, hence \emph{a fortiori} they also cannot differ in only one bit. For if $T$ is the multiplication table of a (quasi)group and $T'$ differs from $T$ in only one entry, say $T(x,y) \neq T'(x,y)$, then the $x$-th row of $T'$ is not a permutation. (If we remove one entry from a row, the value of that entry is uniquely determined as the only entry missing from the rest of the row.) 

Now suppose $\varphi$ is a DNF that decides validity. The total number of variables is $N = n^2 \lceil \log_2 n \rceil$. If any term $t$ of the DNF has width $< N$, then $t$ is $1$ on a sub-cube of dimension $\geq 1$, contradicting the fact from the previous paragraph. 

Thus every term has width exactly $N$, hence is $1$ on exactly one input, so the number of terms must be equal to the number of valid tables. 

In the case of groups, each group of order $n$ has automorphism group of size at most $n^{\log n}$, so there are at least $n! / n^{\log n}$ distinct tables corresponding to each group. Thus the total number of valid tables is at least $n! / n^{\log n} \geq 2^{n \log n - O(n)}$.\footnote{When $n=p^k$ and $p=O(1)$, the number of isomorphism types is $n^{\Omega(\log^2 n)}$ \cite{Higman,Sims}, but this only affects the lower-order terms in the exponent of the asymptotic count of the number of valid Cayley tables.}

In the case of quasigroups, the number of Latin squares of order $n$ is at least $(n!)^{2n} / n^{n^2}$ (e.\,g., \cite[Thm.~17.2]{vanLintWilsonBook}), so we get the same lower bound on the number of terms of a DNF that decides Latin squares. 
\end{proof}

\subsection{Quasi-polynomial CNF lower bound} \label{sec:CNF}

\newcommand{\rank}{\operatorname{rank}}

In this section we prove:

\begin{theorem}\label{cor:two-input} \label{thm:CNF}
Any (uniform or non-uniform) CNF solving the promise \algprobm{Group Isomorphism} problem on multiplication tables of groups of order $n=2^{2k}$ has size at least $n^{\Omega(\log n)}$.
\end{theorem}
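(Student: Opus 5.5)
The plan follows the route sketched in the introduction. First we give an $\FF_2$-affine projection reduction from \algprobm{Matrix Rank} (for a threshold $r$) to promise \algprobm{GpI} on groups of order $2^{2k}$. Then we show that any CNF computing the relevant rank predicate on $k\times k$ matrices over $\FF_2$ needs $2^{\Omega(k^2)}$ clauses. Since $\log n = 2k$, this gives $2^{\Omega(k^2)} = n^{\Omega(\log n)}$. The reduction's affine (not projection) nature must be absorbed when pulling back a CNF, and we explain how below.

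\textbf{The reduction.} Given $A\in\FF_2^{k\times k}$, we build a group $G_A$ of order $2^{2k}$ on the set $\FF_2^k\times\FF_2^k$. The multiplication is $(x,y)(x',y')=(x+x',\,y+y'+\beta_A(x,x'))$ for a bilinear-type cocycle chosen so that the isomorphism type records $\rank A$. A natural concrete choice is $\beta_A(x,x') = D(x,x') + (\text{something involving } A)$, where $D$ encodes the ``carry'' that makes $\mathbb Z_4$ from $\mathbb Z_2\times\mathbb Z_2$. We arrange that $G_A\cong \mathbb Z_4^{r}\times\mathbb Z_2^{2k-2r}$ with $r=\rank A$; for instance, the squaring map $x\mapsto Ax$ determines the number of $\mathbb Z_4$ factors. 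Each output bit of the Cayley table, namely the coordinates of $y+y'+\beta_A(x,x')$ for fixed labels $(x,y),(x',y')$, is then an $\FF_2$-affine function of the entries of $A$, so the map is an $\FF_2$-affine projection. We compare $G_A$ against the fixed table $G_{A_0}$ with $\rank A_0=r$. Promise \algprobm{GpI} then decides whether $\rank A=r$, and also whether $\rank A\le r$ after padding or a sweep over thresholds.

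\textbf{The lower bound on the rank CNF.} A CNF $\psi$ for \algprobm{GpI} of size $s$ composed with the reduction becomes a conjunction of $s$ clauses. Each clause is an OR of literals, each literal an affine form in the entries of $A$, so each clause has a zero set that is an affine subspace of $\FF_2^{k\times k}$. Take a false instance, a matrix of rank $>r$ or $\ne r$, and note that each such matrix falsifies some clause. Every clause's zero set must avoid all matrices of rank exactly $r$, so it is an affine space of matrices all of rank $\ne r$. For the threshold version, we pick the rejected side so that the zero set is an affine space of matrices of rank $\le r-1$, or of rank $\ge r+1$, as convenient. By Meshulam and by de Seguins Pazzis, an affine subspace of $\FF_2^{k\times k}$ of bounded rank $\le r-1$ has dimension at most $k(r-1)$ (with the affine exceptional cases handled by de Seguins Pazzis). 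Each clause therefore kills at most $2^{k(r-1)}$ of the rejected matrices. With $r=k/2$, there are $2^{\Omega(k^2)}$ matrices of rank $r-1$ or so, and also of the other relevant rank. Counting then forces $s\ge 2^{\Omega(k^2)}$: the number of rejected matrices is roughly $2^{2kr-r^2}$ versus a per-clause coverage of $2^{kr'}$.

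\textbf{Main obstacle.} The delicate part is designing $\beta_A$ so that three things hold at once: the table is genuinely a group (associativity of the extension), the isomorphism type depends exactly on $\rank A$, and the dependence on $A$ is affine rather than quadratic. We would first try the central extension of $\FF_2^k$ by $\FF_2^k$ whose squaring map is $x\mapsto Ax$ and whose commutator form is trivial, i.e.\ $\beta_A(x,x')=\sum_{i} x_ix'_i\,a_i$ with $a_i$ the $i$th column of $A$. This cocycle is bilinear in $(x,x')$ and linear in $A$, and gives an abelian group whose $\mathbb Z_4$-rank equals $\rank A$. The second point to check carefully is that the rejected set is the one where the Meshulam bound applies, namely bounded rank rather than rank exactly $r$. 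We would arrange this by choosing the direction of the reduction, $\rank A\ge r$ versus the fixed $A_0$ of rank exactly $r$, and possibly by restricting to a sub-cube where ranks are at most $r$.
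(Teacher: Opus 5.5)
Your reduction is the paper's: with $\beta_A(x,x')=A(x\odot x')$ one gets $(x,y)^2=(0,Ax)$ and $G_A\cong\mathbb{Z}_4^{\rank A}\times\mathbb{Z}_2^{2k-2\rank A}$. Comparing against a fixed rank-$r$ table gives a CNF that accepts exactly the matrices of rank $r$.

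\textbf{The gap is in the covering step.} Each clause's falsifying set $F$ is an affine subspace of $\FF_2^{k\times k}$ that avoids rank $r$. From this you conclude, by ``picking the rejected side,'' that $F$ consists only of matrices of rank $\le r-1$, so that de Seguins Pazzis gives $|F|\le 2^{k(r-1)}$. You cannot choose which side a falsifying set lies on, and the inference is false for general affine spaces. The span of a single rank-$2$ matrix has ranks $\{0,2\}$. The space of alternating matrices has only even ranks, so for odd $r\le k/2$ it avoids rank $r$ while having dimension $\binom{k}{2}>k(r-1)$.

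Your argument would be correct for a CNF computing the \emph{threshold} predicate $\rank A\ge r$, since then every falsifying set has all ranks $\le r-1$. But the reduction only yields the exact-rank predicate, and your proposed workarounds do not close this. A sweep over thresholds produces an OR of CNFs, not a CNF. Restricting to an affine space $V$ on which all ranks are $\le r$ forces $\dim V\le kr$ by the same theorem. That leaves at most $2^{kr}$ rejected matrices against a per-clause bound of $2^{k(r-1)}$, so the count can give at most $2^{k}=\sqrt{n}$.

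\textbf{The missing idea is row-locality of the reduction.} In your construction the $j$-th coordinate of $y+y'+\beta_A(x,x')$ is $y_j+y'_j+\sum_i A_{ji}x_ix'_i$, which is an affine function of row $j$ of $A$ alone. Hence every falsifying set is a product $W_1\times\cdots\times W_k$ of affine subspaces $W_j\subseteq\FF_2^k$, one per row.

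Inside such a set, any two matrices with rows $(m_1,\dots,m_k)$ and $(m'_1,\dots,m'_k)$ are joined by the path $(m'_1,\dots,m'_t,m_{t+1},\dots,m_k)$, $t=0,\dots,k$, which changes one row at a time. Each step changes rank by at most $1$, so the ranks occurring in $F$ form an interval. Therefore a falsifying set that contains some matrix of rank $\le r-1$ but none of rank $r$ has \emph{all} ranks $\le r-1$. Only at this point does de Seguins Pazzis give $|F|\le 2^{k(r-1)}$.

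With this lemma your counting goes through, and it is the paper's argument. Set $s=r-1$; the at least $2^{s(2k-s)}$ matrices of rank $\le s$ must all be covered, which gives $m\ge 2^{ks-s^2}$. At $s=k/2$ this is $m\ge 2^{k^2/4}=n^{\frac{1}{16}\log_2 n}$.
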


The groups we use for the lower bound will be Abelian of exponent $4$, and the constant in the exponent is at least $1/16$.

\newcommand{\cylinder}{cube\xspace}
\newcommand{\cylinders}{cubes\xspace}
\newcommand{\Cylinder}{Cube\xspace}
\newcommand{\Cylinders}{Cubes\xspace}

\begin{proof}[Proof idea]
The proof builds groups $G_M$ from $k \times k$ binary matrices $M$, so that the isomorphism type of $G_M$ is completely determined by $\rank(M)$. Thus \algprobm{Matrix Rank} reduces (with exponential size increase) to \algprobm{Group Isomorphism}; in particular, \algprobm{Group Isomorphism} contains within it the problem of distinguishing matrices of rank $k/2$ from matrices of rank $< k/2$. When a CNF clause is restricted to these Cayley tables, its falsifying set is defined by affine-linear constraints on the rows of $M$, with distinct rows having independent constraints; such a set is called a row-\cylinder. If such a row-\cylinder avoids all rank $k / 2$ matrices but covers some rank $k / 2-1$ matrix, then it cannot be too large. But there are exponentially many rank $k / 2-1$ matrices, so covering all of them requires $2^{\Omega\left(k^2\right)}$ clauses. Since $k=\Theta(\log n)$, this is exactly $n^{\Omega(\log n)}$.
\end{proof}

Throughout the following, $k$ will denote a natural number and $M$ will denote a $k \times k$ matrix. We
build a group $G_M$ from such a matrix $M$ as follows.

\begin{construction} \label{construction:gulce}
Given $M \in \text{Mat}_k(\FF_2)$, the elements of $G_M$ are pairs $(u,z)\in\mathbb{F}_2^k\times\mathbb{F}_2^k$, so
$|G_M|=2^{2k}$.  Multiplication is defined by
$
  (u,z)\cdot_M(v,w) \;=\; \bigl(u+v,\; z+w+M(u\odot v)\bigr),
$
where $u\odot v$ denotes the coordinate-wise (Hadamard) product,
$(u\odot v)_i=u_iv_i$.
\end{construction}

We use \algprobm{Matrix Rank} over a field $\FF$ to refer to the language 
\[
\{(M,r) : M \text{ is a square matrix over $\FF$ and }\rank(M) = r\}.
\]
We use $I_r \oplus 0_{k-r}$ to denote the $k \times k$ diagonal matrix $\text{diag}(1,1,\dotsc,1,0,0,\dotsc,0)$ of rank exactly $r$.

\begin{proposition}\label{prop:structure}
The group $G_M$ of Construction~\ref{construction:gulce} is an Abelian $2$-group of exponent at most $4$, with isomorphism type
$
  G_M \;\cong\; (\mathbb{Z}_4)^{\rank(M)}
              \times (\mathbb{Z}_2)^{2k-2\rank(M)}.
$
In particular,
$
  G_M \cong G_N \;\Longleftrightarrow\; \rank(M)=\rank(N).
$

Furthermore, the function $(M,r) \mapsto (G_M, G_{I_r \oplus 0_{k-r}})$ is an $\mathbb{F}_2$-affine projection reduction from \algprobm{Matrix Rank} over $\mathbb{F}_2$ to \algprobm{Group Isomorphism}, in which each bit of the output depends on at most one row of $M$.
\end{proposition}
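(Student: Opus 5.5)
We first check that $\cdot_M$ defines an Abelian group. The operation is commutative because $u\odot v=v\odot u$. For associativity we compare the second coordinates of $((u,z)(v,w))(x,y)$ and $(u,z)((v,w)(x,y))$. They are
\[
z+w+y+M(u\odot v)+M((u+v)\odot x)
\quad\text{and}\quad
z+w+y+M(v\odot x)+M(u\odot(v+x)).
\]
Both equal $z+w+y+M(u\odot v+u\odot x+v\odot x)$, since $\odot$ is bilinear over $\FF_2$. The identity is $(0,0)$, and inverses exist because $(u,z)^2=(0,M(u\odot u))=(0,Mu)$, using $u\odot u=u$ over $\FF_2$. So $(u,z)^{-1}=(u,z+Mu)$, and then $(u,z)^4=(0,2Mu)=(0,0)$. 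Hence $G_M$ is an Abelian group of order $2^{2k}$ and exponent dividing $4$.

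Next we find the isomorphism type. A finite Abelian group of exponent $4$ is $\mathbb{Z}_4^a\times\mathbb{Z}_2^b$. Its squaring map $G\to G$ is a homomorphism with image $2G\cong\mathbb{Z}_2^a$, so $a=\log_2|2G|$. Here $2G_M=\{(0,Mu):u\in\FF_2^k\}$, which has order $2^{\rank(M)}$. Thus $a=\rank(M)$, and $|G_M|=4^a2^b=2^{2k}$ forces $b=2k-2\rank(M)$. Since these invariants determine the Abelian group, $G_M\cong G_N$ if and only if $\rank(M)=\rank(N)$.

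For the reduction, take $(M,r)\mapsto(G_M,G_{I_r\oplus 0_{k-r}})$. By the previous paragraph, $\rank(M)=r$ if and only if $G_M\cong G_{I_r\oplus0_{k-r}}$, so this is a many-one reduction. Label elements of $G_M$ by fixed bit strings $(u,z)\in\FF_2^{2k}$, independent of $M$. Then the bit of the Cayley table at entry $((u,z),(v,w))$ in output coordinate $k+i$ is $z_i+w_i+\sum_j M_{ij}u_jv_j$. Since $u,v,z,w$ are fixed for that entry, this is an $\FF_2$-affine function of the bits of row $i$ of $M$ alone. The first-coordinate bits $u+v$ are constants. The second table $G_{I_r\oplus0_{k-r}}$ depends only on $r$, and $r$ is part of the input; one can either treat its bits as constants per value of $r$, or note that $I_r\oplus 0_{k-r}$ is itself a projection of the unary/threshold encoding of $r$. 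I would handle this by assuming $r$ is given (or fixed, as in the later use, $r=k/2$), so these entries are constants. Each output bit therefore depends on at most one row of $M$.

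No step is a serious obstacle. The only place needing care is the Hadamard-product identity $u\odot u=u$, which makes squaring land exactly on $\{(0,Mu)\}$, and the bookkeeping that the element labelling is independent of $M$, which is what makes the map a projection.
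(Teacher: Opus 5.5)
Your proposal is correct and follows essentially the same route as the paper. Both compute $(u,z)^2=(0,Mu)$, so that the subgroup of squares is $\{0\}\times\operatorname{im}(M)$ of order $2^{\operatorname{rank}(M)}$, which fixes the number of $\mathbb{Z}_4$ factors; both then note that each second-coordinate table bit $z_j+w_j+\sum_i M_{ji}u_iv_i$ is $\mathbb{F}_2$-affine in row $j$ of $M$ alone, and you add two points the paper leaves implicit: an explicit check of the group axioms, and the remark that the second table $G_{I_r\oplus 0_{k-r}}$ should be treated as constant for fixed $r$, as it is in its later use.
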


\begin{remark}[On $\ACz$ reductions from \algprobm{Parity}]
Since depth-$d$ $\ComplexityClass{AC}$ circuits for $n$-bit \algprobm{Parity} requires size $2^{\Omega\left(n^{1/(d-1)}\right)}$ \cite{hastad}, it follows from \cite{ChattopadhyayToranWagner} unconditionally that any $\ACz$ reduction from \algprobm{Parity} to \algprobm{Group Isomorphism} on groups of order $n$ can be from parities on at most $\polylog(n)$ bits; the tighter upper bound of depth-$4$ circuits of size $n^{O(\log n)}$ from \cite{CGLW} implies such a reduction can be from parities on at most $O(\log^6 n)$ bits. However, we do not know a reduction from \algprobm{Parity} on more than $O(\log n)$ bits to \algprobm{GpI} for groups of order $n$, and such a reduction gives no notable lower bounds even on CNFs, since all functions on $O(\log n)$ bits have (at least non-uniform; and in the case of \algprobm{Parity} these can be made uniform) decision trees of depth $O(\log n)$, hence CNFs and DNFs of $\poly(n)$ size. In contrast, we give a reduction from $k \times k$ \algprobm{Matrix Rank} over $\FF_2$ to \algprobm{GpI} on groups of order $n=4^k$. Whereas $\algprobm{Parity}$ is in logspace, \algprobm{Matrix Rank} over $\FF_2$ is $\ComplexityClass{Mod_2 L}$-complete \cite{BuntrockDHM92}.\footnote{Although they phrase their results in terms of $\NC^1$ reductions, the reductions are in fact already in $\ACz$ (in fact, we believe even in $\NC^0$, but we do not need that stronger statement for our discussion here so we have not verified it carefully). The relevant reductions are the $\ComplexityClass{Mod_2 L}$-completeness of what they call the \algprobm{Mod-2-Gap} problem, and the reduction from the latter to \algprobm{Matrix Rank}, which follows immediately from their Proposition 9.} In particular, $k$-bit \algprobm{Parity} reduces to $k \times k$ \algprobm{Matrix Rank} over $\FF_2$, so our reduction still does not reduce from a parity on more than $O(\log n)$ bits. By reducing from the harder problem of \algprobm{Matrix Rank} instead, we still incur an exponential size increase, but are able to get our quasi-polynomial lower bound.
\end{remark}

\begin{proof}[Proof of Proposition~\ref{prop:structure}]
A direct computation shows
$
  (u,z)^2 \;=\; (0,\,Mu),
$
so the subgroup of squares is
$
  G_M^{(2)} \;:=\; \{x^2: x\in G_M\}
         \;=\; \{(0,Mu): u\in\mathbb{F}_2^k\}
         \;=\; \{0\}\times\operatorname{im}(M),
$
and every element has order dividing $4$.
If $\rank(M)=r$, then $\operatorname{im}(M)$ is an $r$-dimensional subspace
over $\mathbb{F}_2$, giving $|G_M^{(2)}|=2^r$.  The square subgroup detects the number of
$\mathbb{Z}_4$-factors: in $G_M$ that number is exactly $\rank(M)$.  The
stated isomorphism type follows, and two such groups are isomorphic if and only if they
have the same rank.

We have already established that the function $(M,r) \mapsto (G_M, G_{I_r \oplus 0_{k-r}})$ is a many-one reduction (of exponential stretch) from \algprobm{Matrix Rank} over $\FF_2$ to \algprobm{Group Isomorphism}. It remains to show that this reduction is an $\FF_2$-affine projection reduction in which each bit of the output depends on at most one row of $M$. For this, we examine the definition of the product in $G_M$: $(u,z)\cdot_M(v,w) = (u+v, z + w + M(u \odot v))$. The first component, $u+v$, does not depend on $M$, so its coordinates are in fact constants. The $j$-th coordinate of the second component is 
\[
  z_j+w_j+\sum_{i=1}^k M_{ji}\,u_iv_i,
\]
which depends only on row $j$ of $M$, and for each fixed pair $(u,v)$, is $\FF_2$-affine linear in that row of $M$. 
\end{proof}

\begin{remark}[Odd characteristic]
This particular construction only works in characteristic $2$. The same construction in characteristic $p > 2$ has $(u,z)^i = \left(iu, iz + \binom{i}{2}M(u \odot u)\right)$ by induction on $i$, and hence $(u,z)^p = (0,0)$ if $p > 2$, since $\binom{p}{2}$ is divisible by $p$ when $p > 2$, so the resulting group is always an elementary Abelian group of exponent $p$. For odd $p$ we can nonetheless get a similar result using Baer's correspondence \cite{Baer}: given a $k \times k$ matrix $M$ over $\FF_p$, we may define the group $G_{M,p}$ whose underlying set is $\FF_p^{2k} \times \FF_p$ with multiplication defined by $(u,\alpha) \cdot (v,\beta) = \left(u + v, \alpha + \beta + \frac{1}{2} u^\top \begin{pmatrix} 0 & M \\ -M^\top & 0 \end{pmatrix} v\right)$. This group is isomorphic to the direct product of the Heisenberg group of order $p^{2\rank(M) + 1}$ (which is non-Abelian and directly indecomposable) and the Abelian group $\mathbb{Z}_p^{2(k- \rank(M))}$. However, this is no longer an affine projection because it requires encoding mod-$p$ arithmetic using Boolean gadgets. To get around the latter issue, one could instead consider ``CNFs over mod-$p$ atoms'', in which we allow atomic formulae $[x_i=\alpha]$ where $\alpha \in \FF_p$ to take the place of literals, and then build CNFs on top of those atoms. In that case, this again becomes an $\FF_p$-affine projection reduction, and the subsequent arguments will then show that ``CNFs over mod-$p$ atoms''  that solve promise \algprobm{Group Isomorphism} must have size at least $n^{\Omega(\log n)}$.
\end{remark}

\subsubsection{Additional Preliminaries: Row-\Cylinders}

For our lower bound, we view the matrix space as a product over its rows. Namely, we identify $\mathbb{F}_2^{k \times k} \cong\left(\mathbb{F}_2^k\right)^k$ by writing a matrix $M$ as its ordered tuple of rows $M=(m_1,\dots,m_k)$ where $m_j \in \mathbb{F}_2^k$. We call a set of the form: $F=A_1\times\cdots\times A_k$ where each $A_j \subseteq \mathbb{F}_2^k$ is an affine subspace, a row-\cylinder. Such products are cubes in the terminology of \cite{Beame2010} with the additional property of affineness. They are also special cases of cylinder intersections which are mostly used in multi-party communication complexity: indeed, $A_1 \times \cdots \times A_k=\bigcap_{i=1}^k\left(A_1 \times \cdots \times A_{i-1} \times \mathbb{F}_2^k \times A_{i+1} \times \cdots \times A_k\right)$ where the $i$th set in the intersection is a cylinder in the $i$th row-coordinate. We use the name row-\cylinders to emphasize the row-wise product structure.

Row-\cylinders have three elementary properties that we will use repeatedly.
First, they are themselves affine subspaces of
$\mathbb{F}_2^{k\times k}$.  Indeed, if $A_j=a_j+U_j$, where $U_j$ is a
linear subspace of $\mathbb{F}_2^k$, then
$
  F=A+U,
\,\,
  U=U_1\times\cdots\times U_k,
$
where $A$ is the matrix whose $j$-th row is $a_j$.  Consequently,
$
  \dim F=\sum_{j=1}^k \dim U_j
  \,\,\text{and}\,\,
  |F|=2^{\dim F}.
$

Second, any two matrices in a row-\cylinder can be joined by changing one
row at a time.  More precisely, let
$
  X=(x_1,\ldots,x_k),\,\, Y=(y_1,\ldots,y_k)
$
belong to $F$.  For $t=0,\ldots,k$, define
$
  X^{(t)}
  :=
  (y_1,\ldots,y_t,x_{t+1},\ldots,x_k).
$
Every $X^{(t)}$ belongs to $F$, since its $j$-th row belongs to $A_j$:
it is either $y_j\in A_j$ or $x_j\in A_j$.  Thus
$
  X=X^{(0)},X^{(1)},\ldots,X^{(k)}=Y
$
is a path inside $F$ in which consecutive matrices differ in at most one row.

We will also use the following theorem

\begin{theorem}[{\cite[Theorem~6]{deSeguinsPazzis2010}, cf. \cite{Meshulam1985}}]
Over an arbitrary field $\FF$, if $\mathcal{A}$ is an
affine subspace of $\mathbb{F}^{k\times k}$ and every matrix in
$\mathcal{A}$ has rank at most $s$, then
$
  \dim \mathcal{A}\leq ks.
$
\end{theorem}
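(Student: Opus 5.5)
This result is quoted from de Seguins Pazzis (building on Meshulam), and the paper only uses it as a black box. Still, here is how I would prove it. The plan is a Flanders-style normalization at a matrix of maximal rank, followed by induction on $k$. The induction is the part that must avoid any genericity or polynomial-identity argument, since we need the statement over $\FF_2$.

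\emph{Setup and normalization.} Write $\mathcal{A}=M_0+V$, where $V$ is a linear subspace of $\FF^{k\times k}$. Let $r\le s$ be the maximal rank occurring in $\mathcal{A}$. Since $\mathcal{A}$ can be re-based at any of its points, we may assume $M_0$ attains rank $r$. Multiplying on the left and right by invertible matrices preserves both ranks and dimension, so we may take $M_0=I_r\oplus 0_{k-r}$. It then suffices to show $\dim V\le kr$.

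\emph{Block analysis.} Write each $N\in V$ in blocks $\begin{pmatrix}A_N&B_N\\ C_N&D_N\end{pmatrix}$ according to $k=r+(k-r)$. Over a field we cannot take $t$ generic in $M_0+tN$, so instead I would combine the elements $M_0+N$ and $M_0+N+N'$ for $N,N'\in V$ with a direct minor computation. Bordering the invertible $r\times r$ block by one extra row and one extra column produces an $(r+1)\times(r+1)$ minor whose value is affine in the entries of $D_N$ plus a product of an entry of $C_N$ with an entry of $B_N$. Forcing every such minor to vanish should give two facts.
\begin{enumerate}
\item On the subspace $V_0=\{N\in V: A_N=0\}$ the corner block vanishes, $D_N=0$.
\item On $V_0$ the bilinear pairing satisfies $C_N B_{N'}+C_{N'}B_N=0$.
\end{enumerate}
Fact (2) says the spaces $\{C_N\}$ and $\{B_N\}$ behave like the block structure of a rank-$r$ compression space.

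\emph{Counting via induction.} The last step is to show $\dim\{(B_N,C_N): N\in V_0\}\le r(k-r)$ and then add $\dim\{A_N\}\le r^2$, giving $\dim V\le r^2+r(k-r)=kr\le ks$. For the first inequality I would induct on $k$. Either some nonzero row of the $C$-part forces all $B$-parts to lie in a proper subspace, or the same holds with the roles of $B$ and $C$ swapped. In either case, deleting one row (respectively one column) yields an affine space of $(k-1)\times k$ matrices with rank $\le s$, or with rank $\le s-1$ after deleting a pivot. The inductive hypothesis applied to rectangular matrices then closes the count. For this reason I would state and prove the rectangular version, $\dim\le \max(m,n)\,s$, from the start.

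\emph{Main obstacle.} The hardest step is extracting Facts (1) and (2) over small fields, especially $\FF_2$, where a line $M_0+tN$ has only two points. The derivation must use only sums of pairs of elements of $V$, never scalar extension. If that fails, my fallback is Meshulam's original linear argument applied to the linear span $\langle\mathcal{A}\rangle$, which has dimension $\dim\mathcal{A}+1$. The price is controlling the single extra direction $M_0$, so at worst I would expect the weaker bound $\dim\mathcal{A}\le ks+1$. That weaker bound would already suffice for the paper's row-\cylinder counting.
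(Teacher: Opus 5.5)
Your Fact (1) is false over $\mathbb{F}_2$, which is exactly the field the paper needs. Take $k=2$, $M_0=E_{11}$ and $N=E_{12}+E_{21}+E_{22}$. Then $\mathcal{A}=\{M_0,M_0+N\}$ consists of two rank-one matrices (the second is the all-ones matrix), and $A_N=0$, but $D_N=1$. On $V_0$ the bordered minors give only the Schur-complement identity $D_N=C_NB_N$. Each such minor is $D_{ij}-(C_NB_N)_{ij}$, which is a sum of $r$ products, not a single product. Concluding $D_N=0$ requires a scalar $t\notin\{0,1\}$ with $tN\in V_0$, and $\mathbb{F}_2$ has none. So the obstacle you flagged cannot be overcome. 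It also does not need to be:
\begin{itemize}
\item the identity $D_N=C_NB_N$ already makes $N\mapsto(B_N,C_N)$ injective on $V_0$;
\item Fact (2) holds over every field by expanding this identity at $N+N'$, because $V_0$ is closed under addition and $N\mapsto D_N$ is additive. This uses only sums of pairs, as you wanted.
\end{itemize}

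The counting step is where the actual idea is missing. Your case split is not exhaustive: in the example above, neither the $B$-parts nor the $C$-parts lie in a proper subspace. The induction also passes from a space of pairs $(B,C)$ back to rank conditions on smaller affine spaces without any dimension bookkeeping. No induction is needed. Put $U=\sum_{N'\in V_0}\operatorname{Im}B_{N'}\subseteq\FF^r$ and $u=\dim U$.
\begin{itemize}
\item The image of $N\mapsto B_N$ lies in the $(k-r)u$-dimensional space of $r\times(k-r)$ matrices with columns in $U$.
\item If $B_N=0$, then $D_N=0$, and Fact (2) gives $C_NB_{N'}=0$ for all $N'\in V_0$. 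So the kernel embeds via $N\mapsto C_N$ into $\{C: CU=0\}$, which has dimension $(k-r)(r-u)$.
\end{itemize}
Thus $\dim V_0\le r(k-r)$, and $\dim V\le r^2+r(k-r)=kr\le ks$ over any field. Combined with your normalization, this is a complete proof; the paper itself gives none and only cites de Seguins Pazzis.

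Finally, drop the fallback. The linear span of $\mathcal{A}$ is not a bounded-rank space: over $\mathbb{F}_2$ the affine line $\{E_{11},E_{22}\}$ contains only rank-one matrices, yet its span contains $I_2$. So Meshulam's linear theorem with bound $s$ does not apply to the span, and no bound $ks+1$ follows.
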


\subsubsection{Returning to the lower bound}

\begin{lemma}\label{lem:row-cylinder}
Fix $r \leq k \in \mathbb{N}$. After restricting to the family of pairs of multiplication tables $\{(G_M, G_{I_r \oplus 0_{k-r}}) : M \in \text{Mat}_k(\FF_2)\}$, the falsifying set of each CNF clause has the form
$
  F \;=\; A_1\times A_2\times\cdots\times A_k,
$
where each $A_j\subseteq\mathbb{F}_2^k$ is an affine subspace of possible
choices for row $j$ of $M$. Thus $F$ is a row-\cylinder in the sense
defined above.
\end{lemma}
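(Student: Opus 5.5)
The proof uses only the affine projection structure from Proposition~\ref{prop:structure}. A clause $C$ is a disjunction of literals over the input bits of the pair of multiplication tables. Its falsifying set is therefore the set of inputs on which every literal of $C$ is false, i.e.\ a conjunction of constraints of the form ``input bit $b$ equals a fixed constant $\epsilon_b$''. If two literals in $C$ mention the same variable with opposite signs, $C$ is a tautology and its falsifying set is empty. The empty set is a row-cube if we allow empty affine subspaces, e.g.\ taking $A_1=\emptyset$.

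So we fix a clause with a consistent set of constraints and substitute the reduction $M \mapsto (G_M, G_{I_r\oplus 0_{k-r}})$. Each input bit falls into one of three kinds.

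First, bits of the table of $G_{I_r\oplus 0_{k-r}}$ are constants independent of $M$.

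Second, bits of the first component $u+v$ of entries of $G_M$ are also constants.

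Third, bit $j$ of the second component of the entry at $((u,z),(v,w))$ equals
\[
z_j+w_j+\sum_i M_{ji}u_iv_i .
\]
By Proposition~\ref{prop:structure} this is an $\FF_2$-affine function of row $j$ of $M$ alone.

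Here we should fix the encoding: table entries are encoded by the bits of the pair $(u',z')\in\FF_2^k\times\FF_2^k$. This is the natural convention, and the lemma presupposes it, since it is what makes the reduction an affine projection.

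For each constraint, we proceed according to its kind. If it is a constant bit, the constraint is either always true, and we drop it, or always false, and then the falsifying set is empty, which again is a (degenerate) row-cube. Otherwise the constraint reads $\ell(m_j)=\epsilon$, where $\ell$ is an affine functional on $\FF_2^k$ and $m_j$ is row $j$ of $M$.

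Group the surviving constraints by the row index $j$. Let $A_j$ be the solution set in $\FF_2^k$ of the constraints involving row $j$. As an intersection of affine hyperplanes (or all of $\FF_2^k$, if there are no constraints), $A_j$ is an affine subspace, possibly empty. Since each constraint involves only one row, $M$ satisfies all constraints if and only if $m_j\in A_j$ for every $j$. Hence the falsifying set is $F=A_1\times\cdots\times A_k$.

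The only delicate point is the empty case: we must either permit empty $A_j$ in the definition of a row-cube or handle it separately. This is harmless, because the downstream counting argument only considers clauses whose falsifying sets cover some matrix. I expect no genuine obstacle; the content is entirely in the observation, already recorded in Proposition~\ref{prop:structure}, that each output bit is affine in a single row.
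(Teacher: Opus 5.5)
Your proposal is correct and follows the same argument as the paper: by Proposition~\ref{prop:structure}, each input bit is either a constant or an $\FF_2$-affine function of a single row of $M$, so the constraints defining a clause's falsifying set split row by row into affine subspaces $A_j$. Your extra handling of tautological clauses, constant literals, and empty $A_j$ only makes explicit degenerate cases that the paper's short proof leaves implicit. It is harmless, since the counting argument only uses clauses whose falsifying sets contain some matrix.
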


\begin{proof}
By the second half of Proposition~\ref{prop:structure}, each bit of the multiplication table of $G_M$ (in our chosen way of representing elements) depends $\FF_2$-affine-linearly on at most one row of $M$. 
Therefore each input bit read by a clause is either a constant or an affine linear function of exactly
one row of $M$, and the falsifying set of the clause imposes independent affine linear
constraints on each row.
\end{proof}

\begin{lemma}\label{lem:meshulam}
Let $F=A_1\times\cdots\times A_k$ be a row-\cylinder over a field $\FF$. Then the set of ranks of matrices occurring in $F$ is a contiguous set of integers. 

In particular, if $F$ contains at least one rank-$s$ matrix but no rank-$(s+1)$ matrix, then every matrix in $F$ has rank at most $s$. Consequently, such $F$ satisfy
$
  \dim_{\FF} F \;\leq\; ks.
$
\end{lemma}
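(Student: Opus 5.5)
The plan is to use the path structure of row-\cylinders noted above. Changing a single row of a matrix is a rank-one update, so it changes the rank by at most one. Concretely, if $X$ and $Y$ differ only in row $j$, then $Y - X = e_j (y_j - x_j)^\top$ has rank at most $1$. Subadditivity of rank then gives $|\rank(Y) - \rank(X)| \leq 1$. This holds over an arbitrary field $\FF$, and it is the only algebraic input needed for contiguity.

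For contiguity, take any $X, Y \in F$ with $\rank(X) = a \leq b = \rank(Y)$, and let $c$ be an integer with $a \leq c \leq b$. Consider the path $X = X^{(0)}, X^{(1)}, \dots, X^{(k)} = Y$ constructed above. Every term lies in $F$, and consecutive terms differ in at most one row. The integer sequence $\rank(X^{(t)})$ therefore moves in steps of size at most $1$, going from $a$ to $b$. By a discrete intermediate value argument, some $t$ has $\rank(X^{(t)}) = c$. Hence the set of ranks realized in $F$ is an interval of integers.

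For the ``in particular'' claim, suppose $F$ contains a rank-$s$ matrix but no rank-$(s+1)$ matrix. If some matrix in $F$ had rank $\geq s+2$, then by contiguity the rank $s+1$ would also occur in $F$, a contradiction. So every matrix in $F$ has rank at most $s$. As observed above, $F$ is an affine subspace of $\FF^{k\times k}$; the argument there used only affine structure, so it applies over any field. Invoking the de Seguins Pazzis--Meshulam theorem with this bound then yields $\dim_{\FF} F \leq ks$.

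I do not expect a real obstacle. The only point needing care is that the path lemma and the affineness of row-\cylinders were stated for $\FF_2$, so I will remark that both arguments are field-independent.
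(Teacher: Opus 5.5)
Your proposal is correct and follows the paper's proof essentially verbatim. Both use the one-row-at-a-time path inside $F$, note that consecutive ranks differ by at most one (which you usefully justify via the rank-one update $e_j(y_j-x_j)^\top$ and subadditivity), apply a discrete intermediate-value argument for contiguity, and then invoke the de Seguins Pazzis bound once contiguity forces $\max_{X\in F}\rank(X)\le s$; your remark that the path and affineness facts are field-independent also covers a small point the paper leaves implicit, since it defines row-cubes over $\FF_2$ but states the lemma over an arbitrary field.
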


\begin{proof}
Suppose $X,Y \in F$ have ranks $r = \rank(X) < \rank(Y) = r'$. By the second property of the row-\cylinders as described above, there is a sequence
$X=X^{0},X^{1},\ldots,X^{k}=Y$
of matrices in $F$ such that consecutive matrices differ in at most
one row. Consequently their ranks differ by at most one. Thus every rank in between $r$ and $r'$ must occur as the rank of at least one matrix in the sequence $X^0, \dotsc, X^k \in F$. This establishes the first claim.

The ``in particular'' then follows immediately. Once we have $\max_{X \in F} \rank(X) \leq s$, \cite[Theorem~6]{deSeguinsPazzis2010} implies that $\dim F \leq ks$, 
completing the proof of the lemma. \qedhere
\end{proof}

The following estimate is standard, but crucial for our lower bound:

\begin{lemma} \label{lem:count}
Over $\FF_q$, there are at least $q^{r(2k-r)}$ matrices of size $k \times k$ and rank at most $r$.
\end{lemma}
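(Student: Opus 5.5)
The plan is to prove the more general statement for rectangular matrices and induct on the number of rows. For integers $m,k,r$ with $0\le r\le \min(m,k)$, let $N(m,k,r)$ denote the number of $m\times k$ matrices over $\FF_q$ of rank at most $r$. I will show
\[
N(m,k,r)\;\geq\; q^{r(m+k-r)} \qquad (0\le r\le \min(m,k)),
\]
and then specialize to $m=k$ to get the claimed bound $q^{r(2k-r)}$. The restriction $r\le m$ is needed: when $r\geq m$ the quantity $r(m+k-r)$ can exceed $mk$. For that reason the induction starts at $m=r$ rather than at $m=0$.

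\textbf{Base case $m=r$.} Every $r\times k$ matrix has rank at most $r$, so $N(r,k,r)=q^{rk}=q^{r(r+k-r)}$, with equality. This also covers $r=0$ trivially.

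\textbf{Recursion.} Fix $r\le k$ and $m\ge r$. Build an $(m+1)\times k$ matrix by appending a new last row $v$ to an $m\times k$ matrix $M'$. Since the rank can only increase when a row is added, $\rank(M')\le r$ is necessary. There are two cases for the new matrix to have rank at most $r$:
\begin{itemize}
\item If $\rank(M')=r$ exactly, $v$ must lie in the row space of $M'$, which gives $q^r$ choices.
\item If $\rank(M')\le r-1$, any $v$ works, which gives $q^k$ choices.
\end{itemize}
Adding these up gives the exact recursion
\[
N(m+1,k,r)=q^r\,N(m,k,r)+(q^k-q^r)\,N(m,k,r-1).
\]

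\textbf{Inductive step.} The second term is nonnegative because $r\le k$, so I will simply drop it. By the inductive hypothesis (valid since $r\le\min(m,k)$),
\[
N(m+1,k,r)\;\ge\; q^r\cdot q^{r(m+k-r)}\;=\;q^{r((m+1)+k-r)},
\]
which is exactly the claim for $m+1$. Inducting on $m$ from $m=r$ up to $m=k$ then gives $N(k,k,r)\ge q^{r(2k-r)}$, which is the lemma.

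\textbf{Main obstacle.} The real difficulty is choosing the right induction. Two natural shortcuts do not work:
\begin{itemize}
\item The closed formula for the number of matrices of rank exactly $r$ falls short of $q^{r(2k-r)}$ by a constant factor. For example, it fails outright when $r=k$.
\item An injective parametrization by an $r\times r$ block and two off-diagonal blocks also fails. Already for $k=2$, $r=1$ there is no rank-$1$ completion of $\left(\begin{smallmatrix}0&1\\1&\ast\end{smallmatrix}\right)$.
\end{itemize}
The row-by-row recursion avoids both problems, because keeping only its first term already gives the exact exponent with no constant loss.
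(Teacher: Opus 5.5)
Your proof is correct, and it takes a different route from the paper's. The paper counts fibers of the product map $\FF_q^{k\times r}\times\FF_q^{r\times k}\to\FF_q^{k\times k}$, $(A,B)\mapsto AB$. The image of this map is exactly the matrices of rank at most $r$. The paper then asserts that every fiber is a single orbit $\{(Ag,g^{-1}B):g\in\GL(r,q)\}$, and concludes that there are exactly $q^{2kr}/|\GL(r,q)|\geq q^{2kr-r^2}$ such matrices.

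That description of the fibers holds only over matrices of rank exactly $r$. Over a matrix of smaller rank the fiber is strictly larger; for example, over $0$ it contains every pair $(A,0)$. So the fiber count actually gives $\#\{X:\rank(X)\le r\}\le q^{2kr}/|\GL(r,q)|$, which is an upper bound, the wrong direction. Already for $k=r=1$ it predicts $q^2/(q-1)$ matrices instead of $q$.

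Your recursion $N(m+1,k,r)=q^rN(m,k,r)+(q^k-q^r)N(m,k,r-1)$, seeded at $N(r,k,r)=q^{rk}$, avoids orbit counting entirely. It gives the sharp exponent with no constant loss and yields the rectangular bound $q^{r(m+k-r)}$ for free. So it is not merely an alternative: it is a valid replacement for the paper's argument.

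The paper's idea can be repaired by restricting the product map to full-rank factors. This gives the exact number $\prod_{i<r}(q^k-q^i)^2/|\GL(r,q)|$ of rank-$r$ matrices.
\begin{itemize}
\item With the crude bound $|\GL(r,q)|\le q^{r^2}$ this loses a constant factor. That loss is harmless for Theorem~\ref{thm:CNF}, which only needs $2^{\Omega(k^2)}$.
\item With the exact order $|\GL(r,q)|=\prod_{i<r}(q^r-q^i)$, the factorwise inequality $(1-q^{i-k})^2\ge 1-q^{i-r}$ holds for $r<k$ and $q\ge 2$, and it yields the lemma as stated. The case $r=k$ is trivial.
\end{itemize}
In particular, your side remark that the rank-exactly-$r$ formula falls short by a constant factor is accurate only at $r=k$.
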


\begin{proof}
The matrix product function $\FF_q^{k \times r} \times \FF_q^{r \times k} \to \FF_q^{k \times k}$ has image consisting exactly of the matrices of rank $\leq r$. Moreover, the pre-image of any such matrix is a set of the form $\{(Ag, g^{-1}B) : g \in \GL(r,q)\}$, so the number of $k \times k$ matrices of rank at most $r$ is exactly $q^{2kr} / |\GL(r,q)|$. Since $|\GL(r,q)| \leq q^{r^2}$, the result follows.\footnote{Although this upper bound is not quite tight, there is also a lower bound of $q^{r^2-r}$, so it is not very far from tight. Thus, even using a tighter upper bound could at most affect the lower-order terms in the exponent of our final lower bound.}
\end{proof}

Finally, we now have all the ingredients for the proof of our main CNF lower bound, Theorem~\ref{thm:CNF}:

\begin{proof}[Proof of Theorem~\ref{thm:CNF}]
Let $C=\Gamma_1\wedge\cdots\wedge\Gamma_m$ be a CNF solving the promise \algprobm{GpI} problem. For a matrix $M$ and integer $r = s+1$, we consider applying $C$ to the pair $(G_M, G_{I_r \oplus 0_{k-r}})$ (the output of the reduction from Proposition~\ref{prop:structure}). Let $F_i=\{M:\Gamma_i(G_M,G_{I_r \oplus I_{k-r}})=0\}$ be the set of matrices $M$ such that the output of the reduction from Proposition~\ref{prop:structure} falsifies the clause $\Gamma_i$.  By
Lemma~\ref{lem:row-cylinder} each $F_i$ is a row-\cylinder.  Correctness of $C$ implies:
\begin{enumerate}
  \item No clause is false on a rank-$r$ matrix:
        $F_i\cap\{M:\rank(M)=r\}=\varnothing$ for every $i$.
  \item Every matrix of rank $\neq r$ falsifies some clause, in particular,
        $\{M:\rank(M) \leq r-1 = s\}\subseteq\bigcup_{i=1}^m F_i$.\footnote{One might be able to sharpen the argument a bit by including here also the matrices of rank $> r$, but that can at most improve the constant in the exponent: for even if we replace our lower bound on $|\bigcup F_i|$ with the the size of the set of all $k \times k$ matrices, our final lower bound would still be of the form $2^{\Omega(k^2)}$, which is asymptotically what we already achieve.}
\end{enumerate}
Condition (1) says that each $F_i$ contains no rank-$r$ matrix.  Thus each $F_i$ that covers at least one rank-$s$ matrix satisfies the hypothesis of Lemma~\ref{lem:meshulam}, giving $|F_i|\leq 2^{ks}$. The $F_i$ that contain only smaller-rank matrices are even smaller (at most $2^{k(s-1)}$), and those that contain no matrices of rank $< r$ do not contribute to the sum at all. 
Next, condition (2) gives us that 
\[
|\{M : \rank(M) \leq s\}| \leq \sum_{i=1}^m F_i \leq \sum 2^{ks} = m2^{ks}.
\]
By Lemma~\ref{lem:count}, the left-hand side here is at least $2^{s(2k-s)}$. Putting these together we get
\[
2^{s(2k-s)} \leq m 2^{ks}
\]
hence $m \geq 2^{ks - s^2}$. Since our CNF must work for all ranks $s$, we may choose $s$ to optimize this quantity, which happens at $s=k/2$. Choosing $s=k/2$, we get $m \geq 2^{k^2/2 - (k^2/4)} = 2^{(1/4)k^2}$. Since $n = 2^{2k}$, we have $k = (1/2)\log_2 n$, hence $m \geq n^{\frac{1}{16}(\log_2 n)} \geq n^{\Omega(\log n)}$.
\end{proof}

\begin{remark}
There is a matching $n^{O(\log n)}$ upper bound only in the canonical coordinate presentation, where the elements are explicitly labeled as pairs $(u, z) \in \mathbb{F}_2^k \times \mathbb{F}_2^k$. In that presentation, one can read off $M_{j i}$ from
$
\left(e_i, 0\right)^2=\left(0, M e_i\right).
$
But this is not yet an arbitrary Cayley table upper bound, because under an arbitrary relabeling the formula does not know which table element is $\left(e_i, 0\right)$, nor how to decode the coordinates of the element corresponding to $\left(0, M e_i\right)$. 
\end{remark}

Construction~\ref{construction:gulce} has a natural tensor analogue; although one might hope to use a tensor analogue to improve the lower bound, in the remainder of this section we discuss a significant obstacle to this approach. 

\begin{definition}
Let $p$ be an odd prime. For class-$2$ exponent-$p$ Baer groups, the parameter space is the space of skew-symmetric matrix tuples
$
  X \;=\; \bigl(\Lambda^2 V^*\bigr)^{\ell}, \, V=\mathbb{F}_p^k.
$
An element $T\in X$ is an $\ell$-tuple of skew-symmetric matrices
$T=(M^1,\ldots,M^\ell)$, or equivalently a skew-symmetric bilinear map
$b_T:V\times V\to\mathbb{F}_p^\ell$.  The associated Baer group is
$G_T=V\times\mathbb{F}_p^\ell$ with multiplication
$
  (v,z)(v',z') \;=\; \Bigl(v+v',\; z+z'+\tfrac{1}{2}b_T(v,v')\Bigr).
$
\end{definition}

\begin{proposition}[Baer's Correspondence {\cite{Baer}}]\label{prop:baer-iso}
Let $p$ be an odd prime. Let $H=\mathrm{GL}_\ell(p)\times\mathrm{GL}_k(p)$ act on $X$ by simultaneous change of
basis in $\mathbb{F}_p^\ell$ and in $V$.  Then
$
  G_T \;\cong\; G_S \;\Longleftrightarrow\; T\cong S \;\Longleftrightarrow\;
  S = h\cdot T \;\text{ for some } h\in H.
$
\end{proposition}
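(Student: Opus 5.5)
The plan is to prove the three equivalences in turn: first that \(G_T \cong G_S\) is equivalent to the existence of a Lie ring isomorphism between the associated class-2 Lie algebras \(L_T = V \oplus \FF_p^\ell\) with bracket \([(v,z),(v',z')] = (0, b_T(v,v'))\); then that such Lie isomorphisms correspond exactly to pseudo-isometries of the bilinear maps.

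For the first step we use the Baer correspondence in its classical form. For \(p\) odd and class \(2\), the group law \(x\cdot y = x + y + \tfrac12[x,y]\) on \(L_T\) recovers \(G_T\). Conversely, from \(G\) we recover the Lie ring on the same underlying set by \(x+y := xy[y,x]^{1/2}\) and \([x,y]_{\mathrm{Lie}} := [x,y]_{\mathrm{grp}}\). Here the square root exists and is unique because the commutator lies in the central subgroup, which has exponent \(p\) odd. Both directions are given by universal formulas in the group operation, resp.\ in the Lie operations, so any group isomorphism \(G_T \to G_S\) is a Lie ring isomorphism \(L_T \to L_S\), and vice versa. This functoriality is the crux, and I would just cite \cite{Baer} for the verification that the two constructions are mutually inverse.

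The second step needs one extra hypothesis that the statement leaves implicit: \(b_T\) and \(b_S\) must be nondegenerate in the sense that the span of their images is all of \(\FF_p^\ell\) and their radicals are trivial, so that \(\FF_p^\ell\) is exactly the centre and derived subgroup. I would state this assumption explicitly, or else reduce to it by splitting off abelian direct factors.

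Under nondegeneracy, a Lie isomorphism \(\phi\colon L_T\to L_S\) preserves the centre \(Z = \FF_p^\ell\). It therefore induces \(A \in \GL(V)\) on \(L/Z \cong V\) and \(B \in \GL_\ell(p)\) on \(Z\), and bracket preservation reads \(B\, b_T(v,v') = b_S(Av, Av')\), i.e.\ \(S = (B,A^{-1})\cdot T\) in the \(H\)-action. Conversely, given \((B,A)\) with \(S = h\cdot T\), the map \((v,z)\mapsto(Av,Bz)\) is directly checked to be a homomorphism of the group laws, since the \(\tfrac12 b\) term transforms correctly; this step needs no Baer theory.

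The main obstacle is the nondegeneracy caveat. Without it, the map \(\phi\) can mix the \(V\)- and \(Z\)-coordinates, and degenerate tuples can give isomorphic groups while not being \(H\)-equivalent, e.g.\ when \(b_T\)'s image spans a proper subspace. I would therefore prove the proposition for nondegenerate \(T,S\) and remark how it extends to the general case.
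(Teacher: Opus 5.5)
The paper gives no proof beyond the citation to \cite{Baer}. Your first step is the standard argument behind that citation: Baer's Lie correspondence, using the square root $g^{(p+1)/2}$, which exists because $G_T$ has exponent $p$. That step is fine.

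The gap is in your second step. You prove the result only for nondegenerate $T,S$, and you assert that degenerate tuples can give isomorphic groups without being $H$-equivalent. That assertion is false: the proposition holds as stated for all $T,S\in X$. The paper's subsequent remark works over all of $X\times X$, degenerate tensors included, so it needs this generality. The situation you flag gives no counterexample. For $k=\ell=2$ and $J$ the standard alternating form, $(J,0)$ and $(J,J)$ both have image spanning a line, and they lie in one $H$-orbit via $\bigl(\begin{smallmatrix}1&0\\1&1\end{smallmatrix}\bigr)\in\GL_2(p)$ acting on $\FF_p^\ell$.

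The missing idea is that $\phi$ never needs to respect the splitting $V\oplus\FF_p^\ell$; only the maps it induces matter. Let $R_T=\operatorname{rad}(b_T)$ and $D_T=\operatorname{span}\,b_T(V,V)$. Then $Z(L_T)=R_T\oplus\FF_p^\ell$ and $[L_T,L_T]=D_T$.

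So an isomorphism $\phi\colon L_T\to L_S$ induces linear isomorphisms $\bar A\colon V/R_T\to V/R_S$ and $\beta\colon D_T\to D_S$. These satisfy $\beta(\bar b_T(x,y))=\bar b_S(\bar Ax,\bar Ay)$, where $\bar b_T\colon V/R_T\times V/R_T\to D_T$ is the induced map.

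Because $T$ and $S$ lie in the same $X$, this forces $\dim R_T=\dim R_S$ and $\dim D_T=\dim D_S$. Hence $\bar A$ lifts to some $A\in\GL(V)$ with $A(R_T)=R_S$: send a complement of $R_T$ to a complement of $R_S$, and $R_T$ to $R_S$. Likewise $\beta$ extends to some $B\in\GL_\ell(p)$.

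Since $b_T$ factors through $V/R_T$, you get $B\,b_T(v,v')=b_S(Av,Av')$ for all $v,v'$, i.e.\ $S\in H\cdot T$. This is what your suggested ``splitting off abelian factors'' amounts to, and it shows no extra hypothesis is needed.

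You can also skip the Lie correspondence. In $G_T$ the group commutator is $(0,b_T(v,v'))$ and $Z(G_T)=R_T\times\FF_p^\ell$. So the same argument runs directly on the commutator pairing $G_T/Z(G_T)\times G_T/Z(G_T)\to[G_T,G_T]$, which any group isomorphism preserves.
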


Proposition~\ref{prop:baer-iso} plays the role that the rank classification
 plays in the matrix setting: it reduces \algprobm{Group
Isomorphism} to a natural orbit problem.  The \algprobm{Skew-Symmetric (or Alternating) Matrix Space Isometry} problem is
accordingly to decide whether a given pair $(T,S) \in X \times X$ satisfies $T \cong S$; that is, the 
yes-set is
$
  \mathcal{Y} \;=\; \{(T,S)\in X\times X : T\cong S\}
             \;=\; \{(T,h\cdot T) : T\in X,\; h\in H\}.
$

\begin{remark}[Obstacle to using tensors in our lower bound strategy] 
There is an important structural difference between the matrix and tensor settings. A fixed tensor orbit has size at most $ |\operatorname{Orb}(R)| \leq |\mathrm{GL}_\ell(p)|\,|\mathrm{GL}_k(p)| = p^{O(k^2+\ell^2)}. $ In the balanced regime $\ell\asymp k$, this is only $p^{O(k^2)}$, whereas the full tensor space has size $ |X| = p^{\ell\binom{k}{2}} = p^{\Theta(k^3)}. $ Thus, a fixed tensor orbit is asymptotically thinner (in the exponent) than a generic rank layer in the matrix setting (viz. there are $p^{k^2}$ matrices of size $k \times k$ and $p^{\Theta(k^2)}$ such matrices of rank $k/2$). 
For this reason, the natural object for a covering argument is not a single fixed orbit in $X$ (in the rank setting we considered only the orbit of a rank-$k/2$ matrix), but rather the two-input isomorphism relation $ \mathcal{Y} := \{(T,S)\in X\times X : T\cong S\}.$ 

One might then try to imitate the rank argument directly on $X\times X$. In canonical Baer labels, Cayley-table entries are affine-linear functions of the tensor coordinates, so the falsifying set of a width-$w$ CNF clause is, as in the matrix case, an affine subspace of $X\times X$ of codimension at most $w$. A tensor analogue of the rank lower bound would therefore follow from showing that every sufficiently large affine subspace of $X\times X$ intersects $\mathcal{Y}$; equivalently, that every yes-free affine subspace has large codimension or covers very little mass. This strategy, however, fails on the full space $X\times X$. Indeed, assume that $k$ is even, fix a nonzero vector $v\in V$, and fix a nondegenerate skew-symmetric form $J\in \Lambda^2 V^*$. Let $\mathcal{A}\subseteq X$ be the linear subspace consisting of tensors $T$ for which $v\in\operatorname{rad}(T)$. This imposes $\ell(k-1)$ linear constraints. Let $\mathcal{B}\subseteq X$ be the affine subspace consisting of tensors $S=(N_1,\ldots,N_\ell)$ with $N_1=\sum_{i=1}^{k/2} (E_{2i,2i+1} - E_{2i+1,2i})$, the standard full-rank skew-symmetric matrix. This has codimension $\binom{k}{2}$, and every $S\in\mathcal{B}$ has trivial radical. Consequently, $\mathcal{A}\times\mathcal{B}$ is a yes-free affine subspace of $X\times X$ (because the radical is an isomorphism invariant) of codimension \[ \ell(k-1)+\binom{k}{2}=O(k^2) \] in the balanced regime $\ell\asymp k$. Thus the tensor analogue of the rank argument cannot be applied directly to all of $X\times X$. 
\end{remark}

\subsection{QuasiCNF implies Group Isomorphism is in $\ComplexityClass{coNP}$} \label{sec:coNP}
\begin{observation}
$\forall^{\log^c n} \DTIMEpl \subseteq \ComplexityClass{coNP}$.
\end{observation}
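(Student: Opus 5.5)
The plan is to show directly that the complement of any language $L \in \forall^{\log^c n}\DTIMEpl$ lies in $\ComplexityClass{NP}$, by giving a short polynomial-time verifiable certificate for non-membership. By definition, there is a language $L'$ decided by a deterministic Turing machine $M$ with random access to its input, running in time $t(n) = O(\log^{d} n)$ for some constant $d$, such that $x \in L$ if and only if $(x,y) \in L'$ for every $y$ of length at most $O(\log^c |x|)$. Then $x \notin L$ if and only if there exists a string $y$ with $|y| \le O(\log^c n)$ such that $M$ rejects $(x,y)$.

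The $\ComplexityClass{NP}$ machine for the complement guesses such a $y$, which is much shorter than the input (polylogarithmic in $n$, so certainly polynomial). It then simulates $M$ on $(x,y)$. The one point needing care is the encoding of the pair $(x,y)$ and the random-access input model. I would fix the standard pairing in which the index tape addresses bits of the concatenated string. A polynomial-time sequential machine can simulate each query step by writing the address and scanning its own input tape to that position in $O(n)$ time, or by copying $x$ and $y$ onto a work tape once. Each of the $t(n) = \polylog(n)$ steps of $M$ therefore costs at most $\poly(n)$ time, and the total simulation time is $\poly(n)$. The machine accepts exactly when $M$ rejects. Hence $\overline{L} \in \ComplexityClass{NP}$, so $L \in \coNP$.

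I expect no real obstacle here. The only potential subtlety is bookkeeping: the sublinear-time machine must be simulated faithfully by an ordinary multitape Turing machine, including the address tape of length $\lceil \log(n + |y|) \rceil$ and the convention that out-of-range addresses return a blank. The uniform polylogarithmic bound on the universal witness length also guarantees the certificate is polynomially bounded, and the argument does not use anything specific to \algprobm{Group Isomorphism}. Combined with Fact~\ref{fact:DTISP}, this shows that a uniform quasipolynomial-size CNF of the corresponding $\forall^{\polylog}\DTIMEpl$ form for \algprobm{GpI} would place \algprobm{GpI} in $\coNP$, which is the reading intended by the subsection title.
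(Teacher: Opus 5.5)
Your argument is correct and is essentially the paper's proof. The paper writes it as the chain $\forall^{\log^c n}\DTIMEpl \subseteq \forall^{\poly(n)}\DTIMEpl \subseteq \forall^{\poly(n)}\ComplexityClass{P} = \coNP$, while you phrase the same idea dually, placing the complement in $\ComplexityClass{NP}$ by guessing the polylogarithmic-length counterexample $y$ and simulating the random-access machine in polynomial time; your bookkeeping about the index tape just spells out the step $\DTIMEpl \subseteq \ComplexityClass{P}$ that the paper leaves implicit.
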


\begin{proof}$
\forall^{\log^c n} \DTIMEpl \subseteq \forall^{\poly(n)} \DTIMEpl 
 \subseteq \forall^{\poly(n)} \ComplexityClass{P} = \ComplexityClass{coNP}$.\qedhere
\end{proof}

As a consequence, if we were able to solve \algprobm{Group Isomorphism} in 
\[
\forall^{\poly(\log n)}\DTIMEpl,
\]
then combining with Lemma~\ref{lem:valid-cayley-tables} would place \algprobm{GpI} into $\coNP$, which remains a significant open question. Arvind and Tor\'an \cite{ArvindToran} showed that in the case of solvable groups, \algprobm{GpI} is \emph{almost} in $\ComplexityClass{NP} \cap \ComplexityClass{coNP}$. Precisely, they exhibited an Arthur--Merlin protocol for \algprobm{Group Non-Isomorphism} where Arthur used $O(\log^6 n)$ bits and Merlin used $O(\log^2 n)$ bits. Additionally, Arvind and Tor\'an showed that if $\ComplexityClass{EXP} \not \subseteq \textsf{i.o.-PSPACE}$, then \algprobm{GpI} for solvable groups belongs to $\ComplexityClass{NP} \cap \ComplexityClass{coNP}$. Arvind and Tor\'an also note that the Uniform Short Presentation Conjecture remains a key  obstacle for extending their work to all finite groups.

\section{Uniformity of the Hulpke--Seress Short Presentations for $^2A_{2}(q) = \PSU_{3}(q)$} \label{app:PSU3}
\newcommand{\exponent}{e}

\begin{proposition}\label{prop:UniformityPSU}
Let $q=p^\exponent>2$, and let
$G:={}^{2}A_2(q)=\PSU_3(q)$.
Given the standard name of $G$, one can construct, in
$\DTIME(\polylog(|G|))$, a presentation of
$G$ of word length $O(\log^2 q)=O(\log^2 |G|).$
\end{proposition}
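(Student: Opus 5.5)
We will not construct a new presentation. Instead we take the Hulpke--Seress presentation of $\PSU_3(q)$ and show that every generator and relator in it can be written down from the standard name in polylogarithmic time. Recall its shape. It is built from a presentation of the Borel subgroup $B = U \rtimes H$, where $U$ is the Sylow $p$-subgroup of order $q^3$ and $H$ is a cyclic torus of order $(q^2-1)/\gcd(3,q+1)$. To this one adds an involution $t$, a Weyl element, together with Steinberg-type relations describing how $t$ interacts with $U$ and $H$. The generators are indexed by an $\FF_p$-basis of $\FF_{q^2}$ (or of $\FF_q$), plus $O(1)$ extra symbols. The relators are of the following kinds:
\begin{enumerate}
\item relations making the root-group generators commute or have order $p$;
\item commutator relations in $U$, whose right-hand sides are products of basis generators with exponents given by coordinates of field elements;
\item relations for the action of a primitive element $\omega$ of $\FF_{q^2}$ on $U$ by conjugation, again with coordinate-vector exponents;
\item a bounded number of relations involving $t$, whose exponents are computed from field arithmetic.
\end{enumerate}
Following Remark~\ref{rem:presentation-length-conventions}, any large integer exponents, such as the order of $h(\omega)$, are replaced by repeated-squaring gadgets. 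This keeps the word length at $O(\log^2 q)$: there are $O(e)$ generators, $O(e^2)$ relators of length $O(e)$ or $O(1)$, and $e\le\log q$.

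The algorithm proceeds in the following order.
\begin{enumerate}
\item From the standard name, read $p$, $e$, the irreducible polynomial $f$ defining $\FF_{q^2}$ (twisted case), and the primitive root $\omega$.
\item Compute the field automorphism $x\mapsto \bar x = x^q$ and the norm and trace maps $\Nm,\Tr$. Each is done by repeated squaring modulo $f$ in $\poly(\log q)$ time, exactly as in the verification of standard names in the proof of Proposition~\ref{prop:GpI}.
\item Fix the $\FF_p$-bases $\{\omega^i\}$ of $\FF_{q^2}$ and a basis of the trace-zero (or norm-compatible) subspace used for the center of $U$. Computing the latter requires solving a linear system over $\FF_p$ of dimension $O(e)$, which takes $\poly(e,\log p)$ time.
\item For each pair of basis elements, evaluate the explicit formula for the commutator in $U$ (for $\PSU_3$ this is $[u(a,b),u(c,d)]$, whose central part is determined by $a\bar c-\bar a c$). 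Expand the result in the chosen basis and emit the corresponding relator.
\item Do the same for the conjugation action of $h(\omega)$, which is multiplication by $\omega^{\,q-2}$ or $\omega^{\,2-q}$ on the root coordinates, and for the $t$-relations.
\end{enumerate}
Every entry produced is a field element of $O(\log q)$ bits, obtained by $O(1)$ or $O(\log q)$ field operations. Summing over $O(e^2)$ relators therefore gives $\poly(\log q)\le \polylog(|G|)$ total time. The gcd factor $\gcd(3,q+1)$, which determines whether we quotient by the center of $\SU_3$, is handled by adding one relator expressing the appropriate power of $h(\omega)$ as trivial. We also exclude $q=2$, as the hypothesis requires.

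The main obstacle is step~4/5: extracting from Hulpke and Seress exactly which relators they use, and checking that nothing in their proof is non-constructive. Two points need care. First, an element chosen with some property ``exists'' in their argument, for example a generator of $\FF_q$ over $\FF_p$ satisfying an auxiliary condition, or a special trace-one element. Second, the Weyl-element relations are stated as identities in $\SU_3(q)$ matrices. For each such choice I would replace the existence claim by an explicit search or formula computable in $\poly(\log q)$ time. The first thing to try is deriving these elements from $\omega$ directly, e.g.\ using $\omega^{q+1}$ as a primitive element of $\FF_q$, and solving for a trace-$1$ element via a single linear equation. If a needed element instead has to be found by enumeration over $\FF_p$ or over a bounded set, that is still polylogarithmic as long as the candidate set has size $\poly(\log q)$. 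Correctness of the output presentation is then inherited verbatim from Hulpke--Seress, so only uniformity and the length count need proof.
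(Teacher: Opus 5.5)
Your overall plan is the same as the paper's: take the Hulpke--Seress presentation and show that each generator and relator can be produced from the standard name in $\poly(\log q)$ time. However, the proposal does not get past the step you yourself flag as the main obstacle, and the remedies you suggest do not work there.

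\textbf{The $t$-relations need a discrete logarithm.} The relations involving $t$ have the form $t\,u\,t = g(u)\,h(u)\,t\,f(u)$, with $h(u)=\mathrm{diag}(x,x^{q-1},x^{-q})$, where $x$ is the first coordinate of $u=u(x,y)$. To write such a relation as a word, you must express $h(u)$ as $\hat h^{m}$ with $\rho^{m}=x$. For an arbitrary $x$ this is a discrete logarithm in $\FF_{q^2}^\times$. It is not ``computed from field arithmetic'', and no deterministic $\poly(\log q)$ algorithm for it is known. The paper avoids it by choosing the first coordinates required by Hulpke--Seress (Theorems 15--16: $x\in\rho S^\times$, $s\in S^\times$, $x+s\in S^\times/\rho$) to be explicit powers of $\rho$. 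These are $\delta\rho$, $\delta$ and $-\delta\rho^{q}$, where $\delta=\rho^{(q+1)/2}$, or $\delta=1$ if $p=2$.

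\textbf{This forces a norm equation.} Once $x$ is fixed, the second coordinate $\nu$ must satisfy the non-linear equation $\Nm(\nu)=-\Tr(x)$ for $u(x,\nu)$ to lie in $U$. A single linear (trace) equation does not suffice. The paper solves the norm equation by deterministic square-root extraction in $\FF_q$, using the known nonsquare $\rho^{q+1}$ (for odd $p$).

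\textbf{The case $d=3$ is more than one extra relator.} When $d=\gcd(3,q+1)=3$, the fix is not just one relator for the order of $h$. Theorem 16 needs seven Steinberg relations, using three values $\nu_0,\nu_1,\nu_2$ of equal norm in the three distinct cube classes of $\FF_{q^2}^\times$. The paper takes $\nu_j=\nu_0\rho^{j(q-1)}$ with $\nu_0$ a cube. The Borel part also needs three $\langle h\rangle$-orbit representatives in $U/Z(U)$.

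Note also that enumeration over $\FF_p$ is not polylogarithmic when $p$ is large, for example when $q=p$. And correctness is inherited ``verbatim'' from Hulpke--Seress only once you show that your uniformly constructed elements satisfy their coset and cube-class hypotheses, which is exactly the missing content.

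\textbf{Your length count does not give $O(\log^2 q)$.} Emitting a commutator relation for each pair of the $2e$ basis elements of $U/Z(U)$ gives $\Theta(e^2)$ relators. Their right-hand sides are, in general, words of length $\Theta(e\log p)$ in the central generators. That totals $\Theta(\log^3 q)$ when $p$ is small. Indeed, $O(e^2)$ relators of length $O(e)$ already sum to $O(e^3)$, not $O(e^2)$.

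Hulpke--Seress and the paper reach $O(\log^2 q)$ by taking the bases of $Z(U)$ and of $U/Z(U)$ inside one $\langle h\rangle$-orbit (three when $d=3$). Then only $O(e)$ commutator relations are needed, and the rest follow from orbit relations $\hat v_t=\hat h^{-b(t)}\hat v_1\hat h^{b(t)}$. Making this uniform requires three things:
\begin{itemize}
\item choosing $v_1=u(a_1,\rho)$ with $\Tr(a_1)=-\rho^{q+1}$;
\item computing $b(t)$ from $(2-q)b\equiv t-1 \pmod{q^2-1}$ by the extended Euclidean algorithm;
\item a constructive-membership routine writing elements of $U$ such as $g(u)$, $f(u)$ and $v_t^{h}$ in the normal form $z_1^{c_1}\cdots z_e^{c_e}v_1^{b_1}\cdots v_{2e}^{b_{2e}}$.
\end{itemize}
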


\subsection{Additional preliminaries for $\PSU_{3}(q)$} \label{sec:PSUPreliminaries}

In this section, we will recall additional preliminaries concerning $\PSU_{3}(q)$. We refer to \cite{HulpkeSeress} for a reference. 

Let $p$ be a prime, and let $q = p^{\exponent}$ for some $\exponent > 0$. The group $\SU_{3}(q)$ is a subgroup of $\SL(3,q^2)$, and $\PSU_3(q) = \SU_3(q) / Z(\SU_3(q))$. The center of $\SU_3(q)$ consists of scalar matrices, hence it is either trivial or has order $3$ (iff $3 | q+1$).
For $a \in \mathbb{F}_{q^{2}}$, we define the trace and norm maps by:
\[
\Tr(a):=a+a^q,\quad
\Nm(a):= \Nm_{\FF_{q^2}/\FF_q}(a) = a^{q+1}.
\]
Note that, as $\FF_{q^2}$ is a degree-$2$ extension of $\FF_q$, the automorphism $a \mapsto a^q$ plays a role analogous to conjugation in the complex numbers, and in this analogy we have $\Tr(a) = a + \overline{a}$ and $\Nm(a)=a\overline{a}$.
Let $S := \{ s \in \mathbb{F}_{q^2} : \Tr(s) = 0 \}$, and $S^{\times} := S \setminus \{0\}$. Note that $S$ is an additive subgroup of $\mathbb{F}_{q^2}$, 
and $S^{\times}$ is a coset of $\mathbb{F}_{q}^{\times}$ in $\mathbb{F}_{q^{2}}^{\times}$. For $\alpha\in\FF_{q^2}^\times$, write $S^\times/\alpha :=\{s/\alpha:s\in S^\times\}.$ The following fact is standard but we will use it a few times so we number it for ease of reference, and give its proof for the reader's convenience.

\begin{fact} \label{fact:trace-surjective}
The trace is a surjective $\FF_q$-linear map $\Tr \colon \FF_{q^2} \to \FF_q$.
\end{fact}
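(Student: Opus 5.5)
To prove Fact~\ref{fact:trace-surjective}, I would check the two properties separately: first that $\Tr$ is $\FF_q$-linear and lands in $\FF_q$, and then that it is surjective.

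For linearity, the key point is that the Frobenius map $\sigma(a)=a^q$ is a field automorphism of $\FF_{q^2}$ that fixes $\FF_q$ pointwise. Additivity comes from $(a+b)^q=a^q+b^q$ in characteristic $p$. For $\lambda\in\FF_q$ we have $\lambda^q=\lambda$, so $\Tr(\lambda a)=\lambda a+\lambda a^q=\lambda\Tr(a)$. To see that $\Tr(a)\in\FF_q$, note that $a^{q^2}=a$ for every $a\in\FF_{q^2}$, so
\[
\Tr(a)^q=a^q+a^{q^2}=a^q+a=\Tr(a),
\]
and the elements fixed by $x\mapsto x^q$ are exactly $\FF_q$.

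For surjectivity, I would bound the kernel. $\Tr(a)=0$ exactly when $a$ is a root of the polynomial $x^q+x$, which has degree $q$, so the kernel $S$ has at most $q$ elements. Rank--nullity for $\FF_q$-linear maps then gives
\[
|\operatorname{im}\Tr| = q^2/|S| \geq q,
\]
so the image is all of $\FF_q$. One could instead simply exhibit a nonzero value, since any nonzero linear map onto a $1$-dimensional space is surjective.

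No step here is a real obstacle. The one point needing care is justifying $|\ker\Tr|\le q$, which follows because a nonzero polynomial of degree $q$ has at most $q$ roots in a field. As a byproduct, $|S|=q$ exactly, which fits the later remark that $S^\times$ is a coset of $\FF_q^\times$.
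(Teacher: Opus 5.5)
Your treatment of linearity and of $\Tr(a)\in\FF_q$ matches the paper's. The routes differ only on surjectivity.

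The paper writes $\FF_{q^2}=\FF_q[y]/f(y)$, sets $y^q=\alpha+\beta y$, and tries to exhibit an explicit element $by$ with nonzero trace. You instead bound the kernel by counting roots of $x^q+x$ and conclude $|\operatorname{im}\Tr|=q^2/|\ker\Tr|\ge q$.

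Your route is basis-free and gives $|S|=q$ as a byproduct. It is also the more reliable of the two, because the paper's computation contains a slip. Carried out correctly, it gives $\Tr(by)=b\alpha+b(1+\beta)y$ for $b\in\FF_q$. Since the trace is $\FF_q$-valued, one must have $\beta=-1$, so the $y$-coefficient always vanishes and everything rests on $\alpha\neq 0$. That can fail: for $f(Y)=Y^2-\nu$ with $\nu$ a nonsquare in odd characteristic, $y^q=-y$, so $\Tr(by)=0$ for every $b\in\FF_q$. The simplest case is $\FF_9=\FF_3[y]/(y^2+1)$.

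The explicit-witness strategy is easy to repair. For odd $p$, use $\Tr(1)=2\neq 0$. For $p=2$, irreducibility of $Y^2+sY+t$ forces $s\neq 0$, and then $\Tr(y)=s\neq 0$. Its only advantage over your argument is that it names a concrete element of nonzero trace. The later Gaussian-elimination steps (solving $a+a^q=-\rho^{r(q+1)}$) need only existence, so your argument suffices for every use the paper makes of this fact.
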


\begin{proof}
Its image lies in $\FF_q$ because $\Tr(a)^q = (a + a^q)^q = a^q + a^{q^2} = a^q + a = \Tr(a)$, and $\FF_q$ consists of all elements of its algebraic closure that satisfy $x^q=x$. It is $\FF_q$-linear because $\Tr(ab) = ab + a^q b^q$, and if $a \in \FF_q$, then $a^q=a$, so we get $a^q b^q = a b^q$, and thus $\Tr(ab) = a\Tr(b)$ for $a \in \FF_q$. Finally, to see it is surjective, we note that as an $\FF_q$-linear map to $\FF_q$, its image must be $0$ or all of $\FF_q$, so it suffices to show it is not always zero. To see this, write $\FF_{q^2} = \FF_q[y] / f(y)$ for some irreducible degree-2 polynomial $f \in \FF_q[y]$. Then every element of $\FF_{q^2}$ can be written uniquely as $a + by$ for $a,b \in \FF_q$. In particular, $y^q = \alpha + \beta y$ for some $\alpha,\beta \in \FF_q$. Then we have $\Tr(by) = by + b^q y^q = by + b^q (\alpha + \beta y) = b^q \alpha + y(b + \beta)$. Choosing $b \neq -\beta$ thus gives $\Tr(by) \neq 0$.
\end{proof}

For a subgroup $A\leq \mathbb{F}_{q^2}^{\times}$, we write
$
A^{(3)}:=\{a^3:a\in A\}
$
for its subgroup of cubes. Unparenthesized powers $A^d$ have their usual meaning as Cartesian powers.

\textbf{Convention.} We denote elements of $\PSU_3(q)$ using $3 \times 3$ matrices, which are representatives of their images modulo scalar matrices; to help remind the reader of this we put such matrices in [square brackets] rather than rounded parentheses.

For $x, y \in \mathbb{F}_{q^{2}}$, define:
\begin{align*}
u(x,y) := \begin{bmatrix}
1 & 0 & 0 \\
y & 1 & 0 \\
x & -y^{q} & 1
\end{bmatrix}.
\end{align*}

\noindent Now fix a generator $\rho$ of $\mathbb{F}_{q^{2}}^{\times}$, and define:
\begin{align*}
h := \begin{bmatrix} 
\rho & 0 & 0 \\ 
0 & \rho^{q-1} & 0 \\
0 & 0 & \rho^{-q}
\end{bmatrix}.
\end{align*}
Put
\[
   d:=\gcd(3,q+1),\quad
   M:=q^2-1,\quad
   N:=\frac{M}{d}.
\]
Then $h$ has order $N$ as an element of $\PSU_3(q)$ (that is, modulo scalar matrices): let $\hat{h}$ denote the same matrix but as an element of $\SU_3(q)$. Then $\hat{h}$ has order $M = q^2-1$, since $\hat{h}_{11}=\rho$ is a generator of $\FF_{q^2}^{\times}$. 
We have $h^a = 1$ in $\PSU_3(q)$ iff $\hat{h}^a$ is scalar in $\SU_3(q)$. The latter happens iff $a \equiv a(q-1) \equiv -qa \pmod{q^2-1}$. The fact that the smallest solution to these equations is $a=N = (q^2-1)/gcd(3,q+1)$ is then a straightforward exercise in modular arithmetic.

The subgroup $B = \langle U, h \rangle$ is known as the (or a) \emph{Borel subgroup} of $\PSU_{3}(q)$, where
\[
U = \{ u(x,y) : x,y \in \FF_{q^2}, x + x^q + y^{q+1} = 0 \} = \{ u(x,y) : \text{Tr}(x) + \text{Nm}(y) = 0 \}.
\]
Observe that $h$ normalizes $U$. In particular, $\langle U, h \rangle$ is a solvable group. Furthermore, with the convention $g^a=a^{-1}ga$, we have:
\[
   u(x,y)^{h}=u(\lambda x,\beta y),
   \quad \text{where} \quad
   \lambda:=\rho^{q+1},\quad
   \beta:=\rho^{2-q}.
\]
The element $\lambda$ is primitive in $\FF_q^\times$, and
$\gcd(M,2-q)=\gcd(3,q+1)=d.$ 

We will now describe the structure of $U$. The subgroup $U$ has order $q^3$ and center:
\[
Z(U) = \{ u(x,0) : x + x^q = 0 \} = \{ u(s,0) : s \in S \}.
\]
We get $|Z(U)| = q$, and $U/Z(U)$ is an elementary Abelian group of order $q^2$. So $Z(U)$ admits an $\mathbb{F}_{p}$-basis $z_1, \ldots, z_{\exponent}$, and there exist elements $v_1, \ldots, v_{2\exponent} \in U$ whose images constitute an $\mathbb{F}_{p}$-basis of $U/Z(U)$. So $\langle U, h \rangle$ is generated by $\{ z_1, \ldots, z_{\exponent}, v_{1}, \ldots, v_{2\exponent},h\}$. In particular, every element $g$ of $\langle U, h \rangle$ can be written uniquely as: 
\begin{align} \label{eq:DecPSU}
g = \left( \prod_{i=1}^{\exponent} z_{i}^{\alpha_{i}} \right) \cdot \left( \prod_{i=1}^{2\exponent} v_{i}^{\beta_{i}} \right) \cdot h^{\gamma},
\end{align}
where for all $i \in [\exponent]$ and $j \in [2\exponent]$, $0 \leq \alpha_i, \beta_j < p$, and $0 \leq \gamma < N$. We refer to the expression on the right-hand side of Equation~(\ref{eq:DecPSU}), which we can think of as a word in the above generating set, as $\text{dec}(g)$ (for ``decomposition''). 

Now $\PSU_{3}(q)$ is generated by $\langle U, h \rangle$ and an extra element:
\[
t := \begin{bmatrix}
0 & 0 & 1 \\
0 & -1 & 0 \\
1 & 0 & 0
\end{bmatrix}.
\]
In particular, $t^2 = 1$ and $h^{t} = h^{-q}$, so $t$ normalizes $\langle h \rangle$. 

Now for any $u=u(x,y) \in U \setminus \{1\}$, there exist unique $g(u), f(u) \in U$ and unique $h(u) \in \langle h \rangle$ such that
\begin{align} \label{eq:PSUSteinberg}
u(x,y)^{t} = g(u) \cdot h(u) \cdot t \cdot f(u).    
\end{align} 
Hulpke and Seress gave the following explicit matrix representations for $f(u), g(u), h(u)$, and $u^{h}$:
\begin{align*}
f(u(x,y)) = \begin{bmatrix}
1 & 0 & 0 \\
-y/x^{q} & 1 & 0 \\
1/x & y^{q}/x & 1
\end{bmatrix},  \quad
g(u(x,y)) = \begin{bmatrix}
1 & 0 & 0 \\
-y/x & 1 & 0 \\
1/x & y^q/x^q & 1
\end{bmatrix} \\
h(u(x,y)) = \begin{bmatrix}
x & 0 & 0 \\
0 & x^{q-1} & 0 \\
0 & 0 & x^{-q}
\end{bmatrix}, \quad
u(x,y)^{h} = \begin{bmatrix}
1 & 0 & 0 \\
y\rho^{2-q} & 1 & 0 \\
x\rho^{q+1} & -y^{q}\rho^{2q-1} & 1
\end{bmatrix}.
\end{align*}
Furthermore, Hulpke and Seress showed that if $3 \nmid (q+1)$, then $\PSU_{3}(q)$ admits a short presentation using only $3$ relators of the form (\ref{eq:PSUSteinberg}). If $3 \mid (q+1)$, then $\PSU_{3}(q)$ admits a short presentation using only $7$ relators of the form (\ref{eq:PSUSteinberg}). We will recall the precise choices for these relators in Construction~\ref{con:ShortPresentationPSU} below.

\subsection{Uniform presentation for the Borel subgroup of $\PSU_3(q)$}
\label{PreliminariesPSU}

In this section, we prove uniformity of a presentation of the Borel subgroup that is essentially that of \cite[Lemma~13]{HulpkeSeress}; the main difference is that where they merely state that generators with certain properties exist (and for their result any generators with those properties will do), we must construct such generators uniformly. The main result of this section is:

\begin{lemma} \label{lem:PSU3-Borel-compressed}
From the standard name (Definition~\ref{def:standard-name}) for $\PSU_3(q)$, in $\DTIME(\polylog(q))$ one can construct a presentation of the Borel subgroup $\langle U,h\rangle$ that has word length $O(\log^2 q)$. 
\end{lemma}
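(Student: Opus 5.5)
Our plan is to follow the polycyclic presentation implicit in Equation~(\ref{eq:DecPSU}) and make each ingredient explicitly computable from the standard name. The standard name supplies $p$, an irreducible polynomial defining $\FF_{q^2}$, and a primitive element $\rho$. We first produce the explicit generators $z_1,\dots,z_e,v_1,\dots,v_{2e},h$ in $\DTIME(\polylog(q))$. Begin by computing an $\FF_p$-basis of $\FF_{q^2}$ consisting of powers of the field generator. For the $z_i$, take an $\FF_p$-basis $s_1,\dots,s_e$ of $S=\ker\Tr$; this is a linear-algebra computation over $\FF_p$ on a $2e\times e$ matrix (cf. Fact~\ref{fact:trace-surjective}), and we set $z_i=u(s_i,0)$. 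For the $v_j$, take an $\FF_p$-basis $y_1,\dots,y_{2e}$ of $\FF_{q^2}$ and, for each $y_j$, solve $\Tr(x_j)=-\Nm(y_j)$ for some $x_j$. Surjectivity of $\Tr$ guarantees a solution, and we can write one down explicitly, e.g.\ $x_j=-\Nm(y_j)\,c$ with $c$ a fixed element satisfying $\Tr(c)=1$, computed once. Then $v_j=u(x_j,y_j)$. Since $U/Z(U)\cong\FF_{q^2}$ via $u(x,y)\mapsto y$, the images of the $v_j$ form a basis.

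Next we write the relations, each of which is given as a word that we compute by evaluating matrices and then solving for coordinates. There are five families. The first is the power relations $z_i^p=1$ and $v_j^p=w_j$, where $w_j$ is a word in the $z$'s. The second is the commutator relations $[v_j,v_{j'}]=$ a word in the $z$'s, and $[z_i,\cdot]=1$. The third is the conjugation relations $z_i^h$ and $v_j^h$, computed via $u(x,y)^h=u(\lambda x,\beta y)$ and expressed in normal form. The fourth is $h^N=1$, which we encode by the repeated-squaring trick of Remark~\ref{rem:presentation-length-conventions} so that it contributes only $O(\log q)$ letters. The fifth consists of the $\FF_p$-coordinates of each right-hand side.

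Each right-hand side requires the same computation. Multiply the $3\times 3$ matrices over $\FF_{q^2}$ to obtain $u(x',y')$, read off $y'$, and solve for its $\FF_p$-coordinates in the basis $\{y_j\}$. Then divide off the corresponding product of $v_j$'s and solve for the $S$-coordinate in the basis $\{s_i\}$. This is Gaussian elimination of size $O(e)$ over $\FF_p$, so it runs in $\polylog(q)$ time. Every word has length $O(e\,p)$ if we write exponents in unary, so we must again apply the binary-exponent compression of Remark~\ref{rem:presentation-length-conventions}, which gives $O(e\log p)=O(\log q)$ per relation.

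For the count, the commutator relations number $O(e^2)$ and each has length $O(\log q)$, which naively gives $O(\log^3 q)$. This is the main obstacle. To reach $O(\log^2 q)$ we will follow \cite[Lemma~13]{HulpkeSeress}. The idea is to avoid all pairwise commutators: when $\beta$ and $\lambda$ generate the relevant fields, the $h$-conjugation action together with a single relation each for $[v_1,v_1^{h^k}]$ and the $p$-th powers already determines $U$. Concretely, we take generators $v=u(x_0,1)$, $z=u(s_0,0)$, and $h$. We express $v_j$ as $v^{h^{k_j}}$ for suitable $k_j$, using the action $y\mapsto\beta y$. This requires choosing an $\FF_p$-basis of the form $\{\beta^{k}\}$ when $\FF_p(\beta)=\FF_{q^2}$, and otherwise handling the exceptional small cases with $d=3$ by including $\beta$-multiples together with a companion element. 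We then impose the relations asserting that $y\mapsto\beta y$ satisfies its minimal polynomial. This gives $O(e)$ relations, each of length $O(\log q)$, for a total of $O(\log^2 q)$. The remaining work is to verify in $\polylog(q)$ time that these relations hold, which consists of field arithmetic, the factoring of $q^2-1$ supplied by standard-name data, and linear algebra.
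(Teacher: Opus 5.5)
Your diagnosis of the length obstacle (the $O(e^2)$ pairwise commutators) is correct, and compressing via the $\langle h\rangle$-action is the right idea. However, the step that carries the lemma is only asserted: you never show that your short relation set presents $B=\langle U,h\rangle$ rather than a larger group.

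In the compressed version (generators $v=u(x_0,1)$, $z=u(s_0,0)$, $h$) you impose $h^N=1$, the $p$-th powers, $[v,v^{h^k}]=$ central word, and a minimal-polynomial relation for $\beta$ acting on $v$. No relation in your list controls the $\langle h\rangle$-orbit of $z$: there is no minimal-polynomial relation for $\lambda$, and no analogue of the paper's $\hat h^{-1}\hat z_i\hat h=\widehat{\operatorname{dec}}_U(z_i^h)$. The relations making $z$ central are also absent from the compressed list. So in the presented group $\hat B$, the subgroup $\hat Z$ generated by the $h$-conjugates of $z$ has no evident bound, and the count $|\hat Z|\le q$, $|\hat U|\le q^3$, $|\hat B|\le q^3N$ cannot be carried out as it stands.

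Relatedly, you describe the remaining work as verifying that the relations hold. They hold in $B$ by construction; what needs proof is the converse order bound. Citing \cite[Lemma~13]{HulpkeSeress} does not supply it, because the presentation there (the one the paper makes uniform) uses different generators and relations from yours. Also, the factorization of $q^2-1$ is not part of the standard name and is not needed: $N=(q^2-1)/\gcd(3,q+1)$, and minimal polynomials come from Gaussian elimination on powers.

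The gap closes along your lines. Add the relations $[z,z^{h^j}]=1$ for $1\le j<e$, $[z,v^{h^k}]=1$ for $0\le k<2e$, and $[[z^{m_\lambda(x)}]]_h=1$. Write every central right-hand side as $[[z^{g(x)}]]_h$ with $\deg g<e$, using coordinates in the basis $\lambda^j s_0$ of $S$. The argument then runs as follows:
\begin{enumerate}
\item Set $K:=\langle v^{h^k}\ (k<2e),\ z^{h^j}\ (j<e)\rangle$. The two minimal-polynomial relations give $K^h\subseteq K$.
\item Since $h^N=1$, $K$ contains every $h$-conjugate of $v$ and $z$, so $K$ equals $\hat U$.
\item Hence $z$ and every $z^{h^j}$ are central in $\hat U$, and $|\hat Z|\le p^e$.
\item The $2e$ images of the $v^{h^k}$ commute pairwise modulo $\hat Z$ (conjugate $[v,v^{h^{j-i}}]$ by $h^i$) and have order dividing $p$. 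So $|\hat U|\le q^3$, and the surjection onto $B$ is an isomorphism.
\end{enumerate}
This uses $O(e)$ relations, each of binary-exponent length $O(\log q)$.

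Your worry about $\FF_p(\beta)\ne\FF_{q^2}$ is moot. For $q>2$, $\beta=\rho^{2-q}$ has order $(q^2-1)/d>q-1$, so it lies in no proper subfield and $\{\beta^k\}_{k<2e}$ is always a basis.

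The repaired argument is genuinely different from the paper's. The paper keeps $3e+1$ generators with bases $\delta\lambda^{i-1}$ and $\rho^t$, and ties them together with orbit relations. It needs one orbit representative if $d=1$ and three if $d=3$, because the $\rho^t$ then fall into different cosets of $\langle\beta\rangle$. It closes the $h$-action with explicit conjugation relations through $\widehat{\operatorname{dec}}_U$. Yours uses three generators and GKKL-style polynomial words, as in the paper's Suzuki section, and needs no case split on $d$.
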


We start by finding elements (as in the previous section) that are a basis for the elementary Abelian group $Z(U)$, and finding elements whose images are a basis for the elementary Abelian group $U/Z(U)$.
Let $\rho\in \mathbb{F}_{q^{2}}^\times$ be the supplied primitive element in the standard name (Def.~\ref{def:standard-name}). 
 Define 
\begin{equation} \label{eq:m-delta}
m_\delta:= 
\begin{cases} 
0,&p=2,\\[1mm] 
(q+1)/2,&p\ne2, 
\end{cases}  
\end{equation}
and let $\delta:=\rho^{m_\delta}$. Then $\delta^q=-\delta$, and hence $S = \ker(\Tr)=\delta\FF_q$. Recall that $\lambda = \rho^{q+1}$. For $1\le i\le\exponent$, put $ z_i:=u(\delta\lambda^{i-1},0). $ Since $\lambda$ has degree $\exponent$ over $\FF_p$, the elements $1,\lambda,\ldots,\lambda^{\exponent-1}$ form an $\FF_p$-basis of $\FF_q$. Hence $z_1,\ldots,z_\exponent$ form an $\FF_p$-basis of $Z(U)$, and $ z_i=z_1^{h^{i-1}}. $ We next choose the generators for $U/Z(U)$. Since $\rho$ has degree $2\exponent$ over $\FF_p$, the elements $ \rho,\rho^2,\ldots,\rho^{2\exponent} $ form an $\FF_p$-basis of $\mathbb{F}_{q^{2}}$. We now compute generators for $U/Z(U)$.  To do so, we distinguish the following cases according to $d:=\gcd(3,q+1)\in\{1,3\}.$

\begin{itemize}
\item \textbf{Case 1:} Suppose that $d=1$.  
Using Gaussian elimination over $\mathbb F_p$, construct
$a_1\in\mathbb F_{q^2}$ satisfying \[ a_1+a_1^q=-\rho^{q+1}, \] and put $v_1:=u(a_1,\rho)$. Such $a_1$ is guaranteed to exist by Fact~\ref{fact:trace-surjective} and the fact that $\rho^{q+1}$ is in $\FF_q$. Then $v_1\in U$, since $\Tr(a_1)+\Nm(\rho) = -\rho^{q+1}+\rho^{q+1} = 0.$

For every integer $b$, $h^{-b}v_1h^b =u\!\left(\lambda^b a_1,\rho\beta^b\right),$ so its second parameter is $ \rho\beta^b=\rho^{\,1+(2-q)b}.$ We have $\rho^{1+(2-q)b} = \rho^{t}$ precisely when 
\[(2-q)b\equiv t-1\pmod M,\]
(where, recall, $M=q^2-1$). Since $d=\gcd(M,2-q)=1$,  for each $1\le t\le2\exponent$ there is a unique $b(t)\in\{0,\ldots,M-1\}$
satisfying the above congruence.  Define $v_t:=h^{-b(t)}v_1h^{b(t)}.$ Then the second parameter of $v_t$ is $\rho^t$.  Since
$\rho,\rho^2,\ldots,\rho^{2\exponent}$ is an $\mathbb F_p$-basis of $\mathbb \FF_{q^2}$, the images of
$v_1,\ldots,v_{2\exponent}$ form an $\mathbb F_p$-basis of $U/Z(U)$.
To compute $b(t)$ from $t$, the inverse of $2-q$ modulo $M$ is computed by the extended Euclidean
algorithm in $O(\log q)$ iterations. 

\item \textbf{Case 2:}
Suppose that $d=3$ and $\exponent\ge2$ (recall $q=p^\exponent$).  For $r\in\{1,2,3\}$, use Gaussian elimination over $\mathbb F_p$ to construct three distinct 
$a_r\in\mathbb F_{q^2}$ satisfying $a_r+a_r^q=-\rho^{r(q+1)}$, which exist by Fact~\ref{fact:trace-surjective}, and put $v_r:=u(a_r,\rho^r)$. 

For $1\le t\le2\exponent$, let $r(t)\in\{1,2,3\}$ be the unique integer satisfying $r(t)\equiv t\pmod 3.$ For every integer $b$, the second parameter of $h^{-b}v_{r(t)}h^b$ is
$\rho^{r(t)}\beta^b =\rho^{\,r(t)+(2-q)b}$. Now $\rho^{\,r(t) + (2-q)b} = \rho^t$ precisely when
$(2-q)b\equiv t-r(t)\pmod M.$ Both sides are divisible by $3$, and $\gcd\!\left(\frac{2-q}{3},\,\frac M3\right)=1.$
Thus the reduced integer congruence
\[
\frac{2-q}{3}\,b(t) \equiv \frac{t-r(t)}{3} \pmod N, \,\, N:=M/3,
\]
has a unique solution $b(t)\in\{0,\ldots,N-1\}$.  

Define $v_t:=h^{-b(t)}v_{r(t)}h^{b(t)}.$ Then the second parameter of $v_t$ is $\rho^t$, so the images of the
$v_t$ form an $\mathbb F_p$-basis of $U/Z(U)$.  

\item \textbf{Case 3:}
Suppose that $d=3$, and $\exponent=1$ (that is, $q=p$).  Then $U/Z(U)$ has
$\mathbb F_p$-dimension two, so a basis cannot contain representatives of all three nonzero $\langle h\rangle$-orbits. For $t\in\{1,2\}$, use Gaussian elimination to construct two distinct 
$a_t\in\mathbb F_{q^2}$ satisfying $a_t+a_t^q=-\rho^{t(q+1)},$ and put $v_t:=u(a_t,\rho^t).$ Such $a_t$ are guaranteed to exist by Fact~\ref{fact:trace-surjective} and the fact that $\rho^{q+1}$ is in $\FF_q$.

Since $\rho,\rho^2$ form an $\mathbb F_p$-basis of
$\mathbb \FF_{q^2}$, the images of $v_1,v_2$ form a basis of $U/Z(U)$. 
\end{itemize}

In order to uniformly compute the relations between our generators, we will need the following uniform method of solving the constructive membership problem in $U$ relative to the generating set we constructed above.

\begin{lemma} 
\label{lem:PSU3-dec} 
Given $g=u(x,y)\in U$ as a matrix, one can compute, in $\DTIME(\polylog(q))$, a word $ \operatorname{dec}_U(g) =z_1^{c_1}\cdots z_\exponent^{c_\exponent} v_1^{b_1}\cdots v_{2\exponent}^{b_{2\exponent}}, \,\, b_i,c_j\in\{0,\ldots,p-1\},$ which evaluates to $g$. Its binary-exponent length is $O(\log q)$. \end{lemma}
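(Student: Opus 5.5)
We compute the word by peeling off generators layer by layer, first modulo $Z(U)$ and then inside $Z(U)$, reducing each layer to $\FF_p$-linear algebra of dimension $O(\exponent)$.

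\textbf{Step 1: the $U/Z(U)$ layer.} The map $u(x,y)\mapsto y$ is a homomorphism $U\to(\FF_{q^2},+)$. This holds because the $(2,1)$ entry of a product of lower unitriangular matrices of this form is the sum of the $(2,1)$ entries. Its kernel is exactly $Z(U)$. By construction, the second parameter of $v_t$ is $\rho^t$ in Cases~1 and~2, and also in Case~3. So the coefficients $b_1,\dots,b_{2\exponent}\in\{0,\dots,p-1\}$ are simply the coordinates of $y$ in the $\FF_p$-basis $\rho,\rho^2,\dots,\rho^{2\exponent}$ of $\FF_{q^2}$.

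We obtain these coordinates by writing each $\rho^t$ as a polynomial modulo the irreducible polynomial from the standard name, which takes $O(\exponent)$ multiplications. We then solve a $2\exponent\times 2\exponent$ linear system over $\FF_p$ by Gaussian elimination, in $\poly(\exponent,\log p)=\polylog(q)$ time.

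\textbf{Step 2: the $Z(U)$ layer.} We compute the matrix $w:=v_1^{b_1}\cdots v_{2\exponent}^{b_{2\exponent}}$ explicitly and set $g':=g\,w^{-1}$. To evaluate $w$ we first need each $v_t$ as a matrix. In Cases~1 and~2, $v_t=u(\lambda^{b(t)}a_{r(t)},\,\rho^{t})$ by the conjugation formula for $h$. Here $\lambda^{b(t)}$ is computed by repeated squaring in $\FF_{q^2}$, and $b(t)$ comes from the extended Euclidean algorithm, as already described. Powers $v_t^{b_t}$ with $b_t<p$ are computed by repeated squaring of $3\times 3$ matrices. Each step costs $\polylog(q)$, and there are $O(\exponent)$ factors.

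Since $g$ and $w$ have the same image in $U/Z(U)$, we have $g'\in Z(U)$, so $g'=u(s,0)$ with $s\in S=\delta\FF_q$. We then write $s/\delta\in\FF_q$ in the basis $1,\lambda,\dots,\lambda^{\exponent-1}$. This is again Gaussian elimination over $\FF_p$, now in the $\exponent$-dimensional subspace $\FF_q\subseteq\FF_{q^2}$, expressed in the coordinates of $\FF_{q^2}$. The result gives $c_1,\dots,c_\exponent$.

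Because $Z(U)$ is central and elementary abelian, $u(\sum c_i\delta\lambda^{i-1},0)=\prod z_i^{c_i}$. Hence $g=z_1^{c_1}\cdots z_\exponent^{c_\exponent}\,v_1^{b_1}\cdots v_{2\exponent}^{b_{2\exponent}}$. Note that $z$ commutes past the $v$'s, so the order of factors does not matter.

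\textbf{Length and the main obstacle.} There are $3\exponent$ factors, each with an exponent below $p$ written in binary. This gives binary-exponent length $O(\exponent\log p)=O(\log q)$ (or $O(\exponent)$ symbols with $O(\log p)$-bit exponents, depending on how exponents are counted).

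The main obstacle is Step~1's homomorphism claim and the correctness of the explicit matrices $v_t$. I must check that multiplying $u(x,y)u(x',y')$ gives second parameter $y+y'$. The first parameter picks up the cross term $x+x'-y'^{q}y$, which is harmless because we only read $y$ before correcting with $w^{-1}$. I must also verify that the conjugation formula $u(x,y)^h=u(\lambda x,\beta y)$ is applied with the sign and convention ($g^a=a^{-1}ga$) consistent with the definition $v_t=h^{-b(t)}v_{r(t)}h^{b(t)}$. If that bookkeeping goes wrong, I would instead compute $v_t$ directly by explicit matrix conjugation with $h^{b(t)}$, which is still within $\polylog(q)$ time.
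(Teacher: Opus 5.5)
Your proposal is correct and follows the paper's proof essentially step for step. You solve for the $b_t$ from $y$ in the basis $\rho,\dots,\rho^{2e}$, form $W=\prod_t v_t^{b_t}$ in the fixed order, read $s$ off $gW^{-1}=u(s,0)\in Z(U)$, and expand $s$ in the basis $\delta\lambda^{i-1}$ to get the $c_i$. (One trivial slip: the cross term in the first parameter of $u(x,y)u(x',y')$ is $-y^{q}y'$, not $-y'^{q}y$; as you say, this does not affect the argument.)
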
 

\begin{proof} 
The matrix representing $g$ has entries $ y=g_{21}, \,\, x=g_{31},$ (see Section~\ref{sec:PSUPreliminaries}) so the parameters $x$ and $y$ are read directly from the input matrix. Using Gaussian elimination over $\FF_p$, compute the unique coefficients $b_1,\ldots,b_{2\exponent}\in\FF_p$ such that $ y=\sum_{t=1}^{2\exponent}b_t\rho^t. $ Compute, in the indicated order, the matrix $ W:=\prod_{t=1}^{2\exponent}v_t^{b_t}. $ The second parameter is additive under multiplication in $U$, so the second parameter of $W$ is $ \sum_{t=1}^{2\exponent}b_t\rho^t=y. $ Consequently, $ gW^{-1}=u(s,0) $ for some $s\in S$. The element $s$ is read from the $(3,1)$-entry of the matrix $gW^{-1}$. Using Gaussian elimination again, compute the unique coefficients $c_1,\ldots,c_\exponent\in\FF_p$ satisfying $ s=\sum_{i=1}^{\exponent}c_i\delta\lambda^{i-1}. $ It follows that $g= \left(\prod_{i=1}^{\exponent}z_i^{c_i}\right) \left(\prod_{t=1}^{2\exponent}v_t^{b_t}\right). $ All linear systems have dimension $O(\exponent)$ over $\FF_p$, and all powers are evaluated by repeated squaring. Hence the word is computed in time $\polylog(q)$ and has binary-exponent length $ O(\exponent\log p)=O(\log q). $ 
\end{proof}

For $g\in U$, let $\widehat{\operatorname{dec}}_U(g)$ denote the same word with the
matrix generators replaced by abstract symbols.

\begin{construction}[Uniform presentation of the Borel subgroup] \label{con:SPBorelPSU} 
Let $\mathcal P_B$ have generators
$
\widehat z_1,\ldots,\widehat z_\exponent$,
$\widehat v_1,\ldots,\widehat v_{2\exponent}$,
$\widehat h$.
We add the following relations so that
$\widehat z_1,\ldots,\widehat z_\exponent$ generate $Z(U)$;
$\widehat v_1,\ldots,\widehat v_{2\exponent}$ are lifts of a basis of
$U/Z(U)$; and $\widehat h$ normalizes the subgroup generated by the
$\widehat z_i$ and $\widehat v_t$:

\begin{enumerate}
\item Power and order relations.
\begin{itemize}
\item For $1\le i\le\exponent$, add
$
\widehat z_i^{\,p}=1.
$

\item For $1\le t\le2\exponent$, add
$
\widehat v_t^{\,p}
=\widehat{\operatorname{dec}}_U(v_t^p).
$
Since $U/Z(U)$ has exponent $p$, we have $v_t^p\in Z(U)$.
More explicitly, $v_t^p=1$ in odd characteristic, while
$v_t^2\in Z(U)$ in characteristic two.\footnote{In characteristic 2, the quotient-basis lifts $v_i$ may have order 4, so we use the correct power relations $\widehat{v}_t^2=\widehat{\operatorname{dec}}_U\left(v_t^2\right), \,, v_t^2 \in Z(U)$ rather than asserting that these lifts always have order $p$.}
Consequently,
$\widehat{\operatorname{dec}}_U(v_t^p)$ is a word only in
$\widehat z_1,\ldots,\widehat z_\exponent$.

\item Add
$
\widehat h^{\,N}=1.
$
\end{itemize}

\item ($\widehat{z}_1$ is central) Add
$
[\widehat z_1,\widehat z_i]=1
\quad(1\le i\le\exponent),
\,\,
[\widehat z_1,\widehat v_t]=1
\quad(1\le t\le2\exponent).
$

\item Add the following quotient-commutator relations. Since
$[U,U]\le Z(U)$, every right-hand side below is a word only in the
$\widehat z_i$.
\begin{itemize}
\item If $d=1$, add
$
[\widehat v_1,\widehat v_t]
=\widehat{\operatorname{dec}}_U([v_1,v_t])
\,\,(2\le t\le2\exponent).
$

\item If $d=3$ and $\exponent\ge2$, add
$
[\widehat v_r,\widehat v_t]
=\widehat{\operatorname{dec}}_U([v_r,v_t])
$
for $r\in\{1,2,3\}$ and $1\le t\le2\exponent$.

\item If $d=3$ and $\exponent=1$, add 
$
[\widehat v_1,\widehat v_2]
=\widehat{\operatorname{dec}}_U([v_1,v_2]).
$
\end{itemize}

\item For every $1\le i\le\exponent$, add
$
\widehat h^{-1}\widehat z_i\widehat h
=\widehat{\operatorname{dec}}_U(z_i^h),
$
and for every $1\le t\le2\exponent$, add
$
\widehat h^{-1}\widehat v_t\widehat h
=\widehat{\operatorname{dec}}_U(v_t^h).
$

\item Add the orbit relations
$
\widehat z_i
=\widehat h^{-(i-1)}
\widehat z_1
\widehat h^{\,i-1}
\,\,(2\le i\le\exponent).
$
If $d=1$, also add
$
\widehat v_t
=\widehat h^{-b(t)}
\widehat v_1
\widehat h^{\,b(t)}
\,\,(2\le t\le2\exponent).
$
If $d=3$ and $\exponent\ge2$, also add
$
\widehat v_t
=\widehat h^{-b(t)}
\widehat v_{r(t)}
\widehat h^{\,b(t)}
\,\,
\bigl(t\notin\{1,2,3\}\bigr).
$
\end{enumerate}
\end{construction}

We now have all the ingredients for a uniform presentation of the Borel subgroup:

\begin{proof}[Proof of Lemma~\ref{lem:PSU3-Borel-compressed}]
All displayed relations hold for the indicated matrices, so there is a surjective homomorphism from the presented group onto $B$.

Let $\widehat U$ be the subgroup generated by the $\widehat z_i$ and
$\widehat v_t$.  The conjugation relations and $\widehat h^N=1$ imply
that $\widehat h$ normalizes $\widehat U$.  Since every $\widehat z_i$
is an $\widehat h$-conjugate of $\widehat z_1$, and $\widehat z_1$
commutes with every generator of $\widehat U$, all the $\widehat z_i$ are
central in $\widehat U$.

Suppose first that $d=1$.  Every $\widehat v_i$ is an
$\widehat h$-conjugate of $\widehat v_1$.  Conjugating the commutator relations and expanding commutators of products expresses every $[\widehat v_i,\widehat v_j]$ as a word in the central generators
$\widehat z_1,\ldots,\widehat z_\exponent$.  If $d=3$ and $\exponent\ge2$, the
same argument uses the three representatives
$\widehat v_1,\widehat v_2,\widehat v_3$.  If $d=3$ and $\exponent=1$, the
only quotient commutator relation was included directly. 

The power relations now let us collect every word in $\widehat U$ to the form
$
   \widehat z_1^{c_1}\cdots\widehat z_\exponent^{c_\exponent}
   \widehat v_1^{b_1}\cdots\widehat v_{2\exponent}^{b_{2\exponent}}
$
with $0\le b_i,c_j<p$.
Hence $|\widehat U|\le p^{3\exponent}=q^3$.  The conjugation relations let us move
every occurrence of $\widehat h$ to the right, so the whole presented
group has order at most $q^{3}N=|B|$. The surjection onto $B$ is thus an isomorphism.

We now account for the length. There are $O(\exponent)$ relations. Each word has binary-exponent length $O(\log q)$, and every orbit exponent has $O(\log q)$ bits. Thus the total binary-exponent length is $O(\exponent\log q)\subseteq O(\log^2q)$. By Lemma \ref{lem:PSU3-dec}, every right-hand side is computed using field arithmetic, Gaussian elimination, and the  Euclidean algorithm in time polynomial in $\log q$. By Remark~\ref{rem:presentation-length-conventions} this can be uniformly converted into a presentation with the asymptotically the same word-length.
\end{proof}

\subsection{Proof of Proposition~\ref{prop:UniformityPSU}} \label{sec:uniformityPSU}
To complete the uniform presentation of $\PSU_3(q)$, we 
now make the three and seven additional relations of \cite[Theorems~15--16]{HulpkeSeress} uniform.  We choose their
first coordinates to be explicitly known powers of $\rho$. 

Recall several parameters from above: $M=q^2-1$, $\delta=\rho^{m_{\delta}}$, where $m_{\delta}$ is defined in \eqref{eq:m-delta}.
Let
\[
   \varepsilon_-:=
   \begin{cases}
   0,&p=2,\\[1mm]
   M/2,&p\ne2,
   \end{cases}
\]
so that $-1=\rho^{\varepsilon_-}$.  Put
\[
   x:=\delta\rho,
   \,\,
   s:=-\delta(\rho+\rho^q),
   \,\,
   x':=x+s=-\delta\rho^q.
\]
The element $\rho+\rho^q$ belongs to $\FF_q$ and is nonzero. Indeed, in characteristic two, equality $\rho+\rho^q=0$ would put $\rho$ in
$\FF_q$; in odd characteristic it would imply $\rho^{q-1}=-1$, so the order of $\rho$ would divide $2(q-1)<q^2-1$. Hence $s\in S^\times$. Moreover, $x\in\rho S^\times$
and $ x'=-\delta\rho^q =\delta\rho^{-1}(-\rho^{q+1}) \in S^\times/\rho$. 
Here, $S^\times/\rho$ denotes the set $\{s/\rho:s\in S^\times\}$.
The relevant first coordinates are known powers:
$\delta=\rho^{m_\delta}, \,\, x=\rho^{m_\delta+1}, \,\, x'=\rho^{m_\delta+q+\varepsilon_-}.$

Set $c:=u(\delta,0)=z_1,\,\,b:=u(s,0).$ 
Set $r_0:=-\Tr(x)\in\FF_q^\times.$ Indeed, $x\notin S$, so $\Tr(x)\ne0$. Since $s\in S$ and $x'=x+s$, we also have $-\Tr(x')=r_0.$

Below we use the standard deterministic square-root algorithm for finite
fields of odd order with a supplied quadratic nonresidue: given a known
nonsquare in $\mathbb F_q^\times$, quadratic residuosity testing and
square-root extraction for squares are computable in time polynomial in
$\log q$; see \cite{AMM77}.  When $p$ is odd, the primitive element
$\lambda=\rho^{q+1}$ is such a known nonsquare.

\begin{lemma}
\label{lem:PSU3-norm-lifts-compressed}
Given $r\in\FF_q^\times$, one can compute, in deterministic time
polynomial in $\log q$, an element $L(r)\in \FF_{q^2}^\times$ such that
$\Nm(L(r))=r.$
If $d=3$, one can moreover compute such an element $L_3(r)$ satisfying
$
L_3(r)\in(\FF_{q^2}^\times)^{(3)}.
$
\end{lemma}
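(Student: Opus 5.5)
The norm map $\Nm\colon \FF_{q^2}^\times \to \FF_q^\times$, $a \mapsto a^{q+1}$, is a surjective homomorphism of cyclic groups. The plan is to exploit the supplied primitive element $\rho$. Since $\lambda = \rho^{q+1} = \Nm(\rho)$ generates $\FF_q^\times$, every $r \in \FF_q^\times$ can be written as $r = \lambda^j$ for some $0 \le j < q-1$. We then take $L(r) := \rho^j$, which satisfies $\Nm(\rho^j) = \lambda^j = r$.

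The only nontrivial step is computing $j$, a discrete logarithm in $\FF_q^\times$. In general this is not known to be polynomial-time, so the plan is to avoid it by a different route. I would first try a direct construction via trace and quadratic extension. Write $\FF_{q^2} = \FF_q(\theta)$, using a basis obtained from $\rho$ by linear algebra over $\FF_p$. The norm is then a binary quadratic form $Q(a,b) = \Nm(a + b\theta)$ over $\FF_q$, and the task is to find $(a,b)$ with $Q(a,b) = r$. In odd characteristic, we diagonalize $Q$ as $a^2 - \nu b^2$ with $\nu$ a nonsquare; we may take $\nu = \lambda$ after rescaling $\theta$, since $\lambda$ is a known nonsquare. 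The strategy is then to try $b = 0, 1, 2, \dots$ and test whether $r + \nu b^2$ is a square, extracting the root by \cite{AMM77} when it is. Roughly half of all $b$ succeed, but a deterministic guarantee needs an explicit bound. Because the values of $r + \nu b^2$ over $b \in \FF_p$ must hit a square within $O(1)$ tries by a Weil-type character sum bound only for large $p$, I would instead fall back to the following argument. Any element $a \in \FF_{q^2}$ with $\Tr(a)$ chosen suitably has a controllable norm. Alternatively, choose $L(r) = c\,\rho^{k}$, where the exponent $k \in \{0,1\}$ fixes the square class of $r$ (since $\Nm(\rho) = \lambda$ is a nonsquare), and $c \in \FF_q$ satisfies $c^2 = r/\lambda^{k}$. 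This uses $\Nm(c) = c^2$ for $c \in \FF_q$. This last construction is clean and deterministic: test whether $r$ is a square by computing $r^{(q-1)/2}$, set $k$ accordingly, and extract $c$ via \cite{AMM77} with nonresidue $\lambda$. In characteristic $2$, every element of $\FF_q$ is a square, with $c = r^{q/2}$, so $L(r) = r^{q/2}$ already works.

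For the cube condition when $d = 3$, note that $3 \mid q+1$, so $3 \nmid q-1$ and $3 \nmid q$, while $3 \mid q^2-1$. Since $\gcd(3, q-1) = 1$, cubing is a bijection on $\FF_q^\times$. Choose $L_3(r) := L(r')^3$, where $r' := r^{e}$ and $e$ is the inverse of $3$ modulo $q-1$, computed by the extended Euclidean algorithm. Then $\Nm(L(r')^3) = (r')^3 = r$, and $L_3(r)$ is visibly a cube. All steps consist of repeated squaring, a Euclidean computation, and one square root, each polynomial in $\log q$. The main point to verify carefully is the square-class bookkeeping in odd characteristic, namely that $\lambda$ is a nonsquare in $\FF_q$; this holds because it is primitive and $q$ is odd.
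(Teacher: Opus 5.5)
Your final construction of $L(r)$ is exactly the paper's: $L(r)=c\rho^{k}$ in odd characteristic, with $k\in\{0,1\}$ recording the square class of $r$ and $c^2=r/\lambda^{k}$, and $L(r)=r^{q/2}$ in characteristic $2$.

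For $L_3(r)$ you take a different route. You use $\gcd(3,q-1)=1$ to compute the cube root $r'$ of $r$ inside $\FF_q^\times$ by a single exponentiation, and set $L_3(r)=L(r')^3$. Being a cube is then automatic, and the norm condition is just multiplicativity of $\Nm$. The paper instead uses the structural fact that, when $3\mid q+1$, $\FF_q^\times=\langle\rho^{q+1}\rangle$ already lies in $(\FF_{q^2}^\times)^{(3)}$. It then reruns the square/nonsquare split with $\lambda^3$ in place of $\lambda$, taking $L_3(r)=a$ with $a^2=r$, or $L_3(r)=\rho^3a$ with $a^2=r/\lambda^3$.

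Your version is a generic reduction: any norm-lift $L$ yields $L_3$, with no second case analysis, at the price of an extended-Euclid step. The paper's version avoids the modular inverse but needs the containment above.

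The two outputs in fact agree up to the sign of a square root. Cubing preserves square classes in odd characteristic, so $r'$ and $r$ determine the same $k$. Moreover $(c^3)^2=(r'/\lambda^k)^3=r/\lambda^{3k}$, so your $\rho^{3k}c^3$ is one of the paper's admissible choices. In characteristic $2$ the two coincide exactly.

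The opening detours (discrete logarithm, searching over $b$ in $a^2-\lambda b^2$, the remark about traces) play no role in the final argument and can be cut. Also rename your inverse exponent, since $e$ already denotes $\log_p q$ in this section.
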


\begin{proof}
Suppose first that $p=2$.  Put
$
L(r):=r^{2^{\exponent-1}}\in\mathbb F_q$ (recall $\exponent=\log_p q$).
Then $L(r)^2=r$.  Since $L(r)\in\mathbb F_q$, its relative norm is
$
\Nm_{\mathbb F_{q^2}/\mathbb F_q}(L(r))
   =L(r)^{q+1}
   =L(r)^2
   =r.
$
If $d=3$, then $3\mid(q+1)$, and hence
$
\mathbb F_q^\times
   =\langle\rho^{q+1}\rangle
   \subseteq(\mathbb F_{q^2}^{\times})^{(3)}.
$
Thus $L_3(r):=L(r)$ is already a cube in
$\mathbb F_{q^2}^{\times}$.

Now suppose instead that $p$ is odd.  The element
$\lambda=\rho^{q+1}$ is primitive, and hence nonsquare, in
$\mathbb F_q^\times$.  Using the deterministic square-root algorithm
with this known nonsquare \cite{AMM77}, we proceed as follows.

If $r$ is a square, choose $a\in\mathbb F_q^\times$ satisfying
$a^2=r$ and put
$
L(r):=a.
$ Then we have $\Nm(a)=a^2=r$, as desired. Next, if instead $r$ is a nonsquare, then $r/\lambda$ is a square, because it is the
quotient of two nonsquares.  In this case, choose $a\in\mathbb F_q^\times$ satisfying
$
a^2=r/\lambda
$
and put
$
L(r):=\rho a.
$
Now we also have $\Nm(\rho a)=\lambda a^2=r$, as desired.

Finally, suppose additionally that $d=3$.  Since $3\mid(q+1)$, every element of
$\mathbb F_q^\times$ is a cube in
$\mathbb F_{q^2}^{\times}$.  If $r$ is a square, choose $a^2=r$ and put
$
L_3(r):=a
$
as above.
If instead $r$ is a nonsquare, then $\lambda^3$ is also a nonsquare, because
$\lambda$ is primitive and $3$ is odd.  Hence $r/\lambda^3$ is a
square.  Choose $a\in\mathbb F_q^\times$ satisfying
$
a^2=r/\lambda^3
$
and put
$
L_3(r):=\rho^3a.
$
Here $a$ is a cube in $\mathbb F_{q^2}^{\times}$, and therefore
$\rho^3a$ is a cube.  Moreover,
$
\Nm(\rho^3a)
   =\lambda^3a^2
   =r.
$
All of these operations take time polynomial in $\log q$.
\end{proof}

For the presentation of $\PSU_3(q)$ \cite{HulpkeSeress}, we recall one additional type of relator. For a given element $u \in U$, define the relation:
\begin{equation} \label{eq:R-relator}
\mathcal R(u):\,\,
   \widehat t\,\widehat{\operatorname{dec}}_U(u)\,\widehat t
      =
   \widehat{\operatorname{dec}}_U(g(u)) \cdot 
   \widehat h^{m(u)} \cdot 
   \widehat t \cdot
   \widehat{\operatorname{dec}}_U(f(u)),
\end{equation}
where $m(u)$ is the explicitly known exponent of the first coordinate of $u$.

\begin{construction}
\label{con:ShortPresentationPSU}
Start with $\mathcal P_B$ (from Construction~\ref{con:SPBorelPSU}) and add the generator $\widehat t$. We now add the following relators:
\begin{itemize}
 \item $\widehat t^2 = 1$.
 \item $\widehat t\widehat h\widehat t=\widehat h^{-q}$
 \item We now add relators depending on the value of $d$ (1 or 3).
 \begin{itemize}
     \item \textbf{Case 1:} $d = 1$. Let $\nu := L(r_0)$ (from Lemma~\ref{lem:PSU3-norm-lifts-compressed}), $a:=u(x,\nu)$, $a':=u(x',\nu)=ab$ (where $x,x'$ were defined at the top of Section~\ref{sec:uniformityPSU}).  Add the three relations $\mathcal R(a)$, $\mathcal R(c)$, $\mathcal R(a')$ using \eqref{eq:R-relator}.  Here $ m(a)=m_\delta+1, \,\, m(c)=m_\delta, \,\, m(a')=m_\delta+q+\varepsilon_- $ modulo $M$.

     \item \textbf{Case 2:} $d = 3.$ let $\nu_0:=L_3(r_0),\,\,\kappa:=\rho^{q-1},\,\,\nu_j:=\nu_0\kappa^j\,(j=0,1,2).$ Then $\Nm(\kappa)=1$, and, since $q-1\equiv1\pmod3$, the elements $\nu_0,\nu_1,\nu_2$ lie in the three distinct cosets of $(\mathbb{F}_{q^2}^{\times})^{(3)}$ in $\mathbb{F}_{q^2}^{\times}$. Put $a_j:=u(x,\nu_j), \,\, a'_j:=u(x',\nu_j)=a_jb \,\,(j=0,1,2).$
     
     Add the seven relations $ \mathcal R(c) $ and $ \mathcal R(a_j), \,\, \mathcal R(a'_j) \,\,(j=0,1,2).$ The corresponding exponents are
$
   m(a_j)=m_\delta+1,
\,\,
   m(c)=m_\delta,
\,\,
   m(a'_j)=m_\delta+q+\varepsilon_-
$
modulo $M$.
 \end{itemize}
 \end{itemize}
\end{construction}

\begin{lemma}
\label{lem:PSU3-final-correctness}
The presentation of Construction~\ref{con:ShortPresentationPSU} presents
$\operatorname{PSU}_3(q)$.
\end{lemma}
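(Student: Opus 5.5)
The plan is to reduce Lemma~\ref{lem:PSU3-final-correctness} to the presentation theorems of Hulpke and Seress \cite[Theorems~15--16]{HulpkeSeress}. Those theorems say: take any presentation of the Borel subgroup $B=\langle U,h\rangle$ in which $U$ and $h$ are represented, and add $t$ together with the relations $t^2=1$, $h^t=h^{-q}$ and the Steinberg-type relations \eqref{eq:PSUSteinberg} for a small set of elements $u\in U$. Provided those elements satisfy certain coset conditions on their coordinates, the result presents $\PSU_3(q)$. The only task is therefore to check that our uniformly chosen data meet the hypotheses of those theorems. Concretely, we must check:
\begin{itemize}
\item that $\mathcal P_B$ presents $B$;
\item that each relator $\mathcal R(u)$ is the relation \eqref{eq:PSUSteinberg} written in our generators;
\item that the chosen $u$'s lie in the required orbits or cosets.
\end{itemize}

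\textbf{Step 1 (the Borel part).} By Lemma~\ref{lem:PSU3-Borel-compressed}, $\mathcal P_B$ presents $B$, and $\widehat z_i,\widehat v_t,\widehat h$ map to the stated matrices. The maps $\widehat{\operatorname{dec}}_U$ from Lemma~\ref{lem:PSU3-dec} give correct words for elements of $U$. Hence $\widehat{\operatorname{dec}}_U(g(u))$, $\widehat{\operatorname{dec}}_U(f(u))$ and $\widehat{\operatorname{dec}}_U(u)$ evaluate to $g(u),f(u),u$. We also need $h(u)=h^{m(u)}$ in $\PSU_3(q)$. Comparing with the displayed diagonal matrix $h(u(x,y))=\operatorname{diag}(x,x^{q-1},x^{-q})$, this holds whenever $x=\rho^{m(u)}$. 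This is exactly why the first coordinates were taken to be the known powers $\delta=\rho^{m_\delta}$, $x=\rho^{m_\delta+1}$ and $x'=\rho^{m_\delta+q+\varepsilon_-}$. So every relation of Construction~\ref{con:ShortPresentationPSU} holds in $\PSU_3(q)$, and the presented group surjects onto $\PSU_3(q)$.

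\textbf{Step 2 (membership and coset hypotheses).} First check $a,a',c\in U$, i.e.\ $\Tr(\cdot)+\Nm(\cdot)=0$:
\begin{itemize}
\item for $c$, use $\Tr(\delta)=0$;
\item for $a$ and $a'$, use $\Nm(\nu)=r_0=-\Tr(x)=-\Tr(x')$, which holds by Lemma~\ref{lem:PSU3-norm-lifts-compressed};
\item for the $d=3$ case, also use $\Nm(\kappa)=1$.
\end{itemize}
Next match the hypotheses of \cite{HulpkeSeress}. As I recall them, they require:
\begin{itemize}
\item a central element $c=u(s_0,0)$ with $s_0\in S^\times$;
\item an element $a=u(x,y)$ with $x\in\rho S^\times$;
\item a second element $a'=ab$ with $b\in Z(U)$ and first coordinate in $S^\times/\rho$;
\item when $d=3$, second coordinates running over all three cosets of $(\FF_{q^2}^\times)^{(3)}$.
\end{itemize}
The facts computed just before Lemma~\ref{lem:PSU3-norm-lifts-compressed} supply all of these: $s\in S^\times$, $x\in\rho S^\times$, $x'\in S^\times/\rho$, and $\nu_0,\nu_1,\nu_2$ lie in distinct cube cosets. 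The identity $a'=ab$ follows because multiplying by the central element $u(s,0)$ adds $s$ to the first coordinate.

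\textbf{Main obstacle.} The main obstacle is Step 2: pinning down the exact hypotheses of \cite[Theorems~15--16]{HulpkeSeress}, for instance whether they require $x\in\rho S^\times$ or merely $x\notin S$, and in which cases the special value $q$ must be excluded. I would check these clause by clause rather than re-prove their coset-enumeration argument. Once they are confirmed, the theorems show that the kernel of the surjection from Step 1 is trivial, and the lemma follows.
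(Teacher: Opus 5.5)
Your architecture is the paper's. The matrices satisfy every relation; your observation that $h(u)=h^{m(u)}$ once the first coordinate is $\rho^{m(u)}$ is the right check, and it gives the epimorphism. Injectivity is then delegated to \cite[Theorems~15--16]{HulpkeSeress}, using that $\mathcal P_B$ presents $B$.

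The gap is in Step~2. As the paper applies those theorems, they need the relations $\mathcal R(a)$, $\mathcal R(b)$, $\mathcal R(ab)$, and for $d=3$ the relations $\mathcal R(b)$, $\mathcal R(a_j)$, $\mathcal R(a_jb)$. Here $b=u(s,0)$ is the \emph{same} central element that carries $a$ to $a'$. Construction~\ref{con:ShortPresentationPSU} does not contain $\mathcal R(b)$. It contains $\mathcal R(c)$ with $c=z_1=u(\delta,0)$, and since $s=-\delta\Tr(\rho)$, we have $c=b$ only when $\Tr(\rho)=-1$. Your recalled hypothesis list separates ``a central element $c$'' from the $b$ in $a'=ab$, which hides exactly this mismatch, so the clause-by-clause check you propose would stall. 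The substitution is deliberate. Writing $\mathcal R(b)$ directly would need $m(b)$, a discrete logarithm of $s$. This is the same uniformity constraint you spotted in Step~1.

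The missing step is to derive $\mathcal R(b)$ from $\mathcal R(c)$ inside the presented group; the paper does this via \cite[Lemmas~7 and~8]{HulpkeSeress}. It has two parts.
\begin{enumerate}
\item The element $h$ acts on $Z(U)$ by $u(\sigma,0)\mapsto u(\lambda\sigma,0)$, with $\lambda=\rho^{q+1}$ primitive in $\FF_q$. Since $S^\times=\delta\FF_q^\times$, we get $b=c^{h^j}$ for some $j$. Only the existence of $j$ matters, not its value, so uniformity is untouched.
\item Conjugate $\mathcal R(u)$ by $k=\widehat h^{\,i}$, using $\widehat t^{\,2}=1$ and $\widehat t\widehat h\widehat t=\widehat h^{-q}$. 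This gives
\[
\widehat t\,u^{k^{-q}}\,\widehat t=g(u)^{k}\,\widehat h^{\,m(u)-i(q+1)}\,\widehat t\,f(u)^{k}.
\]
The decomposition $u^t=g\,h'\,t\,f$ in $\PSU_3(q)$ is unique. Also, words in the Borel generators that are equal in $B$ are equal in the presented group, because $\mathcal P_B$ presents $B$. Hence this relation is exactly $\mathcal R(u^{h^{-qi}})$. Since $\gcd(q,N)=1$, every $\langle h\rangle$-conjugate of $u$ arises this way.
\end{enumerate}
Once $\mathcal R(b)$ is available, $\mathcal R(a')=\mathcal R(ab)$ (respectively $\mathcal R(a_j')=\mathcal R(a_jb)$) and your membership and coset checks are correct. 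Those checks are $x\in\rho S^\times$, $s\in S^\times$, $x'\in S^\times/\rho$, and the three cube classes of the $\nu_j$. Together they meet the hypotheses, and the lemma follows.
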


\begin{proof}
Since $\mathcal P_B$ presents $B$, the arguments of
\cite[Lemmas~5, 7, and 14 and Theorems~15--16]{HulpkeSeress} apply with
our normal-form words $\widehat{\operatorname{dec}}_U$ in place of theirs.  The
indicated matrices satisfy all defining relations and generate
$\operatorname{PSU}_3(q)$, so there is an epimorphism from the
presented group to $\operatorname{PSU}_3(q)$.

The elements $b=u(s,0)$ and $c=u(\delta,0)$ are nonidentity elements of
$Z(U)$.  By \cite[Lemma~8]{HulpkeSeress}, they are conjugate under
$\langle h\rangle$.  By \cite[Lemma~7]{HulpkeSeress}, the relation
$\mathcal R(c)$ therefore implies the relation $\mathcal R(b)$ in the
abstract presentation.

If $d=1$, the presentation consequently implies
$
   \mathcal R(a),\,
   \mathcal R(b),\,
   \mathcal R(ab).
$
The hypotheses of \cite[Theorem~15]{HulpkeSeress} hold because
$x\in\rho S^\times$, $s\in S^\times$, and
$x+s=x'\in S^\times/\rho$.  Hence the presentation defines
$\operatorname{PSU}_3(q)$.

If $d=3$, the elements $\nu_0,\nu_1,\nu_2$ have the same norm and
represent the three cube classes.  The presentation therefore implies
$
   \mathcal R(a_0b),\mathcal R(a_1b),\mathcal R(a_2b).
$
The hypotheses of \cite[Theorem~16]{HulpkeSeress} hold, and the same
conclusion follows.
\end{proof}

Finally, we have essentially completed the proof of Proposition~\ref{prop:UniformityPSU}, the uniform short presentation of $\PSU_3(q)$; here we gather the last remaining pieces of the size and complexity analysis, to complete the proof:

\begin{proof}[Proof of Proposition~\ref{prop:UniformityPSU}]
All field elements are represented as polynomials of degree less than
$2\exponent$ over $\FF_p$.  Addition, multiplication, inversion, 
powering, trace, and norm are computable in time polynomial in
$\exponent\log p=\log q$.  The trace equations and all coordinate expansions
are solved by Gaussian elimination on vector spaces of dimension
$O(\exponent)$ over $\FF_p$.  The exponents $e_t$ are computed by the Euclidean algorithm on $O(\log q)$-bit integers.  Lemma~\ref{lem:PSU3-norm-lifts-compressed} uses only exponentiation and
square-root extraction with the explicitly known nonsquare
$\lambda=\rho^{q+1}$.

By Lemma~\ref{lem:PSU3-Borel-compressed}, the Borel part has
binary-exponent length $O(\log^2q)$.  The Bruhat part adds only
three or seven relations, each of binary-exponent length $O(\log q)$.
Thus the complete presentation has binary-exponent length
$O(\log^2q)$.

Finally, apply the repeated-squaring conversion of
\cite[Remark~1.3]{BGKLP} (reproduced in Remark~\ref{rem:presentation-length-conventions} above) simultaneously to every exponent in the
presentation. This converts the presentation to the word-length
convention used in this paper while preserving the $O(\log^2q)$ bound,
and the conversion is itself computable in time polynomial in $\log q$.
Since
$
   |\PSU_3(q)|
      =\frac{q^3(q^3+1)(q^2-1)}{\gcd(3,q+1)},
$
we have $\log|G|=\Theta(\log q)$.  The resulting presentation has word
length $O(\log^2|G|)$ and is produced in $\DTIMEpl$.
\end{proof}

\section{Uniformity of the Guralnick--Kantor--Kassabov--Lubotzky Short Presentation for the Suzuki groups $^2 B_2(q) = \Sz(q)$} \label{app:suzuki}

Hulpke and Seress \cite[p.~720]{HulpkeSeress} write that J. Thompson told W. Kantor in personal communication that the original paper of Suzuki \cite{suzuki} contains within it, at least implicitly, a short presentation of the Suzuki groups $^2B_2(q) = \Sz(q)$. Guralnick, Kantor, Kassabov, and Lubotzky \cite{GKKLquant} gave a short and bounded\footnote{In the context of the Short Presentation Conjecture, whereas ``short'' means polylogarithmic length, ``bounded'' means $O(1)$ generators and $O(1)$ relators.} presentation of these groups, but do not say anything about uniformity. We show that their presentation can be made uniform. Precisely, we will establish the following.

\begin{proposition}  \label{prop:UniformSuzuki}
Let $k > 0$, and let $q = 2^{2k+1}$. There exists a $\DTIME(\polylog(q))$ algorithm that, given the standard name for $\Sz(q)$, constructs a short presentation for $\Sz(q)$.
\end{proposition}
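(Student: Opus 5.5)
We follow the same template as the $\PSU_3(q)$ case (Proposition~\ref{prop:UniformityPSU}): take the presentation of \cite{GKKLquant} for $\Sz(q)$ and show that every ingredient they only assert to exist can be computed in $\DTIME(\polylog(q))$ from the standard name $(^2B_2, q=2^{2k+1}, f, \rho)$. We set $\theta\colon a\mapsto a^{2^{k+1}}$ on $\FF_q$; note that $\theta^2$ is the Frobenius squared inverse-adjusted so that $\theta^2(a)=a^2$. We use Suzuki's $4\times 4$ matrix model. There the Sylow $2$-subgroup $U$ consists of the elements $S(a,b)$ with $(a,b)\in\FF_q^2$ and multiplication $(a,b)(c,d)=(a+c,\,b+d+a\,\theta(c))$. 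The center is $Z(U)=\{S(0,b)\}$, of order $q$. The torus is $H=\{h(\lambda)\}$, cyclic of order $q-1$, and it acts by $S(a,b)\mapsto S(\lambda a,\lambda^{1+\theta}b)$. Finally there is the Weyl involution $t$, and $\Sz(q)=B\sqcup BtU$ with $B=UH$.

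\textbf{Step 1 (Borel subgroup).} Exactly as in Lemma~\ref{lem:PSU3-Borel-compressed}, we construct generators and a normal form for $B$.
\begin{itemize}
\item Let $h:=h(\rho)$. Its order is $q-1$, since $\rho$ is primitive.
\item Let $v:=S(1,0)$ and $z:=S(0,1)$. Since $\rho^{1+\theta}$ also generates $\FF_q^\times$ (because $\gcd(2^{k+1}+1,q-1)=1$), the conjugates $z_i=z^{h^{i-1}}$ for $i\le 2k+1$ form an $\FF_2$-basis of $Z(U)$. Similarly, $v_i=v^{h^{i-1}}$ gives a basis of $U/Z(U)$, because $1,\rho,\dots,\rho^{2k}$ is an $\FF_2$-basis of $\FF_q$.
\item We prove the analogue of Lemma~\ref{lem:PSU3-dec}: from a matrix $S(a,b)$ we read off $a$ and $b$, expand them in the basis by Gaussian elimination over $\FF_2$, and obtain a normal-form word $\operatorname{dec}_U$.
\end{itemize}
If a bounded presentation in the style of \cite{GKKLquant} is needed (constantly many generators $h,v,z,t$), we write relators of the form $v^{h^{j}}$ with exponents in binary. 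These are converted to word length via Remark~\ref{rem:presentation-length-conventions}.

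The relations for $B$ are then:
\begin{itemize}
\item the power relations $\widehat z^2=1$, $\widehat v^2=\widehat{\operatorname{dec}}_U(v^2)$ and $\widehat h^{q-1}=1$;
\item centrality of $\widehat z$;
\item the commutators $[\widehat v,\widehat v^{\widehat h^{j}}]=\widehat{\operatorname{dec}}_U(\cdot)$ for $j\le 2k$;
\item the conjugation relations expressing $\widehat v^{\widehat h}$ and $\widehat z^{\widehat h}$ through the basis, namely the minimal-polynomial relations of $\rho$ and $\rho^{1+\theta}$.
\end{itemize}
All right-hand sides are computed by field arithmetic in $\poly(\log q)$ time. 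The order-counting argument of Lemma~\ref{lem:PSU3-Borel-compressed} (collection to normal form) shows the presented group has order at most $q^2(q-1)=|B|$, so it presents $B$.

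\textbf{Step 2 (Bruhat relations).} We add $\widehat t$ together with the relations $\widehat t^2=1$ and $\widehat t\widehat h\widehat t=\widehat h^{-1}$. We then add the constantly many Steinberg-type relations that \cite{GKKLquant} require. Each has the form
\[\widehat t\,\widehat{\operatorname{dec}}_U(u)\,\widehat t=\widehat{\operatorname{dec}}_U(g(u))\,\widehat h^{m(u)}\,\widehat t\,\widehat{\operatorname{dec}}_U(f(u)),\]
where $u$ ranges over a fixed small set of elements of $U$, for instance $S(1,0)$, $S(0,1)$ and $S(1,1)$, or whichever elements GKKL specify. For each such $u$, we compute the unique $g(u),f(u)\in U$ and $h(u)\in H$ by multiplying out $tut$ in the $4\times 4$ model and reading entries off, exactly as in Case~2 of the proof of Lemma~\ref{lem:Ree-bijection}. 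The exponent $m(u)$ is a discrete logarithm of an entry of $h(u)$.

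\textbf{Main obstacle.} The main difficulty is computing $m(u)$ deterministically in polylog time, since discrete log is not known to be easy. We avoid it by choosing the special elements $u$ so that the diagonal entry of $h(u)$ is an explicitly known power of $\rho$. For instance, if $u=S(a,b)$ with $a,b$ chosen as explicit powers of $\rho$, the diagonal entries are monomials in $a,b$ and $\theta$-images of them, whose exponents are computable as integers mod $q-1$. This is the same trick used for $\PSU_3(q)$, where the first coordinates were chosen to be known powers of $\rho$. What remains is to verify that these choices still meet the hypotheses of GKKL's correctness argument, for example that the $u$'s lie in the required $H$-orbits. That check reduces to simple conditions on exponents mod $q-1$, which we verify directly. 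Correctness then follows from \cite{GKKLquant}, since the relations hold in $\Sz(q)$ and are exactly theirs. The total length is $O(\log^2 q)$, or $O(\log q)$ in bounded form, and the whole construction is produced in $\DTIME(\polylog(q))$.
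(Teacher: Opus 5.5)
\textbf{There is a genuine gap: you never show that your presentation defines $\Sz(q)$.} Your closing justification, that ``the relations hold in $\Sz(q)$ and are exactly theirs,'' is false.

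The presentation of \cite{GKKLquant} (Theorem~\ref{thm:GKKLSuzuki}) has:
\begin{itemize}
\item generators $u,w,t,h_\zeta,h_{\zeta+1},h_{\zeta^\theta},h_{\zeta^\theta+1}$;
\item a Borel part (1)--(13) written in the polynomial notation $[[\cdot]]$;
\item a Bruhat part of the specific form $t=u_1u_2^tu_3$ and $h_\star t=u_{\star,1}u_{\star,2}^tu_{\star,3}$.
\end{itemize}
What you build instead is a Hulpke--Seress-style Borel presentation in $\widehat{h},\widehat{v}_i,\widehat{z}_i$, plus Steinberg relations for an unspecified small set of elements (``$S(1,0)$, $S(0,1)$, $S(1,1)$, or whichever elements GKKL specify''). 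GKKL's theorem certifies their relators, not yours. The ``hypotheses of GKKL's correctness argument'' that you defer checking are never identified.

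This is not a formality. $H$ has $q+1$ orbits on $U\setminus\{1\}$, so showing that a bounded set of Steinberg relations implies all the others is a real group-theoretic statement. For $\PSU_3(q)$ this took \cite[Lemmas~7, 8, 14 and Theorems~15--16]{HulpkeSeress}, and you supply nothing comparable. Your ``main obstacle'' (discrete logarithms) is also an artifact of your choice of a single torus generator $h(\rho)$. GKKL make $h_{\zeta+1}$ and $h_{\zeta^\theta+1}$ independent generators precisely so that no discrete logarithm is ever needed.

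\textbf{The paper's route.} The paper outputs Theorem~\ref{thm:GKKLSuzuki} verbatim, takes correctness from \cite{GKKLquant}, and proves only uniformity:
\begin{itemize}
\item relations (1)--(5), (10), (14)--(15) are fixed words;
\item $m_\zeta$ and $x^{2^{k+1}},\,x^{2^{k+1}+1}\bmod m_\zeta$ are computed via Lemma~\ref{lem:EffectivePolynomialNotation};
\item the central corrections $w_1,\dots,w_5$ are computed by explicit multiplication in the pair model and written via $\operatorname{Word}_{0,B}$ from Lemma~\ref{lem:SuzukiU};
\item $u_i$ and $u_{\star,i}$ are read off from Suzuki's factorization $h(\lambda)t=U(\lambda^{-\varrho},0)\,U(0,\lambda^{\varrho+1})^t\,U(\lambda^{-\varrho},\lambda^{-(\varrho+1)})$.
\end{itemize}

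\textbf{How to repair your own presentation.} You can keep it if you add a comparison argument with GKKL.
\begin{enumerate}
\item Prove your Borel presentation correct; your collection sketch is plausible.
\item Include the Steinberg relation for $S(0,1)$. By the factorization at $\lambda=1$, this is exactly GKKL's (16): $\widehat{t}=\widehat{v}\,\widehat{z}^{\,\widehat{t}}\,\widehat{v}\,\widehat{z}$.
\item Send $u,w,t,h_\zeta$ to $\widehat{v},\widehat{z},\widehat{t},\widehat{h}$, and the other $h_\star$ to suitable powers of $\widehat{h}$. These exponents need only exist; they never have to be computed.
\item Relations (1)--(15) then hold, because your Borel part presents $B$ and $\widehat{t}$ inverts $\widehat{h}$.
\item Each (17) follows from (16) by conjugating with a power of $\widehat{h}$. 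Since $q-1$ is odd, the image of $h_\star$ is $\widehat{h}^{-2j}$ for some $j$. Conjugating (16) by $\widehat{h}^{j}$ gives $\widehat{h}^{-2j}\widehat{t}=\widehat{v}^{\,\widehat{h}^{j}}\bigl(\widehat{z}^{\,\widehat{h}^{-j}}\bigr)^{\widehat{t}}\bigl(\widehat{v}\widehat{z}\bigr)^{\widehat{h}^{j}}$.
\item This matches $u_{\star,1},u_{\star,2},u_{\star,3}$. A central element of $U$ is determined by the torus part of its Bruhat factorization, because $H$ is regular on $Z(U)\setminus\{1\}$ and squaring is injective on $H$. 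The other two factors are then fixed by uniqueness of the $UHtU$ factorization.
\item Your group is therefore a quotient of GKKL's group $\cong\Sz(q)$ that also surjects onto $\Sz(q)$, so it is $\Sz(q)$.
\end{enumerate}
With this argument, your Steinberg relations for $S(1,0)$ and $S(1,1)$ become unnecessary.
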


\subsection{Preliminaries: polynomial notation} \label{sec:PolynomialNotation}

If $\gamma$ lies in an extension field of $\mathbb{F}_{p}$, then
$m_{\gamma}(x)$ denotes its minimal polynomial over $\mathbb{F}_{p}$. If
$\delta \in \mathbb{F}_{p}[\gamma]$, let $f_{\delta;\gamma}(x)\in
\mathbb{F}_{p}[x]$ denote the unique polynomial satisfying
$
f_{\delta;\gamma}(\gamma)=\delta,
\,\,
\deg(f_{\delta;\gamma})<\deg(m_{\gamma}).
$

Following Guralnick, Kantor, Kassabov, and Lubotzky's polynomial notation
\cite[Section~4.3.0, Equations~(4.15)--(4.16)]{GKKLquant}, for any polynomial
$
g(x)=\sum_{i=0}^{e}g_i x^i\in \mathbb{Z}[x]
$
and any two elements $u,h$ in a group, define
$
[[u^{g(x)}]]_h
:=
\prod_{i=0}^{e}(u^{g_i})^{h^i},
$
where the product is taken in increasing order of $i$, and we are using the
convention $x^h=h^{-1}xh$. Equivalently, 
\[
[[u^{g(x)}]]_h
=
u^{g_0}h^{-1}u^{g_1}h^{-1}\cdots h^{-1}u^{g_e}h^e.
\]

We spell out the following effective procedures related to the above because they will be useful for establishing uniformity of the presentation below.

\begin{lemma}\label{lem:EffectivePolynomialNotation}
Let $q=p^e$, and suppose the standard name gives $\mathbb{F}_q$ over
$\mathbb{F}_p$ together with a primitive element $\zeta\in\mathbb{F}_q^\times$.
Then the following can be done in $\mathsf{DTIME}(\polylog(q))$.

\begin{enumerate}
\item Compute the minimal polynomial $m_\zeta(x)$.
\item Given $\delta\in\mathbb{F}_q$, compute the ``coordinate polynomial''
$f_{\delta;\zeta}(x)$ satisfying $f_{\delta;\zeta}(\zeta)=\delta$ and
$\deg(f_{\delta;\zeta})<e$.
\item For every $s \leq O(e)$, compute $x^{p^s}\bmod m_\zeta(x)$. 

\item Given $g(x)=\sum_i g_i x^i$ with $\deg(g)\le e$ and $0\leq g_i<p$, output
the word $[[u^{g(x)}]]_h$ (as an element of a free group in which $u,h$ are two of the generators). For $p=2$, this word has length $O(e)$. 
\end{enumerate}
\end{lemma}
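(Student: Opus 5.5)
The plan is to treat all four items as routine finite-field linear algebra over $\FF_p$ on vectors of dimension $e=\log_p q$. Every step will be a polynomial-in-$e\log p$ number of field operations in $\FF_q=\FF_p[x]/f(x)$, where $f$ is the irreducible polynomial supplied by the standard name. Each such field operation costs $\poly(e,\log p)=\polylog(q)$ time, so the total time is $\polylog(q)$.

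For item (1), compute the powers $1,\zeta,\zeta^2,\dots,\zeta^e$ by successive multiplication modulo $f$, writing each as a coefficient vector in $\FF_p^e$ with respect to the basis $1,x,\dots,x^{e-1}$. Since $\zeta$ is primitive, it generates $\FF_q$ as a field, so $\deg m_\zeta=e$. Hence $1,\zeta,\dots,\zeta^{e-1}$ are linearly independent, and Gaussian elimination over $\FF_p$ on the $(e+1)\times e$ system finds the unique monic dependency $\zeta^e=\sum_{i<e}c_i\zeta^i$. This gives $m_\zeta(x)=x^e-\sum_i c_i x^i$, using $O(e^3)$ operations in $\FF_p$.

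For item (2), reuse the same change-of-basis matrix $P$, whose columns are $\zeta^0,\dots,\zeta^{e-1}$ in the $x$-basis. The coefficients of $f_{\delta;\zeta}$ are $P^{-1}\delta$, obtained by one Gaussian elimination. Uniqueness and the degree bound follow from invertibility of $P$.

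For item (3), compute $x^{p^s}\bmod m_\zeta$ by iterating the $p$-th power map: from $x^{p^{s}}$, raise to the $p$-th power by repeated squaring modulo $m_\zeta$, using $O(\log p)$ multiplications of degree-$<e$ polynomials. Doing this for all $s\le O(e)$ takes $O(e\log p)\cdot\poly(e)$ operations. Alternatively, precompute the Frobenius matrix once and apply it $s$ times.

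For item (4), write out the word $u^{g_0}h^{-1}u^{g_1}h^{-1}\cdots h^{-1}u^{g_e}h^{e}$ literally from the coefficient list. Each factor $u^{g_i}$ has length $g_i<p$, so the word has length $O(ep)$ from the $u$ factors, plus $O(e)$ letters for the $h^{-1}$ and $h^e$ factors. For $p=2$ this is $O(e)$. The output is produced by one left-to-right pass, in time linear in its length.

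None of these steps should be a real obstacle. The only points needing care are bookkeeping ones. First, check that $\zeta$ has full degree $e$; this holds because a primitive element generates $\FF_q$, so the linear systems are nonsingular. Second, confirm that the word length in item (4) is $O(ep)$, which is $\polylog(q)$ only when $p$ is bounded. That is fine here, since we only need $p=2$ for the Suzuki groups. For general $p$, one would use the binary-exponent convention of Remark~\ref{rem:presentation-length-conventions}.
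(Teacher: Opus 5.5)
Your proposal is correct and follows the paper's proof essentially step for step: Gaussian elimination on $1,\zeta,\dots,\zeta^e$ for items 1--2, iterated $p$-th powering modulo $m_\zeta$ for item 3, and literal expansion of the word for item 4. Your caveat that the literal expansion has length $O(ep)$, and so is polylogarithmic only for bounded $p$, is accurate. The paper's proof sidesteps it by spelling out item 4 only for $p=2$, which is the only case used later.
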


\begin{proof}
1. The standard name realizes $\mathbb{F}_q$ as
$\mathbb{F}_p[Y]/f(Y)$. Compute the unique linear dependence among
$1,\zeta,\zeta^2,\ldots,\zeta^e
$
over $\mathbb{F}_p$. This is Gaussian elimination on an $(e+1)\times e$ matrix
over $\mathbb{F}_p$, hence takes time polynomial in $e$ and $\log p$. (Since $\zeta$ is a primitive element, we have that $1,\zeta,\dotsc,\zeta^{e-1}$ are linearly independent, for otherwise the field $\FF_p(\zeta)$ would be smaller than order $q$.)

2. Express $\delta$ in the basis
$
1,\zeta,\ldots,\zeta^{e-1}
$
over $\mathbb{F}_p$ by solving a linear system. The resulting coefficients are
the coefficients of $f_{\delta;\zeta}(x)$.

3. Compute
$
R_0(x)=x,
\,\,
R_{i+1}(x)=R_i(x)^p\bmod m_\zeta(x).
$
After $s$ iterations, $R_s(x)=x^{p^s}\bmod m_\zeta(x)$. Since all polynomials
have degree $<e$, this takes $\poly(e,\log p)=\polylog(q)$ time. 

4. When $p=2$, each coefficient is $0$ or $1$, so one directly expands
$
[[u^{g(x)}]]_h
$
omitting the factors $u^0$. This has length $O(e)$.
\end{proof}

\subsection{Preliminaries on the Suzuki groups} \label{sec:SuzukiPreliminaries}

\noindent Let $q = 2^{2k+1}$. We follow the notation of \cite{GKKLquant} for ease of reference and comparison, and this material can be found there with further references.
Throughout this section, we will assume $q > 8$. For if $q \leq 8$, we can provide the multiplication table. Let $\mathbb{F}_{q}^{\times} = \langle \zeta \rangle$, and let $\theta : x \mapsto x^{2^{k+1}}$ be the field automorphism whose square is the Frobenius automorphism $x \mapsto x^2 = (x^{\theta})^{\theta}$.
Define $U$ to be the group consisting of  all pairs $(\alpha, \beta) \in \mathbb{F}_{q}^{2}$ with the multiplication rule:
\[
(\alpha, \beta) \cdot (\gamma, \delta) = (\alpha + \gamma, \beta + \delta + \alpha \gamma^{\theta}),
\] 
and $W := Z(U) = [U,U] = \{ (0, \beta) : \beta \in \mathbb{F}_{q} \}$.

For $\epsilon \in \mathbb{F}_{q}^{\times}$, let $h_{\epsilon}$ denote the automorphism of $U$ defined by 
\begin{align} \label{eq:GKKL4.27}
(\alpha, \beta)^{h_{\epsilon}} = (\epsilon \alpha, \epsilon^{\theta+1} \beta).
\end{align}

Then $B = U \rtimes \langle h_{\zeta} \rangle$ is isomorphic to the Borel subgroup of $\Sz(q)$. The cyclic group $\langle h_{\zeta} \rangle \cong \FF_q^\times$ 
 is transitive on the non-trivial elements of both $Z(U)$ and $U/Z(U)$. If $x \in U$ and $v = (i,j,k,\ell) \in \mathbb{Z}^{4}$, we write:
\begin{equation} \label{eq:four-exp}
x^{v} = x^{(i,j,k,\ell)} = x^{h} \text{ where } h = h_{\zeta}^{i} h_{\zeta^{\theta}}^{j} h_{\zeta+1}^{k} h_{\zeta^\theta+1}^{\ell} 
\end{equation}
\cite[Equation~(4.28), p.~29]{GKKLquant}.
 We will need the following set $S_{0}$ of $16$ vectors from $\mathbb{Z}^{4}$ \cite[Table~B.4]{GKKLquant}.
\begin{align*}
S_{0} = \{ &(-1, -1,0,0), (-1, 0,0,0), (0, -1, 0,0), (0, 0, -1, -1), (0,0, -1, 0) \\
&(0,0,0,-1), (0,0,0,0), (0,1,0,0), (1,0,0,0), (1,0,1,0), \\
&(1,1,0,0), (1,1,1,0), (1,1,1,1), (2,1,1,1) \\
&(-2,-2,0,0), (-2,-1,0,0) \}. 
\end{align*}

With this notation in hand, we recall the short, bounded presentation for $\Sz(q)$ from \cite{GKKLquant}. We note that in this presentation, the elements $h_\zeta$ etc. will line up, under isomorphism, with the elements $h_\zeta$ described above, but they are not \emph{assumed} to do so; that is, in the presentation, the elements $h_\zeta,h_{\zeta+1}, h_{\zeta^\theta} h_{\zeta^\theta+1}$ are merely four independent generators, and the relations between them are described explicitly by the relators of the presentation (there are no hidden implicit relations implied by the notation).

\begin{theorem}[{\cite[Section~4.4.3]{GKKLquant}}] \label{thm:GKKLSuzuki}
Let $k > 0$, and let $q = 2^{2k+1}$. $\Sz(q)$ admits the following presentation  with  $7$ generators and $43$ relations. \\

\noindent \textbf{Generators:} $u,w,t, h_{\zeta}, h_{\zeta+1}, h_{\zeta^{\theta}}, h_{\zeta^{\theta}+1}.$ \\

\noindent \textbf{Relations:} In the relations below, $a = h_{\zeta}^{-1} h_{\zeta^{\theta}}$, $w_{1}, \ldots, w_{5}$ are suitable elements of $W := \langle w^{\langle a \rangle} \rangle$ and $u_{1}, u_{2}, u_{3}, u_{\star, 1}, u_{\star, 2}, u_{\star, 3}$ are suitable elements of $U := \langle u^{\langle h_{\zeta} \rangle} \rangle$, determined by our choice of $u$ and $\zeta$. Here and below, the notation $[[x^{g(x)}]]_{h}$ is the polynomial word notation from Section~\ref{sec:PolynomialNotation}. 
\begin{enumerate}
\item $[h_{\star}, h_{\bullet}] = 1$ for all $\star, \bullet \in \{ \zeta, \zeta^{\theta}, \zeta+1, \zeta^{\theta}+1\}$.
\item $w = u^{2}$ and $w^{2} = 1$.
\item $w^{(0,0,-1,1)} = ww^{(-1,1,0,0)}$ (see \eqref{eq:four-exp} above for notation).
\item $w^{(0,0,2,-1)} = ww^{(2,-1,0,0)}$.
\item $[w,w^{(-1,1,0,0)}] = 1$.
\item $[[w^{m_{\zeta}(x)}]]_{a} = 1$. 
\item $w^{(1,0,0,0)} = [[w^{x^{2^{k+1}+1}}]]_{a}$, where $x^{2^{k+1}+1}$ is reduced modulo $m_{\zeta}(x)$ in order to obtain a short word.
\item $u^{(1,0,0,0)} = uu^{(0,0,1,0)}w_{1}$.
\item $u^{(0,1,0,0)} = uu^{(0,0,0,1)}w_{2}$.
\item $[u^{v}, w] = 1$ for every $v \in S_{0}$.
\item $[u, u^{(1,0,0,0)}] = w_{3}$.
\item $[[u^{m_{\zeta}(x)}]]_{h_{\zeta}} = w_{4}$.
\item $u^{(0,1,0,0)} = [[u^{x^{2^{k+1}}}]]_{h_{\zeta}} \cdot w_{5}$, where $x^{2^{k+1}}$ is reduced modulo $m_{\zeta}(x)$ in order to obtain a short word.
\item $t^{2} = 1$.
\item $h_{\star}^{t} = h_{\star}^{-1}$ for all $\star \in \{ \zeta, \zeta+1, \zeta^{\theta}, \zeta^{\theta}+1\}$.
\item $t = u_{1}u_{2}^{t}u_{3}$.
\item $h_{\star}t = u_{\star, 1}u_{\star, 2}^{t} u_{\star, 3}$ for all $\star \in \{ \zeta, \zeta+1, \zeta^{\theta}, \zeta^{\theta}+1\}$.
\end{enumerate}
\end{theorem}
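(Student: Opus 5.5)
The statement is quoted from \cite[Section~4.4.3]{GKKLquant}, so the plan is to rederive it in the two-step shape used earlier for $\PSU_3(q)$ (Lemma~\ref{lem:PSU3-Borel-compressed} and Lemma~\ref{lem:PSU3-final-correctness}). First we exhibit a surjection from the presented group $\widehat{G}$ onto $\Sz(q)$. Then we bound $|\widehat{G}|\le |\Sz(q)| = q^2(q^2+1)(q-1)$. Together these force the surjection to be an isomorphism.

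\textbf{Step 1 (epimorphism).} Send each generator to a concrete element of $\Sz(q)$:
\begin{itemize}
\item $h_\epsilon \mapsto$ the torus element acting on $U$ by \eqref{eq:GKKL4.27};
\item $u \mapsto (1,0)\in U$, and $w\mapsto u^2=(0,1)\in W$;
\item $t \mapsto$ the standard Weyl involution.
\end{itemize}
The ``suitable'' elements $w_1,\dots,w_5$ and $u_1,u_2,u_3,u_{\star,j}$ are then defined to be whatever makes relations 8, 9, 11--13, 16 and 17 hold. These elements exist by the Bruhat decomposition $\Sz(q)=B\sqcup BtU$ and the uniqueness of the resulting expressions. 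Relations 1--7, 10, 14 and 15 hold by direct computation in $U\rtimes\langle h_\zeta\rangle$, using $\theta^2=$ Frobenius. Since $u$, $h_\zeta$ and $t$ generate $\Sz(q)$, the map is onto.

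\textbf{Step 2 (the Borel part).} Let $\widehat{W}=\langle w^{\langle a\rangle}\rangle$ and $\widehat{U}=\langle u^{\langle h_\zeta\rangle}\rangle$. The aim is $|\langle \widehat U, h_\zeta\rangle|\le q^2(q-1)$.
\begin{itemize}
\item Relations 2 and 5 make $\widehat W$ elementary abelian of exponent $2$.
\item Relation 6 makes $a$ act on $\widehat W$ with minimal polynomial dividing $m_\zeta$. Relation 7 identifies the action of $h_\zeta$ with $\zeta^{\theta+1}$. Together these turn $\widehat W$ into a cyclic $\FF_2[\zeta]$-module, so $|\widehat W|\le q$.
\item Relations 3 and 4 tie the extra torus generators $h_{\zeta+1}$ and $h_{\zeta^\theta+1}$ to the same module structure; this is the additive content of $\zeta+1$.
\item Relations 8--13 do the same for $\widehat U/\widehat W$, with $\widehat W$ central by relation 10, so $|\widehat U|\le q^2$.
\item Relation 1 makes the $h$'s commute. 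Together with the faithful action on $\widehat U$, this should force the torus to be cyclic of order dividing $q-1$.
\end{itemize}
Collecting words in normal form, as in the proof of Lemma~\ref{lem:PSU3-Borel-compressed}, then gives the bound $q^2(q-1)$.

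\textbf{Step 3 (the Bruhat part).} Using relations 14--17, show that $\widehat B\cup \widehat B t\widehat U$ is closed under right multiplication by every generator. For $t$, relation 16 rewrites $u^t$ (and its $h$-conjugates) as an element of $\widehat B\cup\widehat B t\widehat U$; relation 17 is used to cover all of $U\setminus\{1\}$. This gives
\[
|\widehat G|\le |\widehat B|(1+q^2)=|\Sz(q)|,
\]
which completes the argument.

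I expect the main obstacle to be Step 2. The difficulty is showing that only $O(1)$ relations, with the polynomial-word trick $[[\cdot]]$, pin down both the module structure and the fact that the four commuting torus generators collapse to a single cyclic group of order $q-1$. For that step I would follow the GKKL argument closely rather than reprove it.
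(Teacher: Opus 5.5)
The paper does not reprove this theorem: it takes correctness from \cite{GKKLquant} and only adds uniformity (Proposition~\ref{prop:UniformSuzuki}). Read as an independent argument, though, your outline has two steps that fail as stated.

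\textbf{Step~2: the torus cannot be bounded from the Borel relations.} Every relator among 2--13 has exponent sum $0$ in each $h_\star$. So relations 1--13 all hold in $U\rtimes\mathbb{Z}^4$, with the four free generators acting on $U$ as $h_\zeta,h_{\zeta^\theta},h_{\zeta+1},h_{\zeta^\theta+1}$: each relator evaluates into $U$, where it agrees with its value in $\Sz(q)$. In that group $\langle U,h_\zeta\rangle\cong U\rtimes\mathbb{Z}$ is infinite, and the torus acts on $U$ with infinite kernel. So the ``faithful action'' you invoke is exactly what is unavailable, and $|\langle\widehat U,h_\zeta\rangle|\le q^2(q-1)$ cannot be proved before relations 15--17 enter.

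What those relations give, for $T:=\langle h_\zeta,h_{\zeta+1},h_{\zeta^\theta},h_{\zeta^\theta+1}\rangle$, is the following. The involution $t$ inverts $T$, and relations 16--17 place $t$ and every $h_\star$ inside $\langle\widehat U,\widehat U^t\rangle$. Hence $K:=C_T(\widehat U)$ also centralizes $\widehat U^t$ and $t$, so $K$ is central of exponent $2$. Only $T/K$ is cyclic of order dividing $q-1$, and that only once relations 2--13 pin down $\widehat U$ and the $h_\star$-actions. Showing $K=1$ needs a further ingredient. One route: $\widehat G$ is then a perfect central extension of $\Sz(q)$, and the Schur multiplier of $\Sz(q)$ is trivial for $q>8$. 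Your plan never uses this section's standing hypothesis $q>8$, although $\Sz(8)$ has multiplier $2^2$. A smaller point: relations 2 and 5 only make $w^{a^i}$ commute with $w^{a^{i+1}}$, as in a right-angled Coxeter group on a path, so they do not by themselves make $\widehat W$ abelian.

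\textbf{Step~3: the Bruhat closure is not obtained.} It needs $x^t\in\widehat B\cup\widehat Bt\widehat U$ for \emph{every} $x\in\widehat U\setminus\{1\}$, and relations 16--17 plus torus conjugation do not supply this. In the instantiation used here, $u_1=u$, $u_2=w$ and $u_3=uw=u^{-1}$. Also $u_{\star,1}=U(\lambda_\star^{-\varrho},0)$, $u_{\star,3}=u_{\star,1}^{-1}$ and $u_{\star,2}\in W$. So every element of $U$ occurring in 16--17 lies in the $\langle h_\zeta\rangle$-orbit of $u$, $u^{-1}$ or $w$. But $\langle h_\zeta\rangle$ acts semiregularly on the $q(q-1)$ elements of $U\setminus W$, so it has $q+1$ orbits on $U\setminus\{1\}$, and conjugation reaches only three of them.

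Relation 17 does not ``cover $U\setminus\{1\}$''; its role is the one described above, tying the abstract $h_\star$ to $\langle\widehat U,\widehat U^t\rangle$. The missing idea is a propagation argument showing that $\{x : x^t\in\widehat Bt\widehat U\}$ is closed under enough products, starting from the elements $u,w,uw$ packed into relation 16. This is the analogue of how \cite[Theorem~15]{HulpkeSeress} propagates $\mathcal R(a),\mathcal R(b),\mathcal R(ab)$ in Section~\ref{app:PSU3}. Without it, $|\widehat G|\le|\widehat B|(1+q^2)$ is unjustified. This step, together with the Borel module analysis you already defer, is what has to be imported from \cite{GKKLquant}.
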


Relations $(1)$--$(13)$ are the Borel relations from \cite[Proposition~4.30]{GKKLquant}, and relations $(14)$--$(17)$ are the additional 
relations used in 
\cite[Proposition~4.30 and Section~4.4.3]{GKKLquant}. The presentation itself
and its correctness are due to \cite{GKKLquant}. What we add below is the uniformity, by showing 
that all words appearing in the presentation can be constructed from
the standard name in time $\mathsf{DTIME}(\polylog(q))$.

\subsection{Proof of Proposition~\ref{prop:UniformSuzuki}}

As in the cases of the Ree groups (Lemma~\ref{lem:Ree-bijection}) and $\PSU_3(q)$ (Lemma~\ref{lem:PSU3-dec}), another key to our approach to a uniform presentation here is the following lemma, which solves the constructive membership problem in $U$ relative to a particular generating set. 

\textbf{Convention.} Throughout the remainder of this section we use the notation $U(A,B)$ to denote the element of $U$ that has coordinates $(A,B)$ in the notation from the start of Section~\ref{sec:SuzukiPreliminaries}.

\begin{lemma}\label{lem:SuzukiU}
Let $u = U(1,0)$ and $w = U(0,1)$, and let $h_\zeta,h_{\zeta^\theta}$ be as above. There is a $\DTIME(\polylog(q))$ algorithm which, given $(A,B) \in \FF_q^2$ as input, outputs a word $\text{Word}_{A,B}$ over four generators $\hat{u}, \hat{w}, \hat{h}_{\zeta}, \hat{h}_{\zeta^\theta}$ such that $\text{Word}_{A,B}(u,w,h_\zeta,h_{\zeta^\theta}) = U(A,B)$. 
\end{lemma}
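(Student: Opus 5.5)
The strategy mirrors Lemma~\ref{lem:PSU3-dec}. We first strip off the $U/Z(U)$-coordinate $A$ with a word in $u$ and its $h_\zeta$-conjugates. What remains is an element of $W=Z(U)$, which we then write as a word in $w$ and its $a$-conjugates, where $a=h_\zeta^{-1}h_{\zeta^\theta}$.

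The starting point is the conjugation formula \eqref{eq:GKKL4.27}. It gives $u^{h_\zeta^i}=U(\zeta^i,0)$ for every $i$. Setting $a=h_\zeta^{-1}h_{\zeta^\theta}$ and noting that $\theta$ is a field automorphism, we get
\[
w^{a^i}=U\bigl(0,(\zeta^{-(\theta+1)}\zeta^{\theta(\theta+1)})^{i}\bigr)=U\bigl(0,\zeta^{i(\theta^2-1)}\bigr)=U\bigl(0,\zeta^{i}\bigr),
\]
using $\zeta^{\theta^2}=\zeta^2$. This is why the two bases below are available.

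\textbf{Step 1: the first coordinate.} By Lemma~\ref{lem:EffectivePolynomialNotation}(2), compute $f_{A;\zeta}(x)=\sum_{i<e}a_ix^i$ with $a_i\in\FF_2$, where $e=2k+1$. Let $W_1:=[[u^{f_{A;\zeta}(x)}]]_{h_\zeta}$, which is a word of length $O(e)$ by Lemma~\ref{lem:EffectivePolynomialNotation}(4). Since $U/W$ is isomorphic to the additive group of $\FF_q$ via the first coordinate, $W_1$ evaluates to $U(A,B_1)$ for some $B_1\in\FF_q$.

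\textbf{Step 2: computing the error term $B_1$.} We obtain $B_1$ by directly multiplying the factors $U(a_i\zeta^i,0)$ in $\FF_q^2$ using the rule $(\alpha,\beta)(\gamma,\delta)=(\alpha+\gamma,\beta+\delta+\alpha\gamma^\theta)$. This takes $O(e)$ field operations, and each $\gamma^\theta=\gamma^{2^{k+1}}$ costs $O(k)$ squarings. Since $W$ is central, $U(A,B)=U(A,B_1)\cdot U(0,B-B_1)$; characteristic $2$ lets us write $B+B_1$ in place of $B-B_1$.

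\textbf{Step 3: the central coordinate.} Compute $f_{B+B_1;\zeta}(x)$ and set $W_2:=[[w^{f_{B+B_1;\zeta}(x)}]]_{a}$, which evaluates to $U(0,B+B_1)$ by the displayed identity. Output $\mathrm{Word}_{A,B}:=W_1W_2$, expanded over the generators $\hat u,\hat w,\hat h_\zeta,\hat h_{\zeta^\theta}$. Each $a^{\pm1}$ is two letters, and each $h^{e}$-type suffix has binary length $O(\log q)$, or word length $O(\log q)$ after Remark~\ref{rem:presentation-length-conventions}. The total running time is polynomial in $e=\log_2 q$.

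I expect the main obstacle to be confirming the parametrization conventions, not the algorithm itself. The points to check are:
\begin{itemize}
\item the action of $h_{\zeta^\theta}$ and the resulting identity $w^{a^i}=U(0,\zeta^i)$, which depends on the exponent $\theta+1$ and on the convention $x^h=h^{-1}xh$;
\item that conjugation by $h$ is an automorphism fixing the decomposition, so $[[\cdot]]_h$ behaves additively modulo $W$.
\end{itemize}
If the sign or exponent conventions differ from those assumed here, I would replace $a$ by whichever product of $h_\zeta^{\pm1}$ and $h_{\zeta^\theta}^{\pm1}$ acts on $W$ as multiplication by a primitive element $\eta$. I would then use $f_{\,\cdot\,;\eta}$ in Step 3 instead.
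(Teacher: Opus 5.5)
Your proposal is correct and is essentially the paper's proof: the paper likewise writes $[[\hat u^{f_{A;\zeta}(x)}]]_{\hat h_\zeta}=U(A,C_A)$, computes the central correction $C_A=\sum_{i<j}a_ia_j\zeta^i(\zeta^j)^\theta$ by simulating the multiplication rule, and then appends $[[\hat w^{f_{B-C_A;\zeta}(x)}]]_{\hat a}$, where $a=h_\zeta^{-1}h_{\zeta^\theta}$ acts on $W$ as multiplication by $\zeta^{\theta^2-1}=\zeta$. One minor simplification: the trailing power $h^{e}$ in the polynomial word notation already has word length $e=O(\log q)$, so you do not need the binary-exponent conversion.
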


\begin{proof}
Let $a=h_{\zeta}^{-1}h_{\zeta^\theta}$ and let $\hat{a}$ be the corresponding element of the free group, viz. $\hat{a} = \hat{h}_{\zeta}^{-1}\hat{h}_{\zeta^\theta}$.  By the action formula~\eqref{eq:GKKL4.27}, the element $a$ acts on
$W=Z(U)$ as multiplication by $\zeta$. Indeed, $(\zeta^{-1}\zeta^\theta)^{\theta+1} = \zeta^{\theta^2-1} =\zeta.$ Hence, if $D\in\mathbb F_q$ and
$f_{D;\zeta}(x)=\sum_{i=0}^{e-1}d_i x^i$ is the coordinate polynomial satisfying $f_{D;\zeta}(\zeta)=D$, then
$
[[\hat{w}^{f_{D;\zeta}(x)}]]_{\hat{a}}
$
is a word of length $O(e) = O(\log q)$ evaluating to $U(0,D)$.

Similarly, for any $A\in\mathbb F_q$, if
$
f_{A;\zeta}(x)=\sum_{i=0}^{e-1}a_i x^i,
$
then
$
P_A:=[[\hat{u}^{f_{A;\zeta}(x)}]]_{\hat{h}_\zeta}
$
has first coordinate $A$. Say
$
P_A=U(A,C_A),
$
where $C_A$ is computed by simulating the multiplication rule in $U$. Explicitly,
$
C_A=\sum_{0\leq i<j<e}a_i a_j\,\zeta^i(\zeta^j)^\theta.
$
Therefore, for any $A,B\in\mathbb F_q$, the word
\[
\operatorname{Word}_{A,B}
:=
[[\hat{u}^{f_{A;\zeta}(x)}]]_{\hat{h}_\zeta}\,
[[\hat{w}^{f_{B-C_A;\zeta}(x)}]]_{\hat{a}}
\]
evaluates to $U(A,B)$. By Lemma~\ref{lem:EffectivePolynomialNotation}, this
word is computable in $\DTIME(\polylog(q))$ and has length $O(e)=O(\log q)$.
\end{proof}

\noindent We how have all the ingredients in place to prove Proposition~\ref{prop:UniformSuzuki}.

\begin{proof}[Proof~of~Proposition~\ref{prop:UniformSuzuki}]
Let $e=2k+1=\log_2 q$. We construct the presentation from
Theorem~\ref{thm:GKKLSuzuki}. The correctness of this presentation was established in 
\cite[Section~4.4.3]{GKKLquant}. It remains only to show that the relators can be output in $\DTIMEpl$.

We now show the relators can be generated in $\DTIMEpl$.

\begin{itemize}[align=parright, labelwidth=1.7cm]
\item[(1)--(5), (10), (14)--(15)]
Each of these relators is fixed (involves only fixed constants independent of $q$ and have constant word-length). To see this for relation (10), note that (10) 
uses the fixed set
$S_0$ of $16$ vectors (see Section~\ref{sec:SuzukiPreliminaries}). Whenever a word of the form $x^{(i,j,r,s)}$ occurs, we
expand it as
$x^{h_\zeta^i h_{\zeta^\theta}^j h_{\zeta+1}^r h_{\zeta^\theta+1}^s}.
$
All vectors appearing here are fixed, so these relators can be written down in
$\DTIMEpl$.

\item[(6)]
By Lemma~\ref{lem:EffectivePolynomialNotation}, we compute $m_\zeta(x)$ from
the standard name in $\DTIMEpl$. Since $p=2$, all coefficients of
$m_\zeta(x)$ are $0$ or $1$. Thus
\[
[[w^{m_\zeta(x)}]]_a
\]
expands directly as a word of length $O(e)=O(\log q)$ in $w$ and $a$. Since
$a=h_\zeta^{-1}h_{\zeta^\theta}$ has constant length in the generators from Theorem~\ref{thm:GKKLSuzuki},
relation $(6)$ can be output in $\DTIMEpl$.

\item[(7)]
Again by Lemma~\ref{lem:EffectivePolynomialNotation}, we compute
\[
g_{\theta+1}(x):=x^{2^{k+1}+1}\bmod m_\zeta(x)
\]
in $\DTIMEpl$. The polynomial $g_{\theta+1}(x)$ has degree $<e$ and
coefficients in $\mathbb F_2$, so
$
[[w^{g_{\theta+1}(x)}]]_a
$
has length $O(e) = O(\log q)$  and, again by Lemma~\ref{lem:EffectivePolynomialNotation}, can be computed efficiently. Hence relation $(7)$ can be output in $\DTIMEpl$.

\item[(8)--(9), (11)]
The elements $w_1,w_2,w_3$ are the central correction terms appearing in \cite[Proposition~4.30]{GKKLquant}. In the explicit
pair model for $U$ from the start of Section~\ref{sec:SuzukiPreliminaries}, compute
\[
D_1:=(u\,u^{(0,0,1,0)})^{-1}u^{(1,0,0,0)} \qquad
D_2:=(u\,u^{(0,0,0,1)})^{-1}u^{(0,1,0,0)} \qquad
D_3:=[u,u^{(1,0,0,0)}].
\]
By \cite{GKKLquant}, these elements lie in $W=Z(U)$. Define $B_i$ by 
$
D_i=U(0,B_i)\,\, (i=1,2,3).
$
Then output
\[
w_i:=\operatorname{Word}_{0,B_i}.
\]
from Lemma~\ref{lem:SuzukiU}.
This gives words of length $O(e) = O(\log q)$ for $w_1,w_2,w_3$, computable in $\DTIMEpl$.

\item[(12)--(13)] 
For relation $(12)$, compute
\[
D_4:=[[u^{m_\zeta(x)}]]_{h_\zeta}.
\]
By relator $(12)$, this element is the central correction $w_4$.
Define $B_4$ by $D_4=U(0,B_4)$ and output
$
w_4:=\operatorname{Word}_{0,B_4}
$
from Lemma~\ref{lem:SuzukiU}.
For relation $(13)$, compute
\[
g_\theta(x):=x^{2^{k+1}}\bmod m_\zeta(x)
\]
using Lemma~\ref{lem:EffectivePolynomialNotation}, and then compute
\[
D_5:=
\left([[u^{g_\theta(x)}]]_{h_\zeta}\right)^{-1}
u^{(0,1,0,0)}.
\]
By relator $(13)$, this is the central correction $w_5$. Define $B_5$ by 
$D_5=U(0,B_5)$ and output
\[
w_5:=\operatorname{Word}_{0,B_5}.
\]
Thus relations $(12)$ and $(13)$ can thus be output in $\DTIMEpl$.

\item[(16)--(17)]
We now use Suzuki's explicit formulas. Guralnick, Kantor, Kassabov, and Lubotzky \cite[Section~4.4.3]{GKKLquant} state that relation $(16)$ is due
to Suzuki's \cite[Formula~(13)]{suzuki}, and that explicit versions of relations $(16)$ and
$(17)$ are given by Suzuki's \cite[Formulas~(13) and~(38)]{suzuki}. We use these formulas in the notation of the
present section.

To avoid conflict with Suzuki's element $\rho$, put
$
\varrho:=2^k.
$
Then $\varrho\theta=q$ as exponents on $\mathbb F_q$. Suzuki's formula~(38),
translated into the present notation, gives, for every
$\lambda\in\mathbb F_q^\times$,
\begin{equation}\label{eq:SuzukiHtFactorization}
h(\lambda)t
=
U(\lambda^{-\varrho},0)\,
U(0,\lambda^{\varrho+1})^t\,
U(\lambda^{-\varrho},\lambda^{-(\varrho+1)}).
\end{equation}
Equivalently, this identity follows by direct multiplication in Suzuki's
$4\times 4$ matrix model, see Suzuki's formula~(38) \cite[(38)]{suzuki}.

For relation $(16)$, take $\lambda=1$ in
\eqref{eq:SuzukiHtFactorization}. Then
$
t=U(1,0)U(0,1)^tU(1,1).
$
Suzuki also notes in the paragraph preceding \cite[Formula~(39)]{suzuki} that one may take
$\rho=(1,0)$ and $\sigma=(0,1)$ \cite[paragraph preceding formula~(39)]{suzuki}.
Thus in our notation $\rho=u$ and $\sigma=w=u^2$. Since
\[
U(1,1)=U(1,0)U(0,1)=uw,
\]
we may choose
\[
u_1=u,\qquad u_2=w,\qquad u_3=uw.
\]
Hence relation $(16)$ is output as
$
t=u\,w^t\,u\,w.
$

For relation $(17)$, let
\[
\lambda_\star\in\{\zeta,\zeta+1,\zeta^\theta,\zeta^\theta+1\}
\]
be the field element corresponding to $h_\star=h(\lambda_\star)$. By
\eqref{eq:SuzukiHtFactorization}, we may choose
$
u_{\star,1}=U(\lambda_\star^{-\varrho},0),
$
$
u_{\star,2}=U(0,\lambda_\star^{\varrho+1}),
$
and
$
u_{\star,3}=U(\lambda_\star^{-\varrho},
\lambda_\star^{-(\varrho+1)}).
$
The field elements
$
\lambda_\star^{-\varrho},\,\,
\lambda_\star^{\varrho+1},\,\,
\lambda_\star^{-(\varrho+1)}
$
are computable from the standard name by repeated squaring and inversion in
$\mathbb F_q$, hence in $\DTIMEpl$. Once these elements are computed as pairs
$U(A,B)\in U$ we write them using $\operatorname{Word}_{A,B}$ (Lemma~\ref{lem:SuzukiU}). Each
such word has length $O(e)=O(\log q)$ and is computable in $\DTIMEpl$.

Thus relation $(17)$ is output as 
\[
h_\star t
=
\operatorname{Word}_{\lambda_\star^{-\varrho},0}\,
\left(\operatorname{Word}_{0,\lambda_\star^{\varrho+1}}\right)^t\,
\operatorname{Word}_{\lambda_\star^{-\varrho},\lambda_\star^{-(\varrho+1)}}.
\]
Therefore relations $(16)$ and $(17)$ can be output uniformly in $\DTIMEpl$. \qedhere
\end{itemize}

\end{proof}


\section{Conclusion and Open Questions}

\paragraph{Upper Bounds.} We placed \algprobm{Group Isomorphism} into the second level of the polylogarithmic-time hierarchy (Theorem~\ref{thm:MainUpperBound}). As a consequence, \algprobm{Group Isomorphism} can be solved by uniform $\textsf{AC}$ circuits of depth-$3$, size $n^{\polylog(n)}$, and polylogarithmic fan-in at the bottom level (i.\,e., depth $2\frac{1}{2}$). This improved the depth over the previous-best: 
Collins, Grochow, Levet, and Wei\ss \, \cite{CGLW} showed that the more general \algprobm{Quasigroup Isomorphism} problem belongs to the third level of the polylogarithmic time hierarchy, and hence can be solved by uniform $\textsf{AC}$ circuits of depth-$3\frac{1}{2}$ and size $n^{O(\log n)}$. 

Extending Theorem~\ref{thm:MainUpperBound} to \algprobm{Quasigroup Isomorphism} seems out of reach. Our technique depends heavily on both the theory of composition series for groups and 
the Classification of Finite Simple Groups.
Classifying all finite simple quasigroups is considered hopeless, and for general quasigroups there is essentially no analogous theory of composition series that can be used the way we use it in this paper. It is even open whether all finite simple quasigroups are $O(1)$-generated. We also point out that our approach of using polylogarithmically short presentations for finite (simple) groups unconditionally does not extend to the setting of quasigroups (Observation~\ref{obs:quasigroups}). 

Assuming the Uniform Short Presentation Conjecture, we get even closer to depth $2$ (see ``Closer to depth 2...'' on p.~\pageref{par:closer}). We note that the Short Presentation Conjecture has been claimed to be solved, and even presented in talks, but not yet published (for more details see ``The status of the (Uniform) Short Presentation Conjecture'' on p.~\pageref{par:status}).

As part of our depth-$2\frac{1}{2}$ upper bound we also show that short presentations of $\PSU_3(q)$ and the Suzuki groups from the literature \cite{HulpkeSeress, GKKLquant} can be made uniform (Propositions~\ref{prop:UniformityPSU} and \ref{prop:UniformSuzuki}, respectively.) Although a short presentation for the Ree groups $^2 G_2(q)$ is not yet available in the literature, in getting around this we develop tools for uniformity analogous to those we used for $\PSU_3(q)$ and the Suzuki groups. Namely, for all three groups we show that, relative to a certain generating set and in a certain coordinate-model of the groups, constructive membership in the Borel subgroup (or its unipotent radical) can be solved in $\DTIMEpl$ (Lemma~\ref{lem:Ree-bijection} for the Ree groups $^2 G_2(q)$; Lemma~\ref{lem:PSU3-dec} for $^2 A_2(q) = \PSU_3(q)$; and Lemma~\ref{lem:SuzukiU} for the Suzuki groups $^2 B_2(q) = \Sz(q)$.) Although the details of the generating sets and coordinate models differ from the three groups, we note the similarities between these lemmata (and their proofs).

\paragraph{Lower bounds.} We also exhibit the first non-trivial circuit lower-bounds against \algprobm{Group Isomorphism}. Precisely, we show that verifying that a multiplication table specifies a group, or even a Latin square (=quasigroup), requires DNFs of exponential size (Theorem~\ref{thm:DNF}). Furthermore, we show that isomorphism testing of Abelian groups requires CNFs of size $n^{\Omega(\log n)}$ (Theorem~\ref{thm:CNF}). The lower bound for CNFs relies crucially on a result of Seguins Pazzis \cite{deSeguinsPazzis2010} bounding the dimension of an affine linear space of matrices of bounded rank. At the end of Section~\ref{sec:CNF} we also discuss obstacles to extending our technique from matrices to tensors in attempting to get an $n^{\Omega(\log^2 n)}$ lower bound.

Our lower-bounds also hold more generally for \algprobm{Latin Square Isotopy} (see Remark~\ref{rmk:isotopy} and Corollary~\ref{cor:isotopy}). Collins, Grochow, Levet, and Wei\ss \, \cite{CGLW} previously exhibited the same upper-bound for \algprobm{Latin Square Isotopy} as for \algprobm{Quasigroup Isomorphism}. Precisely, they  showed that \algprobm{Latin Square Isotopy} belongs to the third level of the polylogarithmic time hierarchy, and hence can be solved by uniform $\textsf{AC}$ circuits of depth-$3\frac{1}{2}$ and size $n^{O(\log n)}$. 

Circuit lower bounds are commonly obtained via either low-depth reductions from \algprobm{Parity} or \algprobm{Majority}, or via a switching lemma-type argument. The $\exists^{\log^2 n}\FOLL$ bound for \algprobm{Quasigroup Isomorphism} due to Chattopadhyay, Tor\'an, and Wagner \cite{ChattopadhyayToranWagner} unconditionally rules out the use of low-depth reductions from \algprobm{Parity} or \algprobm{Majority}. In particular, $\exists^{\log^2 n}\FOLL$ does not contain either \algprobm{Parity} or \algprobm{Majority}. Note that neither previous works \cite{ChattopadhyayToranWagner, CGLW} nor our work here rules out the use of switching lemmas more generally. It might be possible to obtain circuit lower-bounds for other algebraic problems in the multiplication table model, through the use of low-depth reductions from validity checking (Theorem~\ref{thm:DNF}) or isomorphism testing of Abelian groups (Theorem~\ref{thm:CNF}). 

\paragraph{Open questions.}
Perhaps the biggest question left open by our work is still to prove that \algprobm{Group Isomorphism} is not in $\ACz$ (or is!). As more approachable next steps, we highlight the following questions:

\begin{question}
Does \algprobm{Group Isomorphism} require CNFs of size $n^{\Omega(\log^2 n)}$? Or more strongly of exponential size?
\end{question}

\begin{question}
Do depth-3 circuits for \algprobm{Group Isomorphism} require super-polynomial size?
\end{question}

Because of our polynomial-size CNF for validity testing of multiplication tables (Lemma~\ref{lem:valid-cayley-tables}), the above two questions are equivalent to their promise versions.

\begin{question}
Does isomorphism testing of Abelian groups, in the multiplication table model (with the promise that the multiplication tables are valid multiplication tables for groups), require DNFs of super-polynomial size?
\end{question}

\section*{Acknowledgments}
\noindent M.L. thanks Martin Kassabov and James B. Wilson for helpful discussions. J. G. thanks Lance Fortnow for helpful discussions on how to name and discuss our low-level circuit classes.

G. K. is supported by the Templeton World Charity Foundation, Inc. (funder DOI: 501100011730) under the grant DOI:10.54224/20650
no. 20650 on “Building Diverse Intelligences through
Compositionality and Mechanism Design”, and additionally by the J. Grochow and R. Frongillo start-up funds at the University of Colorado Boulder.
M. L. was partially supported by a Faculty Research and Development Grant from the College of Charleston. J. G. was supported by NSF CAREER grant CCF-2047756.

\section*{LLM Usage Statement}
The authors declare that no large language models (LLMs) were used in the research, writing, or editing of this work.

\appendix
\addtocontents{toc}{\protect\setcounter{tocdepth}{-5}} 

\section{Mathematica code to verify matrix calculations symbolically in $^2 G_2(q)$} \label{app:mathematica}

We executed the following code using Wolfram 14.3.0.0 \cite{Mathematica} on a MAC OS X ARM (64-bit) platform on 20 Aug 2026.

We use the function \texttt{AntidiagonalMatrix} by Sander Huisman \cite{huisman}:
\begin{verbatim}
ClearAll[AntidiagonalMatrix]
AntidiagonalMatrix[list : (_List | _SparseArray)] := 
 AntidiagonalMatrix[list, 0]
AntidiagonalMatrix[list : (_List | _SparseArray), k_Integer] := 
 Reverse[DiagonalMatrix[list, k], 2]
AntidiagonalMatrix[list : (_List | _SparseArray), k_Integer, 
  n_Integer] := AntidiagonalMatrix[list, k, {n, n}]
AntidiagonalMatrix[list : (_List | _SparseArray), k_Integer, 
  mn : {m_Integer, n_Integer}] := 
 Reverse[DiagonalMatrix[list, k, mn], 2]
\end{verbatim}
We then set up our generating set of $^2 G_2(q)$ from Lemma~\ref{lem:Ree-bijection}:
\begin{verbatim}
AlphaMat := Function[a, {{1, a^L, 0, 0, -a^(3*L + 1), -a^(3*L + 2), a^(4*L + 2) }, 
                         {0, 1, a, a^(L + 1), -a^(2*L + 1), 0, -a^(3*L + 2)}, 
                         {0, 0, 1, a^L, -a^(2*L), 0, a^(3*L + 1)}, 
                         {0, 0, 0, 1, a^L, 0, 0}, 
                         {0, 0, 0, 0, 1, -a, a^(L + 1)},
                         {0, 0, 0, 0, 0, 1, -a^L},
                         {0, 0, 0, 0, 0, 0, 1}}]
BetaMat := Function[b, {{1 , 0, -b^L, 0, -b, 0, -b^(L + 1)}, 
                        {0, 1, 0, b^L, 0, -b^(2*L), 0},
                        {0, 0, 1, 0, 0, 0, b},
                        {0, 0, 0, 1, 0, b^L, 0},
                        {0, 0, 0, 0, 1, 0, b^L},
                        {0, 0, 0, 0, 0, 1, 0},
                        {0, 0, 0, 0, 0, 0, 1}}]
GammaMat := Function[c, {{1 , 0, 0, -c^L, 0, -c, -c^(2*L)},
                         {0, 1, 0, 0, -c^L, 0, c},
                         {0, 0, 1, 0, 0, c^L, 0},
                         {0, 0, 0, 1, 0, 0, -c^L},
                         {0, 0, 0, 0, 1, 0, 0},
                         {0, 0, 0, 0, 0, 1, 0},
                         {0, 0, 0, 0, 0, 0, 1}}]
HMat := Function[lam, 
  DiagonalMatrix[{lam^L, lam^(1 - L), lam^(2*L - 1), 1, 
                  lam^(1 - 2*L), lam^(L - 1), lam^(-L)}]]
TMat := AntidiagonalMatrix[{-1, -1, -1, -1, -1, -1, -1}]
\end{verbatim}
Here we left \texttt{L} (our $\ell$) as a symbol with no constraints. Although some expressions might be able to be simplified by taking into account the true semantics of $\ell$ in the context of the Ree groups, we turned out not to need that, as the elements we wanted appeared even with leaving \texttt{L} as an unconstrained variable. 

Finally, we verify the needed calculations:
\begin{verbatim}
Part[HMat[lam] . AlphaMat[a] . BetaMat[b] . GammaMat[c] . TMat . 
  AlphaMat[x] . BetaMat[y] . GammaMat[z], 7, 1]

> -lam^(-L)

Part[AlphaMat[a] . BetaMat[b] . GammaMat[c] . TMat . AlphaMat[x] . 
  BetaMat[y] . GammaMat[z] , 6, 1]

> a^L

Part[BetaMat[b] . GammaMat[c] . TMat . AlphaMat[x] . BetaMat[y] . GammaMat[z], 
     3, 1]

> -b

Part[GammaMat[c] . TMat . AlphaMat[x] . BetaMat[y] . GammaMat[z], 2, 1]

> -c
\end{verbatim}

\bibliographystyle{alphaurl}
\bibliography{refs}

\end{document}